\newcommand{\remove}[1]{}
\newtheorem*{rep@theorem}{\rep@title} \newcommand{\newreptheorem}[2]{%
\newenvironment{rep#1}[1]{%
\def\rep@title{\bf #2 \ref{##1} }%
\begin{rep@theorem} }%
{\end{rep@theorem} } }
\newtheorem{thm}{Theorem}[section]
\newtheorem{claim}[thm]{Claim}
\newtheorem{lemma}[thm]{Lemma}
\newtheorem{define}[thm]{Definition}
\newtheorem{cor}[thm]{Corollary}
\newtheorem{remark}[thm]{Remark}
\newtheorem*{thm1}{Theorem 1}
\newtheorem*{thm2}{Theorem 2}
\newtheorem*{thm3}{Theorem 3}
\newtheorem*{thm4}{Theorem 4}
\renewcommand{\remove}[1]{}
\newcommand{\polylog}{{\rm polylog}}
\newcommand{\eps}{{\varepsilon}}
\renewcommand{\l}{\left}
\renewcommand{\r}{\right}
\newcommand{\ep}{{\epsilon}}
\newcommand{\la}{{\lambda}}
\newcommand{\cp}{\textnormal{cp}}
\newcommand{\comments}[1]{}
\newcommand{\same}{\textnormal{$same^{\star}$}}
\newcommand{\cpy}{\textnormal{copy}}
\newcommand{\nmExt}{\textnormal{nmExt}}
\renewcommand{\S}{\mathcal{S}}
\newcommand{\D}{\mathcal{D}}
\newcommand{\scirc}{\hspace{0.1cm}\circ \hspace{0.1cm}}
\newcommand{\IExt}{\textnormal{IExt}}
\newcommand{\slice}{\textnormal{Slice}}
\newcommand{\ind}{\textnormal{Ind}}
\newcommand{\samp}{\textnormal{Samp}}
\newcommand{\IP}{\textnormal{IP}}
\newcommand{\Enc}{\textnormal{Enc}}
\newcommand{\Dec}{\textnormal{Dec}}
\newcommand{\Bou}{\textnormal{Bou}}
\newcommand{\RS}{\textnormal{RS}}
\newcommand{\snmExt}{\textnormal{snmExt}}
\newcommand{\LSExt}{\textnormal{LSExt}}
\newcommand{\rank}{\textnormal{rank}}
\newcommand{\Ext}{\textnormal{Ext}}
\newcommand{\laExt}{\textnormal{laExt}}
\newcommand{\ilaExt}{\textnormal{ilaExt}}
\newcommand{\inmExt}{\textnormal{inmExt}}
\newcommand{\iExt}{\textnormal{iExt}}
\newcommand{\SRExt}{\textnormal{SRExt}}
\newcommand{\support}{\textnormal{support}}
\newcommand{\E} {\mathbb{E}}
\newcommand{\A}{\mathcal{A}}
\newcommand{\C}{\mathcal{C}}
\newcommand{\zo}{\{0, 1\}}
\newcommand{\F}{\mathbb{F}}
\def\old@comma{,}
    \old@comma\discretionary{}{}{}%
\newcommand{\x}[1]{{}$\kern-2\mathsurround${}
  \binoppenalty10000 \relpenalty10000 #1{}$\kern-2\mathsurround${}}
\def\draft{1}   
    \def\ShowAuthNotes{1}
    \def\ShowAuthNotes{0}
\newcommand{\authnote}[2]{{ \footnotesize \bf{\color{red}[#1's Note: {\color{blue}#2}]}}}
\newcommand{\authnote}[2]{}
\def\sand{%
  \end{tabular}%
  \hskip 0.5em \@plus.17fil\relax
  \begin{tabular}[t]{c}}
\begin{document} 
\title{\textbf{Non-Malleable Extractors and Codes, with their Many Tampered Extensions}}
\author{  Eshan Chattopadhyay\thanks{Department of Computer Science, University of Texas at Austin. Research supported in part by NSF Grant CCF-1218723. Part of this research was done while visiting Microsoft Research, India and John Hopkins University. eshanc@cs.utexas.edu.} 
 \and Vipul Goyal   \thanks{Microsoft Research, India. vipul@microsoft.com} \and Xin Li\thanks{Department of Computer Science, John Hopkins University. lixints@cs.jhu.edu}}
 \maketitle
\thispagestyle{empty}

\begin{abstract}
Randomness extractors and error correcting codes are fundamental objects in computer science. Recently, there have been several natural generalizations of these objects, in the context and study of tamper resilient cryptography. These are \emph{seeded non-malleable extractors}, introduced by Dodis and Wichs \cite{DW09}; \emph{seedless non-malleable extractors}, introduced by Cheraghchi and Guruswami \cite{CG14b}; and \emph{non-malleable codes}, introduced by Dziembowski, Pietrzak and  Wichs  \cite{DPW10}. Besides being interesting on their own, they also have important applications in cryptography. For example, seeded non-malleable extractors are closely related to privacy amplification with an active adversary, non-malleable codes are related to non-malleable secret sharing, and seedless non-malleable extractors provide a universal way to construct explicit non-malleable codes.

However, explicit constructions of non-malleable extractors appear to be hard, and the known constructions are far behind their non-tampered counterparts. Indeed, the best known seeded non-malleable extractor requires min-entropy rate at least $0.49$ \cite{Li12b}; while explicit constructions of non-malleable two-source extractors were not known even if both sources have full min-entropy, and was left as an open problem in \cite{CG14b}. In addition, current constructions of non-malleable codes in the information theoretic setting only deal with the situation where the codeword is tampered once, and may not be enough for certain applications.

In this paper we make progress towards solving the above problems. Our contributions are as follows.

\begin{itemize}
\item We construct an explicit seeded non-malleable extractor for min-entropy $k \geq \log^2 n$. This dramatically improves all previous results and gives a simpler 2-round privacy amplification protocol with optimal entropy loss, matching the best known result in \cite{Li15b}.

\item We construct the first explicit non-malleable two-source extractor for min-entropy $k \geq n-n^{\Omega(1)}$, with output size $n^{\Omega(1)}$ and error $2^{-n^{\Omega(1)}}$.

\item We motivate and initiate the study of two natural generalizations of seedless non-malleable extractors and non-malleable codes, where the sources or the codeword may be tampered many times. For this, we construct the first explicit non-malleable two-source extractor with tampering degree $t$ up to $n^{\Omega(1)}$, which works for min-entropy $k \geq n-n^{\Omega(1)}$, with output size $n^{\Omega(1)}$ and error $2^{-n^{\Omega(1)}}$. We further show that we can efficiently sample uniformly from any pre-image. By the connection in \cite{CG14b}, we also obtain the first explicit non-malleable codes with tampering degree $t$ up to $n^{\Omega(1)}$, relative rate $n^{\Omega(1)}/n$, and error $2^{-n^{\Omega(1)}}$.
\end{itemize}
\end{abstract}
\clearpage 
\setcounter{page}{1}

\section{Introduction}
Randomness extractors are fundamental objects in the study of randomness in computation. They are efficient algorithms that transform imperfect randomness into almost uniform random bits. Here we use the standard model of \emph{weak random source} to model imperfect randomness. The \emph{min-entropy} of a random variable~$X$ is defined as $ H_\infty(X)=\min_{x \in \mathsf{Supp}(X)}\log_2(1/\Pr[X=x])$. For a source $X$ supported on  $\zo^n$, we call $X$ an $(n,H_\infty(X))$-source, and we say $X$ has \emph{entropy rate} $H_\infty(X)/n$.

As one can show that no deterministic extractor works for all weak random sources even with min-entropy $k =n-1$, randomness extractors are studied in two different settings. In one setting the extractor is given a short independent uniform random seed, and these extractors are called \emph{seeded extractors}. Informally, a seeded extractor $\Ext: \{0,1 \}^n \times \{ 0,1\}^d \rightarrow \{ 0,1\}^m$ for min-entropy $k$ and error $\epsilon$ takes as input  any $(n,k)$ source $X$ and a uniform seed $S$, and has the property that $|\Ext(X,S) - U_m|<\epsilon$, where the distance used is the standard statistical distance. If the output of the extractor is guaranteed to be close  to uniform even after seeing the value of the seed $S$,  then it is called a strong seeded extractor. In the other setting there is no such random seed, but the source is assumed to have some special structure. These extractors are called \emph{seedless extractors} (see Section \ref{section:prelims} for formal definitions). A special kind of seedless extractors that received a lot of attention is extractors for independent weak random sources. Here one can use the probabilistic method to show that such extractors exist for only two independent sources (such extractors are called \emph{two-source extractors}), but the known constructions are still not optimal.

Both kinds of extractors have been studied extensively, and shown to have many connections and applications in computer science. For example, seeded extractors can be used to simulate randomized algorithms with access to only weak random sources, and are closely related to pseudorandom generators, error-correcting code and expanders. Independent source extractors can be used to generate high quality random bits for distributed computing and cryptography \cite{KalaiLRZ08}, \cite{KalaiLR09}, and are closely related to Ramsey graphs and other seedless extractors.

In cryptographic applications, however, one faces a new situation where the inputs of an extractor may be tampered by an adversary. For example, an adversary may tamper with the seed of a seeded extractor, or both sources of a two-source extractor. In this case, one natural question is how the output of  the tampered inputs will depend on the output of the initial inputs. In order to be resilient to adversarial tampering, one natural way is to require that original output of the extractor be (almost) independent of the tampered output. This leads to the notion of \emph{non-malleable extractors}, in both the seeded case and seedless case. These extractors not only are interesting in their own rights, but also have important applications in cryptography.

\begin{define}[Tampering Funtion]
For any function $f:S \rightarrow S$, $f$ has a fixed point at $s \in S$ if $f(s)=s$. We say $f$ has no fixed points in  $T \subseteq S$, if $f(t) \neq t$ for all $t \in T$. $f$ has no fixed points if $f(s) \neq s$ for all $s \in S$.
\end{define}

Seeded non-malleable extractors were introduced by Dodis and Wichs in \cite{DW09}, as a generalization of strong seeded extractors.

\begin{define}[Non-malleable extractor] A function $\snmExt:\{0,1\}^n \times \{ 0,1\}^d \rightarrow \{ 0,1\}^m$ is a seeded non-malleable extractor for min-entropy $k$ and error $\epsilon$ if the following holds : If $X$ is a  source on  $\{0,1\}^n$ with min-entropy $k$ and $\A : \{0,1\}^n \rightarrow \{0,1\}^n $ is an arbitrary tampering function with no fixed points, then
$$   |\snmExt(X,U_d) \scirc \snmExt(X,\A(U_d))  \scirc U_d- U_m \scirc  \snmExt(X,\A(U_d)) \scirc U_d | <\epsilon $$where $U_m$ is independent of $U_d$ and $X$.
\end{define}

The original motivation for seeded non-malleable extractors is to study the problem of privacy amplification with an active adversary. This is a basic problem in information theoretic cryptography, where two parties want to communicate with each other to convert their shared secret weak random source $X$ into shared secret nearly uniform random bits. However, the communication channel is watched by an adversary Eve, where we assume Eve has unlimited computational power and the two parties have local (non-shared) uniform random bits.

In the case where Eve is passive (i.e., can only see the messages but cannot change them), this problem can be solved by just applying a strong seeded extractor. However, in the case where Eve is active (i.e., can arbitrarily change, delete and reorder messages), the problem becomes much more complicated. The major goal here is to design a protocol that uses as few number of interactions as possible, and output a uniform random string $R$ that has length as close to $H_{\infty}(X)$ as possible (the difference is called \emph{entropy loss}). There has been extensive research on this problem (we give more details in Section $\ref{seeded_nm_intro}$). Along the line, a major progress was made by Dodis and Wichs  \cite{DW09}, who showed that seeded non-malleable extractors can be used to construct privacy amplification protocols with optimal round complexity and entropy loss.

This connection makes constructing non-malleable extractors a very promising approach to privacy amplification. However, all known constructions of such extractors (\cite{DLWZ11}, \cite{CRS12}, \cite{Li12a}, \cite{DY12}, \cite{Li12b}) require the entropy  of the weak source to be at least  $0.49n$. Moreover, all known constructions are essentially based on known two-source extractors, and the entropy requirement is exactly the same as the best known two-source extractor \cite{B2}. Thus the general feeling is that to construct explicit seeded non-malleable extractors for smaller entropy may be difficult, as is the situation for two-source extractors. In this work, somewhat surprisingly, we show that this is not the case. We dramatically improve all previous results and give explicit seeded non-malleable extractors that work for any min-entropy $k \ge \log^2 n$. Apart from applications to cryptography, this may of independent interest due to the connection found between seeded non-malleable extractors and two-source extractors in \cite{Li12b}.

We now discuss the seedless variant of non-malleable extractors. Cheraghchi and Guruswami \cite{CG14b}  introduced seedless non-malleable extractors as a natural generalization of seeded non-malleable extractors. Furthermore, they found an elegant connection between seedless non-malleable extractors and non-malleable codes, which are a generalization of error-correcting codes to handle a much larger class of tampering functions (rather than just bit erasure or modification). Informally, non-malleable codes are w.r.t a family of tampering functions $\mathcal{F}$,  and require that the decoding of any codeword that is tampered by a function $f \in \mathcal{F}$, is either the original message itself or something totally independent of the message (see Section $\ref{nmc_intro}$). Non-malleable codes have also been extensively studied recently (we provide more details in Section \ref{nmc_intro}), and Cheraghchi and Guruswami \cite{CG14b} showed a universal way of constructing explicit non-malleable codes by first constructing non-malleable seedless extractors.

In this paper we focus on one of the most interesting and well studied family of tampering functions, where the function tampers the original message independently in two separate parts. This is called the $2$-split-state model (see Section $\ref{nmc_intro}$ for a formal discussion). The corresponding seedless non-malleable extractor is then a generalization of two-source extractors, where both sources can be tampered. For ease of presentation,we present a  simplified definition here and we refer the reader to Section $\ref{formaldefs}$ for the formal definition.

\begin{define}[Seedless 2-Non-Malleable Extractor]\label{def:t1}
A function $\nmExt : \{ 0,1\}^{n} \times \{ 0,1\}^{n} \rightarrow \{ 0,1\}^m$ is a seedless 2-non-malleable extractor at min-entropy $k$  and error $\epsilon$ if it satisfies the following property: If $X$ and $Y$ are independent $(n,k)$-sources and   $\A=(f,g)$ is an arbitrary $2$-split-state tampering function, such that at least one of $f$ and $g$ has no fixed points, then   $$ |\nmExt(X,Y) \circ \nmExt(\A(X,Y)) - U_m \circ \nmExt(f(X),g(Y))| < \epsilon$$ where both $U_m$'s refer to the same uniform $m$-bit string.
\end{define}

Again, the connection in \cite{CG14b} makes constructing seedless $2$-non-malleable extractors a very interesting and promising approach to non-malleable codes in the $2$-split-state model. However, no explicit constructions of $2$-non-malleable extractors were known even when both sources are perfectly uniform. Indeed, finding an explicit construction of such extractors was left as an open problem in \cite{CG14b}, and none of the known constructions of seeded non-malleable extractors seem to satisfy this stronger notion. In this paper we solve this open problem and give the first explicit construction of $2$-non-malleable extractors. Furthermore we show that given any output of the extractor, we can efficiently sample uniformly from its pre-image. By the connection in \cite{CG14b} this also gives explicit non-malleable codes in the above mentioned well studied $2$-split-state model.

We note that our results about non-malleable codes in the $2$-split-state model do not improve the already nearly optimal construction in the recent work of Aggarwal et al.\ \cite{ADKO14}. However, our construction of seedless 2-non-malleable extractors is of independent interest, and provides a more direct way to construct non-malleable codes.\footnote{In \cite{ADKO14}, non-malleable  codes in the $2$-split-state model are constructed by  giving efficient  reductions from the $2$-split-state model to $t$-split-state model, and then using a known constructions of NM codes in the $t$-split-state model with almost optimal parameters \cite{CZ14}.}




Finally, as in the case of seeded non-malleable extractors \cite{CRS12}, we consider the situation where the sources can be tampered many times. For this, we introduce a natural generalization of seedless $2$-non-malleable extractors which we call seedless $(2,t)$-non-malleable extractors (i.e., the sources are tampered $t$ times). Correspondingly, in the case of non-malleable codes we also consider the situation where a codeword can be tampered many times. For this, we also introduce a natural generalization of non-malleable codes which we call \emph{one-many non-malleable codes} (see Section $\ref{nmc_intro}$). We initiate the study of these two objects in this paper and show that one-many non-malleable codes have several natural and interesting applications in cryptography.

We present a simplified definition of seedless $(2,t)$-non-malleable extractors here, and refer the reader to Section \ref{formaldefs} for the formal definition.

\begin{define}[Seedless (2,t)-Non-Malleable Extractor]\label{def:t}
A function $\nmExt : \{ 0,1\}^{n} \times \{ 0,1\}^{n} \rightarrow \{ 0,1\}^m$ is a seedless $(2,t)$-non-malleable  extractor at min-entropy $k$  and error $\epsilon$ if it satisfies the following property: If $X$ and $Y$ are independent $(n,k)$-sources and   $\A_1= (f_1,g_1),\ldots,\A_t=(f_t,g_t)$ are $t$ arbitrary $2$-split-state tampering functions, such that for each $i \in \{ 1,\ldots,t\}$ at least one of $f_i$ and $g_i$ has no fixed points,   then
\begin{align*}
 |\nmExt(X,Y), \nmExt(\A_1(X,Y)), \ldots, \nmExt(\A_t(X,Y)) - \\ U_m,  \nmExt(\A_1(X,Y)), \ldots, \nmExt(\A_t(X,Y))| < \epsilon,
 \end{align*}
  where both $U_m$'s refer to the same uniform $m$-bit string.
\end{define}

We provide explicit constructions of seedless $(2,t)$-non-malleable extractors for $t$ up to $n^{\delta}$ for a small enough constant $\delta$. Just as the connection between $2$-non-malleable extractors and regular non-malleable codes, we show that these extractors lead to explicit constructions of one-many non-malleable codes in the $2$-split-state model. We  note that as in the case of regular non-malleable codes, the construction based on $(2,t)$-non-malleable extractors may not be the only way to construct one-many non-malleable codes. However, it appears non-trivial to extend other existing constructions of non-malleable codes to satisfy this stronger notion. We discuss this in more details in Section $\ref{other_approach}$. 


We now formally define one-many non-malleable codes below.


\subsection{Non-malleable codes}\label{nmc_intro}

We introduce the notion of what we call one-many and many-many non-malleable codes, generalizing  the notion of non-malleable codes introduced by Dziembowski, Pietrzak and  Wichs  \cite{DPW10}. Since the introduction of non-malleable codes, there has been a flurry of recent work on finding explicit constructions, resulting in applications to tamper-resilient cryptography \cite{DPW10}, robust versions of secret sharing schemes \cite{ADL13}, and connections to the seemingly unrelated area of derandomization \cite{CG14b}. We discuss prior work in detail in Section $\ref{nm_prior}$. 

 We briefly motivate the notion of non-malleable codes. Traditional error-correcting codes  encode a message $m$ into a longer codeword $c$ enabling recovery of  $m$ even after  part of $c$ is corrupted. We can view this corruption as a tampering function $f$ acting on the codeword, where $f$ is from some small allowable family $\mathcal{F}$ of tampering functions.  The strict requirement of retrieving  the encoded message $m$ imposes restrictions on the kind of tampering functions that can be handled. One might hope to achieve a weaker goal of only detecting errors, possibly with high probability. However  the notion of error detection fails to work with respect to the family of constant functions since one cannot hope to detect errors against a function that always outputs some fixed codeword.

The notion of non-malleable codes is an elegant  generalization of error-detecting codes. Informally, a non-malleable code with respect to a tampering function family $\mathcal{F}$ is equipped with a randomized encoder $\Enc$ and a deterministic decoder $\Dec$ such that $\Dec(\Enc(m))=m$ and for any tampering function $f\in \mathcal{F}$ the following holds: for any message $m$,  $\Dec(f(\Enc(m)))$ is either the message $m$ or is $\epsilon$-close (in statistical distance) to a distribution $D_f$ independent of $m$.  The parameter $\epsilon$ is called the error. Thus, in some sense, either the message arrives correctly, or, the message is entirely lost and becomes gibberish. A formal definition of non-malleable codes is given below.

\newcommand{\rep}{\textnormal{replace}}

First we define the replace function $\rep:\{ 0,1\}^* \times \{ 0,1\}^* \rightarrow \{0,1\}^*$. If the second input to $\rep$ is a single value $s$, replace all occurrences  of $\same$ in the first input with $s$ and output the result. If the second input to $\rep$ is a set $(s_1, \ldots, s_n)$, replace all occurrences of $\same_i$ in the first input with $s_i$ for all $i$ and output the result.

\begin{define}[Coding schemes] Let $\Enc:\{0,1\}^k \rightarrow \{0,1\}^n$ and $\Dec:\{0,1\}^n \rightarrow \{0,1\}^k \cup \{ \perp \}$ be functions such that $\Enc$ is a randomized function (i.e.\ it has access to a private randomness) and $\Dec$ is a deterministic function. We say that $(\Enc,\Dec)$ is a coding scheme with block length~$n$ and message length $k$ if for all $s \in \{0,1\}^k $, $\Pr[\Dec(Enc(s))=s]=1$ (the probability is over the randomness in $\Enc$).
\end{define}

\begin{define}[Non-malleable codes] A coding scheme $(\Enc,\Dec)$  with block length $n$ and message length $k$ is a non-malleable code with respect to a family of tampering functions  $\mathcal{F} \subset \mathcal{F}_n$  and error~$\epsilon$ if  for every $f \in \mathcal{F}$ there exists a random variable $D_f$ on $\{ 0,1\}^k \cup \{ \same \}$ which is independent of the randomness in $\Enc$  such that for all messages $s \in \{0,1\}^k$, it holds that $$  |\Dec(f(\Enc(s))) - \rep(D_f,s)| \le \epsilon $$
\end{define}

The rate of a non-malleable code $\C$ is given by $\frac{k}{n}$. Observe that to construct non-malleable codes, it is still necessary to restrict the class of tampering functions. This follows since the tampering function could then use the function $\Dec$ to decode the message $m$,  get a message $m^{\prime}$ by flipping all the bits in $m$, and use the encoding function to pick any codeword in $\Enc(m^{\prime})$. However presumably, the class of tampering functions can now be much richer than what was possible for error correction and error detection.

\paragraph{Tampering Multiple Codewords.} Observe that the above definition envisions the adversary receiving a single codeword $\Enc(s)$ and outputting a single tampered codeword $f(\Enc(s))$. We refer to this as the ``one-one" setting. While indeed this is very basic, we argue that this does not capture scenarios where the adversary may be getting multiple codewords as input or be allowed to output multiple codewords. As an example, consider the following.

Say there is an auction where each party can submit its bid, and, the item goes to the highest bidder. An honest party, wishing to bid for value $s$, encodes its bid using NM codes and sends $\Enc(s)$. This indeed would prevent an adversary (which belongs to an appropriate class of tampering functions) from constructing his own bid by tampering $\Enc(s)$ and coming up with $\Enc(s+1)$, which would completely compromise the sanity of the auction process. However what if the adversary can submit \emph{two} bids out of which exactly one is guaranteed to be a winning bid? For example, the adversary can submit bids to $r$ and $2s - r$ (for some $r$ not known to the adversary). This is not ruled out by NM codes!

Towards that end, we introduce a stronger notion which we call \emph{one-many NM codes}. Intuitively, this guarantees the following. Consider the set of codewords output by the adversary. We require that even the joint distribution of the encoded value be independent of the value encoded in the input. A formal definition is given below:

\begin{define} [One-Many Non-malleable codes] A coding scheme $(\Enc,\Dec)$  with block length $n$ and message length $k$ is a non-malleable code with respect to a family of tampering functions  $\mathcal{F} \subset (\mathcal{F}_n)^t$  and error~$\epsilon$ if  for every $(f_1, \ldots f_t) \in \mathcal{F}$, there exists a random variable $D_{\vec{f}}$ on $(\{ 0,1\}^k \cup \{ \same \})^t$ which is independent of the randomness in $\Enc$  such that for all messages $s \in \{0,1\}^k$, it holds that $$  |(\Dec(f_1(X)), \ldots, \Dec(f_t(X))) - \rep(D_{\vec{f}},s)| \le \epsilon $$
Where $X = \Enc(s)$. We refer to $t$ as the tampering degree of the non-malleable code.
\label{1-manynmcode}
\end{define}

We argue that one-many non-malleable codes is a basic notion which is interesting to study independent of concrete applications. However later we point out some concrete applications to non-malleable secret sharing (where one wishes to store multiple secrets), and, to \emph{witness signatures}.

An expert in cryptography by now would have noticed this is analogous to the well studied notion of one-many non-malleable \emph{commitments} in the literature. Even though both notions deal with related concerns, we note non-malleable codes and non-malleable commitment are fundamentally different objects with the latter necessarily based on complexity assumptions. To start with, we prove a simple impossibility result for one-many non-malleable codes (whereas for one-many non-malleable commitments, a corresponding \emph{positive} result is known \cite{PassR08}).

\begin{lemma} One-many non-malleable codes which work for any arbitrary tampering degree and $\epsilon < 1/4$ cannot exist for a large class of tampering functions.
\end{lemma}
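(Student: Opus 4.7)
The plan is to exhibit, for any tampering family $\mathcal{F}$ rich enough to contain ``inspect one bit of the codeword and output one of two hardwired strings'', a polynomial tampering degree $t = n$ for which no coding scheme can satisfy the one-many definition with error below $1/4$. Fix two distinct messages $m_0 \ne m_1 \in \{0,1\}^k$ and two specific codewords $c^{(0)} \in \Enc(m_0)$, $c^{(1)} \in \Enc(m_1)$. For each bit position $i \in [n]$ define the tampering function $f_i(c) := c^{(c_i)}$ (so $f_i$ outputs $c^{(0)}$ when the $i$-th bit of $c$ is $0$ and $c^{(1)}$ otherwise), and let the adversary deploy the tuple $(f_1,\ldots,f_n)$.

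For any message $s$ and a draw $X = \Enc(s)$, the joint output is $\Y_s := (\Dec(f_1(X)),\ldots,\Dec(f_n(X))) = (m_{X_1},\ldots,m_{X_n})$, which is a bijective re-encoding of the codeword $X$ in the alphabet $\{m_0,m_1\}$. Pick $s_0,s_1 \in \{0,1\}^k \setminus \{m_0,m_1\}$ (possible as soon as $k\ge 2$). Since $\Enc(s_0)$ and $\Enc(s_1)$ have disjoint supports by correctness of $\Dec$, the distributions $\Y_{s_0}$ and $\Y_{s_1}$ are supported on disjoint subsets of $\{m_0,m_1\}^n$, so their statistical distance equals $1$.

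Now suppose for contradiction that the code is one-many non-malleable against $\mathcal{F}$ with error $\epsilon$, and let $D$ be the corresponding distribution on $(\{0,1\}^k \cup \{\same\})^n$, independent of $s$. For any $s \notin \{m_0,m_1\}$, every $\same$-coordinate of $D$ gets replaced by $s \notin \{m_0,m_1\}$ in $\rep(D,s)$, placing the resulting string outside $\{m_0,m_1\}^n$; since $\Y_s$ is entirely inside $\{m_0,m_1\}^n$ and $|\Y_s - \rep(D,s)| \le \epsilon$, this forces $\Pr[D \text{ contains a } \same] \le \epsilon$. Coupling $\rep(D,s_0)$ with $\rep(D,s_1)$ through a common draw of $D$, the two agree whenever $D$ has no $\same$, so $|\rep(D,s_0) - \rep(D,s_1)| \le \epsilon$. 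The triangle inequality then gives $1 = |\Y_{s_0} - \Y_{s_1}| \le 3\epsilon$, forcing $\epsilon \ge 1/3$, which in particular contradicts $\epsilon < 1/4$.

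The only delicate point, and the one that justifies the vague phrase ``a large class of tampering functions'' in the statement, is specifying exactly which classes admit the functions $f_i$. The argument applies verbatim whenever $\mathcal{F}$ contains the trivial ``read-one-bit, output-one-of-two-fixed-strings'' operations, a condition satisfied by essentially any rich tampering class (bit-probing adversaries, $\mathrm{NC}^0$, polynomial-size circuits, etc.) while failing for strict split-state—which is precisely consistent with the positive results for bounded split-state tampering degree established later in the paper.
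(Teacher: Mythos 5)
Your proof is correct and follows essentially the same strategy as the paper: the adversary uses $n$ tampering functions that read one bit of the codeword and output one of two hard-coded codewords (the paper writes this as ``output a fresh encoding $\Enc(X_i)$'', yours hardwires $c^{(0)},c^{(1)}$, which is cleaner since tampering functions should be deterministic). The joint decoded output then reveals the codeword $X$, and hence $s$, bit by bit, which no message-independent $D_{\vec f}$ can simulate.

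Two small comparative remarks. First, the paper's argument ``there cannot exist $d_i$ which is $\same$'' is stated too strongly; the definition only bounds a statistical distance, so one can only conclude that $\Pr[D\text{ contains a }\same]$ is small. Your write-up handles this correctly by bounding this probability by $\epsilon$ and then closing with the triangle inequality $1=|\Y_{s_0}-\Y_{s_1}|\le 3\epsilon$, which actually yields $\epsilon\ge 1/3$, sharper than the stated $1/4$. Second, the paper separately treats the degenerate case $k=1$ by padding ($W_i=\Enc(X_i\|0)$); your argument requires $k\ge 2$ in order to choose $s_0,s_1\notin\{m_0,m_1\}$, and you should either restrict to $k\ge 2$ explicitly in the statement or include the analogous padding fix. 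Neither point affects the validity of your core argument.
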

\begin{proof}
The class of tampering functions which we consider are the ones where each function is allowed to read any one bit $X_i$ of its choice from the input code $X$, and output a fresh encoding of $X_i$. Most natural tampering functions (including split state ones \cite{DPW10} \cite{CG14a}) considered in the literature fall into this class. Assume that the encoded value $s$ has at least 4 possibilities (length 2 bits or higher). The case of a single bit $s$ is discussed later.

Recall that $n$ is the length of the code. We set $t = n$.  Let $X = \Enc(s)$ be the input codeword where $s$ is chosen at random. We consider $n$ tampering functions where $F_i$ simply reads $X_i$ and outputs a fresh encoding $W_i = \Enc(X_i)$. Now consider $(\Dec(f_1(X)), \ldots, \Dec(f_n(X)))$. Observe that this is exactly the bits of the string $X$. If the distinguisher applies the decode procedure on $X$, it will recover $s$. Now consider any possible output $(d_1, \ldots, d_n)$ of $D_{\vec{f}}$. Now note that there cannot exist $d_i$ which is $\same$. This is because otherwise it will be replaced by $s$ (see Definition \ref{1-manynmcode}) which is at least 2 bits while $\Dec(W_i)$ is just a single bit. This in turn implies that the value $\rep(D_{\vec{f}},s)$ (from Definition \ref{1-manynmcode}) is independent of $s$ and $X$. Thus a distinguisher (given access to $s$) can easily have an advantage exceeding $\epsilon$.

For a single bit $s$, we modify our tampering functions to encode two bits: $W_i = \Enc(X_i || 0)$. Then again we can argue that neither of $d_i$ will be $\same$ since then it will be replaced by $s$ which is only one bit. This in turn again implies that $\rep(D_{\vec{f}},s)$ is independent of $s$ and $X$. This concludes the proof.

\end{proof}

We also introduce a natural generalization which we call \emph{many-many non-malleable codes}. This refers to the situation where the adversary is given multiple codewords as input.

\begin{define}[Many-Many Non-malleable codes] A coding scheme $(\Enc,\Dec)$  with block length $n$ and message length $k$ is a non-malleable code with respect to a family of tampering functions  $\mathcal{F} \subset (\mathcal{F}_n)^t$  and error~$\epsilon$ if  for every $(f_1, \ldots f_t) \in \mathcal{F}$, there exists a random variable $D_{\vec{f}}$ on $(\{ 0,1\}^k \cup \{ \same_i \}_{i \in [u]})^t$ which is independent of the randomness in $\Enc$  such that for all vector of messages $(s_1, \ldots, s_u)$, $s_i \in \{0,1\}^k$, it holds that $$  |(\Dec(f_1(\vec{X})), \ldots, \Dec(f_t(\vec{X}))) - \rep(D_{\vec{f}}, (s_1, \ldots, s_u))| \le \epsilon $$

Where $X_i = \Enc(s_i)$ and $\vec{X} = (X_1, \ldots, X_u)$
\label{many-manynmcode}
\end{define}

\begin{lemma} One-many non-malleable codes with tampering degree $t$ and error $\epsilon$ are also many-many non-malleable codes for tampering degree $t$ and error $u\epsilon$ (where $u$ is as in Definition \ref{many-manynmcode}).
\end{lemma}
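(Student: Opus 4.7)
The natural approach is a standard hybrid argument that peels off the $u$ encoded messages one at a time. The plan is to define hybrids $H_0, H_1, \dots, H_u$, with $H_0$ the actual many-many distribution $(\Dec(f_1(\vec{X})), \dots, \Dec(f_t(\vec{X})))$ where $X_i = \Enc(s_i)$, and $H_u$ of the form $\rep(D_{\vec{f}}, (s_1,\dots,s_u))$ for a random variable $D_{\vec{f}}$ on $(\{0,1\}^k \cup \{\same_i\}_{i \in [u]})^t$ whose only randomness is the $u$ independent blocks of simulator coins drawn in the hybrid steps, so that $D_{\vec{f}}$ is independent of the $\Enc$ randomness as required. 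Each hybrid transition will cost at most $\epsilon$ in statistical distance, and the triangle inequality will then give $\|H_0 - H_u\| \le u\epsilon$, exactly matching the many-many guarantee.

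\textbf{Hybrid step.} For the $j$-th transition, I would condition on all other codewords $\{X_i\}_{i\ne j}$ together with the simulator coins from the previous hybrid steps. Under this conditioning, the tuple $(f_1,\dots,f_t)$ restricts to tampering functions $(f_1^\sigma,\dots,f_t^\sigma)$ acting on the single codeword $X_j = \Enc(s_j)$ alone, where $\sigma$ denotes the hardcoded values. (This step requires that the underlying tampering class be closed under hardcoding of auxiliary inputs, which holds for split-state tampering and every other natural class considered in this paper.) Invoking the one-many non-malleability guarantee with tampering degree $t$ then produces a distribution $D^{j,\sigma}$ on $(\{0,1\}^k\cup\{\same\})^t$, depending on $\sigma$ and on fresh simulator coins but independent of $s_j$ and of the encoder randomness of $X_j$, such that $(\Dec(f_\ell^\sigma(X_j)))_\ell$ is within $\epsilon$ of $\rep(D^{j,\sigma}, s_j)$. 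Relabelling every $\same$ produced in this $j$-th invocation as $\same_j$ and averaging over $\sigma$ (by convexity of statistical distance) yields $H_j$ with $\|H_{j-1} - H_j\|\le\epsilon$.

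\textbf{Main obstacle.} The delicate point is that after the first step $H_1$ is no longer literally a tampered decoding of $\vec{X}$, so invoking one-many NM at later steps is not entirely immediate. The cleanest way to handle this is to retain the whole codeword vector $\vec{X}$ as part of the underlying probability space throughout the whole chain, treating the previously drawn simulator outputs as additional pieces of $\sigma$-conditioning information alongside the honest $X_i$'s; each step then remains a clean invocation of one-many NM against the single codeword $X_j$, because once everything else is conditioned on, the only remaining randomness feeding into the output is that of $X_j$. Propagating this through all $u$ steps produces a $D_{\vec{f}}$ that depends only on $(f_1,\dots,f_t)$ and on the $u$ independent blocks of simulator coins, and hence is independent of every $\Enc$ randomness; the remaining verification that $\rep(D_{\vec{f}}, \vec{s})$ coincides with $H_u$ and that the total error telescopes to $u\epsilon$ is then mechanical.
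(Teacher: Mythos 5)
Your hybrids differ structurally from the paper's, and the difference matters. The paper's $i$-th hybrid keeps the quantity under consideration a genuine tuple of tampered decodings: it replaces the codeword $X_i$ by a fresh $\Enc(0)$, so that at every stage the object is still $(\Dec(f_1(\vec{X}')),\dots,\Dec(f_t(\vec{X}')))$ for some codeword vector $\vec{X}'$, and one-many non-malleability (with the remaining codewords hardcoded into the tampering functions) can be invoked at each of the $u$ steps. Your hybrids instead replace the output distribution with the simulator's output: after the first step, $H_1$ is a sample from $\rep(D^{1,\sigma},s_1)$, which is no longer the decoding of anything. The one-many guarantee is a statement only about distributions of the form $(\Dec(g_1(X)),\dots,\Dec(g_t(X)))$ for tampering functions $g_\ell$ acting on a single codeword $X$; it says nothing about a black-box simulator sample viewed as a function of $X_2$.

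The fix you propose in your ``main obstacle'' paragraph does not close this gap. Conditioning on the step-$1$ simulator coins and on $\{X_i\}_{i\neq 2}$ leaves $H_1$ as some arbitrary function of $X_2$, and there are no tampering functions $g_\ell$ with $(\Dec(g_\ell(X_2)))_\ell$ equal to it; so one-many NM cannot be invoked for the $H_1\to H_2$ transition, and the chain breaks after one link. The remedy is to hybrid over the codewords rather than over the simulator outputs, which is exactly what the paper's sketch does: replacing each $X_i$ in turn with $\Enc(0)$ ensures that every hybrid remains a tampered decoding to which the one-many hypothesis can be applied, and the final hybrid is a distribution independent of $(s_1,\dots,s_u)$.
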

\begin{proof}
This proof relies on a simple hybrid argument and the fact that all sources $X_1, \ldots, X_u$ are independent. We only provide a proof sketch here. Assume towards contradiction that there exists a one-many code with error $\epsilon$, which, under the many-many tampering adversary has error higher than $u.\epsilon$. That is, the adversary $\vec(f)$ is given as input $(X_1, \ldots, X_u)$ which are encodings of $(s_1, \dots, s_u)$ respectively. This is referred to as the hybrid 0. Now consider the following hybrid experiment. In the $i$-th hybrid experiment, the code $X_i$ is changed to be an encoding  of 0 (as opposed to be an encoding of $s_i$). We claim that in this experiment, the error changes by at most $\epsilon$. This is because otherwise we can construct a one-many tampering adversary with error higher than $\epsilon$. To construct such an adversary $\vec(f^i)$, each $f^i_j$ has ${X_k}_{k \neq i}$ hardcoded in it and takes $X_i$ as input. This would show an adversary against which one-many non-malleable codes have an error higher than $\epsilon$.

By the time we reach $(u-1)$-th hybrid experiment, the error could only have reduced by at most $(u-1)\epsilon$. However in the $(u-1)$-th hybrid experiment, the error can at most be $\epsilon$ since it corresponds to the one-many setting. Hence, the error in the hybrid 0 could have been at most $u.\epsilon$. This concludes the proof.

\end{proof}

\paragraph{Non-malleable codes in the split-state model}

An important and well studied family of tampering functions (which is also relevant to the current work) is the family of tampering functions in the $C$-split-state model, for $C \ge 2$. In this model, each tampering function $f$ is of the form $(f_1,\ldots,f_{C})$ where   $f_{i} \in \mathcal{F}_{n/C}$, and for any codeword $x = (x_1,\ldots,x_C) \in (\{ 0,1\}^{n/C})^C$ we define $(f_1,\ldots,f_C)(x_1,\ldots,x_C) = (f_1(x_1),\ldots,f_{C}(x_C))$.  Thus each $f_i$  independently  tampers a fixed partition of the codeword. Non-malleable codes in this model can also be viewed as \emph{non-malleable secret sharing}. This is because the strings $(x_1,\ldots,x_C)$ can be seen as the shares of $s$ and tampering each share individually does not allow one to ``maul" the shared secret $s$.

There has been a lot of recent work  on constructing explicit and efficient non malleable codes in the $C$-split-state model.  Since $C=1$ includes all of $\mathcal{F}_{n}$, the best one can hope for is $C = 2$. A Monte-Carlo construction of non-malleable codes in this model was given in the original paper on non-malleable codes \cite{DPW10} for $C=2$ and then improved in \cite{CG14a}. However, both of these constructions are inefficient. For $C=2$, these Monte-Carlo constructions imply existence of codes of rate close to $\frac{1}{2}$ and corresponds to the hardest case. On the other extreme, when $C=n$,  it corresponds to the case of bit tampering where each function $f_i$ acts independently on a particular bit of the codeword. By a recent line of work \cite{DKO13} \cite{ADL13} \cite{CG14b} \cite{CZ14} \cite{ADKO14}, we now have almost optimal constructions of non-malleable codes in the $C$-state-state model, for any $C\ge 2$.

We remark that all the prior work in the information theoretic setting has only considered the construction of what we call \emph{one-one NM codes in the split state model}\footnote{A notion called as continuous non-malleable codes was considered in \cite{FMNV14} where a codeword is tampered multiple times, but the experiment stops whenever an error message is encountered. The constructions provided for continuous non-malleable codes are in the computational setting. We discuss this in Section $\ref{nm_prior}$}. That is, the adversary is only given as input a single codeword and outputs a single codeword. Our new notions seek to remedy that fact.

\paragraph{Many-many non-malleable secret sharing.}  Consider the example of non-malleable secret sharing. What if there are shares of multiple secrets which the adversary can tamper with? What if the adversary is allowed to output shares of multiple secrets? For example, say there are two secret and two devices. Each device stores one share of each of the secrets. Say that an adversary is able to tamper with the data stored on each device individually (or infect each of them with a virus). Then, the current notion of one-one NM codes does not rule out a non-trivial relationship between two resulting secrets and the two original secrets we start with. It is conceivable that what we need here is a \emph{two-two} non-malleable secret sharing. Our many-many non-malleable codes directly lead to such a many-many non-malleable secret sharing. 

\paragraph{Subsequent Work.} The notion of one-many non-malleable codes has been used to construct \emph{witness signatures} \cite{GKJ15}. Very roughly, witness signatures allow any party with a witness to some NP statement, to sign a message such that anyone can verify that the message was indeed signed using a valid witness to the NP statement. On the other hand, the signatures should still be unforgeable: that is, producing a signature on a new message (even given several message, signature pairs) should be as hard as computing a witness to the NP statement itself. There is no setup or any key generation involved. Witness signatures can be seen as an analogue of the notion of witness encryption \cite{GargGSW13} for signatures. The notion of witness signatures was introduced in \cite{GKJ15} who used one-many non-malleable codes to propose a construction in the tamper proof hardware model. The fact that non-malleable codes satisfy one-many security (as opposed to just one-one) was crucial in their work.

\subsection{Summary of results}
Our first main result is an explicit construction of a $(2,t)$-seedless non-malleable extractor. We note that  prior to this work, such a construction was not known for even $t=1$ and full min-entropy.
 \begin{thm1}  There exists a constant  $\gamma>0$ such that for all $n>0$ and $t \le n^{\gamma}$,  there exists an efficient seedless $(2,t)$-NM extractor at min-entropy $n-n^{\gamma}$ with error $2^{-n^{\Omega(1)}}$ and output length $m= n^{\Omega(1)}$.
 \end{thm1}

 Next, we show that it is possible to efficiently sample almost uniformly from the pre-image of any output of this extractor. We prove this in Section \ref{efficiency}. Combining this with Theorem $\ref{connection_t}$ and a hybrid argument, we immediately have the following result.

 \begin{thm2}  There exists a constant $\gamma>0$ such that for all $n>0$ and $t \le n^{\gamma}$,  there exists an efficient construction of   one-many non-malleable codes in the $2$-split state model with tampering degree $t$, relative rate $n^{\Omega(1)}/n$, and error $2^{-n^{\Omega(1)}}$.
 \end{thm2}

We next improve the min-entropy rate requirements of seeded non-malleable extractors. As mentioned above, prior to this work, the best known such construction worked for min-entropy rate $0.499$ \cite{Li12b}. We have the following result.

\begin{thm3}  There exists a constant $c$ such that for all $n>0$ and  $\epsilon>0$, and $k \ge c \log^{2}\l(\frac{n}{\epsilon}\r)$, there  exists an explicit   construction of a seeded non-malleable extractor $\snmExt:\{ 0,1\}^n \times \{ 0,1\}^d \rightarrow \{ 0,1\}^{m}$, with  $m = \Omega(k)$ and $d =  O\l(\log^{2}\l(\frac{n}{\epsilon}\r)\r)$.
 \end{thm3}
We in fact have a more general result, and can handle $t$-adversarial functions in the seeded non-malleable case as well, which improves a result of \cite{CRS12}. We refer the reader to Section $\ref{seeded_nm}$ for more details.

Combined with the protocol developed in \cite{DW09}, this immediately gives the following result about privacy amplification, which matches the best known result in \cite{Li15b} but has a simpler protocol.

\begin{thm4}
There exists a constant $C$ such that for any $\eps>0$ with $k \geq C(\log n+\log(1/\eps))^2$, there exists an explicit 2-round privacy amplification protocol with an active adversary for $(n, k)$ sources, with security parameter $\log(1/\eps)$ and entropy loss $O(\log n+\log (1/\eps))$.
 \end{thm4}

\subsection{Other possible approaches to construct one-many non-malleable codes}\label{other_approach}
Since a major part of this paper is devoted to constructing explicit seedless $(2,t)$-non-malleable extractors (and providing  efficient sampling algorithms for almost uniformly sampling from the pre-image of any output), and one of the major motivation for such explicit extractors is to construct one-many non-malleable codes in the $2$-split-state model, a natural question is whether existing constructions of non-malleable codes in the $2$-split state model can be modified to satisfy the stronger notion of one-many non-malleable codes.

Our first observation is that not every construction of a one-one non-malleable code satisfies the stronger notion of being a one-many non-malleable code. Intuitively this is because, say $\Enc$ and $\Dec$ are the encoding and decoding function of some non-malleable code against some class of tampering functions $\mathcal{F}$. Thus, for $f_1,f_2 \in \mathcal{F}$, and  any message $m$, $\Dec(f_1(\Enc(m)))$ is close to $D_{f_{i}}$, $i=1,2$. But it is possible this does not rule out the possibility that, for instance  $\Dec(f_2(\Enc(m))) =  \Dec(f_1(\Enc(m))) + m+1$. Clearly,  this code is not non-malleable against two adversaries from $\mathcal{F}$, and hence is not one-many.

We now take a specific example. Suppose $\C$ is  a one-one NM code against $2$-split-state adversaries. We construct another code $\C^{\prime}$ where the message $m$ is broken into $m_1$ and $m_2$ using additive secret sharing. Then one encodes both $m_1$ and $m_2$ separately using the encoder of $\C$ (and includes both as part of the code, each encoding being equally divided in two halves). It is easy to show  that $\C^{\prime}$ is still one-one NM.
 
On the other hand, if the two adversaries act in the following way: one adversary can take the encoding of $m_1$ and put an  encoding of $1$ on his own. That will be the first output code (to message $m_1+1$). Next, the other adversary can take the encoding of $m_2$ and  put an encoding of $0$ on its own. That will be second output code (to message $m_2$). Now it can be seen that the two output code sum to $m + 1$. Thus, $\C^{\prime}$ is not one-many in the $2$-split-state model.
 
It turns out that existing constructions of non-malleable codes in the split-state model either fail to satisfy stronger notion of one-many, or at least it appears non-trivial to generalize the proofs of non-malleability against multiple adversaries. We briefly discuss these approaches, and why it appears non-trivial to extend them to handle multiple adversaries. A first approach could be to generalise the reductions in the recent work of Aggarwal et al.\cite{ADKO14}, and possibly show that one-many non-malleable codes in the $2$-split-state model can be reduced to the problem of constructing one-many non-malleable codes in the bit-tampering model. However, each known construction of a NM code in the bit-tampering model \cite{CG14b} \cite{AGMPP14} follows the general approach of starting out with an initial non-malleable code in the $2$-split-state model (which is also an NM code against bit-wise tampering) of possibly low rate, and then amplifies the rate to almost optimal. Thus, it is not clear how to use this approach to construct one-many NM codes in the $2$-split-state model.

Another approach could be to show that the non-malleable codes constructed by Aggarwal et al.\ \cite{ADL13} generalize to handle many adversaries. However, from a careful examination of their proof it turns out that it is crucially used that the inner product function is an extractor for weak sources at min-entropy rate slightly greater than $\frac{1}{2}$. It turns out that this fact is tailor made for exactly one adversary, and for handling $t>1$ adversaries one needs that the inner product function is an extractor for min-entropy rate approximately $\frac{1}{t+1}$, which is not true. Thus, it is not clear how to extend their approach as well. 

Finally, a third approach could be to extend the construction of the seedless non-malleable extractor for $10$ sources in the work of Chattopadhyay and Zuckerman \cite{CZ14}. However, from a careful examination of the proof it follows that the crucial step of first constructing a seedless non-malleable condenser based on a sum-product theorem fails to generalize when there are more than one adversary.

Thus, it appears that our new explicit constructions of seedless $(2,t)$-non-malleable extractors are a necessity for constructing one-many non-malleable codes in the split-state model.

\subsection{Related work on privacy amplification} \label{seeded_nm_intro}As mentioned above, seeded non-malleable extractors were introduced by Dodis and Wichs in \cite{DW09}, to study the problem of privacy amplification with an active adversary. 


The goal is roughly as follows. We pick a security parameter $s$, and if the adversary Eve remains passive during the protocol then the two parties should achieve shared secret random bits that are $2^{-s}$-close to uniform. On the other hand, if Eve is active, then the probability that Eve can successfully make the two parties output two different strings without being detected is at most $2^{-s}$. We refer the readers to \cite{DLWZ11} for a formal definition.

Here, while one can still design protocols for an active adversary, the major goal is to design a protocol that uses as few number of interactions as possible, and output a uniform random string $R$ that has length as close to $H_{\infty}(X)$ as possible (the difference is called \emph{entropy loss}). When the entropy rate of $X$ is large, i.e., bigger than $1/2$, there exist protocols that take only one round (e.g., \cite{MW07}, \cite{dkrs}). However these protocols all have very large entropy loss. On the other hand, \cite{DW09} showed that when the entropy rate of $X$ is smaller than $1/2$, then no one round protocol exists; furthermore the length of $R$ has to be at least $O(s)$ smaller than $H_{\infty}(X)$. Thus, the natural goal is to design a two-round protocol with such optimal entropy loss. There has been a lot of effort along this line \cite{MW07}, \cite{dkrs}, \cite{DW09}, \cite{RW03}, \cite{KR09}, \cite{ckor}, \cite{DLWZ11}, \cite{CRS12}, \cite{Li12a}, \cite{Li12b}. However, all protocols before the work of \cite{DLWZ11} either need to use $O(s)$ rounds, or need to incur an entropy loss of $O(s^2)$. 

In \cite{DW09}, Dodis and Wichs showed that the previously defined  seeded non-malleable extractor can be used to construct 2-round privacy amplification protocols with optimal entropy loss. They further showed that seeded non-malleable extractors exist when $k>2m+3\log(1/\eps) + \log d + 9$ and $d>\log(n-k+1) + 2\log (1/\eps) + 7$. However, they were not able to construct such extractors. The first explicit construction of seeded non-malleable extractors appeared in \cite{DLWZ11}, with subsequent improvements in \cite{CRS12}, \cite{Li12a}, \cite{DY12}. However, all these constructions require the entropy rate of the weak source to be bigger than $1/2$. In another paper, Li \cite{Li12b} gave the first explicit non-malleable extractor that breaks this barrier, which works for entropy rate $1/2-\delta)$ for some constant $\delta>0$. This is the best known seeded non-malleable extractor to date. Further, \cite{Li12b} also showed a connection between seeded non-malleable extractors and two-source extractors, which suggests that constructing explicit seeded non-malleable extractors with small seed length for smaller entropy may be hard.

In a different line of work, Li \cite{Li12a} introduced the notion of \emph{non-malleable condenser}, which is a weaker object than seeded non-malleable extractor. He then constructed explicit non-malleable condensers for entropy as small as $k=\polylog(n)$ in \cite{Li15b} and used them to give the first two-round privacy amplification protocol with optimal entropy loss, subject to the constraint that $k \geq s^2$.

\subsection{Related work on non-malleable codes}\label{nm_prior}
We give a summary of known constructions of non-malleable codes. As remarked above, all known explicit constructions of non-malleable codes in the information theoretic setting are in framework of what we call as one-one non-malleable codes. That is, the adversary is only given as input a single code and outputs a single code. 

Since the introduction of non-malleable codes by Dziembowski, Pietrzak and Wichs \cite{DPW10}, the most well studied model is the $C$-split-state model introduced above. By a recent line of work \cite{DKO13} \cite{ADL13} \cite{CG14b} \cite{CZ14} \cite{ADKO14}, we now have almost optimal constructions of non-malleable codes in the $C$-state-state model, for any $C\ge 2$. 

In the model of global tampering, Agrawal et al. \cite{AGMPP14} constructed efficient non-malleable codes with rate $1-o(1)$ against a class of tampering functions slightly more general than the family of permutations. 

A notion related to the many-many setting we consider in this work, called as continuous non-malleable codes, was introduced by Faust et al.\ \cite{FMNV14}. In a continuous non-malleable code, the codewords was allowed to be tampered multiple times, but the tampering experiment stops whenever an error message is detected. Thus this model is  weaker than the notion we consider. The constructions provided of continuous non-malleable codes in \cite{FMNV14} are under computational assumptions.

There were also some other conditional results. Liu and Lysyanskaya \cite{LL12} constructed efficient constant  rate non-malleable codes in the split-state model against computationally bounded adversaries under strong cryptographic assumptions. The work of Faust et al.\ \cite{FMVW13} constructed  almost optimal non-malleable codes against the class of polynomial sized circuits  in the CRS framework. \cite{CCP12}, \cite{CCP11}, \cite{CSKM11}, and \cite{FMNV14} considered non-malleable codes in other models. 

The recent work of Chandran et al.\ \cite{CGMPU15} found interesting connections between non-malleable codes in a model slightly more general than the split-wise model and non-malleable commitment schemes.

\subsubsection*{Organization}
We give an overview of all our explicit constructions in Section \ref{section:overview}. We introduce some preliminaries in Section $\ref{section:prelims}$, and formally define seeded and seedless non-malleable extractors in Section \ref{formaldefs}. We use Section $\ref{section:connection}$ to present the connection between seedless $(2,t)$-non-malleable extractors and one-many non-malleable codes in the $2$-split-state model. In Section $\ref{section:proof}$, we present an explicit construction of a seedless $(2,t)$-non-malleable extractor. An explicit construction of a seeded non-malleable extractor construction at polylogarithmic min-entropy is presented in Section $\ref{seeded_nm}$. Finally, we use Section $\ref{efficiency}$ to give efficient encoding and decoding algorithms for the resulting one-many non-malleable codes.

\section{Overview of Our Constructions} \label{section:overview}
In this section, we give an overview of the main ideas involved in our constructions. The main ingredient in all our constructions is an explicit seedless $(2,t)$-non-malleable extractor. Further, we give  efficient algorithms for almost uniformly sampling from the pre-image of any output of this extractor. The explicit construction of many-many non-malleable codes  in the $2$-split state model with tampering degree $t$ then follow in a straightforward way using the connection via Theorem $\ref{connection_t}$. 

It turns out that by a  simple modification of the construction of our seedless non-malleable extractor, we also have an explicit construction of a seeded non-malleable extractor which works for any min-entropy $k\ge \log^{2} n$. We will give an overview of how to achieve this as well.

\subsection{A Seedless $(2,t)$-Non-Malleable Extractor}\label{sec:overview_t}

  Let $\gamma$ be a small enough constant and $C$ a large enough one. Let $t=n^{\gamma/C}$. 

We construct an explicit function $\nmExt: \{ 0,1\}^{n} \times \{ 0,1\}^{n} \rightarrow \{ 0,1\}^{m}$, $m=n^{\Omega(1)}$ which satisfies the following property: If $X$ and $Y$ be  independent $(n,n-n^{\gamma})$-sources on $\{0,1 \}^n$, and $\A_1=(f_1,g_1),\ldots,\A_t=(f_t,g_t)$ are arbitrary $2$-split sate tampering functions such that  for any $i \in [t]$, at least one of $f_i$ or $g_i$ has no fixed points, the following holds: 
\begin{align*} 
|\nmExt(X,Y) \circ \nmExt(\A_1(X,Y)) \circ \ldots \nmExt(\A_t(X,Y)) - \\ U_{m} \circ \nmExt(\A_1(X,Y)) \circ \ldots \nmExt(\A_t(X,Y))| \le \epsilon,
\end{align*}
where $\epsilon = 2^{-n^{\Omega(1)}}$.

By a  convex combination argument (Lemma $\ref{lemma:final_convex}$), we show that if $\nmExt$ satisfies the property above, then it is indeed a seedless $(2,t)$-non-malleable extractor (Definition $\ref{def:t}$).

We introduce some notation.

\textbf{Notation:} For any function $H$, and $V = H(X,Y)$, we use $V^{(i)}$ to denote the random variable $H(\A_i(X,Y))$. If $Z_a,Z_{a+1},\ldots,Z_{b}$ are random variables, we use $Z_{[a,b]}$ to denote the random variable $(Z_{a},\ldots,Z_b)$. 
For any bit string $z$, let $z_{\{ h\}}$ denote the symbol in the $h$'th co-ordinate of $x$.  For a string $x$ of length $m$, and $T \subseteq [m]$, let $x_{\{ T\}}$  be the projection of $x$ onto  the co-ordinates indexed by $T$. For a string $x$ of length $m$, define the string $\slice(x,w)$ to be the prefix of $x$ with length $w$. 

The high level idea of the non-malleable extractor is as follows. Initially we have two independent sources $(X, Y)$ and $t$ tampered version $\{\A_i(X, Y)\}$, which can depend arbitrarily on $(X, Y)$. We would like to gradually break the dependence of $\{\A_i(X, Y)\}$ on $(X, Y)$, until at the end we get an output $\nmExt(X, Y)$ which is independent of all $\{\nmExt(\A_i(X,Y))\}$.

Towards this end, we would first like to create something from $(X, Y)$ that can distinguish from $\{\A_i(X, Y)\}$. More specifically, we will obtain a small string $Z$ of length $n^{\Omega(1)}$ from $(X, Y)$, such that with high probability $Z$ is different from all $\{Z^{(i)}\}$ obtained from $\{\A_i(X, Y)\}$. Next, we will run some iterative steps of extraction from $(X, Y)$, with each step based on one bit of $Z$. The crucial property we will have here is that whenever we reach a bit of $Z$ which is different from the corresponding bits of $\{Z^{(i)}, i \in S\}$ for some subset $S \subseteq [t]$, in that particular step the output of our extraction from $(X, Y)$ will be (close to) uniform and independent of all the corresponding outputs obtained from $\{A_i(X, Y), i \in S\}$. Furthermore this will remain true in all subsequent steps of extraction. Therefore, since $Z$ is different from all $\{Z^{(i)}, i \in [t]\}$, we know that at the end our output $\nmExt(X, Y)$ will be independent of all $\{\nmExt(\A_i(X,Y)), i \in [t]\}$.\ We now elaborate about the two steps in more details below.

\textbf{Step 1:} Here we use the sources $X$ and $Y$ to obtain a random variable $Z$, such that for each $i \in [t]$,  $Z \neq Z^{(i)}$ with probability at least $1-2^{-n^{\Omega(1)}}$. Thus by a union bound with probability $1-2^{-n^{\Omega(1)}}$ we have that $Z$ is different form all $\{Z^{(i)}, i \in [t]\}$. To obtain $Z$, we first take two small slices  $X_1$ and $Y_1$ from the sources $X$ and $Y$ respectively, with size at least $3n^{\gamma}$; and use the strong inner product $2$-source extractor $\IP$ to generate an almost uniform random variable $V=\IP(X,Y)$. Now we take an explicit asymptotically good binary linear error correcting code, and obtain encodings $(E(X), E(Y))$ of $(X, Y)$ respectively. We now use $V$ to pseudorandomly sample $n^{\Omega(1)}$ bits from $E(X)$ to obtain $X_2$, and we do the same thing to obtain $Y_2$ from $E(Y)$. We use known constructions of an averaging sampler $\samp$ \cite{Z97} \cite{Vad04} (see Definition \ref{samp:vad})  to do this (in fact, we can even sample completely randomly since $V$ is close to uniform). 

Now define $$Z= X_1 \circ Y_1 \circ X_2 \circ Y_2.$$ The length of $Z$ is $\ell = n^{\beta}$ bits for some small constant $\beta$.  Fix some $i \in [t]$. We  claim that $Z \neq Z^{(i)}$ with probability at least $1-2^{-n^{\Omega(1)}}$. To see this, assume without loss of generality  that  $f_i$ has no fixed points. If $X_1 \neq X_1^{(i)}$ or $Y_1 \neq Y_1^{(i)}$, then  we have $Z \neq Z^{(i)}$. Now suppose $X_1 = X_1^{(i)}$ and $Y_1 = Y_1^{(i)}$. Thus, $V = V^{(i)}$. We fix $X_1$, since $\IP$ is a strong extractor (Theorem $\ref{strong_ip}$), $V$ is still close to uniform and now it is a function of $Y$, and thus independent of $X$. Since $X \neq X^{(i)}$, by the property of the code, we know that $E(X)$ and $E(X^{(i)})$ must differ in at least a constant fraction of co-ordinates. Thus, if we uniformly (or pseudorandomly) sample $n^{\Omega(1)}$ bits from these coordinates, then with probability $1-2^{-n^{\Omega(1)}}$ the sampled strings will be different. 

We can now fix $Z,\{Z^{(i)}: i \in [t]\}$, such that $Z \neq Z^{(i)}$ for any $i$. Since the size of each $Z^{(i)}$ is small, we have that conditioned on this fixing, the sources $X$ and $Y$ are still independent and have min-entropy at least $n-O(t\ell)$ each (with high probability).

\textbf{Step 2:} Here our goal is to gradually break the dependence of $\{\A_i(X, Y)\}$ on $(X, Y)$, until at the end we get an output $\nmExt(X, Y)$ which is independent of all $\{\nmExt(\A_i(X,Y))\}$. To achieve this, our crucial observation is that while many other techniques in constructing non-malleable seeded extractors (such as those in \cite{DLWZ11}, \cite{CRS12}, \cite{Li12b} fail in the case where both sources are tampered, the powerful technique of alternating extraction still works. Thus, we will be relying on this technique, which has been used a lot in recent studies of extractors and privacy amplification \cite{DW09}, \cite{Li12a}, \cite{Li12b}, \cite{Li13a}, \cite{Li13b}, \cite{Li15}. We now briefly recall the details. The alternating extraction protocol is an interactive protocol between two parties, Quentin and Wendy, using two strong seeded extractors $\Ext_q$, $\Ext_w$. Assume initially Wendy has a weak source $X$ and Quentin has another source $Q$ and a short uniform random string $S_1$.\footnote{In fact, $S_1$ can be a slightly weak random source as well.} Suppose that $X$ is independent of $(Q, S_1)$. In the first round, Quentin sends $S_1$ to Wendy, Wendy computes $R_1 = \Ext_{w}(X,S_1) $ and sends it back to Quentin, and Quentin then computes $S_2= \Ext_q(Q,R_1)$. Continuing in this way, in round $i$, Quentin sends $S_{i}$, Wendy computes the random variables $R_{i} = \Ext_w(X,S_{i})$ and sends it to Quentin, and Quentin then computes the random variable $S_{i+1} = \Ext_q(Q,R_{i})$. This is done for some $u$ steps, and each of the random variables $R_i,S_i$ is of length $m$. Thus,  the following sequence of random variables is generated: $$ S_1, R_1 = \Ext_{w}(X,S_1), S_2 = \Ext_{q}(Q,R_1),\ldots,S_{u} = \Ext_{q}(Q,R_{u-1}),R_{u} = \Ext_{w}(X,S_{u}).$$
Also define the following  look-ahead extractor: $$\laExt(X,(Q,S_1)) =  R_1,\ldots,R_{u}$$
Now suppose we have $t$ tampered versions of $X$: $X^{(1)},\ldots,X^{(t)}$, which can depend on $X$ arbitrarily; and $t$ tampered versions of $(Q, S_1)$: $(Q^{(1)},S_1^{(1)}),\ldots,(Q^{(t)},S_1^{(t)})$, which can depend on $(Q, S_1)$ arbitrarily. Let $\laExt(X,(Q,S_1))=R_1,\ldots,R_u$, and for $h \in [t]$, let $\laExt(X^{(h)},(Q^{(h)},S_1^{(h)})) = R_1^{(h)},\ldots,R_t^{(h)}$. As long as $(X,X^{(1)},\ldots,X^{(t)})$ is independent of $((Q,S_1),(Q^{(1)},S_1^{(1)}),\ldots,(Q^{(t)},S_1^{(t)}))$ and $t, u, m$ are small compared to the entropy of $X$ and $Q$, one can use induction together with standard properties of strong seeded extractors to show that the following holds: for any $j \in [u]$, 
\begin{align*} R_{j},\{ R_i^{(h)} : i \in [j-1], h \in [t]\},\{ (Q^{(h)},S_1^{(h)}): h \in [t] \} \\ \approx U_m,\{ R_i^{(h)} : i \in [j-1], h \in [t]\},\{ (Q^{(h)},S_1^{(h)}): h \in [t] \}\end{align*}

Based on this property, we describe two different approaches to achieve our goal in Step $2$. The first approach was our initial construction, while the second approach is inspired by new techniques in a recent work of Cohen \cite{C15}. It turns out the second approach is simpler and more suitable for our application to many-many non-malleable codes, thus we only provide the formal proof for the second approach in this paper (see Section~\ref{section:proof}). Recall that the high level idea in both approaches is that we will proceed bit by bit based on the previously obtained string $Z$, which is different from all $\{Z^{(i)}, i \in [t]\}$. Whenever we reach a bit of $Z$ which is different from the corresponding bits of $\{Z^{(i)}, i \in S\}$ for some subset $S \subseteq [t]$, in that particular step the output of our extraction from $(X, Y)$ will be (close to) uniform and independent of all the corresponding outputs obtained from $\{A_i(X, Y), i \in S\}$. Furthermore this will remain true in all subsequent steps of extraction. We will achieve this by running some alternating extraction protocol for $\ell$ times, where $\ell$ is the length of $Z$. Each time the alternating extraction will be between $X$ and a new $(Q_h, S_{1, h})$ obtained from $Y$, where we take $S_{1, h}$ to be a small slice of $Q_h$.

\textbf{Construction $1$:}\footnote{formal proofs of the claims in the sketch of Construction $1$ are not provided in this paper.}
Our first approach is based on a generalization of the techniques in \cite{Li13b}. Here we fist achieve an intermediate goal: whenever we reach a bit of $Z$ which is different from the corresponding bits of $\{Z^{(i)}, i \in S\}$ for some subset $S \subseteq [t]$, the output of our extraction from $(X, Y)$ will \emph{have some entropy} conditioned on all the corresponding outputs obtained from $\{A_i(X, Y), i \in S\}$. Suppose at step $h$ ($1 \leq h \leq \ell$) we have obtained $Q_h$ from $Y$ (in the first step we can take a small slice of $Y$ to be $Q_1$) and use it to run an alternating extraction protocol with $X$. We run the alternating extraction for $t+2$ rounds and obtain outputs $R_{h, 1},\ldots,R_{h, t+2}$. The crucial idea is to use the $h$'th bit of $Z$, to set a random variable $W_h$ as either $(R_{h,1},\ldots,R_{h,t+1})$ or $R_{h,t+2}$ (appended with an appropriate number of $0$'s to make them the same length). 

Now consider the subset $S \subseteq [t]$ where the $h$'th bit of $Z$ is different from the $h$'th bit of $\{Z^{(i)}, i \in S\}$. If $W_h=(R_{h,1},\ldots,R_{h,t+1})$ then for all $i \in S$, we have $W_h^{(i)}=R_{h,t+2}^{(i)}$. Since $S$ has at most $t$ elements, the size of $\{W_h^{(i)}\}$ is at most $tm$. Note that $W_h$ has size $(t+1)m$ and is close to uniform. Thus $W_h$ has entropy roughly $m$ conditioned on $\{W_h^{(i)}, i \in S\}$ (here we can ignore the appended $0$'s in $W_h^{(i)}$ since they won't affect the entropy in $W_h$). On the other hand, if  then $W_h=R_{h,t+2}$ then for all $i \in S$, we have $W_h^{(i)}=R_{h,1}^{(i)},\ldots,R_{h,t+1}^{(i)}$. By the property of alternating extraction we have that $W_h$ is close to uniform conditioned on $\{W_h^{(i)}, i \in S\}$.

We can now go from having conditional entropy to being conditional uniform, as follows. We first convert $W_h$ into a somewhere random source by applying an optimal seeded extractor and trying all possible choices of the seed. One can show that conditioned on previous random variables generated in our algorithm, $W_h$ is now a deterministic function of $X$ and thus independent of $Y$ and $Q_h$. We now take another optimal seeded extractor and use each row in this somewhere random source as a seed to extract a longer output from $Q_h$. In this way we obtain a new somewhere random source. If we choose parameters appropriately we can ensure that the size of $W_h$ is much smaller than the entropy of $Q_h$, and thus the number of rows in this new somewhere random source is much smaller than its row length. Therefore, by using an extractor from \cite{BRSW06} we can use this somewhere random source to extract a close to uniform output $V_h$ from $X$. Since $W_h$ has entropy at least $m$ conditioned on $\{W_h^{(i)}, i \in S\}$, as long as the size of $V_h$ is small, using standard arguments one can show that $V_h$ will be close to uniform conditioned on all $\{V_h^{(i)}, i \in S\}$. 

We now go into the next step of alternating extraction, where we will take a strong seeded extractor and use $V_h$ to extract a uniform string $Q_{h+1}$ from $Y$. We will then use $X$ and $Q_{h+1}$ to do the alternating extraction for next step. The point here is that whenever we have $V_h$ is close to uniform conditioned on all $\{V_h^{(i)}, i \in S\}$ for some $S \subseteq [t]$, we can show that $Q_{h+1}$ is close to uniform conditioned on all $\{Q_{h+1}^{(i)}, i \in S\}$. Thus in the next step of alternating extraction, we can first fix all $\{Q_{h+1}^{(i)}, i \in S\}$, and then fix all the $\{R_{h+1, j}^{(i)}, i \in [S], j \in [t+2]\}$, and all the $\{V_{h+1}^{(i)}, i \in S\}$ (these will now be deterministic functions of $X$). Conditioned on this fixing $Q_{h+1}$ is still close to uniform, and $X$ still has a lot of entropy left (as long as the size of each $R_{h+1, j}^{(i)}$ and $V_{h+1}^{(i)}$ is small). Therefore, in this step $V_{h+1}$ will be close to uniform even conditioned on all $\{V_{h+1}^{(i)}, i \in S\}$, i.e., once we have independence it will continue to hold in subsequent steps. Thus our goal is achieved.

\textbf{Construction $2$:} Here we replace our approach in Construction $1$ with a more direct approach, by using the idea of ``flip-flop" alternating extraction introduced in a recent paper by Cohen \cite{C15}, which is again based on the techniques developed in \cite{Li13b}. Again, assume we are now looking at the $h$'th bit of $Z$, and we have obtained $Q_h$ from $Y$. 

Now each step of alternating extraction will consist of two sub steps of alternating extraction, with each sub step taking two rounds. In the first sub step, we use $X$ and $Q_h$ to perform an alternating extraction for two rounds and output $R_{h, 1}, R_{h, 2}$. If the $h$'th bit of $Z$ is $0$, we take $V_h=R_{h, 1}$; otherwise we take $V_h=R_{h, 2}$. Now we will take a strong seeded extractor $\Ext$ and use $V_h$ to extract $\overline{Q}_h=\Ext(Y, V_h)$ from $Y$. We then use $\overline{Q}_h$ and $X$ to perform the second sub step of alternating extraction, which again runs for two rounds and outputs $\overline{R}_{h, 1}, \overline{R}_{h, 2}$. Now if the $h$'th bit of $Z$ is $0$, we take $\overline{V}_h=\overline{R}_{h, 2}$; otherwise we take $\overline{V}_h=\overline{R}_{h, 1}$. One can see that this is indeed in a ``flip-flop" manner.

The idea is as follows. Consider the $h$'th bit of $Z$, and let $S \subseteq [t]$ be such that for all $i \in S$, we have $Z_{\{h\}} \neq Z_{\{h\}}^{(i)}$. Now consider the $h$'th step of alternating extraction. If $Z_{\{h\}}=0$, then in the first sub step of alternating extraction, $V_h=R_{h, 1}$; while for all $i \in S$, we have $V_h^{(i)}=R_{h, 2}^{(i)}$. Now it's possible that $V_h$ depends on $\{V_h^{(i)}, i \in S\}$, and thus $\overline{Q}_h$ also depends on $\{\overline{Q}_h^{(i)}, i \in S\}$. However, when we go into the second sub step of alternating extraction, we will choose $\overline{V}_h=\overline{R}_{h, 2}$; while for all $i \in S$, we have $\overline{V}_h^{(i)}=\overline{R}_{h, 1}^{(i)}$. Thus by the property of alternating extraction, we have that $\overline{V}_h$ is close to uniform conditioned on all $\{\overline{V}_h^{(i)}, i \in S\}$. 

On the other hand, if $Z_{\{h\}}=1$, then in the first sub step of alternating extraction, $V_h=R_{h, 2}$; while for all $i \in S$, we have $V_h^{(i)}=R_{h, 1}^{(i)}$. Thus in this sub step, by the property of alternating extraction, we have that $V_h$ is close to uniform conditioned on all $\{V_h^{(i)}, i \in S\}$. Therefore we also have that $\overline{Q}_h$ is close to uniform conditioned on all $\{\overline{Q}_h^{(i)}, i \in S\}$, and they are deterministic functions of $Y$ given $V_h$ and $\{V_h^{(i)}, i \in S\}$. Thus, when we go into the second sub step of alternating extraction, we can first fix all $\{\overline{Q}_h^{(i)}, i \in S\}$ and $\overline{Q}_h$ is still close to uniform. Now all $\{\overline{V}_h^{(i)}, i \in S\}$ will be deterministic functions of $X$, and thus we can further fix them. As long as the size of each $\overline{R}_{h, j}$ is small, conditioned on this fixing $X$ still has a lot of entropy left. Therefore $\overline{Q}_h$ can still be used to perform an alternating extraction with $X$, and this gives us that $\overline{V}_h=\overline{R}_{h, 1}$ is close to uniform. That is, again we get that $\overline{V}_h$ is close to uniform conditioned on all $\{\overline{V}_h^{(i)}, i \in S\}$.

Once we have this property, we can go into the next step of alternating extraction. We will now take a strong seeded extractor and use $\overline{V}_h$ to extract $Q_{h+1}$ from $Y$, and then use $X$ and $Q_{h+1}$ to perform the next step of alternating extraction. Since $\overline{V}_h$ is close to uniform conditioned on all $\{\overline{V}_h^{(i)}, i \in S\}$, we also have that $Q_{h+1}$ is close to uniform conditioned on all $\{Q_{h+1}^{(i)}, i \in S\}$. Thus by the same argument above, we can first fix all $\{Q_{h+1}^{(i)}, i \in S\}$ and all $\{V_{h+1}^{(i)}, i \in S\}$, and conditioned on this fixing $Q_{h+1}$ is still close to uniform. Therefore $V_{h+1}$ will be close to uniform conditioned on all $\{V_{h+1}^{(i)}, i \in S\}$. Thus, going into the second sub step, we will also have that $\overline{Q}_{h+1}$ is close to uniform conditioned on all $\{\overline{Q}_{h+1}^{(i)}, i \in S\}$. Thus again we can first fix all $\{\overline{Q}_{h+1}^{(i)}, i \in S\}$ and all $\{\overline{V}_{h+1}^{(i)}, i \in S\}$, and conditioned on this fixing $\overline{Q}_{h+1}$ is still close to uniform. Therefore we get that $\overline{V}_{h+1}$ is close to uniform conditioned on all $\{\overline{V}_{h+1}^{(i)}, i \in S\}$, i.e., once we have independence it will continue to hold in subsequent steps. Thus our goal is achieved.

\subsection{An Explicit Seeded Non-Malleable Extractor for Polylogarithmic Min-Entropy}
Let $\gamma>0$ be a small constant. For any $\epsilon>0$, let  $k \ge O\l( \log^{2+ \gamma}\l(\frac{n}{\epsilon}\r)\r)$, $t\le k^{\gamma/2}$ and $d  =  O\l(t ^2 \log^{2}(\frac{n}{\epsilon})\r)$. We construct a function $\snmExt: \{ 0,1\}^n \times \{ 0,1\}^n \rightarrow \{ 0,1\}^{m}$, $m=O\l(\log\l(\frac{n}{\epsilon}\r)\r)$, such that the following holds: If $X$ is a $(n,k)$-source, $Y$ is an independent uniform seed of length $d$, and  $\A_1,\ldots,\A_{t}$ are arbitrary functions with no fixed points, then the following holds: \begin{align*} \snmExt(X,Y),\snmExt(X,\A_1(Y)),\ldots,\snmExt(X,\A_t(y)) \\ \approx_{\epsilon}U_m,\snmExt(X,\A_1(Y)),\ldots,\snmExt(X,\A_t(y)) \end{align*} 
 
 We now describe our construction, which is essentially a simple modification of our seedless non-malleable extractor construction.

\textbf{Step 1:} Let $Y_1$ be a small slice of $Y$. Compute $V = \Ext(X,Y_1)$, where $\Ext$ is a strong seeded extractor. Now we use $V$ to randomly sample bits from $E(Y)$, where $E$ is the encoder of an asymptotically good error correcting code. Let the sampled bits be $Y_2$. We define $$Z = Y_1 \circ Y_2$$
We show  that with high probability $Z \neq Z^{(i)}$ for all $i \in [t]$. We provide a  brief sketch of the argument. Fix any $i \in [t]$. If $Y_1 \neq Y_1^{(i)}$, then clearly $Z \neq Z^{(i)}$. Now suppose $Y_1 = Y_1^{(i)}$. We fix $Y_1$, and since $\Ext$ is a strong seeded extractor, it follows that $V$ is still close to uniform, and is a deterministic function of $X$, thus independent of $Y, \{Y^{i}, i \in [t]\}$. Therefore $V$ can be used to sample bits from $Y$. Since $\A_i$ has no fixed points, it follows that $Y \neq Y^{(i)}$. Thus $E(Y)$ and $E(Y^{(i)})$ must differ in at least a constant fraction of coordinates. Therefore with high probability $Y_2 \neq Y_2^{(i)}$. By a union bound,  with high probability $Z \neq Z^{(i)}$ for all $i \in [t]$.

\textbf{Step 2:} As long as the size of $(Y_1, V, Y_2)$ is small, we can show that conditioned on the fixing of these variables, $X$ and $Y$ are still independent. Moreover both $X$ and $Y$ only lose a small amount of entropy. Now we can use any of Construction $1$ and Construction $2$ above to finish the extraction. The same argument will show that at the end $\snmExt(X,Y)$ will be close to uniform conditioned on all $\{\snmExt(A,\A_i(Y)), i \in [t]\}$.


We refer the reader to Section $\ref{seeded_nm}$ for more details.

\paragraph{Comparison to the LCB in \cite{C15}} Our second approach in constructing non-malleable two-source extractors is inspired by the work of \cite{C15}. Especially, we use the idea of ``flip-flop" alternating extraction introduced there. However, there are also some differences between our construction and the ``Local Correlation Breaker" constructed in \cite{C15}, which are worth pointing out. 

First, in our construction, both sources $X$ and $Y$ are tampered. This results in $t$ random variables $X^{(1)},\ldots,X^{(t)}$ that are arbitrarily correlated with $X$, and $t$ random variables $Y^{(1)},\ldots,Y^{(t)}$ that are arbitrarily correlated with $Y$. In contrast, in the case of Local Correlation Breaker constructed in \cite{C15}, there are only correlated random variables with one source, while the other source is not tampered. In this sense, our construction can actually be viewed as given a stronger version of the LCB. 

Second, the way to obtain a string that distinguishes the correlated parts is quite different. In the case of the LCB, one can simply use the index of each row in the somewhere random source. On the other hand, in our case we do not have such an index, since the only access we have are the two sources $X$ and $Y$. Thus, we have to take extra efforts to create such a string from these two sources, by using error correcting codes and random sampling.

\subsection{Efficient Algorithms for Many-Many Non-Malleable Codes}
The above construction gives a $(2,t)$ non-malleable extractor. However, for our application to constructing explicit many-many non-malleable codes, given any output of the extractor we need to efficiently sample (almost) uniformly from its pre-image. To do this using the construction described above is highly non-trivial. Therefore, in order to make it easy to efficiently sample from the pre-image of an output (i.e., ``inverting" the extractor), we use additional ideas to modify the non-malleable extractor. We now briefly describe the main ideas that we use. Recall that $t$ is the number of tampered versions of the sources, and $\ell$ is the length of the string $Z$ we obtained.

\textbf{Idea 1:} Since our construction of the $(2,t)$ non-malleable extractor involves multiple steps of alternating extraction, we need to first invert the extractors used in these steps. For this purpose, we will use linear seeded strong extractors in all alternating extraction steps. A linear seeded strong extractor is an extractor such that for any fixed seed, the output is a linear function of the input. Thus for any fixed seed, in order to sample uniformly from an output's pre-image, we can just sample uniformly according to a system of linear equations, which can be done efficiently. 

\textbf{Idea 2:} Next, we will divide the sources $X$ and $Y$ into blocks. In each step of alternating extraction, we will also divide $Q_h$ and $\overline{Q}_h$ into blocks. Then, whenever we use an extractor to extract from $X$, $Q_h$ or $\overline{Q}_h$, we will use a completely new block of $X$, $Q_h$ or $\overline{Q}_h$. When we apply an extractor to $Y$ to generate $Q_h$ or $\overline{Q}_h$, we will also use completely new blocks of $Y$ to do this. This ensures that we do not have to deal with multiple compositions of extractors on the same string. That is, different applications of extractors are used on different parts of the inputs; so to invert them we can invert each part separately. Note that each alternating extraction takes at most $2$ rounds, so it suffices to divide $Q_h$ and $\overline{Q}_h$ into two blocks. 

Here, we need to choose the parameters appropriately.\ Let the size of each $S_{h, j}$ and $R_{h, j}$ produced in alternating extraction be roughly $d$, and the size of each block of $Q_h$ and $\overline{Q}_h$ be $n_q$. Since in the analysis of each alternating extraction we need to fix $O(t)$ tampered versions of $(S_{h, j}, \overline{S}_{h, j})$ and $(R_{h, j}, \overline{R}_{h, j})$, we need to have $n_q \geq \Theta(td)$. Now in the analysis of the entire non-malleable extractor, we need to fix $O(t)$ tampered versions of $Q_h$ and $\overline{Q}_h$, and $O(t\ell)$ tampered versions of $(S_{h, j}, \overline{S}_{h, j})$. The total size of this is $O(t \ell d)$. Thus we can take all $t, \ell, d$ to be some small enough $n^{\Omega(1)}$ such that the total entropy loss of $X$ and $Y$ is some small $n^{\Omega(1)}$. Note that $X$ and $Y$ initially have almost full entropy. Therefore, we can divide $X$ and $Y$ into $O(\ell)$ blocks (or even $O(t\ell)$ blocks, for a reason we will explain below), such that even conditioned on the fixing of all $\{S_{h, j}, R_{h, j}, \overline{S}_{h, j}, \overline{R}_{h, j}, Q_h, \overline{Q}_h\}$ and all previous blocks, each block still has entropy rate say at least $0.9$ (this can be achieved as long as $n/(t \ell) \gg t \ell d$). This ensures that each time we apply an extractor, we can use new blocks of $X$ and $Y$.

\textbf{Idea 3:} However, there is one issue with inverting a linear seeded extractor. The problem is that the pre-image size for different seeds may not be the same. For example, if we have a linear seeded extractor that outputs $m$ bits from an $n$-bit input, then one can show that for most seeds the pre-image size is $2^{n-m}$, while for some seed the pre-image size can be $2^n$. If we first generate the seed uniformly and then sample uniformly from the pre-image given each seed, then the overall distribution is not uniform over the entire pre-image, due to the above mentioned size difference. To rectify this, we construct a new linear seeded extractor $\iExt:\{ 0,1\}^{n} \times \{ 0,1\}^{d} \rightarrow \{ 0,1\}^{m}$ with $m=d/2$ that works for entropy rate $0.9$ sources. Moreover $\iExt$ has the property that given any output, for any fixed seed the pre-image size is the same. The idea is as follows. We first take $0.1d$ bits from the seed and use an average sampler to sample $0.9d$ distinct bits from the source. Since we are using a sampler and the source has entropy rate $0.9$, an argument in \cite{Vad04} shows that with high probability conditioned on the $0.1d$ bits of the seed, the sampled $0.9d$ bits from the source also has entropy rate roughly $0.9$. Now we take the rest $0.9$ bits of the seed and the sampled $0.9d$ bits from the source and apply the inner product two-source extractor (or just use leftover hash lemma), which can output $d/2$ uniform random bits. Now the point is that given any output and any fixed seed, the pre-image of the inner product part has the same size,\footnote{Except when the seed is $0$, but we can deal with this by adding a $1$ to both the source and the seed.} and now the pre-image of any sampled bits also have the same size (since the pre-image is just the sampled bits adding any possible choice of the other $n-0.9d$ bits). 

Note that each time we apply $\iExt$, the output length becomes half of the seed length. Thus in the alternating extraction if we start with seed length $d$, then after one sub step of alternating extraction, the output length will become $\Omega(d)$ since the sub step takes at most $2$ rounds. We will truncate the output if necessary to keep it to be the same length, no matter we choose $R_{h,1}$ or $R_{h, 2}$ (since they have different sizes). Now we need to use this output to extract $\overline{Q}_h$ or $Q_{h+1}$ from $Y$. Since the size of $\overline{Q}_h$ or $Q_{h+1}$ is $\Theta(td)$, we will take $\Theta(t)$ new blocks from $Y$ and apply $\iExt$ to them using the \emph{same} seed, and then concatenate the outputs. Since the blocks of $Y$ form a block source, and $\iExt$ is a strong seeded extractor, one can show that the concatenated outputs is close to uniform. We then do the same thing for the next sub step of alternating extraction. Since we need to repeat alternating extraction for $O(\ell)$ steps, we need to divide $Y$ into $O(t\ell)$ blocks; while we can divide $X$ into only $O(\ell)$ blocks.

\textbf{Idea 4:} Now given any output, our sampling strategy is as follows. We first uniformly generate $X_1$ and $Y_1$, from whom we can compute $V=\IP(X_1, Y_1)$. Then we know which bits of the codeword we are sampling. We then uniformly generate these sampled bits $X_2, Y_2$ and thus we obtain $Z$. Once we have $Z$, we will now uniformly generate all $\{S_{h, j}, R_{h, j}, \overline{S}_{h, j}, \overline{R}_{h, j}\}$ produced in alternating extractions. Based on $Z$ and these variables, we can now generate all the blocks of $X$ used and all the $\{Q_h, \overline{Q}_h\}$ by inverting $\iExt$. Finally, based on $\{Q_h, \overline{Q}_h\}$ we can generate all the blocks of $Y$ used by again inverting $\iExt$. 

This almost works except for the following problem. The blocks of $X$ and $Y$ generated must also satisfy the linear equations imposed by $X_2, Y_2$, which are the bits sampled from the codewords of encodings of $X$ and $Y$ by using a linear error correcting code. However, it is unclear what is the dependence between the linear equations imposed by $X_2, Y_2$ and the other linear equations that we obtain earlier. Of course, if they are linearly independent then we are in good shape. 

To solve this problem, our crucial observation is that if $\ell$ is small and the number of blocks is large enough (say we divide the rest of $X$ into $O(\ell)$ blocks and the rest of $Y$ into $O(t\ell)$ blocks for a large enough constant in $O( \cdot )$), then the entire alternating extraction steps only consume say half of the bits of $X$ and $Y$. Thus, whatever linear equations we obtain from these steps are only imposing constraints to say the first half bits of $X$ and $Y$. Therefore, we can hope that the encodings of $X$ and $Y$ use all the bits of $X$ and $Y$, and thus the linear equations imposed by these encodings will be linearly independent of the equations we obtain from alternating extraction (i.e., the second half bits act as ``free variables").

We indeed succeed with this idea. More specifically, we are going to divide the rest of $X$ and $Y$ (the parts excluding $X_1$ and $Y_1$, which has length $n-n^{\Omega(1)}$) into chunks of length $b=\lceil \log n \rceil$. We will now view each chunk as an element in the field $\F_{2^b}$. We then take say $0.9n$ bits and view it as a string in $\F^{0.9n/b}$. We can now use Reed-Solomn code (RS-code for short) in $\F_{2^b}$ to encode this string into a codeword in $\F^{2^b}$. Note that $2^b>n >0.9n/b$, so this encoding is feasible, and it has distance rate $(2^b-0.9n/b)/(2^b)>0.9$. Now, instead of using $V=\IP(X_1, Y_1)$ to sample $n^{\Omega(1)}$ bits, we will sample $n^{\Omega(1)}$ field elements from the encoding of $X$ and $Y$, and then view them as bit strings. Since the RS-code has distance rate $0.9$, again we have that if two strings are different, then with probability $1-2^{-n^{\Omega(1)}}$, the sampled strings of their encodings will also be different. Moreover, the sampled bit string now has length roughly $n^{\Omega(1)}\log n$, which is still small enough.

Now we can continue with our sampling strategy. As before we first generate all the blocks of $X$ and $Y$ used in all alternating extraction steps. This only consists of the first half bits of $X$ and $Y$. Now, any fixing of these bits can be viewed equivalently as fixing the first $0.5n/b$ field elements in a message. Thus we are still left with $0.4n/b$ free field elements, and we have $n^{\Omega(1)}$ linear equations in $\F_{2^b}$ according to the RS-code. As long as the number of free variables is larger than the number of equations (i.e., $0.4n/b>n^{\Omega(1)}$), the property of the RS-encoding ensures that this set of linear equations are linearly independent. Thus, for any fixed first half bits of $X$ and $Y$, the pre-image according to the linear equations imposed by the sampled bits $X_2, Y_2$ has the same size.

\textbf{Summary.} Now we are basically done. Again, given any output, our sampling strategy is as follows. We first uniformly generate $X_1$ and $Y_1$, from whom we can compute $V=\IP(X_1, Y_1)$. Then we know which co-ordinates of the codeword we are sampling. We then uniformly generate these sampled bits $X_2, Y_2$ and thus we obtain $Z$. Once we have $Z$, we will now uniformly generate all $\{S_{h, j}, R_{h, j}, \overline{S}_{h, j}, \overline{R}_{h, j}\}$ produced in alternating extractions. Based on $Z$ and these variables, we can now generate all the blocks of $X$ used and all the $\{Q_h, \overline{Q}_h\}$ by inverting $\iExt$. Based on $\{Q_h, \overline{Q}_h\}$ we can generate all the blocks of $Y$ used by again inverting $\iExt$. Finally, we use the linear equations imposed by $X_2, Y_2$ to generate the rest of the bits in $X$ and $Y$.

To show that we are indeed sampling uniformly from the output's pre-image, we will establish the following two facts.

\noindent \textbf{Fact 1:} For any fixed $Z=z$, any choice of $\{s_{h, j}, r_{h, j}, \overline{s}_{h, j}, \overline{r}_{h, j}\}$ gives the same pre-image size of $(x, y)$. This follows directly from the fact that our linear seeded extractor has the same pre-image size for any seed, and the argument about the linear equations imposed by the RS-code above.

\noindent \textbf{Fact 2:} For different $Z=z$, and different choice of $\{s_{h, j}, r_{h, j}, \overline{s}_{h, j}, \overline{r}_{h, j}\}$, the pre-image size of $(x, y)$ is also the same. This follows because the ``flip-flop" alternating extraction has a symmetric manner. More specifically, no matter each bit of $z$ is 0 or 1, we will use two sub steps of alternating extraction, with each step taking two rounds of alternating extraction. Thus by symmetry no matter each bit of $z$ is 0 or 1, the pre-image size of the blocks of $X$ and $\{q_h, \overline{q}_h\}$ is the same. Moreover, although depending on the $h$'th bit $z$, we may choose either $r_{h, 1}$ or $r_{h, 2}$ (or either $\overline{r}_{h, 1}$ or $\overline{r}_{h, 2}$), we truncate them if necessary to the same size. So when we generate the blocks of $Y$ using them and $\{q_h, \overline{q}_h\}$, the pre-image of the blocks of $Y$ will also have the same size. Thus,  the pre-image size of the blocks of $X$ and $Y$ used for this bit is the same. Therefore, for different $Z=z$ and different $\{s_{h, j}, r_{h, j}, \overline{s}_{h, j}, \overline{r}_{h, j}\}$, the pre-image size is also the same. 

Now the conclusion that we are sampling uniformly from the output's pre-image follows from the above two facts, and the observation that any $(x, y)$ in the output's pre-image produces exactly one sequence of  $z, \{s_{h, j}, r_{h, j}, \overline{s}_{h, j}, \overline{r}_{h, j}\}$.

\section{Preliminaries} \label{section:prelims}
\subsection{Notations}
We use capital letters to denote distributions and their support, and corresponding small letters to denote a sample from the source. Let $[m]$  denote the set $\{1,2,\hdots,m \}$, and  $U_r$ denote the uniform distribution over $\{0,1\}^r$. For a string $x$ of length $m$, define the string $\slice(x,w)$ to be the prefix of length $w$ of $x$. For any $i \in [m]$, let $x_{\{i\}}$ denote the symbol in the $i$'th co-ordinate of $x$, and for any $T \subseteq [m]$, let $x_{\{ T\}}$ denote the projection of $x$ to the co-ordinates indexed by $T$.

\subsection{Min Entropy, Flat Distributions}
The min-entropy of a source $X$ is defined to be $ H_{\infty}(X) = \min_{s \in \support(X)}\l\{1/\log(\Pr[X=s])\r\}$.
A distribution (source) $D$ is flat if it is uniform over a set $S$.
A $(n,k)$-source is a distribution on $\{ 0,1\}^n$ with min-entropy $k$.
It is a well known fact that any $(n,k)$-source is a convex combination of flat sources supported on sets of size $2^k$.

\subsection{Statistical Distance, Convex Combination of Distributions and Probability Lemmas}
\begin{define}[Statistical distance] Let $D_1$ and $D_2$ be two distributions  on a set $ S$. The statistical distance between $D_1$ and $D_2$ is defined to be: 
$$|D_1 - D_2|  = \max_{T \subseteq S} |D_1(T) - D_2(T)| = \frac{1}{2} \sum_{s \in S }|\Pr[D_1=u] - \Pr[D_2=u]|$$
 $D_1$ is $\epsilon$-close to $D_2$ if $|D_1 - D_2| \le \epsilon$. 
\end{define}

\begin{define}[Convex combination]  A distribution $D$ on a set $S$ is a convex combination of distributions $D_1,\ldots,D_l$ on $S$ if there exists non-negative constants (called weights) $w_1,\ldots,w_{\ell}$ with $\sum_{i=1}^l w_i =1$ such that  $\Pr[D=s] =\sum_{i=1}^l w_i \cdot \Pr[D_i =s] $ for all $s \in S$. 
We use the notation $D = \sum_{i=1}^l w_i \cdot D_i $ to denote the fact that $D$ is a convex combination of the distributions $D_1,\ldots,D_{\ell}$ with weights $w_1,\ldots,w_{\ell}$.
\end{define}
\begin{define}
For random variables $X$ and $Y$, we use $X|Y$ to denote a random variable with distribution:   $\Pr[(X|Y) = x] = \sum_{y \in \support(Y)}\Pr[Y=y] \cdot \Pr[X= x| Y=y ]$.
\end{define}
We record the following lemma which follows from the above definitions.
\begin{lemma}\label{sd_convex}Let $X$ and $Y$ be distributions on a set $S$ such that $X = \sum_{i=1}^l w_i \cdot X_i$ and $Y = \sum_{i=1}^l w_i \cdot Y_i $. Then $|X-Y| \le \sum_{i} w_i \cdot |X_i - Y_i|$.
\end{lemma}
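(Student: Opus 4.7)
The plan is to prove the lemma by a direct unfolding of the definition of statistical distance, followed by an application of the triangle inequality. Concretely, I would start with the equality $|X-Y| = \tfrac{1}{2}\sum_{s \in S}|\Pr[X=s] - \Pr[Y=s]|$ given in the definition, and then substitute the convex combination expressions $\Pr[X=s] = \sum_i w_i \Pr[X_i=s]$ and $\Pr[Y=s] = \sum_i w_i \Pr[Y_i=s]$.

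After substitution, the expression inside the absolute value becomes $\sum_i w_i (\Pr[X_i=s] - \Pr[Y_i=s])$. The key step is to pull the absolute value inside the sum over $i$ via the triangle inequality, using non-negativity of the weights $w_i$ to write $\bigl|\sum_i w_i (\Pr[X_i=s] - \Pr[Y_i=s])\bigr| \le \sum_i w_i |\Pr[X_i=s] - \Pr[Y_i=s]|$. Then I would swap the order of summation between $s \in S$ and $i \in [\ell]$, which is valid since all terms are non-negative (or simply because it is a finite sum). Pulling the constant $w_i$ and the factor $\tfrac{1}{2}$ outside the inner sum over $s$, the inner expression $\tfrac{1}{2}\sum_{s \in S}|\Pr[X_i=s] - \Pr[Y_i=s]|$ is exactly $|X_i - Y_i|$ by the definition of statistical distance, yielding the desired bound $|X-Y| \le \sum_i w_i |X_i - Y_i|$.

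There is no real obstacle here: the entire proof is a three-line manipulation built from the definition plus the triangle inequality. The only thing to be careful about is making sure that the weights $w_i$ are non-negative (so that $|w_i(\cdot)| = w_i |\cdot|$ when pulling the absolute value out), which is guaranteed by the definition of convex combination given just above the lemma. I expect the formal write-up to be at most a few lines in a single display, and I would not need any auxiliary facts beyond the definitions already introduced in the preliminaries.
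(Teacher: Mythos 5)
Your proposal is correct and coincides with the argument the paper intends: the paper states that the lemma "follows from the above definitions" and gives no written proof, and the three steps you outline (expand the $\tfrac{1}{2}\sum_{s}|\cdot|$ form of statistical distance, substitute the convex decomposition, apply the triangle inequality using $w_i \ge 0$, then swap the finite sums) are exactly the standard derivation. Nothing is missing, and no alternative route is suggested by the paper.
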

\subsection{Seeded and Seedless Extractors}\label{ext_def}
\begin{define}[Strong seeded extractor]\label{def_seeded}A function $\Ext: \{ 0,1\}^{n} \times \{ 0,1\}^{d} \rightarrow \{0,1 \}^{m}$ is called a strong seeded extractor for min-entropy $k$ and error $\epsilon$ if for any $(n,k)$-source $X$ and an independent uniformly random string $U_d$, we have $$ |\Ext(X,U_d) \circ U_d- U_{m} \circ U_d| <\epsilon,$$ where $U_m$ is independent of $U_d$. Further if the function $\Ext(\cdot,u)$ is a linear function over $\F_2$ for every $u \in \{ 0,1\}^d$, then $\Ext$ is called a linear seeded extractor.
\end{define}

\begin{define} [Independent Source Extractor]
A function $\IExt: (\zo^n)^t \to \zo^m$ is an extractor for independent $(n, k)$ sources that uses $t$ sources and outputs $m$ bits with error $\ep$, if for any $t$ independent $(n, k)$ sources $X_1, X_2, \cdots, X_t$, we have

\[|\IExt(X_1, X_2, \cdots, X_t)-U_m| \leq \ep.\]
In the special case where $t=2$, we say $\IExt$ is a \emph{two-source extractor}.
\end{define}

\subsection{Conditional Min-Entropy}
\begin{define} The average conditional min-entropy is defined as $$ \widetilde{H}_{\infty}(X|W) = \log \l( E_{w \leftarrow W}\l[\max_{x} \Pr[X=x | W=w] \r] \r) = - \log E\l[ 2^{-H_{\infty}(X|W=w)} \r] $$
\end{define}
We recall some results on conditional min-entropy from \cite{DORS08}.
\begin{lemma}[\cite{DORS08}] For any $s>0$, $\Pr_{w \leftarrow W}\l[H_{\infty}(X|W=w) \ge \widetilde{H}_{\infty}(X|W)-s\r] \ge 1- 2^{-s}$.
\end{lemma}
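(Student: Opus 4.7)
The plan is to apply Markov's inequality to the non-negative random variable $Z_W := 2^{-H_\infty(X|W=w)}$, evaluated over $w \leftarrow W$. By the very definition of average conditional min-entropy given in the excerpt, we have
\[
\mathbb{E}_{w \leftarrow W}\!\left[ Z_W \right] \;=\; \mathbb{E}_{w \leftarrow W}\!\left[ 2^{-H_\infty(X|W=w)} \right] \;=\; 2^{-\widetilde{H}_\infty(X|W)}.
\]
So I have reduced the statement to controlling the upper tail of $Z_W$ in terms of its expectation, which is exactly what Markov's inequality handles.

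First I would observe that the desired inequality $H_\infty(X|W=w) \geq \widetilde{H}_\infty(X|W) - s$ is equivalent, after negating and exponentiating, to $Z_W = 2^{-H_\infty(X|W=w)} \leq 2^s \cdot 2^{-\widetilde{H}_\infty(X|W)} = 2^s \cdot \mathbb{E}[Z_W]$. Thus it suffices to bound the probability of the complementary event $Z_W > 2^s \cdot \mathbb{E}[Z_W]$.

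Next I would apply Markov's inequality to the non-negative random variable $Z_W$, giving
\[
\Pr_{w \leftarrow W}\!\left[ Z_W > 2^s \cdot \mathbb{E}[Z_W] \right] \;<\; \frac{\mathbb{E}[Z_W]}{2^s \cdot \mathbb{E}[Z_W]} \;=\; 2^{-s}.
\]
Taking complements yields $\Pr_{w \leftarrow W}[H_\infty(X|W=w) \geq \widetilde{H}_\infty(X|W) - s] \geq 1 - 2^{-s}$, as required.

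There is essentially no obstacle here; the proof is a one-line consequence of Markov once one unpacks the definition of $\widetilde{H}_\infty$ and translates the min-entropy inequality into a tail bound on $Z_W$. The only subtlety worth double-checking is the strict-vs-nonstrict inequality in Markov, but since the conclusion uses a closed half-line ($\geq$) and Markov gives us a strict bound ($<$) on the complementary open half-line ($>$), everything is consistent.
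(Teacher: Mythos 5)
The paper does not prove this lemma itself; it cites it directly from \cite{DORS08}. Your Markov-inequality argument is exactly the intended (and standard) proof: the definition of $\widetilde{H}_{\infty}(X|W)$ is tailor-made so that $2^{-\widetilde{H}_{\infty}(X|W)}$ is the expectation of $2^{-H_{\infty}(X|W=w)}$, and the tail bound follows immediately. The only small nitpick is that Markov gives $\Pr[Z_W \geq 2^s\,\mathbb{E}[Z_W]] \leq 2^{-s}$, not a strict $<$ as you wrote; but since $\Pr[Z_W > a] \leq \Pr[Z_W \geq a]$, the complement argument goes through unchanged.
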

\begin{lemma}[\cite{DORS08}]\label{lemma:entropy_loss} If a random variable $B$ can take at most $\ell$ values, then $\widetilde{H}_{\infty}(A|B) \ge H_{\infty}(A) - \ell$.
\end{lemma}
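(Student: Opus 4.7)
The plan is to unfold the definition of average conditional min-entropy and bound the expected maximum-likelihood by a sum over the support of $B$. Since $\widetilde{H}_\infty(A|B) = -\log E_{b \leftarrow B}[2^{-H_\infty(A|B=b)}]$, it suffices to show that $E_{b \leftarrow B}[2^{-H_\infty(A|B=b)}] \le \ell \cdot 2^{-H_\infty(A)}$; taking $-\log$ then yields $\widetilde{H}_\infty(A|B) \ge H_\infty(A) - \log \ell$ (I suspect the statement as written is a typo and the intended bound has $\log \ell$ in place of $\ell$, matching the standard \cite{DORS08} formulation).

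The key step is a simple swap. For each $b$ in the support of $B$, let $a^*_b \in \arg\max_a \Pr[A=a \mid B=b]$ so that $2^{-H_\infty(A|B=b)} = \Pr[A=a^*_b \mid B=b]$. Then I would write
\[
E_{b \leftarrow B}\bigl[2^{-H_\infty(A|B=b)}\bigr] \;=\; \sum_b \Pr[B=b]\,\Pr[A=a^*_b \mid B=b] \;=\; \sum_b \Pr[A=a^*_b,\, B=b].
\]
Dropping the joint event to the marginal gives $\sum_b \Pr[A=a^*_b,\,B=b] \le \sum_b \Pr[A=a^*_b]$, and since each term satisfies $\Pr[A=a^*_b] \le 2^{-H_\infty(A)}$ by the definition of min-entropy, while the sum has at most $\ell$ terms (the support size of $B$), the total is at most $\ell \cdot 2^{-H_\infty(A)}$.

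There is no real obstacle here; the only thing to be careful about is that the same $a^*_b$ could coincide for different $b$'s, but this only helps the bound since we are upper-bounding by the marginals. Putting the pieces together, $\widetilde{H}_\infty(A|B) \ge -\log(\ell \cdot 2^{-H_\infty(A)}) = H_\infty(A) - \log \ell$, which is the claim.
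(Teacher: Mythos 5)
Your proof is correct and follows the standard argument from \cite{DORS08} (the paper itself only cites the lemma without proof): expand $E_{b}\bigl[2^{-H_\infty(A|B=b)}\bigr]$ as $\sum_b \max_a \Pr[A=a, B=b]$, bound each term by $\max_a \Pr[A=a]=2^{-H_\infty(A)}$, and multiply by the support size of $B$. You are also right that the statement as printed has a typo: the bound should read $H_\infty(A)-\log\ell$ (equivalently, ``$B$ takes at most $2^\ell$ values'' for loss $\ell$), which is how the paper actually applies the lemma — e.g.\ when fixing $(t+1)$ strings of $m$ bits each it deducts $m(t+1)$, not $2^{m(t+1)}$.
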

It is sometimes convenient to work with average case seeded extractors, where if a source $X$ has average case conditional min-entropy $\tilde{H}_{\infty}(X|Z)\ge k$ then the output of the extractor is uniform even  when $Z$ is given. 
\begin{lemma}[\cite{DORS08}]\label{lemma:cond_ext} For any $\delta>0$, if $\Ext$ is a $(k,\epsilon)$-extractor then it is also a $(k+ \log\l(\frac{1}{\delta}\r), \epsilon+\delta)$ average case extractor.
\end{lemma}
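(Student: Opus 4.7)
The plan is to reduce the average-case setting to the standard worst-case extractor guarantee by showing that the auxiliary variable $Z$ is ``good'' (i.e., leaves enough min-entropy in $X$) except with small probability, and on this good event we directly invoke the standard extractor property.

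First I would unpack the hypothesis $\widetilde{H}_\infty(X \mid Z) \geq k + \log(1/\delta)$ using the definition stated just above, which gives
\[
\E_{z \leftarrow Z}\!\left[ 2^{-H_\infty(X \mid Z=z)} \right] \;\leq\; 2^{-k-\log(1/\delta)} \;=\; \delta \cdot 2^{-k}.
\]
Then I would apply Markov's inequality to the nonnegative random variable $2^{-H_\infty(X \mid Z=z)}$ to bound the probability of the ``bad'' event $B = \{z : H_\infty(X \mid Z=z) < k\}$. Since on $B$ we have $2^{-H_\infty(X \mid Z=z)} > 2^{-k}$, Markov gives $\Pr_{z \leftarrow Z}[z \in B] \leq \delta$.

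Next, for every ``good'' $z \notin B$, the conditional distribution $(X \mid Z=z)$ is itself an $(n,k)$-source, so the standing hypothesis that $\Ext$ is a $(k,\epsilon)$-strong seeded extractor yields
\[
\bigl| \Ext(X,U_d) \circ U_d \;\big|\; Z=z \;-\; U_m \circ U_d \bigr| \;\leq\; \epsilon.
\]
Finally I would combine these two facts by averaging over $z \leftarrow Z$: on good $z$ the conditional statistical distance is at most $\epsilon$, and on bad $z$ we use the trivial bound of $1$. Using Lemma~\ref{sd_convex} (statistical distance is convex under mixing), the overall distance satisfies
\[
\bigl| \Ext(X,U_d) \circ U_d \circ Z \;-\; U_m \circ U_d \circ Z \bigr| \;\leq\; (1-\delta)\epsilon + \delta \cdot 1 \;\leq\; \epsilon + \delta,
\]
which is exactly the average-case extractor property at entropy $k + \log(1/\delta)$ and error $\epsilon + \delta$. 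The only place where a little care is needed is the Markov step -- making sure one applies it to $2^{-H_\infty(X \mid Z=z)}$ rather than to $H_\infty$ itself -- but this is routine, so no step here is really an obstacle; the lemma is essentially bookkeeping around the definition of $\widetilde{H}_\infty$.
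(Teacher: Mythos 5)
Your proof is correct and is the standard argument from \cite{DORS08}; the paper simply imports this lemma by citation and gives no proof of its own, so there is nothing internal to compare against. One small bookkeeping remark: after Markov you only know $\Pr[z\in B]\le\delta$, not $\Pr[z\in B]=\delta$, so the intermediate expression $(1-\delta)\epsilon+\delta$ should really be $\Pr[z\notin B]\cdot\epsilon+\Pr[z\in B]\cdot 1\le\epsilon+\delta$; the final bound you state is of course still right.
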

The following result on conditional min-entropy was proved in \cite{MW07}.
\begin{lemma}\label{lemma:entropy_loss_1} Let $X,Y$ be random variables such that the random variable $Y$ takes at $\ell$ values. Then 
\begin{align*}
 \Pr_{y \sim Y}\l[ H_{\infty}(X| Y = y) \ge H_{\infty}(X) - \log \ell -\log\l(\frac{1}{\epsilon}\r)\r] > 1-\epsilon.
 \end{align*}
\end{lemma}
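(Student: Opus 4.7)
The plan is to proceed by a direct union-bound-style argument on the support of $Y$. Let $k = H_\infty(X)$, so that $\Pr[X=x] \le 2^{-k}$ for every $x$. Call a value $y \in \support(Y)$ \emph{bad} if
\[
  H_\infty(X \mid Y=y) < k - \log \ell - \log(1/\epsilon),
\]
equivalently, if there exists some $x_y$ with $\Pr[X = x_y \mid Y=y] > 2^{-k} \cdot \ell / \epsilon$. The goal is to show the total $Y$-mass of bad $y$'s is strictly less than $\epsilon$.

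For each bad $y$, fix such a witness $x_y$; then by definition of conditional probability,
\[
  \Pr[X = x_y,\ Y = y] > \frac{2^{-k} \ell}{\epsilon} \cdot \Pr[Y = y].
\]
Summing over all bad $y$ and using that the events $\{Y=y\}$ are disjoint gives
\[
  \sum_{y \text{ bad}} \Pr[X = x_y,\ Y = y] > \frac{2^{-k} \ell}{\epsilon} \cdot \Pr[Y \text{ is bad}].
\]
On the other hand, each term on the left is at most $\Pr[X = x_y] \le 2^{-k}$, and there are at most $\ell$ bad values of $y$ (since $Y$ takes only $\ell$ values), so the left-hand side is at most $\ell \cdot 2^{-k}$.

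Combining the two bounds,
\[
  \ell \cdot 2^{-k} \;>\; \frac{2^{-k} \ell}{\epsilon} \cdot \Pr[Y \text{ is bad}],
\]
which rearranges to $\Pr[Y \text{ is bad}] < \epsilon$, establishing the claim. There is no real obstacle here; the only thing to be a little careful about is making sure the witnesses $x_y$ are chosen measurably (which is trivial since $\support(Y)$ is finite) and that the bound $\Pr[X=x_y,Y=y] \le \Pr[X=x_y] \le 2^{-k}$ is applied per term rather than after summing, so that the $\ell$ factor enters only once through $|\support(Y)|$.
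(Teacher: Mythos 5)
Your proof is correct. The paper states this lemma with a citation to [MW07] and gives no proof of its own, so there is no ``paper's route'' to compare against directly; the closest in-paper derivation would be to chain the two DORS08 lemmas the authors do state (the bound $\widetilde{H}_{\infty}(X\mid Y)\ge H_{\infty}(X)-\log\ell$ when $Y$ has $\ell$ values, followed by the Markov-style tail bound $\Pr_y[H_{\infty}(X\mid Y=y)\ge \widetilde{H}_{\infty}(X\mid Y)-s]\ge 1-2^{-s}$). What you have written is exactly that argument with the intermediate notion of average conditional min-entropy stripped out: your per-$y$ bound $\Pr[X=x_y,Y=y]\le 2^{-k}$ summed over the at most $\ell$ bad values of $y$ is the same computation that proves $\widetilde{H}_{\infty}(X\mid Y)\ge k-\log\ell$, and the final rearrangement is Markov's inequality in disguise. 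The strict inequality is justified because the defining condition for a bad $y$ is strict, so each positive-probability bad $y$ contributes strictly more than $\frac{2^{-k}\ell}{\epsilon}\Pr[Y=y]$ to the sum, and the vacuous case of no bad $y$ gives $\Pr[Y\text{ bad}]=0<\epsilon$ directly. One small wording nit: what you need and use is that the \emph{same} $y$ contributes $\Pr[X=x_y,Y=y]\le 2^{-k}$, not that the $x_y$'s are distinct, so it is the cap of $\ell$ on the number of $y$'s, not anything about the $x_y$'s, that bounds the sum; your final remark about applying the bound per term already gets this right.
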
 

We also need the following lemma from \cite{Li12b}.
\begin{lemma}\label{cor:condition}Let $X,Y$ be random variables with supports $S,T \subseteq V$ such that $(X,Y)$ is $\epsilon$-close to a distribution with min-entropy $k$. Further suppose that the random variable $Y$ can take at most $\ell$ values. Then 
\begin{align*}
 \Pr_{y \sim Y}\bigg[ (X| Y = y) \text{ is $2\epsilon^{1/2}$-close to a source with } \text{ min-entropy }  k - \log \ell - \log\l(\frac{1}{\epsilon}\r)\bigg] \ge 1 - 2\epsilon^{1/2}.
 \end{align*}
\end{lemma}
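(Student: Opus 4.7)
My plan is to let $(X',Y')$ be a distribution achieving the promise, i.e., $H_\infty(X',Y') \ge k$ and $d((X,Y),(X',Y')) \le \epsilon$, then transfer min-entropy from $X'|Y'{=}y$ to $X|Y{=}y$ via two independent estimates: a bound on the average conditional statistical distance, and an application of Lemma~\ref{lemma:entropy_loss_1} to $(X',Y')$.

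First, applying Lemma~\ref{lemma:entropy_loss_1} to the random variable $(X',Y')$ (treating $(X',Y')\mid Y'{=}y$ as the same as $X'\mid Y'{=}y$ up to relabeling, so its min-entropy equals $H_\infty(X'\mid Y'{=}y)$) with error parameter $\epsilon$, I get that
\[
\Pr_{y \sim Y'}\bigl[H_\infty(X'\mid Y'{=}y) \ge k - \log \ell - \log(1/\epsilon)\bigr] \ge 1-\epsilon.
\]
Since $d(Y,Y') \le d((X,Y),(X',Y')) \le \epsilon$, this probability transfers to $y \sim Y$ at cost $\epsilon$, giving probability at least $1-2\epsilon$.

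Second, I need to control $d(P_{X|Y=y}, P_{X'|Y'=y})$ for typical $y$. Writing the joint $L^1$-distance as $\sum_{x,y}|P_{X,Y}(x,y) - P_{X',Y'}(x,y)| \le 2\epsilon$, I would split
\[
P_Y(y)P_{X|Y=y}(x) - P_{Y'}(y)P_{X'|Y'=y}(x) = P_Y(y)\bigl(P_{X|Y=y}(x) - P_{X'|Y'=y}(x)\bigr) + \bigl(P_Y(y) - P_{Y'}(y)\bigr)P_{X'|Y'=y}(x),
\]
apply the triangle inequality, sum over $x$, and then sum over $y$. Using $d(Y,Y') \le \epsilon$ for the second term, this yields
\[
2\sum_y P_Y(y)\, d(P_{X|Y=y}, P_{X'|Y'=y}) \le 2\epsilon + 2\epsilon = 4\epsilon,
\]
i.e., $\mathbb{E}_{y \sim Y}[d(P_{X|Y=y}, P_{X'|Y'=y})] \le 2\epsilon$. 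By Markov's inequality, $\Pr_{y \sim Y}[d(P_{X|Y=y}, P_{X'|Y'=y}) > 2\epsilon^{1/2}] \le \epsilon^{1/2}$.

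Finally, a union bound over the two bad events gives that with probability at least $1 - 2\epsilon - \epsilon^{1/2} \ge 1 - 2\epsilon^{1/2}$ over $y \sim Y$, the distribution $X\mid Y{=}y$ is $2\epsilon^{1/2}$-close to $X'\mid Y'{=}y$, which itself has min-entropy at least $k - \log \ell - \log(1/\epsilon)$. There is no real obstacle here; the only subtlety is the bookkeeping in the triangle-inequality step, where one has to be careful to separate the contribution from the marginal discrepancy $|P_Y(y) - P_{Y'}(y)|$ from the conditional discrepancy, and to then invoke Markov at the right threshold so that the final error probability and the final statistical distance both come out to $2\epsilon^{1/2}$ as claimed.
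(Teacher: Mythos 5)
The paper does not prove this lemma; it is imported verbatim from \cite{Li12b}, so there is no in-paper argument to compare against. Evaluating your proof on its own merits: the overall strategy is sound (pass to a nearby min-entropy-$k$ distribution $(X',Y')$, control the conditional min-entropy of $X'\mid Y'$ and the average conditional statistical distance, then union-bound), and the averaging/Markov step and the final bookkeeping both check out exactly as you wrote them. However, the first step contains a real flaw. Lemma~\ref{lemma:entropy_loss_1} requires the \emph{conditioning} variable to take at most $\ell$ values, and you invoke it with $X:=(X',Y')$ and $Y:=Y'$. Nothing in the hypotheses forces the marginal $Y'$ of the approximating distribution to be supported on only $\ell$ points: since $(X',Y')$ must have min-entropy $k$, its $Y$-marginal may well be spread over far more than $\ell$ values (consider $(X,Y)$ a single point mass, $\ell=1$, $k$ large). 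So the displayed inequality $\Pr_{y\sim Y'}[H_\infty(X'\mid Y'{=}y)\ge k-\log\ell-\log(1/\epsilon)]\ge 1-\epsilon$ does not follow from that lemma.

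The conclusion you want is nonetheless true, and the fix is short: argue it directly rather than through Lemma~\ref{lemma:entropy_loss_1}. Let $T$ be the (size $\le\ell$) support of $Y$ and set $B=\{y\in T:\ P_{Y'}(y)<\epsilon/\ell\}$. Then $P_{Y'}(B)<|B|\cdot\epsilon/\ell\le\epsilon$, and since $|P_Y(B)-P_{Y'}(B)|\le d(Y,Y')\le\epsilon$, we get $P_Y(B)\le 2\epsilon$. For every $y\in T\setminus B$ we have $P_{Y'}(y)\ge\epsilon/\ell$, hence $\max_x P_{X'\mid Y'=y}(x)\le 2^{-k}\,\ell/\epsilon$, i.e.\ $H_\infty(X'\mid Y'{=}y)\ge k-\log\ell-\log(1/\epsilon)$. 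This gives $\Pr_{y\sim Y}\bigl[H_\infty(X'\mid Y'{=}y)\ge k-\log\ell-\log(1/\epsilon)\bigr]\ge 1-2\epsilon$ using only that $Y$ (not $Y'$) has at most $\ell$ values, which is exactly what your union-bound step needs; the remaining steps of your argument then go through unchanged.
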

\subsection{Somewhere Random Sources}
\begin{define}A source $X$ is a $t \times k$ somewhere random source if it  comprises of $t$ rows on $\{ 0,1\}^{k}$ such that at least one of the rows is uniformly distributed. The rows may have arbitrary correlations  among themselves.
\end{define}
\subsection{Some Known Extractor Constructions}

We use  explicit  constructions of strong linear seeded extractors \cite{Tr01} \cite{RRV02}.
\begin{thm}[\cite{Tr01}\cite{RRV02}]\label{thm:lsext} For every $n,k,m \in \mathbb{N}$ and $\epsilon>0$ such that $m \le k \le n$, there exists an explicit linear strong seeded extractor $\LSExt: \{ 0,1\}^n \times \{ 0,1\}^{d} \rightarrow \{0,1 \}^{m}$ for min-entropy $k$, error $\epsilon$, and $d= O\l(\frac{\log^{2}(n/\epsilon)}{\log(k/m)}\r)$.
\end{thm}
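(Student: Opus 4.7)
The plan is to follow the Trevisan reconstruction paradigm, refined by Raz--Reingold--Vadhan to obtain both \emph{strong} extractors and improved seed length via the notion of weak designs. The key insight is that extractors can be built from pseudorandom generators of the Nisan--Wigderson type, where the ``hard function'' is replaced by the weak source bits read through a binary error-correcting code.

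First, I would take an efficiently encodable binary linear error-correcting code $C:\{0,1\}^n \to \{0,1\}^{\bar n}$ with relative distance $1/2 - \delta$ for a small $\delta = \epsilon/O(m)$, where $\bar n = \poly(n/\epsilon)$; such codes exist (e.g., concatenated Reed--Solomon with Hadamard), and crucially they are $\F_2$-linear, so each bit of $C(x)$ is an $\F_2$-linear function of $x$. Next, I would construct a weak $(\log \bar n, \rho)$-design $S_1,\ldots,S_m \subseteq [d]$ in the sense of Raz--Reingold--Vadhan: sets of size $\ell = \log \bar n$ with $\sum_{j < i} 2^{|S_i \cap S_j|} \le \rho(m-1)$, with the key parameter tradeoff $d = O(\ell^2 / \log \rho)$. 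The extractor is then defined by
\[
\LSExt(x,y) \;=\; C(x)_{y_{S_1}} \circ C(x)_{y_{S_2}} \circ \cdots \circ C(x)_{y_{S_m}},
\]
where $y_{S_i} \in \{0,1\}^{\ell}$ indexes a position of the codeword. Linearity in $x$ for every fixed seed $y$ is immediate: each output bit is a fixed coordinate of $C(x)$, and $C$ is $\F_2$-linear.

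For the analysis, I would argue by contrapositive using the strong-extractor formulation: assuming that there is a distinguisher $D$ such that $|\Pr[D(y,\LSExt(X,y)) = 1] - \Pr[D(y,U_m)=1]| > \epsilon$ for a $(n,k)$-source $X$, I would derive (via a hybrid argument converting a distinguisher to a next-bit predictor, then averaging over $y$) a small circuit $P$ that, given $y$ and the first $i-1$ output bits, predicts the $i$-th output bit with advantage $\epsilon/m$ for many $x$ in the support. Using the weak design property, the ``other'' output bits $\LSExt(x,y)_{\{j\}}$ for $j \ne i$ depend only on $C(x)$ restricted to coordinates indexed by subsets of $S_j$ of total description length $\sum_{j \ne i} 2^{|S_i \cap S_j|} \le \rho m$ bits of advice (once we fix $y_{S_i}$ and average over the rest of $y$). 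The predictor thus gives a randomized algorithm that, with advantage $\epsilon/m$ over random, predicts one bit of $C(x)$ indexed by a uniformly random position in $\{0,1\}^{\ell}$, i.e., approximately recovers $C(x)$ under Hadamard-like list decoding and then decodes to $x$. This produces a description of $x$ of length $\rho m + O(\log(m/\epsilon))$, contradicting the min-entropy bound whenever this total is less than $k$.

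Balancing the description length $\rho m + O(\log(m/\epsilon)) < k$ forces $\rho \approx k/m$, which combined with $d = O(\ell^2/\log \rho) = O(\log^2(\bar n)/\log(k/m)) = O(\log^2(n/\epsilon)/\log(k/m))$ gives the advertised seed length. The main obstacle in formalizing this would be the construction and analysis of the weak design, together with the careful accounting of list-decoding error in the final reconstruction step, which is where the factor $\log(k/m)$ in the denominator of $d$ genuinely requires the \emph{weak} (as opposed to exact) design notion of Raz--Reingold--Vadhan; using Nisan--Wigderson exact designs would lose a factor in the seed length. Once the design and code parameters are chosen so that $\bar n = \poly(n/\epsilon)$ and $\ell = O(\log(n/\epsilon))$, the claimed bound follows.
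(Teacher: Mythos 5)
The paper cites this theorem to \cite{Tr01} and \cite{RRV02} without proof; your sketch correctly reconstructs the Raz--Reingold--Vadhan argument and correctly identifies where the two properties stated in the theorem come from: linearity is inherited from the $\F_2$-linearity of the error-correcting code (for any fixed seed each output bit is a fixed coordinate of $C(x)$, hence an $\F_2$-linear functional of $x$), and the $\log(k/m)$ factor in the seed length comes from using RRV weak designs with weakness parameter $\rho \approx k/m$ in place of Nisan--Wigderson exact designs.

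One parameter in your sketch is off in a way that would make the list-decoding step fail as written. You take the code's relative distance to be $1/2-\delta$ with $\delta = \epsilon/O(m)$, but the hybrid/predictor step only produces a received word agreeing with $C(x)$ on a $1/2+\Theta(\epsilon/m)$ fraction of coordinates, and by the Johnson bound a binary code of relative distance $1/2-\delta$ supports polynomial-size list decoding only up to radius $1/2-\Theta(\sqrt{\delta})$. You therefore need $\sqrt{\delta}\lesssim \epsilon/m$, i.e.\ $\delta = O((\epsilon/m)^2)$, a quadratically smaller choice; with $\delta = \epsilon/O(m)$ the list could be exponentially large. This does not affect the advertised seed length, since the codeword length $\bar n$ is still $\poly(n/\epsilon)$ (using, say, a concatenated Reed--Solomon/Hadamard code), but the parameter should be corrected. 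A second, purely notational slip: in the reconstruction one \emph{fixes} $y_{[d]\setminus S_i}$ by averaging and lets $y_{S_i}$ range over codeword positions, which is precisely why each earlier output bit $j<i$ becomes a function of $y_{S_i\cap S_j}$ alone and can be supplied as a truth table of $2^{|S_i\cap S_j|}$ bits, giving total advice $\sum_{j<i}2^{|S_i\cap S_j|}\le \rho(m-1)$; your parenthetical swaps the roles of $y_{S_i}$ and the rest of the seed. Neither issue changes the structure of the argument or the claimed bound on $d$.
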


The following is an explicit construction of a strong seeded extractor with optimal parameters \cite{GUV}.
\begin{thm}\label{guv} For any constant $\alpha>0$, and all integers $n,k>0$ there exists a polynomial time computable  strong seeded extractor $\Ext: \{ 0,1\}^n \times \{ 0,1\}^d \rightarrow \{ 0,1\}^m$   with $d = O(\log n + \log (\frac{1}{\epsilon}))$ and $m = (1-\alpha)k$.
\end{thm}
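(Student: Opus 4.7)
The plan is to establish this via the condenser-then-extract paradigm of Guruswami--Umans--Vadhan. The crucial intermediate object is a \emph{lossless condenser}: an explicit function $\mathsf{Cond}:\{0,1\}^n\times\{0,1\}^{d_1}\to\{0,1\}^{n'}$ such that for any $(n,k)$-source $X$ and uniform seed $Y$, the distribution $(\mathsf{Cond}(X,Y),Y)$ is $\epsilon$-close to having min-entropy $k+d_1$, with $n'=k+O(\log(n/\epsilon))$ and $d_1=O(\log n+\log(1/\epsilon))$. If I can build such a condenser, then the theorem follows by feeding $\mathsf{Cond}(X,Y)\circ Y$ into any high-rate extractor: the condensed output has entropy rate $1-o(1)$ when $k$ is not too tiny, so even a simple pairwise-independent hashing extractor (via the leftover hash lemma) recovers $(1-\alpha)k$ bits with seed $O(\log(1/\epsilon))$, and the total seed length remains $O(\log n+\log(1/\epsilon))$. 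Strongness is preserved because the condenser is strong and the final extractor is strong.

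The heart of the argument is constructing the condenser from a Parvaresh--Vardy--style code. Identify the source $x\in\{0,1\}^n$ with a univariate polynomial $f_x\in\mathbb{F}_q[X]$ of degree $<n/\log q$, over a field $\mathbb{F}_q$ with $q=\mathrm{poly}(n/\epsilon)$. Fix an irreducible $E(X)\in\mathbb{F}_q[X]$ of degree $n/\log q$ and a prime power $h$. Define
\[
\mathsf{Cond}(x,y) \;=\; \bigl(y,\,f_x(y),\,f_x^h(y)\bmod E,\,f_x^{h^2}(y)\bmod E,\,\dots,\,f_x^{h^{s-1}}(y)\bmod E\bigr),
\]
where $s$ is a small parameter. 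The key analytic step is to show that if the output had a ``heavy'' string of probability $>2^{-(k+d_1)}\cdot(1/\epsilon)$ too often, then the list of polynomials $f_x$ consistent with it under PV-decoding would be larger than the $2^k$ support of $X$, a contradiction. Concretely: the Guruswami--Rudra list-decoding bound for the PV/folded RS code bounds the number of degree-$<n/\log q$ polynomials agreeing with the output in many coordinates by $\mathrm{poly}(q)$, which by choosing $s$ and $h$ appropriately can be made $\le 2^k$, giving the lossless property.

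The main obstacle, and the part that requires the most care, is setting the field size $q$, the composition parameter $h$, and the number of iterates $s$ so that simultaneously: (i) the degree bound on $f_x$ accommodates an $n$-bit source, (ii) the PV list-decoding radius and list size yield the right entropy conclusion, and (iii) the output length $n'$ is $k+O(\log(n/\epsilon))$ with seed $d_1=O(\log n+\log(1/\epsilon))$. One typically needs a \emph{second} round of condensing: the first round produces a condenser with output length $k\cdot\mathrm{polylog}$ rather than $k+O(\log(n/\epsilon))$, so one composes the condenser with itself $O(\log^* n)$ times, each round reducing the gap to $k$, while keeping the total seed and error under control by a union bound. After this, applying a leftover-hash-style high-rate extractor (which needs entropy-rate source, exactly what we have) finishes the proof and yields the advertised $d=O(\log n+\log(1/\epsilon))$ and $m=(1-\alpha)k$. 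All steps are polynomial time because finite-field arithmetic, PV encoding, and hashing are all efficient.
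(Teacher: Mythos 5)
The paper does not prove this statement; it is quoted directly as the main theorem of Guruswami–Umans–Vadhan \cite{GUV}, so there is no "paper's proof" to compare against. Your sketch does follow the GUV blueprint (Parvaresh–Vardy-style lossless condenser, list-decoding bound to show the output is close to a high-min-entropy distribution, then a high-entropy-rate extractor to finish), and that is indeed the construction being cited.

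One detail in your outline is off, though it does not create a gap for the parameters asserted here. You say the PV condenser only reaches output length $k\cdot\mathrm{polylog}$, forcing $O(\log^* n)$ rounds of self-composition. That is not how GUV works and is closer to the earlier Reingold–Shaltiel–Wigderson / iterated-condensing style. The GUV condenser, in a \emph{single} application, produces output of length $(1+\alpha')k + O(\log(n/\epsilon))$ for any desired constant $\alpha'>0$ with seed $O_{\alpha'}(\log n + \log(1/\epsilon))$; that is already constant entropy rate. Since the theorem only asks for $m=(1-\alpha)k$ with constant $\alpha$ (not output length $k+O(\log(n/\epsilon))$, which would be an optimal-entropy-loss claim), one condensing step followed by a single application of a high-min-entropy extractor (leftover hash, or a seed-efficient expander/Reed–Solomon-based extractor if you want $d=O(\log n+\log(1/\epsilon))$ rather than $O(m)$) already gives the stated bound. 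The $\log^*$ iteration is unnecessary for this theorem; if you wanted it, you would also need to argue more carefully that the seed length and error accumulate to only $O(\log n + \log(1/\epsilon))$ across rounds, which your sketch waves at but does not carry out.

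One further small caution: for the leftover hash lemma to give seed length $O(\log n+\log(1/\epsilon))$ you cannot use a fully pairwise-independent family over the condensed string (that would need seed linear in $n'$). You either need to hash using an almost-pairwise-independent family, or replace the final step with an explicit extractor for constant entropy rate with logarithmic seed (e.g.\ Zuckerman's). This is a standard fix and GUV handles it, but as written your last step would blow up the seed.
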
  

We use the following strong seeded extractor constructed by Zuckerman \cite{Zuck07} that achieves seed length $\log(n)+O(\log(\frac{1}{\epsilon}))$ to extract from any source with constant min-entropy.
\begin{thm}[\cite{Zuck07}]\label{zuck7} For all constant $\alpha,\delta,\epsilon>0$ and for all $n>0$ there exists an efficient construction of a strong seeded extractor $\Ext:\{ 0,1\}^n \times \{ 0,1\}^d \rightarrow \{ 0,1\}^m$ with $m\ge (1-\alpha)n$ and $D=2^{d} = O(n)$.
\end{thm}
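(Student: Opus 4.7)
The statement being established is a citation of Zuckerman's 2007 construction; the proof is involved, but here is a plan.

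The overall strategy is a two-stage construction: first condense the input source to one with min-entropy rate arbitrarily close to 1, then extract from the condensed source using an extractor tailored to high-rate sources. For the first stage, I would apply an explicit lossless (or near-lossless) condenser of the type underlying Theorem \ref{guv}: with seed length only $O(\log(n/\epsilon))$, such a condenser maps an $(n, \delta n)$-source to a distribution on $O(\delta n)$ bits that is $\epsilon$-close to a source with min-entropy rate $1-\gamma$ for a small constant $\gamma$ to be fixed in terms of $\alpha$. The short seed used here fits comfortably inside the $D = 2^d = O(n)$ (i.e.\ $d = \log n + O(1)$) budget.

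The second stage is extraction from a high-rate source. The idea is to invoke a block-source structure: split the condensed string into two halves and argue via a standard win-win argument that, up to a small error, one obtains a block source, where the second block has large conditional min-entropy given the first. One then performs block-source extraction, using a short-seed extractor (obtainable via Theorem \ref{thm:lsext} applied to the rate $1-\gamma$ regime, where the $\log(k/m)$ denominator becomes $\Omega(1)$ and the seed length collapses to $O(\log(n/\epsilon))$) to pull out a long seed from the first block, then using that seed to run a second seeded extractor on the second block. Strongness of both extractors combined with a conditioning argument yields a strong seeded extractor overall, and careful bookkeeping of entropy gives output length $(1-\alpha) \cdot \Omega(n)$.

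The main obstacle is simultaneously achieving seed length $O(\log n)$ \emph{and} near-maximal output length. Standard constructions like Trevisan's (Theorem \ref{thm:lsext}) yield one or the other but not both for constant-rate sources with constant error; the leftover hash lemma, conversely, gives near-maximal output but requires $\Omega(n)$ seed bits, which is tight against our $d = \log n + O(1)$ budget. The condenser-plus-block-source-extractor sketch above threads the needle by using the condenser's short seed to reduce to a regime where the $\log(k/m)$ denominator in Theorem \ref{thm:lsext} is a constant, and by exploiting block source structure so that the extracted seed in stage two can itself be generated from the source rather than from the external seed. A secondary difficulty is preserving the strong extractor property (output close to uniform \emph{given} the seed) through the composition; this is handled by insisting each intermediate extractor be strong and by fixing seeds in the correct order when arguing uniformity.
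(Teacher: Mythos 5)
The crux of the theorem you are trying to establish is the very tight seed length $D = 2^d = O(n)$, i.e.\ $d = \log n + O(1)$: the constant in front of $\log n$ must be exactly $1$. Your proposal does not hit this target, and neither of the two components you invoke can. First, the condenser: a GUV-style lossless condenser (the machinery underlying Theorem~\ref{guv}) has seed length that is $c\log(n/\epsilon)$ for some constant $c$ strictly larger than $1$; even for constant $\epsilon$ this gives $D = n^c = n^{\omega(1)\text{-bits-too-big}}$, not $O(n)$. Your claim that ``the short seed used here fits comfortably inside the $D = O(n)$ budget'' is therefore not justified. Second, the block-source extractor: your reading of Theorem~\ref{thm:lsext} in the rate-$(1-\gamma)$ regime is incorrect. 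Theorem~\ref{thm:lsext} gives $d = O\bigl(\log^2(n/\epsilon)/\log(k/m)\bigr)$; when $k/m$ is a constant, $\log(k/m) = \Theta(1)$ and the seed length is $O(\log^2 n)$, not $O(\log n)$. The denominator only helps when $m$ is polynomially smaller than $k$, which contradicts the goal of near-maximal output.

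More fundamentally, the construction you sketch is not the one that achieves these parameters. Zuckerman's construction relies on a condenser built from additive combinatorics (Barak--Impagliazzo--Wigderson / Barak--Kindler--Shaltiel--Sudakov--Wigderson style ``merging by summing''), whose defining feature is a seed of \emph{constant} length — a handful of bits to choose among $O(1)$ candidate sums — as opposed to the $\Theta(\log n)$ bits of a GUV condenser. That constant-seed condensation is what leaves room in the budget for the $\log n + O(1)$ bits spent on the final extraction step. Without a constant-seed condenser, the composition cannot meet $D = O(n)$, so the proposal as written would establish at best $D = n^{O(1)}$ — a strictly weaker statement that is already covered by Theorem~\ref{guv} and is insufficient for the use made of Theorem~\ref{zuck7} in the paper (where the sampler's seed length is tied to $n_1$ and the number of samples must be $D_1 = O(n_1)$).
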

We recall a folklore construction of a two-source extractors based on the inner product function \cite{CG88}. We include a proof for completeness.
\begin{thm}[\cite{CG88} ]\label{strong_ip} For all $m,r>0$, with $q=2^{m}, n =rm$, let $X,Y$ be independent sources on $\F_q^r$ with min-entropy $k_1,k_2$ respectively. Let $\IP$ be the inner product function over the field $\F_q$.  Then, we have: $$|\IP(X,Y), X - U_m, X| \le \epsilon, \hspace{0.5cm} |\IP(X,Y), Y - U_m, Y| \le \epsilon$$ where $\epsilon =2^{\frac{-(k_1+k_2-n-m)}{2}}$.
\end{thm}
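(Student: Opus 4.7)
By symmetry, I would prove only the first inequality $|\IP(X,Y), X - U_m, X| \leq \epsilon$; the second follows by swapping the roles of $X$ and $Y$. Since $X$ and $Y$ are independent, I rewrite the statistical distance as
\[
|\IP(X,Y), X - U_m, X| = \E_{x \sim X}\bigl[|\IP(x,Y) - U_m|\bigr].
\]
For each fixed $x \in \F_q^r$ I would bound $|\IP(x,Y) - U_m|$ by Cauchy--Schwarz followed by Parseval on $\F_q$. Let $\psi_a(z) = (-1)^{\textnormal{tr}(az)}$ denote the additive character of $\F_q$ indexed by $a \in \F_q$, and let $\hat{Y}(c) = \E[\psi_{c_1}(Y_1) \cdots \psi_{c_r}(Y_r)]$ be the Fourier coefficient of $Y$ on $\F_q^r$ at $c \in \F_q^r$. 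Using the identity $\psi_a(\IP(x,y)) = \psi_{ax}(y)$ (with $ax$ componentwise), Parseval on $\F_q$ gives $q\,(\cp(\IP(x,Y)) - 1/q) = \sum_{a \neq 0}|\hat{Y}(ax)|^2$, and hence
\[
|\IP(x,Y) - U_m| \leq \sqrt{\sum_{a \neq 0}|\hat{Y}(ax)|^2}.
\]

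Next I would apply Jensen's inequality to pull the square root outside the expectation over $x$, obtaining
\[
|\IP(X,Y), X - U_m, X| \leq \sqrt{\sum_{a \neq 0} \E_{x \sim X}[|\hat{Y}(ax)|^2]}.
\]
For each nonzero $a$, I bound $\E_{x \sim X}[|\hat{Y}(ax)|^2] = \sum_x \Pr[X=x]\,|\hat{Y}(ax)|^2 \leq 2^{-k_1} \sum_x |\hat{Y}(ax)|^2$, using the min-entropy hypothesis $\max_x \Pr[X=x] \leq 2^{-k_1}$. Since $a \neq 0$, multiplication by $a$ is a bijection on $\F_q^r$, so Parseval on $\F_q^r$ gives $\sum_x |\hat{Y}(ax)|^2 = \sum_c |\hat{Y}(c)|^2 = q^r\,\cp(Y) \leq 2^{n - k_2}$. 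Therefore $\E_{x \sim X}[|\hat{Y}(ax)|^2] \leq 2^{n - k_1 - k_2}$ for each $a \neq 0$, and summing over the $q - 1 \leq 2^m$ nonzero values of $a$ yields $2^{n+m-k_1-k_2}$. Taking the square root gives the desired $\epsilon = 2^{-(k_1+k_2-n-m)/2}$.

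There is no real conceptual obstacle; the main care lies in the bookkeeping across two applications of Parseval (one on $\F_q$ used to pass from the $L_1$-distance to a Fourier-squared sum, and one on $\F_q^r$ used to evaluate $\sum_c |\hat{Y}(c)|^2$), as well as in the substitution $c = ax$, which is a bijection only because $a \neq 0$. Both sources' min-entropies enter in complementary ways: that of $Y$ through $\cp(Y) \leq 2^{-k_2}$ inside Parseval on $\F_q^r$, and that of $X$ through the pointwise bound $\max_x \Pr[X=x] \leq 2^{-k_1}$ applied \emph{before} Parseval. Without using both bounds, one falls short by a factor of $q^{(r-1)/2}$.
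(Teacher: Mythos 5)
Your proof is correct. It is the same Fourier-analytic argument at heart, but you organize it differently from the paper and, I'd say, more cleanly. The paper (implicitly) reduces to flat sources $X$ uniform on $A$, $Y$ uniform on $B$, fixes a single nontrivial character $\psi$, bounds the averaged bias $\E_Y\left|\E_X\psi(\IP(X,Y))\right|$ by applying Cauchy--Schwarz over $y$ and then orthogonality of characters (only the diagonal $x=x'$ terms survive, giving $|A|q^r$), and finally invokes Vazirani's XOR lemma to trade the per-character bias for statistical distance at the cost of a $2^{m/2}$ factor. You instead bound the statistical distance directly: you condition on $X=x$, pass to the Fourier-$L^2$ quantity $\sum_{a\neq 0}|\hat Y(ax)|^2$ by Cauchy--Schwarz plus Parseval on $\F_q$ (which \emph{is} the XOR lemma, just inlined), then push the expectation over $x$ inside with Jensen, feed in $\Pr[X=x]\le 2^{-k_1}$ pointwise, use the bijectivity of $x\mapsto ax$ for $a\neq 0$, and finish with Parseval on $\F_q^r$ giving $\cp(Y)\le 2^{-k_2}$. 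Concretely: the paper's $Y$ min-entropy enters through the $|B|^{1/2}$ produced by Cauchy--Schwarz and its $X$ min-entropy through the number $|A|$ of diagonal terms, whereas yours enter through $\cp(Y)$ and the pointwise bound on $\Pr[X=x]$, respectively. Your route has two small advantages: it does not need to invoke the reduction to flat sources, and it avoids the slightly delicate step of averaging the XOR lemma over the conditioning variable $Y$, which the paper glosses over. The bounds coincide (you even drop a harmless factor of $1/2$ from Parseval).
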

\begin{proof} Let $X,Y$ be uniform on  sets $A,B \subseteq \F_q^r$ respectively, with $|A|=2^{k_1}$ and $|B|=2^{k_2}$. Let $\psi$ be any non-trivial additive character of the finite field $\F_q$. For short, we use $\cdot$ to denote the standard inner product over $\F_q$. We have 
\begin{align*}
\sum_{y \in B}|\sum_{x \in A}\psi(x\cdot y)| &\le (|B|)^{\frac{1}{2}}\l(\sum_{y \in \F_q^r}\sum_{x,x^{\prime}\in A}\psi((x-x^{\prime})\cdot y)\r)^{\frac{1}{2}} \\
&\le |B|^{\frac{1}{2}}\l(\sum_{x,x^{\prime}\in A}\l(\sum_{y \in \F_q^r}\psi((x-x^{\prime})\cdot y)\r)\r)^{\frac{1}{2}}  
\end{align*}
where the first inquality follows by an application of the Cauchy-Schwartz inequality.
Further, whenever $x \neq x^{\prime}$, we have $$\sum_{y \in \F_q^r}\psi((x-x^{\prime})\cdot y) = 0. $$
Thus, continuing with our estimate, we have 
\begin{align*}
\sum_{y \in B}|\sum_{x \in A}\psi(x\cdot y)| &\le |B|^{\frac{1}{2}}\l(  |A|q^r  \r)^{\frac{1}{2}} 
                							=2^{\frac{n+k_1+k_2}{2}}
\end{align*}
Thus, $$ E_{Y}|\E_{X}\psi(\IP(X,Y))| \le 2^{\frac{n-k_1-k_2}{2}}$$
Using Vazirani's XOR Lemma (see \cite{Rao07} for a proof), it now follows that 
$$| \IP(X,Y),Y - U_m,Y | \le 2^{\frac{n+m-k_1-k_2}{2}}$$
It can be similarly shown that $|\IP(X,Y), X - U_m, X| \le  2^{\frac{n+m-k_1-k_2}{2}}$.
\end{proof}

\section{Seeded  and Seedless Non-Malleable Extractors}\label{formaldefs}
We give formally introduce seedless $(2,t)$-non-malleable extractors in this section. We first recall the definition of seeded $t$-non-malleable extractors from \cite{CRS12}, which generalizes the definition introduced in \cite{DW09}. 

\begin{define}[$t$-Non-malleable Extractor] A function $\snmExt:\{0,1\}^n \times \{ 0,1\}^d \rightarrow \{ 0,1\}^m$ is a seeded $t$-non-malleable extractor for min-entropy $k$ and error $\epsilon$ if the following holds : If $X$ is a  source on  $\{0,1\}^n$ with min-entropy $k$ and $\A_1 : \{0,1\}^n \rightarrow \{0,1\}^n,\ldots,  \A_t : \{0,1\}^n \rightarrow \{0,1\}^n$ are  arbitrary tampering function with no fixed points, then
\begin{align*}
|\snmExt(X,U_d) \scirc \snmExt(X,\A_1(U_d)) \scirc \ldots \scirc \snmExt(X,\A_t(U_d))  \scirc U_d \\ -  U_m \scirc \snmExt(X,\A_1(U_d)) \scirc \ldots \scirc \snmExt(X,\A_t(U_d))  \scirc U_d | <\epsilon 
\end{align*}
where $U_m$ is independent of $U_d$ and $X$.
\end{define}

We now proceed to define seedless non-malleable extractors, which were introduced by Cheraghchi and Guruswami in \cite{CG14b}.

We need the following functions.
\[
 \cpy(x,y) =
  \begin{cases}
   x & \text{if } x \neq \same \\
   y       & \text{if } x  = \same

  \end{cases}
\]

\[
 \cpy^{(t)}((x_1,\ldots,x_{t}),(y_1,\ldots,y_{t})) = (\cpy(x_1,y_1),\ldots,\cpy(x_t,y_t))
 \]
\begin{define}[Seedless Non-Malleable Extractor] A function $\nmExt : \{0,1\}^{n} \rightarrow \{0,1\}^{m}$ is  a seedless non-malleable extractor with respect to a class of sources $\mathcal{X}$ and a family of tampering functions $\mathcal{F}$ with error $\epsilon$ if for every distribution $X \in \mathcal{X}$ and every tampering function $f \in \mathcal{F}$, there exists a random variable $D_{X,f}$  on $\{ 0,1\}^m \cup \{ \same\}$ which is independent of the source $X$  such that   $$|\nmExt(X) \scirc \nmExt(f(X)) - U_m \scirc \cpy(D_{X,f},U_m) | \le \epsilon $$  where both $U_m$'s refer to the same uniform $m$-bit string.
\end{define}

When the class of tampering functions are $2$-split-state, the definition of seedless non-malleable extractors specializes as follows.
\begin{define}[Seedless 2-Non-Malleable Extractor]\label{def:t1}
A function $\nmExt : \{ 0,1\}^{n} \times \{ 0,1\}^{n} \rightarrow \{ 0,1\}^m$ is a seedless 2-non-malleable   extractor at min-entropy $k$  and error $\epsilon$ if it satisfies the following property: If $X$ and $Y$ are independent $(n,k)$-sources and   $\A=(f,g)$ is an arbitrary $2$-split-state tampering function,  then there exists a random variable $D_{f,g}$ on $\{ 0,1\}^{m} \cup \{ \same\}$ which is independent of the sources $X$ and $Y$,   such that  $$ |\nmExt(X,Y) \circ \nmExt(\A(X,Y)) - U_m \circ \cpy(D_{f,g},U_m)| < \epsilon$$ where both $U_m$'s refer to the same uniform $m$-bit string.
\end{define}

In this work, we introduce the following natural generalization where the sources $X,Y$ are tampered by $t$ tampering functions, each of which is from the $2$-split-state family.
\begin{define}[Seedless (2,t)-Non-Malleable Extractor]\label{def:t}
A function $\nmExt : \{ 0,1\}^{n} \times \{ 0,1\}^{n} \rightarrow \{ 0,1\}^m$ is a seedless $(2,t)$-non-malleable  extractor at min-entropy $k$  and error $\epsilon$ if it satisfies the following property: If $X$ and $Y$ are independent $(n,k)$-sources and   $\A_1= (f_1,g_1),\ldots,\A_t=(f_t,g_t)$ are $t$ arbitrary $2$-split-state tampering functions,  then there exists a random variable $D_{\vec{f},\vec{g}}$ on $\l(\{ 0,1\}^{m} \cup \{ \same\}\r)^{t}$ which is independent of the sources $X$ and $Y$,   such that  $$ |\nmExt(X,Y), \nmExt(\A_1(X,Y)), \ldots, \nmExt(\A_t(X,Y)) - U_m, \cpy^{(t)}(D_{\vec{f},\vec{g}},U_m)| < \epsilon$$ where both $U_m$'s refer to the same uniform $m$-bit string.
\end{define}

\section{Non-malleable codes via Seedless non-malleable extractors}\label{section:connection}
The following theorem  is a straightforward generalization of the connection found between non-malleable codes and seedless non-malleable extractors \cite{CG14b}.
\begin{thm}\label{connection_t} Let $\nmExt:(\{0,1\}^{n})^2 \rightarrow \{0,1\}^{m}$  be a polynomial time computable seedless $(2,t)$-non-malleable extractor for min-entropy $n$ with error $\epsilon$. Then there exists a one-many non-malleable code with an efficient decoder in the $2$-split-state model with tampering degree $t$, block length $=2n$, relative rate  $\frac{m}{2n}$, and error $=\epsilon2^{mt+1}$.
\end{thm}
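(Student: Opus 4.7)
The plan is to adopt the standard extractor-to-code conversion of \cite{CG14b} and generalize it to the $(2,t)$ setting. Define the encoder $\Enc:\{0,1\}^m \to (\{0,1\}^n)^2$ by sampling uniformly from the pre-image: $\Enc(s)$ outputs a uniformly random pair $(x,y)$ with $\nmExt(x,y)=s$, and define the decoder by $\Dec(x,y)=\nmExt(x,y)$, which is polynomial-time by hypothesis. Correctness $\Dec(\Enc(s))=s$ is immediate. The block length is $2n$ bits and the message length is $m$ bits, giving relative rate $m/(2n)$.

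For one-many non-malleability, fix any tuple of $2$-split-state tampering functions $\A_i = (f_i,g_i)$ for $i\in[t]$, and let $(X,Y)$ be uniform on $(\{0,1\}^n)^2$, which has full min-entropy $n$ in each coordinate. Set $A = \nmExt(X,Y)$ and $B_i = \nmExt(\A_i(X,Y))$. Applying the $(2,t)$-non-malleable extractor guarantee (Definition \ref{def:t}) yields a distribution $D_{\vec f,\vec g}$ on $(\{0,1\}^m \cup \{\same\})^t$, independent of $(X,Y)$, such that
\[
\bigl|(A,B_1,\dots,B_t)-(U_m,\cpy^{(t)}(D_{\vec f,\vec g},U_m))\bigr| \le \epsilon,
\]
where both copies of $U_m$ refer to the same uniform string. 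I would use exactly this $D_{\vec f,\vec g}$ as the required non-malleable code witness $D_{\vec f}$.

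The substantive step is converting this joint closeness into a per-message closeness. The code's output on message $s$ equals the conditional $(B_1,\dots,B_t)\mid A=s$ because $\Enc(s)$ samples uniformly from $\nmExt^{-1}(s)$, while the ideal target $\rep(D_{\vec f,\vec g},s)$ equals the conditional $\cpy^{(t)}(D_{\vec f,\vec g},U_m)\mid U_m=s$. I would write out the joint $L^1$ bound $\sum_{s,\vec b}|P(s,\vec b)-Q(s,\vec b)|\le 2\epsilon$ and, for each $s$, bound the conditional statistical distance by the factor of at most $2^m$ coming from conditioning on an $m$-bit event, accounting for the imbalance $|\Pr[A=s]-2^{-m}|\le \epsilon$ and summing pointwise across the at most $2^{mt}$ possible values of $\vec b$. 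This yields the advertised blowup factor of $2^{mt+1}$. Messages $s$ with $\Pr[A=s]=0$ are rare (total mass at most $O(\epsilon\cdot 2^m)$) and handled by defining $\Enc(s)$ arbitrarily on them, folding them into the final error bound.

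The main obstacle is the careful bookkeeping in the conversion step: tracking how $\same$ symbols in $D_{\vec f,\vec g}$ are substituted consistently on both sides and verifying that the crude conditional bound indeed yields error at most $\epsilon\cdot 2^{mt+1}$ rather than a more painful dependence on $t$. Once that bookkeeping is in place, the claim follows directly from the extractor's guarantee without any additional probabilistic argument.
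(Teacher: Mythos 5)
The paper states this theorem without a written proof, remarking only that it is a ``straightforward generalization'' of the $t=1$ connection in \cite{CG14b}, so your plan --- encode by sampling $\nmExt^{-1}(s)$ uniformly, decode by applying $\nmExt$, invoke the extractor guarantee on uniform $(X,Y)$, and condition on $\nmExt(X,Y)=s$ --- is the intended argument.

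The bookkeeping in your conversion step is off, though. You attribute the $2^{mt+1}$ factor to ``summing pointwise across the at most $2^{mt}$ possible values of $\vec{b}$,'' but the sum over $\vec{b}$ is already controlled by the joint $L^1$ bound and contributes no $2^{mt}$ factor; if you really pushed a pointwise bound through a $2^{mt}$-term sum you would get roughly $\epsilon\,2^{m(t+1)}$, which exceeds the stated error. The clean argument costs no factor in $t$ at all: for fixed $s$, let $T^*$ be the test achieving the conditional distance $\delta$ between $(\vec{B}\mid A=s)$ and $\rep(D_{\vec{f},\vec{g}},s)$, and apply the joint TV bound to the event $\{s\}\times T^*$; with $p=\Pr[A=s]$ this gives $p\delta\le\epsilon+|p-2^{-m}|\le 2\epsilon$. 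Since the theorem is vacuous unless $\epsilon\,2^{mt+1}<1$, in which case $\epsilon<2^{-m-1}$ and hence $p\ge 2^{-m}-\epsilon>0$ for every $s$, you get $\delta\le O(\epsilon\,2^m)$ uniformly in $s$, independent of $t$, which in particular implies the theorem's weaker bound. The same observation repairs your handling of small-probability messages: the one-many non-malleable code definition demands the bound for \emph{every} message $s$, so you cannot simply ``fold rare $s$ into the error''; rather, in the only regime where the theorem has content, $\Pr[A=s]>0$ for every $s$ and the conditional is well-defined.
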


The one-many non-malleable codes in the $2$-split-state model  is define in the following way: For any message $s \in \{ 0,1\}^m$, the encoder $\Enc(s)$ outputs a uniformly random string from the set $\nmExt^{-1}(s) \subset \{ 0,1\}^{2n}$. For any codeword $c \in \{0,1\}^{2n}$, the decoder $\Dec$ outputs $\nmExt(c)$. Thus for the encoder to be efficient, one need to sample almost uniform from $\nmExt^{-1}(s)$.

\section{An Explicit Seedless $(2,t)$-Non-Malleable Extractor }\label{section:proof}
We first set up some tools that we use in our extractor construction.
\subsection{Averaging Samplers} 
In our contsruction, we need to pseudorandomly sample a subset $T$ in $[n]$ such that it intersects any large enough subset with high probability. It turns out that a stronger sampling problem has been extensively studied with the following stronger requirement: For any function $f:[n] \rightarrow [0,1]$, the average of $f$ on the sampled subset $T$ is close to its actual mean with high probability. Such sampling procedures are known as averaging samplers. We use the definition from \cite{Vad04}.

\begin{define}[Averaging sampler \cite{Vad04}]\label{def:samp} A function $\samp: \{0,1\}^{r} \rightarrow [n]^{t}$ is a $(\mu,\theta,\gamma)$ averaging sampler if for every function $f:[n] \rightarrow [0,1]$ with average value $\frac{1}{n}\sum_{i}f(i) \ge \mu$, it holds that 
$$ \Pr_{i_1,\ldots,i_t \leftarrow \samp(U_{R})}\l[  \frac{1}{t}\sum_{i}f(i) \le \mu - \theta \r] \le \gamma.$$
$\samp$ has distinct samples if for every $x \in \{ 0,1\}^{r}$, the samples produced by $\samp(x)$ are all distinct.
\end{define}
The following theorem proved by Zuckerman \cite{Z97} essentially shows that seeded extractors are equivalent to averaging samplers.
\begin{thm}[\cite{Z97}] \label{thm:seed_samp}Let $\Ext:\{0,1\}^{n} \times \{ 0,1\}^{d} \rightarrow \{ 0,1\}^{m}$ be a strong seeded extractor for min-entropy $k$ and error $\epsilon$. Let $\{ 0,1\}^{n}= \{ s_1,\ldots,s_{2^d}\}$. Then  $\samp(x) = \l(\Ext(x,s_1) \circ s_1,\ldots,\Ext(x,s_{2^d}) \circ s_{2^{d}} \r) $ is a $(\mu,\theta,\gamma)$ averaging sampler with distinct samples for any $\mu>0$ , $\theta= \epsilon$ and $\gamma= 2^{k-n}$. 
\end{thm}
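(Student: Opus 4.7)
The plan is to prove this classical equivalence by contradiction: assume $\samp$ fails to be a $(\mu,\epsilon,2^{k-n})$ averaging sampler, exhibit a weak source on which the induced extractor output is detectably non-uniform, and contradict the strong extractor hypothesis. Before starting the main argument, I would dispense with the ``distinct samples'' clause: since $s_1,\ldots,s_{2^d}$ enumerate $\{0,1\}^d$, the tuples $\Ext(x,s_i)\circ s_i$ differ in the last $d$ coordinates, so they are automatically distinct for every $x$.

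For the main argument, suppose for contradiction that there is a function $f:\{0,1\}^{m+d}\to[0,1]$ with mean $\mu_f=\E_{U_{m+d}}[f]\ge \mu$ such that the ``bad set''
\[
B=\Bigl\{x\in\{0,1\}^n : \tfrac{1}{2^d}\sum_{i=1}^{2^d} f\bigl(\Ext(x,s_i)\circ s_i\bigr) \le \mu-\epsilon\Bigr\}
\]
has density $|B|/2^n > 2^{k-n}$, i.e.\ $|B|>2^k$. Let $X$ be uniform on $B$; then $H_\infty(X)\ge \log|B|\ge k$, and $X$ is independent of a uniform seed $S\sim U_d$. By the strong extractor property of $\Ext$,
\[
\bigl|\,\Ext(X,S)\circ S \;-\; U_m\circ S\,\bigr| \;<\;\epsilon,
\]
and since $U_m\circ S$ is exactly $U_{m+d}$, applying the $[0,1]$-valued test function $f$ gives
\[
\bigl|\E[f(\Ext(X,S)\circ S)] - \mu_f\bigr| < \epsilon,
\]
so $\E[f(\Ext(X,S)\circ S)] > \mu-\epsilon$. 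On the other hand, by definition of $B$, for every fixed $x\in B$ we have $\E_S[f(\Ext(x,S)\circ S)] \le \mu-\epsilon$, and averaging over $x\sim X$ yields $\E[f(\Ext(X,S)\circ S)] \le \mu-\epsilon$, the desired contradiction. Setting $\theta=\epsilon$ and $\gamma=2^{k-n}$ then gives exactly the claimed sampler parameters.

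The proof is almost entirely definition-chasing, so there is no real ``hard part.'' The only subtlety I want to be careful about is the step translating statistical closeness into closeness of expectations for $f$: this uses the standard fact that for any $[0,1]$-valued test function, $|\E[f(D_1)]-\E[f(D_2)]|\le |D_1-D_2|$, which follows by writing $f$ as an average of indicators of threshold events. Everything else—the existence of $B$ from the failure of sampling, the min-entropy bound on the uniform distribution on $B$, and the conclusion—follows directly from the hypotheses.
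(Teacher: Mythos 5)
The paper does not actually prove this statement; it is cited directly from Zuckerman~\cite{Z97} with no in-text argument, so there is no in-paper proof to compare against. Your proof is correct and is exactly the classical Zuckerman argument: if the set $B$ of ``bad'' seeds for a test $f$ were larger than $2^k$, the flat source on $B$ would have min-entropy $\geq k$, yet by construction of $B$ it would push $\E[f(\Ext(X,S)\circ S)]$ below $\mu-\epsilon$, contradicting strong extraction (since $U_m\circ S = U_{m+d}$ and $[0,1]$-valued tests see at most $\epsilon$ deviation in expectation). The handling of the ``distinct samples'' clause via the $s_i$ coordinate, and the sensible reading of the paper's typo ``$\{0,1\}^n=\{s_1,\ldots,s_{2^d}\}$'' as an enumeration of $\{0,1\}^d$, are both correct.
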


Using known constructions of strong seeded extractors, we have the following corollary.
\begin{cor}\label{cor:samp} For any constants $\delta_{\samp},\nu_{\samp}>0$,  there exist constants  $\alpha,\beta<\nu_{\samp}$ such that for all $n>0$ and any $r \ge n^{\alpha}$ there exists a  polynomial time computable function $\samp: \{ 0,1\}^r \rightarrow [n]^{t_{\samp}}$  $ t_{\samp} = O(n^{\beta})$ satisfying the following property: for any set $S \subset [n]$ of size $\delta_{\samp} n$, $$\Pr[|\samp(U_r) \cap S| \ge 1] \ge 1 - 2^{-\Omega(n^{\alpha})}.$$ Further $\samp$ has distinct samples.
\end{cor}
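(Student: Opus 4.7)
The plan is to instantiate Theorem \ref{thm:seed_samp} with a known strong seeded extractor, and then specialize the resulting averaging sampler to the indicator function of the target set $S$. Since the sampler must produce elements of $[n]$, I will view each sample as an $(m+d)$-bit string, where $m$ is the extractor's output length and $d$ is its seed length, and arrange that $m + d = \log n$; the distinctness of the seeds $s_1, \ldots, s_{2^d}$ then forces distinctness of the samples $\Ext(x,s_i) \circ s_i$.

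Given the constants $\delta_{\samp}, \nu_{\samp} > 0$, I first pick a small $\alpha \in (0, \nu_{\samp})$ and take source length $r = n^{\alpha}$. I apply Theorem \ref{guv} to obtain a strong seeded extractor $\Ext: \{0,1\}^{n^{\alpha}} \times \{0,1\}^d \to \{0,1\}^m$ with constant error $\epsilon = \delta_{\samp}/2$, seed length $d = O(\log n^{\alpha} + \log(1/\epsilon)) = O(\alpha \log n)$, output length $m = \log n - d$, and min-entropy requirement $k$ satisfying $m = (1-\alpha')k$ for a tiny constant $\alpha' > 0$; this forces $k = \Theta(\log n)$, which is well below $r$. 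By Theorem \ref{thm:seed_samp}, the function $\samp(x) = (\Ext(x,s_1) \circ s_1, \ldots, \Ext(x,s_{2^d}) \circ s_{2^d})$ is a $(\mu, \epsilon, 2^{k-r})$ averaging sampler into $[n]$ with distinct samples, and the number of samples is $t_{\samp} = 2^d = n^{O(\alpha)}$; choosing $\beta$ to absorb the $O(\alpha)$ factor ensures both $\alpha, \beta < \nu_{\samp}$.

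To derive the intersection guarantee, fix $S \subseteq [n]$ with $|S| = \delta_{\samp} n$ and apply the averaging sampler property to $f = \1_S$ with $\mu = \delta_{\samp}$ and $\theta = \delta_{\samp}/2$. The conclusion says that with probability at least $1 - 2^{k-r}$ over $x \sim U_r$, the empirical average $\frac{1}{t_{\samp}} \sum_i f(\samp(x)_i)$ exceeds $\delta_{\samp}/2 > 0$, which forces at least one sample to lie in $S$. Since $r - k = n^{\alpha} - \Theta(\log n) = \Omega(n^{\alpha})$, we obtain $\Pr[|\samp(U_r) \cap S| \geq 1] \geq 1 - 2^{-\Omega(n^{\alpha})}$ as claimed. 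For $r > n^{\alpha}$, I simply truncate the input to its first $n^{\alpha}$ bits before feeding it to $\Ext$. There is no real technical obstacle here; the argument is a parameter-matching exercise combining Theorems \ref{thm:seed_samp} and \ref{guv}, where the only care is that both $\alpha$ and the implicit constant hidden in $\beta = O(\alpha)$ be chosen small enough to fall below $\nu_{\samp}$, which is possible because the seed length of Theorem \ref{guv} is logarithmic in the source length.
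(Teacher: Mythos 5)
Your proof is correct and follows the same strategy as the paper: plug a strong seeded extractor with constant error, $n^{\alpha}$-bit source, and min-entropy a constant fraction of $n^{\alpha}$ into Theorem~\ref{thm:seed_samp}, then specialize the averaging-sampler guarantee to the indicator function of $S$ with $\mu=\delta_{\samp}$ and $\theta=\delta_{\samp}/2$. The only cosmetic difference is that you instantiate with the GUV extractor (Theorem~\ref{guv}) whereas the paper uses the Trevisan/RRV linear seeded extractor (Theorem~\ref{thm:lsext}); in this parameter regime both give seed length $O(\alpha\log n)$ and hence $t_{\samp}=n^{O(\alpha)}$, and linearity of the extractor plays no role in the corollary, so the choice is immaterial.
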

\begin{proof} We set the parameter  $\alpha$ as follows. Let $\Ext: \{ 0,1\}^{n^{\alpha}} \times \{ 0,1\}^d \rightarrow \{ 0,1\}^m$ be the strong linear seeded extractor for min-entropy $k = \frac{n^{\alpha}}{2}$ and error $\epsilon = \frac{\delta}{2}$ from Theorem \ref{thm:lsext}. Thus $t=2^d=O(n^{\alpha c})$ for some constant $c$. We choose $\alpha < {\nu}_{\samp}$ small enough such that $c \alpha < \nu_{samp}$ (and set $\beta=c \alpha)$.  The result now follows by using Theorem $\ref{thm:seed_samp}$.
\end{proof}
\subsection{Alternating Extraction}\label{section:altext}
We recall the method of alternating extraction, which we use as a crucial component in our construction.  

The alternating extraction protocol takes in two integer parameters $u,m>0$. Assume that there are two parties, Quentin with a source $Q$ and a uniform seed $S_1$ (which may be correlated with $Q$), and Wendy with source $W$. Further suppose that $(Q,S_1)$ is kept as a secret from Wendy and $W$ is kept a secret from Quentin.The protocol is an interactive process between Quentin and Wendy, and runs for $u$ steps. 

 Let $\Ext_q$, $\Ext_w$ be strong seeded extractors. In the first step, Quentin sends $S_1$ to Wendy, Wendy computes $R_1 = \Ext_{w}(X,S_1) $ and sends it back to Quentin, and Quentin then computes $S_2= \Ext_q(Q,R_1)$. Continuing in this way, in step $i$, Quentin sends $S_{i}$, Wendy computes the random variables $R_{i} = \Ext_w(X,S_{i})$ and sends it to Quentin, and Quentin then computes the random variable $S_{i+1} = \Ext_q(Q,R_{i})$. This is done for $u$ steps. Each of the random variables $R_i,S_i$ is of length $m$. Thus,  the following sequence of random variables is generated: $$ S_1, R_1 = \Ext_{w}(X,S_1), S_2 = \Ext_{q}(Q,R_1),\ldots,S_{u} = \Ext_{q}(Q,R_{u-1}),R_{u} = \Ext_{w}(X,S_{u}).$$

\textbf{Look-Ahead Extractor} We define the following  look-ahead extractor: $$\laExt(X,(Q,S_1)) =  R_1,\ldots,R_{u}$$

In our application of  the alternating extraction protocol,  the initial seed $S_1$ is not  guaranteed to be uniform but only has high min-entropy\footnote{another way to handle this is to use the extractor from \cite{Raz05}, but we avoid this to ensure invertibility of the final extractor.}. We first  prove a lemma which shows that strong seeded extractors work even when the seed is not uniform but has high enough min-entropy.
\begin{lemma}\label{lemma:almost_full} Let $\Ext: \{ 0,1\}^{n} \times \{ 0,1\}^d \rightarrow \{ 0,1\}^m$ be a strong seeded extractor for min-entropy $k$, and error $\epsilon$. Let $X$ be a $(n,k)$-source and let $Y$ be a source on $\{ 0,1\}^{d}$ with min-entropy $d-\la$. Then, $$ |\Ext(X,Y) \circ Y- U_m \circ Y| \le 2^{\la} \epsilon.  $$
\end{lemma}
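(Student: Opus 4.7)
The plan is to reduce to the uniform-seed case by exploiting the min-entropy bound on $Y$, which says $\Pr[Y=y] \le 2^{-(d-\la)} = 2^{\la} \cdot 2^{-d}$ for every $y$. The statistical distance we want to bound decomposes naturally as a weighted average of per-seed distances, so we can simply compare the weighting induced by $Y$ with the uniform weighting guaranteed by the hypothesis on $\Ext$.

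Concretely, I would first rewrite
\[
|\Ext(X,Y)\circ Y - U_m \circ Y| \;=\; \sum_{y\in\{0,1\}^d} \Pr[Y=y]\cdot |\Ext(X,y) - U_m|,
\]
using the fact that both distributions share the same marginal on $Y$. Doing the same decomposition for the uniform seed, the strong extractor hypothesis becomes
\[
2^{-d}\sum_{y} |\Ext(X,y) - U_m| \;=\; |\Ext(X,U_d)\circ U_d - U_m \circ U_d| \;\le\; \epsilon,
\]
hence $\sum_{y} |\Ext(X,y) - U_m| \le 2^d \epsilon$.

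Now I would apply the min-entropy bound on $Y$ pointwise: $\Pr[Y=y] \le 2^{\la} \cdot 2^{-d}$, so
\[
\sum_{y}\Pr[Y=y]\cdot|\Ext(X,y)-U_m| \;\le\; 2^{\la}\cdot 2^{-d}\sum_{y}|\Ext(X,y)-U_m| \;\le\; 2^{\la}\epsilon,
\]
which is exactly the claimed bound. There is no real obstacle here; the only thing to be slightly careful about is the very first step, where one uses that conditioning on the common marginal $Y=y$ makes the conditional distribution of $\Ext(X,Y)$ equal to $\Ext(X,y)$ on the left-hand side and $U_m$ on the right-hand side, so that the joint statistical distance reduces cleanly to the expectation of $|\Ext(X,y)-U_m|$ over $y\sim Y$. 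Alternatively, one can present the same argument via the convex-combination viewpoint (Lemma \ref{sd_convex}): write $Y$ as a convex combination of point masses $\delta_y$ with weights at most $2^{\la-d}$, and use that the uniform distribution is the convex combination of the same point masses with weights exactly $2^{-d}$.
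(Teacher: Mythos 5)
Your proof is correct and takes essentially the same approach as the paper: both decompose the joint statistical distance over the common marginal $Y$ and then compare against the uniform-seed guarantee. The only cosmetic difference is that the paper first reduces WLOG to a flat source (uniform on a set of size $2^{d-\la}$) before summing, whereas you apply the pointwise bound $\Pr[Y=y]\le 2^{\la-d}$ directly to a general $(d,d-\la)$-source; both steps are standard and equivalent here.
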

\begin{proof} 
Since $Y$ is a source with min-entropy $d-\la$, we can assume it is uniform on a set $A$ of size $2^{d-\la}$. Thus 
\begin{align*}
|\Ext(X,Y) \circ Y- U_m \circ Y| &= \frac{1}{2^{d-\la}} \sum_{y \in A}|\Ext(X,y) - U_m| \\
 						&\le  \frac{1}{2^{d-\la}} \sum_{y \in \{0,1\}^d}|\Ext(X,y) - U_m|  \\
						&\le \frac{1}{2^{d-\la}} 2^d \epsilon = 2^{\la} \epsilon
\end{align*}
where the last inequality uses the fact that $\Ext$ is a strong seeded extractor.
\end{proof}
\textbf{Notation:} If $Z_a,Z_{a+1},\ldots,Z_{b}$ are random variables, we use $Z_{[a,b]}$ to denote the random variable $Z_{a},\ldots,Z_b$. 

We  now prove a general lemma which establishes a strong property satisfied by the alternating extraction protocol. The proof uses ideas from a result proved by Li on alternating extraction \cite{Li13b}, and in fact generalizes this result. 

\begin{lemma}\label{main_altext} Let $X$ be a  $(n_{w},k_{w})$-source and let $X^{(1)},\ldots,X^{(t)}$ be random variables on $\{ 0,1\}^{n_w}$ that are arbitrarily correlated with $X$. Let $Y=(Q,S_1), Y^{(1)} = (Q^{(1)}, S_1^{(1)}), \ldots, Y^{(t)} = (Q^{(t)}, S_1^{(t)})$ be arbitrarily correlated random variables  that are  independent of $(X,X^{(1)},X^{(2)},\ldots,X^{(t)})$. Suppose that $Q$ is a $(n_q,k_q)$-source,  $S_1$ is a $(m,m-\la)$-source,  $Q^{(1)},\ldots,Q^{(t)}$ are each on $n_q$ bits, and $S^{(1)},\ldots,S^{(t)}$ are each on $m$ bits. Let $\Ext_q,\Ext_w$ be strong seeded extractors that extract $m$ bits at min-entropy $k$ with error $\epsilon$ and seed length $m$. Let $\laExt$ be the look-ahead extractor for an alternating extraction protocol with parameters $u,m$, with $\Ext_q,\Ext_w$ being the strong seeded extractors used by Quentin and Wendy respectively. Let $\laExt(X,Y) = R_1,\ldots,R_u$ and for $ j \in [t]$,  $\laExt(X^{(j)},Y^{(j)}) = R^{(j)}_1,\ldots,R^{(j)}_u$. If $k_w,k_q \ge k+ u (t+1) m+2 \log(\frac{1}{\epsilon})$, then the following holds for each $ i \in [u]$: 
$$R_{i}, R_{[1,i-1]}, R_{[1.i-1]}^{(1)}, \ldots, R_{[1,i-1]}^{(t)}, Q, Q^{(1)}, \ldots, Q^{(t)}   \approx_{\epsilon_i}  U_m, R_{[1,i-1]}, R_{[1.i-1]}^{(1)}, \ldots, R_{[1,i-1]}^{(t)},Q, Q^{(1)}, \ldots, Q^{(t)}$$  where $\epsilon_i = O( u \epsilon+ 2^{\la}\epsilon)$.
\end{lemma}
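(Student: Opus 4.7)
I will prove Lemma~\ref{main_altext} by induction on $i$, actually establishing a slightly strengthened statement in which the variables $S_{[1,i]}$ and $\{S_{[1,i]}^{(k)}\}_{k \in [t]}$ are additionally included on both sides of the conditioning; the lemma as stated then follows by marginalizing over these via the data-processing inequality. The key structural observation driving the proof is that, conditioned on any fixing of the $Y$-side randomness $(Q, S_1, \{Q^{(k)}, S_1^{(k)}\}_k)$, every $R_j$ and $R_j^{(k)}$ becomes a deterministic function of $X$ and $X^{(k)}$, respectively, and conversely after fixing the $X$-side every $S_j$ and $S_j^{(k)}$ becomes a deterministic function of the $Y$-side. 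Combined with the independence of $(X,\{X^{(k)}\}_k)$ from $(Y,\{Y^{(k)}\}_k)$, this two-sided structure lets one maintain independence across the growing conditional distributions.

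For the base case $i=1$, Lemma~\ref{lemma:almost_full} applied to the independent pair $(X, S_1)$, with $H_\infty(X) \ge k_w \ge k$ and $H_\infty(S_1) \ge m - \la$, gives $(R_1, S_1) \approx_{2^\la \epsilon} (U_m, S_1)$. Writing the $Y$-side randomness as $(S_1, W)$ where $W \perp X$, the variables $(Q, \{Q^{(k)}\}, \{S_1^{(k)}\})$ are all deterministic functions of $(S_1, W)$, and $W \perp R_1 \mid S_1$ since $R_1$ becomes a function of $X$ alone once $S_1$ is fixed. A direct data-processing calculation then gives
\[
(R_1, Q, \{Q^{(k)}\}, \{S_1^{(k)}\}, S_1) \approx_{2^\la \epsilon} (U_m, Q, \{Q^{(k)}\}, \{S_1^{(k)}\}, S_1),
\]
which is the strengthened statement for $i=1$, with $\epsilon_1 \le 2^\la \epsilon$.

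For the inductive step, assume the strengthened statement at $i-1$. I execute one round of alternating extraction in two sub-steps. First, I propagate closeness from $R_{i-1}$ to $S_i$: by the induction hypothesis and data processing to temporarily drop $Q$ from the conditioning, $(R_{i-1}, \mathcal{C}')$ is $\epsilon_{i-1}$-close to $(U_m, \mathcal{C}')$, where $\mathcal{C}'$ is the IH conditioning set with $Q$ removed. I then further condition on $\{R_{i-1}^{(k)}\}_k$, which contributes only $tm$ additional bits. In this conditional world $R_{i-1}$ is a function of $X$ while $Q$ is a function of the $Y$-side, so they remain independent; and by Lemma~\ref{lemma:entropy_loss_1} the average conditional min-entropy of $Q$ is at least $k_q - O(i(t+1)m) \ge k + \log(1/\epsilon)$, using $k_q \ge k + u(t+1)m + 2\log(1/\epsilon)$. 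The strong-extractor property of $\Ext_q$ applied to $(Q, R_{i-1})$ yields that $S_i = \Ext_q(Q, R_{i-1})$ is $O(\epsilon_{i-1} + \epsilon)$-close to $U_m$ even conditioned on $R_{i-1}$, $Q$, and the rest of $\mathcal{C}'$. Second, I propagate from $S_i$ to $R_i$: I further add $\{S_i^{(k)}\}_k$ (another $tm$ bits) to the conditioning, after which $S_i$ remains close to uniform; since $R_i = \Ext_w(X, S_i)$ is a function of $X$ under the current conditioning, and the total $X$-side info conditioned on so far reduces the average min-entropy of $X$ by at most $O((t+1) i m)$, $X$ still has conditional min-entropy at least $k + \log(1/\epsilon)$. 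A second application of the strong-extractor property, now for $\Ext_w$, gives that $R_i$ is close to uniform, establishing the strengthened statement at $i$ with an additional additive error of $O(\epsilon)$.

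The main obstacle is that the tampered variables $\{R_{j}^{(k)}\}$ can be arbitrarily correlated with $R_j$ --- in the extreme case where a tampering function is the identity, they are equal --- so naively conditioning on them would immediately destroy the near-uniformity one has inductively established. The resolution is the ordering above: the tampered variables are exposed only after the strong-extractor property has been used to propagate uniformity to the next variable, so that the $(t+1) m$ bits consumed per round from the $X$- and $Q$-entropy budgets never need to pay for the entropy of the very random variable whose uniformity is being argued. Summing the $O(\epsilon)$ error accrued across the $u$ rounds together with the base-case $2^\la \epsilon$ yields the claimed bound $\epsilon_i = O(u \epsilon + 2^\la \epsilon)$.
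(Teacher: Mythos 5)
Your proposal is correct and follows essentially the same route as the paper's proof: a strengthened induction that adds $S_{[1,i]}$ and $\{S_{[1,i]}^{(k)}\}$ to the conditioning, a base case via Lemma~\ref{lemma:almost_full}, and an inductive step that alternates the strong-extractor properties of $\Ext_q$ and $\Ext_w$ while carefully deferring the fixing of the tampered variables $\{R_j^{(k)}\}$ and $\{S_j^{(k)}\}$ until after uniformity of the corresponding untampered variable has been established, using the entropy budget $u(t+1)m$ plus $2\log(1/\epsilon)$ slack to keep the conditional min-entropy of $X$ and $Q$ above $k+\log(1/\epsilon)$ throughout. The only cosmetic difference is that the paper phrases the induction as a two-part claim (one display for $R_i$, one for $S_{i+1}$) with explicit bookkeeping of which variables are deterministic functions of which side, whereas you fold the two sub-steps into a single inductive advance; the error accounting and the key ordering idea are identical.
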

\begin{proof} We in fact prove the following stronger claim.
\begin{claim} For each $ i \in [u]$ the following hold:
\begin{align*}
R_{i}, R_{[1,i-1]}, R_{[1,i-1]}^{(1)}, \ldots, R_{[1,i-1]}^{(t)}, S_{[1,i]}, S_{[1,i]}^{(1)}, \ldots, S_{[1,i]}^{(t)}, Q, Q^{(1)}, \ldots, Q^{(t)}  \\  \approx_{\epsilon_i}   U_m, R_{[1,i-1]}, R_{[1,i-1]}^{(1)}, \ldots, R_{[1,i-1]}^{(t)}, S_{[1,i]}, S_{[1,i]}^{(1)}, \ldots, S_{[1,i]}^{(t)}, Q, Q^{(1)}, \ldots, Q^{(t)} 
\end{align*} and 
\begin{align*}
S_{i+1}, S_{[1,i]}, S_{[1,i]}^{(1)}, \ldots, S_{[1,i]}^{(t)}, R_{[1,i]}, R_{[1,i]}^{(1)}, \ldots, R_{[1,i]}^{(t)}, X, X^{(1)}, \ldots, X^{(t)}  \\  \approx_{\epsilon_i+2\epsilon}   U_m, S_{[1,i]}, S_{[1,i]}^{(1)}, \ldots, S_{[1,i]}^{(t)}, R_{[1,i]}, R_{[1,i]}^{(1)}, \ldots, R_{[1,i]}^{(t)}, X, X^{(1)}, \ldots, X^{(t)} 
\end{align*}
where $\epsilon_i =  4 (i-1) \epsilon+ 2^{\la}\epsilon$.
Further, conditioned on $R_{[1,i-1]}, R_{[1,i-1]}^{(1)}, \ldots, R_{[1,i-1]}^{(t)}, S_{[1,i]}, S_{[1,i]}^{(1)}, \ldots, S_{[1,i]}^{(t)}$, (a)  $(X,X^{(1)},\ldots,X^{(t)})$ is independent from $(Y,Y^{(1)},\ldots,Y^{(t)})$, (b) $X,Q$ each have  average conditional min-entropy at least $(u-i)(t+1)m+k+2\log\l(\frac{1}{\epsilon}\r)$ and (c) $R_i,R_i^{(1)},\ldots,R_{i}^{(t)}$ are deterministic functions of $(X,X^{(1)},\ldots,X^{(t)})$. 
\end{claim}
\begin{proof} We prove this claim by induction on $i$. 

Let $i=1$. Since $R_1 = \Ext_w(X,S_1)$, and $\Ext_w$ is a strong-seeded extractor, it follows by Lemma $\ref{lemma:almost_full}$ that $ \Ext_w(X,S_1),S_1 \approx_{\epsilon_1} U_m,S_1$, where $\epsilon_1 = 2^{\la}\epsilon$. Thus we can fix $S_1$, and $R_1$ is still $\epsilon_1$-close to uniform on average. We note that $R_1$ is a deterministic function of $X$. Since the random variables $S_1^{(1)},\ldots,S_1^{(t)},Q,Q^{(1)},\ldots,Q^{(t)}$ are deterministic functions of $Y,Y^{(1)},\ldots,Y^{(t)}$ and thus uncorrelated with $X$, we have $$ R_1,S_1,S_1^{(1)},\ldots,S_1^{(t)},Q,Q^{(1)},\ldots,Q^{(t)} \approx_{\epsilon_1} U_m,S_1,S_1^{(1)},\ldots,S_1^{(t)},Q,Q^{(1)},\ldots,Q^{(t)}. $$
We fix the random variables $S_1,S_1^{(1)},\ldots,S_1^{(t)}$. By Lemma \ref{lemma:entropy_loss}, the source $Q$ has average conditional min-entropy at least $k_q-m(t+1) = k+ (u-1)m(t+1) + 2\log\l(\frac{1}{\epsilon}\r)$ after this fixing. Using Lemma $\ref{lemma:cond_ext}$ it follows that $\Ext_q$ is a $(k+\log\l(\frac{1}{\epsilon}\r),2\epsilon)$ strong average case extractor. We also note that $R_1,R_1^{(1)},\ldots,R_1^{(t)}$ are now deterministic functions of $X,X^{(1)},\ldots,X^{(t)}$. Thus recalling that $S_2 = \Ext_{q}(Q,R_1)$, we have $S_2,R_1 \approx_{(2\epsilon+\epsilon_1)} U_m,R_1$, since $R_1$ is $\epsilon_1$-close to uniform and using the fact that by Lemma $\ref{lemma:cond_ext}$ $\Ext_w$ is a $(k+\log\l(\frac{1}{\epsilon}\r),2\epsilon)$ strong average case extractor. Thus on fixing $R_1$,  $S_2$ is $(2\epsilon+\epsilon_1)$-close to $U_m$ on average  and is a deterministic function of $Y$.  Since the random variables $R_1^{(1)},\ldots,R_1^{(t)}$  are deterministic functions of $X,X^{(1)},\ldots,X^{(t)}$,  we thus have \begin{align*}S_2,S_1,S_1^{(1)},S_1^{(t)},R_{1},R_{1}^{(1)},\ldots, R_1^{(t)},X,X^{(1)},\ldots,X^{(t)} \\ \approx_{\epsilon_1+2\epsilon} U_m,S_1,S_1^{(1)},S_1^{(t)},R_{1},R_{1}^{(1)},\ldots, R_1^{(t)},X,X^{(1)},\ldots,X^{(t)}
\end{align*}

Further, it still holds that $(X,X^{(1)},\ldots,X^{(t)})$ is independent from $(Y,Y^{(1)},\ldots,Y^{(t)})$. This proves the base case of our induction.

Now suppose that the claim is true for $i$ and we will prove it for $i+1$. Fix the random variables $R_{[1,i-1]}, R_{[1,i-1]}^{(1)}, \ldots, R_{[1,i-1]}^{(t)}, S_{[1,i]}, S_{[1,i]}^{(1)}, \ldots, S_{[1,i]}^{(t)}$. By induction hypothesis, it follows that $X,Q$ each have average conditional min-entropy at least $(u-i)m(t+1) + k + 2\log\l(\frac{1}{\epsilon}\r)$ after this fixing. We now fix the random variables $R_i,R_i^{(1)},\ldots,R_{i}^{(t)}$ (these random variables are deterministic functions of $X,X^{(1)},\ldots,X^{(t)}$ by induction hypothesis).  Thus by Lemma $\ref{lemma:entropy_loss}$, the source $X$ has conditional min-entropy at least $(u-i)(t+1)m + k + 2 \log\l(\frac{1}{\epsilon} \r) - (t+1)m  = (u-i-1)(t+1)m + k + 2 \log\l(\frac{1}{\epsilon} \r) $ after this fixing.   

Since $S_{i+1}=\Ext_{q}(Q,R_{i})$ is now independent of $X$ and $(\epsilon_i+2\epsilon)$-close to $U_m$ on average (by induction hypothesis), it follows that $\Ext_w(X,S_{i+1}),S_{i+1} \approx_{\epsilon_i+4 \epsilon} U_m,S_{i+1}$. Thus on fixing $S_{i+1}$, $R_{i+1}= \Ext_w(X,S_{i+1})$ is $(\epsilon_i+4 \epsilon)$-close to $U_m$ on average, and is a deterministic function of $X$. We  also fix the random variables $S_{i+1}^{(1)},\ldots,S_{i+1}^{(t)}$. Since we have fixed  the random variables $R_i^{(1)},\ldots,R_{i}^{(t)}$, thus $S_{i+1}^{(1)},\ldots,S_{i+1}^{(t)}$ are deterministic functions of $Y,Y^{(1)},\ldots,Y^{(t)}$. Hence $R_{i+1}$ is still $\epsilon_{i+1}$-close to uniform on average and a deterministic function of $X$ after this fixing. Thus, 
\begin{align*}
R_{i+1}, R_{[1,i]}, R_{[1,i]}^{(1)}, \ldots, R_{[1,i]}^{(t)}, S_{[1,i+1]}, S_{[1,i+1]}^{(1)}, \ldots, S_{[1,i+1]}^{(t)}, Q, Q^{(1)}, \ldots, Q^{(t)}  \\  \approx_{\epsilon_{i+1}}   U_m, R_{[1,i]}, R_{[1,i]}^{(1)}, \ldots, R_{[1,i]}^{(t)}, S_{[1,i+1]}, S_{[1,i+1]}^{(1)}, \ldots, S_{[1,i+1]}^{(t)}, Q, Q^{(1)}, \ldots, Q^{(t)}. 
\end{align*} 

The source $Q$ has conditional min-entropy at least $(u-i)(t+1)m + k + 2 \log\l(\frac{1}{\epsilon} \r) - (t+1)m  = (u-i-1)(t+1)m + k + 2 \log\l(\frac{1}{\epsilon} \r) $. 

Recall that $S_{i+2}=\Ext_q(Q,R_{i+1})$. Since $\Ext_q$ is a $(k+\log\l(\frac{1}{\epsilon}\r),2\epsilon)$ strong average case extractor, it follows that $\Ext_q(Q,R_{i+1}),R_{i+1} \approx_{\epsilon_{i+2}+2 \epsilon} U_m$.  Since  the random variables  $R_{i+1}^{(1)},\ldots,R_{i+1}^{(t)}$ are deterministic functions of $X,X^{(1)},\ldots,X^{(t)}$ (recall that we have fixed $S_{i+1}^{(1)},\ldots,S_{i+1}^{(t)}$), it follows that 
\begin{align*}
S_{i+2}, S_{[1,i+1]}, S_{[1,i+1]}^{(1)}, \ldots, S_{[1,i+1]}^{(t)}, R_{[1,i+1]}, R_{[1,i+1]}^{(1)}, \ldots, R_{[1,i+1]}^{(t)}, X, X^{(1)}, \ldots, X^{(t)}  \\  \approx_{\epsilon_{i+1}+2\epsilon}   U_m, S_{[1,i+1]}, S_{[1,i+1]}^{(1)}, \ldots, S_{[1,i+1]}^{(t)}, R_{[1,i+1]}, R_{[1,i+1]}^{(1)}, \ldots, R_{[1,i+1]}^{(t)}, X, X^{(1)}, \ldots, X^{(t)}. 
\end{align*}
Also, we maintain at each step  that $(X,X^{(1)},\ldots,X^{(t)})$ is independent from $(Y,Y^{(1)},\ldots,Y^{(t)})$. This completes the proof.
\end{proof}
\end{proof}
\begin{remark}We note that if instead of using a strong seeded extractor to generate $R_1$ (recall  $R_1= \Ext_w(X,S_1)$), we used the extractor constructed by Raz \cite{Raz05}, then the error achieved is $O(u\epsilon)$.
\end{remark}
\subsection{Construction of Some Key Components}
In this section,  we construct functions  which are key ingredients in all our explicit extractor constructions. It is based on a new way of using the technique of alternating extraction, and is inspired by  a recent elegant work of Cohen \cite{C15} on constructing local correlation breakers.

We define the following function which is inspired by the ``flip-flop" method introduced by Cohen \cite{C15}.
\RestyleAlgo{boxruled}
\LinesNumbered
\begin{algorithm}[ht]
  \caption{2$\laExt(x,y,q_i,b)$\label{alg}  \vspace{0.1cm}\newline \textbf{Input:} Bit strings $x,y,q_i$ of length $n_w,n_y,n_q$ respectively, and a bit $b$. \newline \textbf{Output:} A bit string of length $n_q$. \newline \textbf{Subroutine:} Let $\Ext_q: \{0,1 \}^{n_q} \times \{ 0,1\}^{m} \rightarrow \{ 0,1\}^{m}$ be a strong seeded  extractor set to extract from min-entropy  $k$ with error $\epsilon$ and seed length $m$. Let $\Ext_w: \{0,1 \}^{n_w} \times \{ 0,1\}^{m} \rightarrow \{ 0,1\}^{m}$ be a strong seeded  extractor set to extract from min-entropy  $ k$ with error $\epsilon$ and seed length $d$.\newline
 Let $\laExt:  \{0,1 \}^{n_w} \times \{ 0,1\}^{n_q+m}\rightarrow \{ 0,1\}^{2m}$  be the look ahead extractors defined in Section $\ref{section:altext}$ for an alternating extraction protocol  with parameters $m,u=2$ (recall $u$ is the number of steps in the protocol,  $m$ is the length of each random variable that is communicated between the players), and using $\Ext_q,\Ext_w$ as the strong seeded extractors. \newline Let $\Ext : \{0,1\}^{n_y} \times \{ 0,1\}^{m} \rightarrow \{ 0,1\}^{n_q}$ be a strong seeded extractor set to extract from min-entropy $k_1$ with error $\epsilon$.
 }
Let $s_{i,1} = \slice(q_i,m)$
 
Let $\laExt(x,(q_i,s_{i,1}))=r_{i,1},r_{i,2}$
  
if $b=0$, let $\overline{q_{i}} = \Ext(y,r_{i,1})$

\hspace{0.72cm}else let $\overline{q_{i}} = \Ext(y,r_{i,2})$
 
endif 
 
Let $\overline{s_{i,1}}= \slice(\overline{q_{i}},m)$.
 
 Let $\laExt(x,(\overline{q_{i}},\overline{s_{i,1}}))= \overline{r_{i,1}},\overline{r_{i,2}}$.
 
if $b=0$, let $q_{i+1} = \Ext(y,\overline{r_{i,2}})$

\hspace{0.72cm}else let $q_{i+1} = \Ext(y,\overline{r_{i,1}})$
 
endif 

Ouput $q_{i+1}$.
\end{algorithm}

We now prove the following lemma. 
\begin{lemma}\label{base:lemma} Let $b,\{b^{(h)} : h \in [j] \}$ be $j+1$ bits such that for all $h \in [j]$, $b \neq b^{(h)}$.  
Let $X$ be a  $(n_{w},k_w)$-source and let $\{ X^{(h)}: h \in [j]\}$ be random variables on $\{ 0,1\}^{n_w}$ that are arbitrarily correlated with $X$. Let $Y,  \{ Y^{(h)} : h \in [j] \}$ be  arbitrarily correlated random variables  that are  independent of  $(X,\{ X^{(h)}: h \in [j]\})$. Suppose that $Y$ is a $(n_y,k_y)$-source, $k_y= n_y - \la$,  each random variable in $\{ Y^{(h)} : h \in [j] \}$ is on $n_y$ bits. Let $Q_{i}$ be some function of $Y$ on $n_q$ bits with min-entropy at least $n_q-\la$, and for each $h \in [j]$, let $Q^{(h)}$ be an an arbitrary function of $Y,\{ Y^{(a)}:a \in [j]\}$ on $n_q$ bits.

Let $2\laExt$ be the function computed by Algorithm $1$. Let $2\laExt(X,Y,Q_i,b)=Q_{i+1}$, and for $h \in [j]$, let $2\laExt(X^{(h)},Y^{(h)},Q_i^{(h)},b^{(h)})=Q_{i+1}^{(h)}$. Suppose $k_y \ge \max\{k , k_1\} + 10\l(jn_q+ jm+\log\l(\frac{1}{\epsilon}\r)\r)$, $k_w \ge  k  + 10\l( jm+\log\l(\frac{1}{\epsilon}\r)\r)$, and $n_q \ge k+  10jm+2 \log(\frac{1}{\epsilon}) + \la $.  

Then  with probability at least $1-\epsilon^{\prime}$, where $\epsilon^{\prime} = O(2^{\la} \epsilon)$, over the fixing of the random variables $Q_i, \{ Q_{i}^{(h)} : h \in [j] \},  R_{i,1},R_{i,2}, \{ R^{(h)}_{i,1}, R^{(h)}_{i,2} : h \in [j]  \}, \overline{Q}_i, \{ \overline{Q}_{i}^{(h)}: h \in [j] \}, \overline{R}_{i,1},\overline{R}_{i,2}, \{ \overline{R}^{(h)}_{i,1},\overline{R}^{(h)}_{i,2}  : h \in [j]  \}, \{ Q_{i+1}^{(h)} : h \in [j] \} $ : (a) $Q_{i+1}$ is $\epsilon^{\prime}$-close to $U_{n_q}$ and is a deterministic function of $Y$ (b) The random variables $(X,\{ X^{(h)}: h \in [j]\})$ and $(Y,\{ Y^{(h)}: h \in [j]\})$  are independent (c) $X$ has min-entropy at least $k_w - 10\l( jm+\log\l(\frac{1}{\epsilon}\r)\r)$ and $Y$ has min-entropy at least $k_y-10\l( jn_q+ jm+\log\l(\frac{1}{\epsilon}\r)\r)$. 
\end{lemma}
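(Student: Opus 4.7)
The plan is a case analysis on the bit $b\in\{0,1\}$, invoking Lemma~\ref{main_altext} on each of the two alternating-extraction phases in Algorithm~1 and the strong-seeded-extractor property of $\Ext$ at the points where a $Y$-side extraction produces a new $\overline{Q}$ or the final output $Q_{i+1}$. Throughout, I will only fix random variables that are (under the current conditioning) deterministic functions of a single side, either $(X,\{X^{(h)}\})$ or $(Y,\{Y^{(h)}\})$, so that cross-independence is preserved all the way through.

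First I would apply Lemma~\ref{main_altext} to $\laExt(X,(Q_i,S_{i,1}))$ with $u=2$ and $t=j$. The hypotheses $k_w\geq k+10(jm+\log(1/\epsilon))$ and $n_q\geq k+10jm+2\log(1/\epsilon)+\lambda$, together with the fact that $S_{i,1}=\slice(Q_i,m)$ is an $(m,m-\lambda)$-source, ensure the entropy requirements there are met. The conclusion (with error $\epsilon_0=O(2^\lambda\epsilon)$) is that $R_{i,1}$ is close to uniform conditioned on $Q_i,\{Q_i^{(h)}\}$, that $R_{i,2}$ is close to uniform conditioned additionally on $R_{i,1},\{R_{i,1}^{(h)}\}$, and that after $S_{i,1},\{S_{i,1}^{(h)}\},S_{i,2},\{S_{i,2}^{(h)}\}$ are fixed, each $R_{i,r}^{(h)}$ is a deterministic function of $X^{(h)}$ (and similarly for the untampered $R_{i,r}$) with cross-independence preserved.

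Now consider the case $b=0$, so every $b^{(h)}=1$; then $\overline{Q}_i=\Ext(Y,R_{i,1})$ while $\overline{Q}_i^{(h)}=\Ext(Y^{(h)},R_{i,2}^{(h)})$. The key point is that I do \emph{not} try to show $\overline{Q}_i$ is independent of $\{\overline{Q}_i^{(h)}\}$ here --- $R_{i,1}$ may well be correlated with $\{R_{i,2}^{(h)}\}$ at this stage. I only need $\overline{Q}_i$ to have enough min-entropy to drive the second alternating extraction, and since $R_{i,1}$ is (on average) close to uniform and independent of $Y$, the strong-extractor property of $\Ext$ gives that $\overline{Q}_i$ is close to uniform on $n_q$ bits, so in particular $\overline{S}_{i,1}=\slice(\overline{Q}_i,m)$ is close to uniform. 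After fixing $Q_i,\{Q_i^{(h)}\}, S_{i,1},\{S_{i,1}^{(h)}\}, R_{i,1},\{R_{i,1}^{(h)}\}, S_{i,2},\{S_{i,2}^{(h)}\}, R_{i,2},\{R_{i,2}^{(h)}\}$ --- each in an order that keeps it one-sided --- cross-independence is preserved, and by Lemma~\ref{lemma:entropy_loss_1} $X$ still has min-entropy $\ge k_w-O(jm)$ and $Y$ has min-entropy $\ge k_y-O(jn_q+jm)$. I then apply Lemma~\ref{main_altext} a second time to $\laExt(X,(\overline{Q}_i,\overline{S}_{i,1}))$: since in case $b=0$ the output is $Q_{i+1}=\Ext(Y,\overline{R}_{i,2})$ and each $Q_{i+1}^{(h)}=\Ext(Y^{(h)},\overline{R}_{i,1}^{(h)})$, the second-round guarantee --- $\overline{R}_{i,2}$ close to uniform conditioned on $\overline{R}_{i,1},\{\overline{R}_{i,1}^{(h)}\}$ --- is exactly what is needed. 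Fixing $\overline{Q}_i,\{\overline{Q}_i^{(h)}\}$, the $\overline{S}$ and $\overline{R}$ variables, and finally $\{Q_{i+1}^{(h)}\}$ (each a function of $Y^{(h)}$ once $\overline{R}_{i,1}^{(h)}$ is fixed), and observing that $\overline{R}_{i,2}$ is close to uniform, a function of $X$, and independent of $Y$, a last invocation of the strong-extractor property of $\Ext$ gives $Q_{i+1}$ close to uniform conditioned on everything fixed so far.

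The case $b=1$ is handled symmetrically by swapping the roles of $R_{i,1}\leftrightarrow R_{i,2}$ and $\overline{R}_{i,1}\leftrightarrow\overline{R}_{i,2}$; now it is the second-round conclusion of Lemma~\ref{main_altext} applied to the \emph{first} alt-extraction (so $R_{i,2}$ is close to uniform conditioned on $\{R_{i,1}^{(h)}\}$) that immediately gives $\overline{Q}_i$ close to uniform conditioned on $\{\overline{Q}_i^{(h)}\}$, and the second alt-extraction can then be dispatched by a direct strong-extractor argument rather than a second application of Lemma~\ref{main_altext}. The main obstacle to executing the plan is the book-keeping: at each fixing step one must verify that the variable being fixed is, under the current conditioning, a function of exactly one side (so as to preserve cross-independence), and that the cumulative min-entropy losses on $X$ and $Y$ --- roughly $(j+1)m$ and $(j+1)(n_q+m)$ respectively, plus $O(\log(1/\epsilon))$ for the good-conditioning events --- still leave enough slack for the next strong-extractor or Lemma~\ref{main_altext} invocation. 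The parameters $k_w,k_y,n_q$ in the hypothesis are calibrated precisely so that this accounting goes through with total error $\epsilon'=O(2^\lambda\epsilon)$.
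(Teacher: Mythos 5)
Your proposal is correct and follows essentially the same route as the paper's proof: case analysis on $b$, Lemma~\ref{main_altext} for the "flipped" round of the relevant alternating-extraction phase, the strong-extractor property of $\Ext$ for the $Y$-side steps, and careful one-sided conditioning so that cross-independence and min-entropy are preserved at every step. Your key observation for the $b=0$ branch --- that $\overline{Q}_i$ need only be close to uniform (not independent of $\{\overline{Q}_i^{(h)}\}$) in order to set up the second alternating-extraction phase --- is exactly what the paper does. The only cosmetic divergence is that for $b=1$ you dispatch the second phase with a direct strong-extractor argument, whereas the paper again cites Lemma~\ref{main_altext}; after the conditioning already performed, the two are equivalent.
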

\begin{proof} 
\textbf{Notation:} For any function $H$, if $V = H(X,Y)$, let $V^{(a)}$  denote the random variable $H(X^{(a)},Y^{(a)})$. 

We split the proof into two cases, depending on $b$.

\noindent
Case $1$: Suppose $b=1$.  By Lemma \ref{main_altext}, it follows that \begin{align*} R_{i,2},\{ R_{i,1}^{(h)}: h \in [j]\},Q_i,\{ Q_{i}^{(h)} : h \in [j] \} \\ \approx_{\epsilon_1}  U_m,\{ R_{i,1}^{(h)}: h \in [j]\},Q_i,\{ Q_{i}^{(h)} : h \in [j] \}\end{align*}, where  $\epsilon_1=c2^{\la}\epsilon$, for some constant $c$. Thus, we can fix $\{ R_{i,1}^{(h)}: h \in [j]\},Q_i,\{ Q_{i}^{(h)} : h \in [j] \}$, and with probability at least $1-O(\epsilon_1)$, $R_{i,2}$ is $O(\epsilon_1)$-close to $U_m$. Note that $R_{i,2}$ is now a deterministic function of $X$. Further, by Lemma $\ref{lemma:entropy_loss_1}$, $Y$ loses min-entropy at most $(j+1)n_q + \log\l(\frac{1}{\epsilon}\r)$ with probability at least $1-\epsilon$ due to this fixing. Since on fixing $Q_i,\{ Q_{i}^{(h)} : h \in [j] \}$, the random variables $\{ R_{i,1}^{(h)}: h \in [j]\}$ are deterministic function of  $X,\{X^{(h)} : h \in [j] \}$, the source $X$  loses min-entropy at most  $j m + \log\l(\frac{1}{\epsilon}\r)$ with probability at least $1-\epsilon$ due to this fixing. We now note that the random variables $\{ \overline{Q}_{i}^{(h)}: h \in [j]\}$ are deterministic functions of $Y,\{ Y^{(h)}: h \in [j]\}$. Thus, we fix $\{ \overline{Q}_{i}^{(h)}: h \in [j] \}$,  and by Lemma $\ref{lemma:entropy_loss_1}$, $Y$ loses min-entropy at most $jn_q+\log\l(\frac{1}{\epsilon}\r)$ with probability at least $1-\epsilon$ due to this fixing. Since $\Ext$ extracts from min-entropy $k_1$, and $k_y$ was chosen large enough, it follows that the random variable $\overline{Q}_{i}$ is $(\epsilon+\epsilon_1)$-close to $U_{n_q}$ with probability at least $1-O(\epsilon_1)$ even after the fixing. Further,  we fix $R_{i,2}$ since $\Ext$ is a strong seeded extractor, and by Lemma $\ref{lemma:entropy_loss_1}$, $X$ loses min-entropy at most $ m + \log\l(\frac{1}{\epsilon}\r)$ with probability at least $1-\epsilon$ due to this fixing. Thus $\overline{Q}_i$ is now a deterministic function of $Y$. We now fix the random variables $\{ R_{i,2}^{(h)}: h \in [j]\}$, noting that they are deterministic functions of $X$ and hence does not affect the distribution of $\overline{Q}_i$. $X$  loses min-entropy at most  $j m + \log\l(\frac{1}{\epsilon}\r)$ with probability at least $1-\epsilon$ due to this fixing.

We now note that the random variables $\{\overline{R}_{i,1}^{(h)},\overline{R}_{i,2}^{(h)}: h \in [j]\}$ are deterministic function of $X,\{X^{((j))} : j \in [h]\}$ since we have fixed $\{\overline{Q}_i^{(h)} : h \in [j]\}$. Thus, we can fix  $\{\overline{R}_{i,1}^{(h)},\overline{R}_{i,2}^{(h)}: h \in [j]\}$ and $X$ loses min-entropy at most $2jm+\log\l( \frac{1}{\epsilon}\r)$ with probability at least $1-\epsilon$. Thus it follows by Lemma $\ref{main_altext}$ that $|\overline{R}_{i,1},\overline{Q}_{i} - U_{m},\overline{Q}_i| < \epsilon+O(\epsilon_1) $. We fix  $\overline{Q}_i$ and $Y$ loses min-entropy at most $n_q+\log\l(\frac{1}{\epsilon}\r)$ using Lemma $\ref{lemma:entropy_loss_1}$. Finally, we note that $\{ Q_{i+1}^{(h)} : h \in [j] \} $ is now a deterministic function of $Y,\{ Y^{(h)}: h \in [j]\}$. Thus, we can fix $\{ Q_{i+1}^{(h)} : h \in [j] \} $ variables and $Y$ loses min-entropy at most $jn_q+\log\l(\frac{1}{\epsilon}\r)$ with probability at least $1-\epsilon$ due to this fixing. Further, $\overline{R}_{i,1}$ is now a deterministic function of $X$. It  follows that $Q_{i+1}$ is $O(\epsilon_1+\epsilon)$-close to $U_{n_q}$ since $k_y$ is chosen large enough. We further fix $\overline{R}_{i,1}$ noting that $\Ext$ is a strong extractor and  $X$ loses min-entropy at most $ m + \log\l(\frac{1}{\epsilon}\r)$ with probability at least $1-\epsilon$ due to this fixing. 

\noindent 
Case $2$: Now suppose $b=0$. We fix the random variables $Q_i,\{ Q_{i}^{(h)} : h \in [j] \}$. Conditioned on this fixing, it follows by Lemma $\ref{main_altext}$ that $|R_{i,1}-U_m|<\epsilon_1$, $\epsilon_1 = O(2^{\la}\epsilon)$, with probability at least $1-\epsilon$. Since $\Ext$ is a strong seeded extractor (and $k_y$ is large enough) and $R_{i,1}$ is a deterministic function of $X$, it follows that  $|\overline{Q}_i,R_{1,i} - U_{n_q},R_{i,1}|<\epsilon + \epsilon_1$ with probability at least  $\epsilon$.  We fix $R_{i,1}$, and observe that $\overline{Q}_{i}$ is now a deterministic function of $Y$. We can now fix $\{ R_{i,1}^{(h)},R_{i,2}^{(h)}: h \in [j]\}$ since $\{ R_{i,2}^{(h)}: h \in [j]\}$  is a deterministic function of $X,\{ X^{(h)}: h \in [j]\}$, and hence does not affect the distribution of $\overline{Q}_{i}$. As a result of these fixings, it is clear that $(X,\{ X^{(h)}: h \in [j]\})$ is independent of $(Y_i,\{ Y^{(h)}: h \in [j]\})$. Further $X$ loses min-entropy of at most $2(j+1)m+ \log\l(\frac{1}{\epsilon}\r)$ with probability at least $1-\epsilon$, and  $Y$ loses min-entropy of at most $2(j+1)n_q+ (j+1)m + 3\log\l(\frac{1}{\epsilon}\r)$ with probability at least $1-3\epsilon$. 
Note that now $\overline{Q}_i,\{ Q_i^{(h)}: h \in [j]\}$ are deterministic functions of $Y,\{ Y^{(h)}: h \in [j]\}$, and $\overline{Q}_{i}$ is $O(\epsilon_1)$-close to $U_{n_q}$. 
By Lemma \ref{main_altext}, it follows that $$ \overline{R}_{i,2},\{ \overline{R}_{i,1}^{(h)}: h \in [j]\},\overline{Q}_i,\{ \overline{Q}_i^{(h)}: h \in [j]\} \approx_{\epsilon_2} U_m,\{ \overline{R}_{i,1}^{(h)}: h \in [j]\},\overline{Q}_i,\{ \overline{Q}_i^{(h)}: h \in [j]\}$$ where $\epsilon_2=c(\epsilon_1+ \epsilon+\epsilon)$, for some constant $c$. Thus, we can fix $\{ \overline{R}_{i,1}^{(h)}: h \in [j]\}, \overline{Q}_{i},\{ \overline{Q}_{i}^{(h)}: h \in [j]\}$ and with probability at least $1-O(\epsilon_2)$, $\overline{R}_{i,2}$ is $O(\epsilon_2)$-close to $U_m$. Note that $\overline{R}_{i,2}$ is now a deterministic function of $X$. Further, by Lemma $\ref{lemma:entropy_loss_1}$, $Y$ loses min-entropy at most $(j+1)n_q +  \log\l(\frac{1}{\epsilon}\r)$ with probability at least $1-\epsilon$ due to this fixing. Since on fixing $\overline{Q}_i,\{\overline{Q}_i^{(h)}: h \in [j] \} $, the random variables $\{ \overline{R}_{i,1}^{(h)}: h \in [j]\}$ are deterministic functions of  $X, \{ X^{(h)}: h \in [j]\}$, the source $X$  loses min-entropy at most  $j m + \log\l(\frac{1}{\epsilon}\r)$ with probability at least $1-\epsilon$ due to this fixing. We now note that the random variables $\{ Q_{i+1}^{(h)} : h \in [j] \} $ are deterministic functions of $Y,\{Y^{(h)}:h \in [j]\} $. Thus, we fix $\{ Q_{i+1}^{(h)} : h \in [j] \}$  and by Lemma $\ref{lemma:entropy_loss_1}$, $Y$ loses min-entropy at most $(j+1)n_q+\log\l(\frac{1}{\epsilon}\r)$ with probability at least $1-\epsilon$ due to this fixing. Since $\Ext$ extracts from min-entropy $k_1$,  (and $k_y$ is large enough) it follows that random variable $Q_{i+1}$ is $O(\epsilon_2)$-close to $U_{n_q}$ even after the fixing. Further,  we fix $\overline{R}_{i,2}$ since $\Ext$ is a strong seeded extractor, and by Lemma $\ref{lemma:entropy_loss_1}$, $X$ loses min-entropy $ m + \log\l(\frac{1}{\epsilon}\r)$ with probability at least $1-\epsilon$ due to this fixing. Further   $Q_{i+1}$ is now a deterministic function of $Y$. Thus we can fix the random variables $\{ \overline{R}_{i,2}^{(h)}: h\in [j]\}$ since they are deterministic function sod $X$ and does not affect the distribution of $Q_{i+1}$.  $X$ loses min-entropy at most  $ m + \log\l(\frac{1}{\epsilon}\r)$ with probability at least $1-\epsilon$ due to this fixing. 
 This completes the proof.
\end{proof}
We now construct a function that is a crucial ingredient in our non-malleable extractor constructions. (Recall that for any string $z$, we use $z_{\{h\}}$ to denote the symbol in the $h$'th co-ordinate of $z$.)
\RestyleAlgo{boxruled}
\LinesNumbered
\begin{algorithm}[ht]
  \caption{$\nmExt_1(x,y,z)$\label{alg}  \vspace{0.1cm}\newline \textbf{Input:} Bit strings $x,y,z$ of length $n_w,n_y,\ell$ respectively. \newline \textbf{Output:} A bit string of length $n_q$. }
Let $q_1 = \slice(y,n_q)$ 

\For{$h=1$ to $\ell$} { 

$q_{h+1} = 2\laExt(x,y,q_h,z_{\{h\}})$

}
 
Ouput $q_{\ell+1}$.
\end{algorithm}

\begin{lemma}\label{main_lemma_1} Let $z,z^{(1)},\ldots,z^{(t)}$ each be $\ell$ bit strings such that for all $i \in [t]$, $z \neq z^{(i)}$. Let $X$ be a  $(n_{w},k_w)$-source and let $X^{(1)},\ldots,X^{(t)}$ be random variables on $\{ 0,1\}^{n_w}$ that are arbitrarily correlated with $X$. Let $Y, Y^{(1)}, , \ldots, Y^{(t)}$ be  random variables  on $n_y$ bits  that are  independent of $(X,X^{(1)},X^{(2)},\ldots,X^{(t)})$. Suppose that $Y$ is a $(n_y,k_y)$-source, $k_y=n_y-\la$.

Let $\nmExt_1$ be the function computed by Algorithm $2$. Let $\nmExt_1(X,Y,z)=Q_{\ell+1}$, and for $h \in [t]$, let $\nmExt_1(X^{(h)},Y^{(h)},z^{(h)})=Q_{\ell+1}^{(h)}$. Suppose $k_y\ge \max\{k , k_1\} + 20\ell\l(tn_q+ tm+\log\l(\frac{1}{\epsilon}\r)\r) $, $k_w \ge  k  + 20\ell\l( tm+\log\l(\frac{1}{\epsilon}\r)\r)$ and $n_q \ge k+  10tm+2 \log(\frac{1}{\epsilon}) + \la $.
Then, we have  $$Q_{\ell+1},Q_{\ell+1}^{(1)},\ldots,Q^{(t)}_{\ell+1} \approx_{\epsilon^{\prime}} U_{n_q}, Q_{\ell+1}^{(1)},\ldots,Q^{(t)}_{\ell+1} $$ where $\epsilon_{\ell}^{\prime } = O((2^{\la}+\ell)\epsilon)$.
\end{lemma}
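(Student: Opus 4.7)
The plan is to induct on $h \in \{0, 1, \ldots, \ell\}$. For each $h \in [\ell]$, define $S_h = \{i \in [t] : z_{\{h\}} \neq z^{(i)}_{\{h\}}\}$ and let $T_h = \bigcup_{a = 1}^{h} S_a$. Since $z \neq z^{(i)}$ for every $i \in [t]$, every $i$ lies in some $S_a$, hence $T_\ell = [t]$. The inductive invariant at step $h$ is that, with probability at least $1 - O(h\epsilon)$ over a specific sequence of fixings of the intermediate variables produced by the first $h$ invocations of $2\laExt$: (a) $Q_{h+1}$ is $O((2^\la + h)\epsilon)$-close to uniform conditioned on $\{Q_{h+1}^{(i)} : i \in T_h\}$; (b) $Q_{h+1}$ is a deterministic function of $Y$, while each $Q_{h+1}^{(i)}$ with $i \in [t] \setminus T_h$ is a deterministic function of $(Y, Y^{(1)}, \ldots, Y^{(t)})$; (c) $(X, X^{(1)}, \ldots, X^{(t)})$ remain independent of $(Y, Y^{(1)}, \ldots, Y^{(t)})$; (d) $X$ retains min-entropy at least $k_w - O(h(tm + \log(1/\epsilon)))$ and $Y$ at least $k_y - O(h(tn_q + tm + \log(1/\epsilon)))$. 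The base case $h = 0$ is immediate, since $T_0 = \emptyset$ and $Q_1 = \slice(Y, n_q)$ inherits min-entropy at least $n_q - \la$ from $Y$.

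For the inductive step, assume the invariant at step $h$. First fix $\{Q_{h+1}^{(i)} : i \in T_h\}$; by (b) these are deterministic functions of the $Y$-side variables, so this costs at most $O(t n_q)$ entropy from $Y$, and by (a) $Q_{h+1}$ remains $O((2^\la + h)\epsilon)$-close to uniform. By definition of $S_{h+1}$, every tampered bit indexed by $i \in S_{h+1}$ satisfies $z^{(i)}_{\{h+1\}} \neq z_{\{h+1\}}$, so I may invoke Lemma \ref{base:lemma} with $j = |S_{h+1}| \leq t$ on the tampered versions indexed by $S_{h+1}$. The slack in the hypothesized entropy bounds on $k_w$ and $k_y$ is chosen precisely to absorb this invocation: the conditions $k_w \geq k + 10(jm + \log(1/\epsilon))$ and $k_y \geq \max\{k, k_1\} + 10(jn_q + jm + \log(1/\epsilon))$ required by the base lemma are comfortably met after subtracting the cumulative losses from (d). The base lemma then yields that $Q_{h+2}$ is $O(2^\la \epsilon)$-close to uniform conditioned on $\{Q_{h+2}^{(i)} : i \in S_{h+1}\}$, is a deterministic function of $Y$, and still enjoys (c) and (d) with one more round of loss charged.

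It remains to upgrade uniformity from conditioning on $\{Q_{h+2}^{(i)} : i \in S_{h+1}\}$ to conditioning on $\{Q_{h+2}^{(i)} : i \in T_{h+1}\}$. For each $i \in T_h \setminus S_{h+1}$ we have $z^{(i)}_{\{h+1\}} = z_{\{h+1\}}$, and since $Q_{h+1}^{(i)}$ has already been fixed, $Q_{h+2}^{(i)} = 2\laExt(X^{(i)}, Y^{(i)}, Q_{h+1}^{(i)}, z_{\{h+1\}})$ is a deterministic function of $(X^{(i)}, Y^{(i)})$. I then fix the corresponding look-ahead outputs $\{R^{(i)}_{h+1,1}, R^{(i)}_{h+1,2}, \overline{R}^{(i)}_{h+1,1}, \overline{R}^{(i)}_{h+1,2} : i \in T_h \setminus S_{h+1}\}$, paying $O(tm)$ entropy from $X$, and the resulting $\{Q_{h+2}^{(i)} : i \in T_h \setminus S_{h+1}\}$, paying $O(tn_q)$ from $Y$. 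Because $Q_{h+2}$ is produced by a strong seeded extractor applied to $Y$ with a seed that is a function of $X$, and because all these additional fixings are functions of tampered $X$- and $Y$-side variables that leave $Y$ with more than enough min-entropy by (d), $Q_{h+2}$ remains $O((2^\la + h+1)\epsilon)$-close to uniform conditioned on the full set $\{Q_{h+2}^{(i)} : i \in T_{h+1}\}$. Iterating through all $\ell$ steps and using $T_\ell = [t]$ gives the claimed bound $\epsilon' = O((2^\la + \ell)\epsilon)$.

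The main obstacle is the third paragraph: the base lemma speaks only about tampered versions whose bit differs from $b$, but at step $h+1$ one must simultaneously control the ``passive'' tampered versions in $T_h \setminus S_{h+1}$, whose executions of $2\laExt$ are structurally identical to the primary chain's and could in principle leak information about $Q_{h+2}$. The delicate point is to order the fixings so that $X$ and $Y$ retain enough conditional min-entropy for the strong extractor producing $Q_{h+2}$ to remain pseudorandom, and to verify that the cumulative $O(\ell(tn_q + tm + \log(1/\epsilon)))$ entropy budget in the hypothesis indeed accommodates both the base-lemma calls for $S_{h+1}$ and the auxiliary fixings for $T_h \setminus S_{h+1}$ across all $\ell$ rounds.
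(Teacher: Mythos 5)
Your proposal follows essentially the same route as the paper's Claim~\ref{main_claim_1}: the same index sets ($T_h=\ind_{[h]}$, $S_h=\ind_h$), the same four-part inductive invariant (conditional uniformity of $Q_{h+1}$, determinism as a function of the $Y$-side, independence of the two sides, entropy budget), and the same strategy of iterating Lemma~\ref{base:lemma} one bit of $z$ at a time. The obstacle you flag in your third paragraph---that Lemma~\ref{base:lemma} speaks only about the tampered indices whose $(h{+}1)$-st bit actually differs, while the already-locked indices in $T_h\setminus S_{h+1}$ and the not-yet-distinguished ones must also be carried---is handled in the paper by making the invariant condition explicitly on \emph{all} intermediate variables $Q_i^{(j)}, R_{i,\ast}^{(j)}, \overline{Q}_i^{(j)}, \overline{R}_{i,\ast}^{(j)}$ for every $j\in[t]$ and $i\le h$, not just those indexed by $T_h$; once everything from past rounds is fixed, the $Q_{h+1}^{(j)}$ for $j\in\overline{\ind}_{[h]}$ are genuinely functions of the $Y$-side and the single call to Lemma~\ref{base:lemma} absorbs the rest. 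Two small points worth tightening in your version: your invariant~(a) should read ``close to a source of min-entropy at least $n_q-\la$'' in the base case (where $\la$ may be nonzero because $Q_1$ is a slice of a weak $Y$) versus ``close to uniform'' in subsequent steps, exactly as the paper notes at the end of its proof when arguing that $\la$ can be taken to be $0$ after the first round; and the fixings of look-ahead outputs should also cover the indices in $[t]\setminus T_{h+1}$, else your invariant~(b) for $Q_{h+2}^{(i)}$ at those indices does not hold, since $Q_{h+2}^{(i)}$ still depends on $X^{(i)}$ through $\overline{R}^{(i)}_{h+1,\ast}$.
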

\begin{proof}\textbf{Notation:} For any function $H$, if $V = H(X,Y)$, let $V^{(a)}$  denote the random variable $H(X^{(a)},Y^{(a)})$.

For  $h \in  [\ell]$, define the sets $$\mathcal{\ind}_h = \{ i \in  [t]: z_{\{ h\}} \neq z^{(i)}_{\{ h\}}\} , \hspace{1cm}\overline{\ind}_{h} = [t]\setminus \ind_{h},$$ $$ \ind_{[h]} = \cup_{i=1}^{h}\ind_h, \hspace{1cm}\overline{\ind}_{[h]} = [t]\setminus \ind_{[h]}.$$ 

We record a simple claim.
\begin{claim}\label{all_included}For each $i \in [t]$, there exists $h \in [\ell]$ such that $i \in \ind_h$.
\end{claim}
\begin{proof} Recall that we have fixed $Z,Z^{(1)},\ldots,Z^{(t)}$ such that $Z \neq Z^{(i)}$ for any $ i \in [t]$. Thus it follows that for each $i \in [t]$, there exists some $h \in [\ell]$ such that $Z_{\{h\}} \neq Z^{(i)}_{\{h\}}$, and hence $i \in \mathcal{\ind}_h$.
\end{proof}

We now prove our main claim, which combined with Lemma $\ref{base:lemma}$ and a simple inductive argument proves Lemma $\ref{main_lemma_1}$.
\begin{claim}\label{main_claim_1}For any $h \in \{0,1,\ldots,\ell\}$, suppose the following holds:

With probability at least $1-\epsilon_h$ over the fixing of the random variables  $\{ Q_i : i \in [h]\}, \{ Q_{i}^{(j)} : i \in [h], j \in [t] \},  \{ R_{i,1}, R_{i,2} : i \in [h] \}, \{ R^{(j)}_{i,1}, R^{(j)}_{i,2} : i \in [h], j \in [t]   \}, \{ \overline{Q}_i : i \in [h] \}, \{ \overline{Q}_{i}^{(j)}: i \in [h], j \in [t] \}, \{ \overline{R}_{i,1},\overline{R}_{i,2} : i \in [h] \}, \{ \overline{R}^{(j)}_{i,1},\overline{R}^{(j)}_{i,2} : i \in [h], j \in [t]  \}, \{ Q_{i+1}^{(j)} : j \in \ind_{[h]} \}$: (a) $Q_{h+1}$ is $\epsilon_h$-close to a source with min-entropy at least $n_q -\la$ and is a deterministic function of $Y$ (b) $\{ Q_{h+1}^{(j)}: j \in \overline{\ind}_{[h]}\}$ is a deterministic function of $Y,\{Y^{(j)} : j \in [t] \}$ (c) The random variables $(X,\{ X^{(j)}: j \in [t]\})$ and $(Y,\{ Y^{(j)}: j \in [t]\})$  are independent (d) $X$ has min-entropy at least $k_w - 10 h \l( tm+\log\l(\frac{1}{\epsilon}\r)\r)>  k  + 10\l( tm+\log\l(\frac{1}{\epsilon}\r)\r)$ and $Y$ has min-entropy at least $k_y-10h\l(tn_q + tm+\log\l(\frac{1}{\epsilon}\r)\r)> \max\{k , k_1\} + 10\l(tn_q+ tm+\log\l(\frac{1}{\epsilon}\r)\r)$.

Then, the following holds: 

 Let $\epsilon_{h+1} = \epsilon_{h}+c2^{\la}\epsilon$ for some constant $c$. With probability at least $1-\epsilon_{h+1}$ over the fixing of the random variables $\{ Q_i : i \in [h+1]\}, \{ Q_{i}^{(j)} : i \in [h+1], j \in [t] \},   \{ R_{i,1}, R_{i,2} : i \in [h+1] \}, \{ R^{(j)}_{i,1}, R^{(j)}_{i,2} : i \in [h+1], j \in [t]   \}, \{ \overline{Q}_i : i \in [h+1] \}, \{ \overline{Q}_{i}^{(j)}: i \in [h+1], j \in [t] \}, \{ \overline{R}_{i,1},\overline{R}_{i,2} : i \in [h] \}, \{ \overline{R}^{(j)}_{i,1},\overline{R}^{(j)}_{i,2} : i \in [h+1], j \in [t]  \}, \{ Q_{i+1}^{(j)} : j \in \ind_{[h+1]} \}$: (a) $Q_{h+2}$ is $\epsilon_{h+1}$-close to $U_{n_q}$ and is a deterministic function of $Y$ (b) $\{ Q_{h+2}^{(j)}: j \in \overline{\ind}_{[h+1]}\}$ is a deterministic function of $Y,\{Y^{(j)} : j \in [t] \}$ (c) The random variables $(X,\{ X^{(j)}: j \in [t]\})$ and $(Y,\{ Y^{(j)}: j \in [t]\})$  are independent (d) $X$ has min-entropy at least $k_w - 10 (h+1) \l( tm+\log\l(\frac{1}{\epsilon}\r)\r)$ and $Y$ has min-entropy at least $k_y-10(h+1)\l( tm+\log\l(\frac{1}{\epsilon}\r)\r)  $. 
\end{claim}
\begin{proof} We fix the random variables $\{ Q_i : i \in [h]\}, \{ Q_{i}^{(j)} : i \in [h], j \in [t] \},  \{ R_{i,1}, R_{i,2} : i \in [h] \}, \{ R^{(j)}_{i,1}, R^{(j)}_{i,2} : i \in [h], j \in [t]   \}, \{ \overline{Q}_i : i \in [h] \}, \{ \overline{Q}_{i}^{(j)}: i \in [h], j \in [t] \}, \{ \overline{R}_{i,1},\overline{R}_{i,2} : i \in [h] \}, \{ \overline{R}^{(j)}_{i,1},\overline{R}^{(j)}_{i,2} : i \in [h], j \in [t]  \}, \{ Q_{i+1}^{(j)} : j \in \ind_{[h]} \}$ such that (a), (b), (c), (d) holds (this happens with probability at least $1-\epsilon_{h}$. We also fix the random variables $\{ R_{h+1,\psi_1(z^{(j)}_{h+1})}^{(j)}: j \in \ind_{[h]}\}$, noting that they are deterministic functions of $X$. Thus $X$ has min-entropy at least $k_w - 10 h \l( jm+\log\l(\frac{1}{\epsilon}\r)\r) - tm- \log\l(\frac{1}{\epsilon}\r)$ with probabilitiy at least $1-\epsilon$. Further, $Q$ has min-entropy at least $k_y-10h\l(tn_q + tm+\log\l(\frac{1}{\epsilon}\r)\r)$.
The claim now follows directly from Lemma $\ref{base:lemma}$.
\end{proof}
To complete the proof of Lemma $\ref{main_lemma_1}$, we now note that the hypothesis of Claim $\ref{main_claim_1}$ is indeed satisfied when $h=0$. Thus, by  $\ell$ applications of Claim $\ref{main_claim_1}$, it follows that the $Q_{\ell+1}$ is $\epsilon^{\prime}_{\ell}$-close to $U_{n_q}$, where $\epsilon_{\ell}^{\prime} = O(2^{\la}\epsilon+\ell \epsilon )$. This follows since for all applications of Claim $\ref{main_claim_1}$ except the first time, $Q_{h}$ is $\epsilon_h$-close to uniform, and hence the parameter $\la=0$. This concludes the proof of Lemma $\ref{main_lemma_1}$. 
\end{proof}
\subsection{An Explict Seedless $(2,t)$-Non-Malleable Extractor Construction}
We are now ready to present our construction. We first set up the various ingredients developed so far  with appropriate parameters.

\textbf{Subroutines and Parameters}
\begin{enumerate}
\item Let $\gamma$ be  a small enough constant and $C$ a large one. Let $t=n^{\gamma/C}$.
\item Let $n_1 = n^{\beta_1}$, $\beta_1 = 10 \gamma$. Let $\IP:\{ 0,1\}^{n_1} \times \{ 0,1\}^{n_1} \rightarrow \{ 0,1\}^{n_2}$, $n_2 = \frac{n_1}{10}$, be the strong two-source extractor from Theorem $\ref{strong_ip}$.
\item Let $\mathcal{C}$ be an explicit $[\frac{n}{\alpha},n,\frac{1}{10}]$-binary linear error correcting code  with encoder $E: \{ 0,1\}^{n} \rightarrow \{0,1 \}^{\frac{n}{\alpha}}$. Such explicit codes are known, for example from the work of Alon et al. \cite{ABNNR92}.
\item Let $\samp:\{ 0,1\}^{n_2} \rightarrow \l[\frac{n}{\alpha}\r]$ be the sampler from Corollary $\ref{cor:samp}$ with parameters $\delta_{\samp} = \frac{1}{10}$ and $\nu_{\samp} = \beta_1$. Let the number of samples  $t_{\samp} = n^{\beta_2}$. Thus, $\beta_2 \le \beta_1$.
\item Let $\ell = 2(n^{\beta_1} + n^{\beta_2})$. Thus $\ell \le n^{11\gamma}$.
\item  We set up the parameters for the components used by $2\laExt$ (computed by Algorithm $1$) as follows.
\begin{enumerate}
\item Let $n_3 = n^{\beta_3}, n_4= n^{\beta_4}$, with $\beta_3=100 \gamma$ and $\beta_4=50 \gamma$.

Let $\Ext_q: \{0,1 \}^{n_3} \times \{ 0,1\}^{n_4} \rightarrow \{ 0,1\}^{n_4}$ be the strong seeded linear extractor from Theorem $\ref{thm:lsext}$ set to extract from min-entropy $k_q=\frac{n_3}{4}$ with error $\epsilon = 2^{-\Omega(n^{\gamma_q})}$, $\gamma_q = \frac{\beta_4}{2}$. Thus, by Theorem \ref{thm:lsext}, we have that the seed length $d_q = O\l(\frac{\log^2(n_3/\epsilon)}{\log(k_q/n_4)}\r) =  O(n^{2\gamma_q}) = n_4$.

 Let $\Ext_w: \{0,1 \}^{n} \times \{ 0,1\}^{n_4} \rightarrow \{ 0,1\}^{n_4}$ be the strong linear seeded  extractor  from Theorem $\ref{thm:lsext}$ set to extract from min-entropy  $k_w=\frac{n}{2}$ with error $\epsilon = 2^{-\Omega(n^{\gamma_q})}$.
  
 \item Let $\laExt:  \{0,1 \}^{n} \times \{ 0,1\}^{n_3}\rightarrow \{ 0,1\}^{2n_4}$  be the look ahead extractor used by $2\laExt$ (recall that the  parameters in the alternating extraction protocol are set as $m=n_4,u=2$ where $u$ is the number of steps in the protocol, $m$ is the length of each random variable that is communicated between the players, and $\Ext_q,\Ext_w$ are the strong seeded extractors used in the protocol.).
 
 \item Let $\Ext: \{ 0,1\}^{n} \times \{0,1\}^{n_4} \rightarrow \{ 0,1\}^{n_3}$ be the linear strong seeded extractor from Theorem $\ref{thm:lsext}$ set to extract from min-entropy $\frac{n}{2}$ with seed length $n_4$ and error $2^{-\Omega(n^{\beta_4/2})}$.
\end{enumerate}
\item Let $\nmExt_1$ be the function computed by Algorithm $2$, which uses the function $2\laExt$ set up as above.  
\end{enumerate}
\RestyleAlgo{boxruled}
\LinesNumbered
\begin{algorithm}[ht]
  \caption{nmExt(x,y)\label{alg}  \vspace{0.1cm}\newline \textbf{Input:} Bit strings $x,y$, each of length $n$. \newline \textbf{Output:} A bit string of length $n_4$.}
 Let $x_1 = \slice(x,n_1)$, $y_1=\slice(y,n_1)$. Compute $v = \IP(x,y)$.
 
 Compute $T = \samp(v) \subset [\frac{n}{\alpha}]$. 
 
 Let $z= x_1\circ x_2 \circ y_1\circ y_2$ where $x_2= (E(x))_{\{T\}}, y_2= (E(y))_{\{T\}}$.
 
 Output $\nmExt_1(x,y,z)$.
 \end{algorithm}
 
 We now state our main theorem.
 \begin{thm}\label{seedless_main} Let $\nmExt$ be the function computed by Algorithm $3$. Then $\nmExt$  is a seedless $(2,t)$-non-malleable extractor with error $2^{-n^{\Omega(1)}}$.
 \end{thm}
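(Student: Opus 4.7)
The plan is to follow the high-level strategy outlined in Section~\ref{sec:overview_t}: first show that the intermediate string $Z$ produced in Algorithm~3 is distinct from all $t$ tampered versions $Z^{(i)}$ with overwhelming probability, and then condition on appropriate one-sided fixings that preserve $X\perp Y$ so that Lemma~\ref{main_lemma_1} can be invoked on the $\nmExt_1$ part of the construction.

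Fix tampering functions $\A_1=(f_1,g_1),\ldots,\A_t=(f_t,g_t)$; without loss of generality assume $f_i$ has no fixed points for every $i$, so $X^{(i)}\neq X$ deterministically. The first step is to bound $\Pr[Z=Z^{(i)}]$ for each $i$ and union-bound. Since $X_1^{(i)}$ is a function of $X$ alone and $Y_1^{(i)}$ a function of $Y$ alone, the events $\mathcal{E}_1=\{X_1=X_1^{(i)}\}$ and $\mathcal{E}_2=\{Y_1=Y_1^{(i)}\}$ are independent given independence of $X,Y$. If $\Pr[\mathcal{E}_1\cap\mathcal{E}_2]\le 2^{-n^{\Omega(1)}}$ we are done. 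Otherwise conditioning on $\mathcal{E}_1\cap\mathcal{E}_2$ costs only $n^{\Omega(1)}$ bits of min-entropy in each source (so $X,Y$ remain essentially full-entropy and independent); on this event $V=V^{(i)}$ and hence $T=T^{(i)}$. Since $f_i$ has no fixed points, $E(X)$ and $E(X^{(i)})$ differ in at least $n/(10\alpha)$ coordinates. By Theorem~\ref{strong_ip}, $V=\IP(X_1,Y_1)$ is close to uniform conditional on $X_1$; then the sampler guarantee of Corollary~\ref{cor:samp} ensures that $T=\samp(V)$ hits a differing coordinate with probability $1-2^{-n^{\Omega(1)}}$, forcing $X_2\neq X_2^{(i)}$ and so $Z\neq Z^{(i)}$. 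Union-bounding over $i\in[t]$ gives $\Pr[\exists i:Z=Z^{(i)}]\le t\cdot 2^{-n^{\Omega(1)}}=2^{-n^{\Omega(1)}}$.

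Next, fix the bundle $(X_1,\{X_1^{(i)}\},Y_1,\{Y_1^{(i)}\})$; each component is a deterministic function of $X$ alone or $Y$ alone, so $X\perp Y$ is preserved. Now $V,\{V^{(i)}\},T,\{T^{(i)}\}$ are determined, and we may further fix $(X_2,\{X_2^{(i)}\},Y_2,\{Y_2^{(i)}\})$ in the same one-sided fashion. The total number of bits fixed is $O(t(n_1+t_{\samp}))=n^{O(\gamma)}$; combining with the initial deficit $n^\gamma$, both $X$ and $Y$ retain min-entropy $n-n^{O(\gamma)}$ with probability $1-2^{-n^{\Omega(1)}}$ (applying Lemma~\ref{lemma:entropy_loss_1} to each fixing). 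After conditioning on the high-probability event from Step~1 that $z^{(i)}\neq z$ for all $i$, we are exactly in the setting of Lemma~\ref{main_lemma_1} with $n_w=n_y=n$ and $k_w,k_y=n-n^{O(\gamma)}$. A direct parameter check—using $\ell=n^{O(\gamma)}$, $t=n^{\gamma/C}$, $n_q=n^{100\gamma}$, $m=n_4=n^{50\gamma}$, $\epsilon=2^{-n^{\Omega(1)}}$, and $\gamma$ sufficiently small relative to $C$—verifies $k_y\ge \max\{k,k_1\}+20\ell(tn_q+tm+\log(1/\epsilon))$, $k_w\ge k+20\ell(tm+\log(1/\epsilon))$, and $n_q\ge k+10tm+2\log(1/\epsilon)+\la$. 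The lemma then yields $\nmExt(X,Y)\approx_{2^{-n^{\Omega(1)}}} U_{n_4}$ conditioned on $\nmExt(\A_1(X,Y)),\ldots,\nmExt(\A_t(X,Y))$. A standard convex-combination argument (along the lines of the promised Lemma~\ref{lemma:final_convex}) upgrades this strong-seeded-style guarantee to the formal $\cpy$-based Definition~\ref{def:t}, by setting $D_{\vec f,\vec g}$ to have $i$-th coordinate $\same$ on the (negligible) event $\A_i(X,Y)=(X,Y)$ and equal to $\nmExt(\A_i(X,Y))$ otherwise.

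I expect the main obstacle to be the bookkeeping in the fixing step and the corresponding parameter check: every intermediate random variable that is fixed must be expressible as a function of $X$ alone or $Y$ alone (otherwise independence is destroyed), and the cumulative entropy loss across the $O(\ell)$ iterations of flip-flop alternating extraction inside $\nmExt_1$ together with the $O(t)$ tampered copies must stay well below the initial budget $n^\gamma$. Once this is set up so that Lemma~\ref{main_lemma_1}'s hypotheses are satisfied, the rest of the argument reduces cleanly to black-box applications of the inner-product extractor, the sampler, the code's distance, and the main lemma.
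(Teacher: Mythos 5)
Your Step~1 (showing $Z\neq Z^{(i)}$ with probability $1-2^{-n^{\Omega(1)}}$, conditioning on a one-sided bundle of variables, and then invoking Lemma~\ref{main_lemma_1} after the parameter check) follows the paper's proof of Lemma~\ref{main_lemma} closely and is essentially correct. The problem is in how you dispose of the general case.

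First, the opening \emph{without loss of generality assume $f_i$ has no fixed points for every $i$} is not a valid reduction. Definition~\ref{def:t} places no restriction on the tampering pair $(f_i,g_i)$: both may have (possibly many) fixed points, and the definition requires you to exhibit a source-independent $D_{\vec f,\vec g}$ with $\same$-coordinates to cover exactly those cases. The paper is careful to separate the \emph{restricted} guarantee (Lemma~\ref{main_lemma}, property $\mathcal P_n$, which assumes that for each $i$ at least one of $f_i,g_i$ has no fixed points) from the general statement of the theorem.

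Second, your description of the bridge --- setting the $i$-th coordinate of $D_{\vec f,\vec g}$ to $\same$ \emph{on the (negligible) event $\A_i(X,Y)=(X,Y)$} --- is wrong on two counts. That event need not be negligible (e.g.\ $f_i$ and $g_i$ could both be the identity on a constant fraction of inputs, so $\nmExt(\A_i(X,Y))=\nmExt(X,Y)$ with constant probability, which is exactly why $\same$ is needed in the definition). And $D_{\vec f,\vec g}$ must be independent of $(X,Y)$, so its $i$-th coordinate cannot literally be $\nmExt(\A_i(X,Y))$. The paper's Lemma~\ref{lemma:final_convex} handles this by partitioning $\{0,1\}^n\times\{0,1\}^n$ into atoms $W^{(R)}\times V^{(S)}$ ($R,S\subseteq[t]$) according to which $f_i$'s fix $x$ and which $g_i$'s fix $y$; on each atom of non-negligible mass the sources $X^{(R)},Y^{(S)}$ still have min-entropy $\geq n-n^\gamma$, so $\mathcal P_n$ applies to the indices outside $R\cap S$, while the indices in $R\cap S$ are assigned $\same$ deterministically. $D_{\vec f,\vec g}$ is then the mixture $\sum_{R,S}\alpha_{R,S}D^{(R,S)}_{\vec f,\vec g}$ built from \emph{fresh independent} copies on each atom, and the $2^{2t}$ atoms contribute a factor $2^{2t}$ to the final error (which is why $t\le n^{\gamma/C}$ matters). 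This partition-by-fixed-point-structure argument is the genuinely non-trivial content missing from your write-up; without it the proposal does not establish the theorem as stated.
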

We establish the following two lemmas, from which the above theorem is direct. 

 \begin{lemma}\label{main_lemma} $\nmExt:\{ 0,1\}^{n} \times \{ 0,1\}^{n} \rightarrow \{ 0,1\}^{n_4}$ satisfies the following property $\mathcal{P}_n$: If $X,Y$ are independent $(n,n-n^{\gamma})$-sources and $\A_1=(f_1,g_1),\ldots,\A_t =(f_t,g_t)$ are arbitrary $2$-split-state tampering functions, such that for each $i \in [t]$, at least one of $f_i,g_i$ has no fixed points, then the following holds:\begin{align*} 
|\nmExt(X,Y), \nmExt(\A_1(X,Y)), \ldots, \nmExt(\A_t(X,Y)) - \\ U_{n_4}, \nmExt(\A_1(X,Y)), \ldots, \nmExt(\A_t(X,Y))| \le \epsilon,
\end{align*}
where $\epsilon =2^{-n^{\Omega(1)}}$.
 \end{lemma}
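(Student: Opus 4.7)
The strategy is to reduce Lemma~\ref{main_lemma} to Lemma~\ref{main_lemma_1}. It suffices to show (i) $Z \neq Z^{(i)}$ for every $i \in [t]$ except with probability $2^{-n^{\Omega(1)}}$, and (ii) conditioning on a fixing of the auxiliary variables that determine $Z$ and all tampered versions $\{Z^{(i)}\}$ preserves the independence of $X,Y$ and leaves them with enough min-entropy to satisfy the hypotheses of Lemma~\ref{main_lemma_1}. Averaging the conclusion of Lemma~\ref{main_lemma_1} over the fixings via Lemma~\ref{sd_convex} then yields the claim.

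\textbf{Stage 1 (distinctness of $Z$).} Fix $i \in [t]$ and, WLOG, assume $f_i$ has no fixed points, so $X \neq f_i(X) = X^{(i)}$ pointwise. If $X_1 \neq X_1^{(i)}$ or $Y_1 \neq Y_1^{(i)}$ we are done immediately; otherwise $V = V^{(i)}$ and $T = T^{(i)}$, and it suffices to argue that the sampled set $T$ hits $T^{\ast} := \{j : E(X)_j \neq E(X^{(i)})_j\}$, which has size at least $(n/\alpha)/10$ by the relative distance of $E$ and which is a deterministic function of $X$ alone. The slices $X_1,Y_1$ have min-entropy at least $n_1 - n^\gamma$ (by the chain rule / Lemma~\ref{lemma:entropy_loss_1}), so Theorem~\ref{strong_ip} gives $|\IP(X_1,Y_1),X_1 - U_{n_2},X_1| \leq 2^{-\Omega(n_1)}$. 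Hence for typical $x$, the conditional distribution of $V$ given $X = x$ (which is a function of $Y_1$ alone, by independence of $X$ and $Y$) is $2^{-\Omega(n_1)}$-close to uniform. Applying Corollary~\ref{cor:samp} to the fixed set $T^{\ast}$ then gives $\Pr[T \cap T^{\ast} = \emptyset \mid X = x] \leq 2^{-n^{\Omega(1)}}$; averaging over $x$ and union-bounding over $i \in [t]$ yields $\Pr[\exists i:\ Z = Z^{(i)}] \leq t \cdot 2^{-n^{\Omega(1)}} = 2^{-n^{\Omega(1)}}$.

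\textbf{Stage 2 (conditioning and invoking Lemma~\ref{main_lemma_1}).} Fix the tuple
\[
\mathcal{V} := \bigl(X_1,\{X_1^{(i)}\}_i,\ Y_1,\{Y_1^{(i)}\}_i,\ X_2,\{X_2^{(i)}\}_i,\ Y_2,\{Y_2^{(i)}\}_i\bigr)
\]
to any value consistent with $Z \neq Z^{(i)}$ for every $i$. Once $X_1,Y_1,\{X_1^{(i)},Y_1^{(i)}\}$ are fixed, all sampler outputs $T,T^{(i)}$ are determined, so each $X_2^{(\cdot)}$ is a deterministic function of $X$ and each $Y_2^{(\cdot)}$ is a deterministic function of $Y$. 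Thus $X$ and $Y$ remain independent after the conditioning. The total bit-length of $\mathcal{V}$ is $O(t(n_1 + n^{\beta_2})) = n^{O(\gamma)} \ll n/2$, so by Lemma~\ref{lemma:entropy_loss_1} both $X$ and $Y$ retain min-entropy at least $n - n^{O(\gamma)} \geq n/2$ with probability $1 - 2^{-n^{\Omega(1)}}$ over this fixing. The parameter choices of Section~\ref{section:proof} are made precisely so that this comfortably exceeds the thresholds $k_w, k_y \geq \max\{k,k_1\} + 20\ell(tn_q + tm + \log(1/\epsilon))$ required by Lemma~\ref{main_lemma_1}. That lemma then certifies that $\nmExt_1(X,Y,z)$ is $2^{-n^{\Omega(1)}}$-close to $U_{n_4}$ conditioned on $\{\nmExt_1(X^{(i)},Y^{(i)},z^{(i)})\}_i$ for each such fixing, and averaging completes the proof.

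\textbf{Anticipated obstacle.} The delicate step is Stage~1's sampling argument: the target set $T^{\ast}$ depends on \emph{all} of $X$, while the sampler seed $V$ depends only on the short prefixes $(X_1,Y_1)$. It is the strong 2-source extraction property of $\IP$ in the $X_1$-side that allows $V$ to remain close to uniform (as a function of the residual randomness in $Y_1$) even after fully conditioning on $X$ and hence on $T^{\ast}$. Stage~2 is then a routine---though parameter-sensitive---bookkeeping exercise: the exponents $\beta_1, \beta_2, \gamma/C$ are selected so that the length of $\mathcal{V}$ is only $n^{O(\gamma)}$, which is the only reason the residual min-entropies $n/2$ dominate the thresholds imposed by Lemma~\ref{main_lemma_1}.
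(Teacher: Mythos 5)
Your proposal is correct and follows essentially the same route as the paper: establish that $Z\neq Z^{(i)}$ for all $i$ except with probability $2^{-n^{\Omega(1)}}$ (Claim 6.8 in the paper), condition on the variables determining $Z,\{Z^{(i)}\}$, check that the residual min-entropy and independence satisfy the hypotheses of Lemma~\ref{main_lemma_1}, and average. The only cosmetic difference is in Stage~1: you condition on all of $X$ (so that $T^{\ast}$ becomes deterministic) and exploit the strong-in-$X_1$ property of $\IP$, whereas the paper conditions only on the slice $X_1$ and then observes that $V$, being a function of $Y_1$, is independent of $T^{\ast}$; these are equivalent instantiations of the same strong-extractor argument. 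Your Stage~2 also correctly identifies that fixing the constituents $\mathcal{V}$ of $Z$ and $\{Z^{(i)}\}$ preserves the independence of $X$ and $Y$ and costs only $n^{O(\gamma)}$ bits of min-entropy; you do omit explicitly verifying the third hypothesis of Lemma~\ref{main_lemma_1} (the lower bound on $n_q$), which the paper records as $n^{\beta_3} > \tfrac{4}{3}(20tn^{\beta_4}+n^{12\gamma})$, but this is a routine parameter check and not a gap in the argument.
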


\begin{lemma}\label{lemma:final_convex} Suppose $\nmExt:\{ 0,1\}^{n} \times \{ 0,1\}^{n} \rightarrow \{ 0,1\}^{n_4}$,  satisfies  property $\mathcal{P}_n$ (from Lemma $\ref{main_lemma}$).
Then, $\nmExt$ is a seedless $(2,t)$-non-malleable extractor with error $(2^{-n^{\gamma}}+\epsilon)2^{2t}$.
 \end{lemma}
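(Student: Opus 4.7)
The plan is to reduce the full seedless $(2,t)$-non-malleable extractor guarantee of Definition \ref{def:t} to property $\mathcal{P}_n$ by averaging over the fixed-point structure of the adversaries. For each $i \in [t]$, define an indicator bit $\sigma_i$ that equals $1$ iff $f_i(X) = X$, and similarly $\tau_i = 1$ iff $g_i(Y) = Y$. The joint value $(\sigma, \tau) \in \{0,1\}^{2t}$ partitions the sample space of $(X, Y)$ into $2^{2t}$ product cells $A^X_\sigma \times A^Y_\tau$; by independence of $X$ and $Y$, the conditional distribution restricted to each cell remains a product of independent sources.

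For each cell with probability $p_{\sigma, \tau} < 2^{-n^\gamma}$, I incur a trivial statistical distance contribution of at most $p_{\sigma, \tau}$, for a total of at most $2^{2t} \cdot 2^{-n^\gamma}$ over all low-probability cells. For the remaining cells, conditioning on an event of probability at least $2^{-n^\gamma}$ reduces the min-entropy of each source by at most $n^\gamma$, so $X \mid A^X_\sigma$ and $Y \mid A^Y_\tau$ are still $(n, n - O(n^\gamma))$-sources and property $\mathcal{P}_n$ applies (by adjusting the constant in $\gamma$ if necessary). Within each such cell, partition $[t]$ as $S_{\sigma, \tau} \cup \overline{S}_{\sigma, \tau}$ where $S_{\sigma, \tau} = \{i : \sigma_i = \tau_i = 1\}$ consists of those $i$ for which $\A_i$ acts as the identity on the cell. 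For $i \in S_{\sigma, \tau}$, $\A_i(X,Y)=(X,Y)$ on the cell, so $\nmExt(\A_i(X, Y)) = \nmExt(X, Y)$; this matches precisely the $\same$ case of $\cpy^{(t)}$. For $i \in \overline{S}_{\sigma, \tau}$, at least one of $f_i$ or $g_i$ has no fixed point on the cell, and I can redefine it outside the cell to have no global fixed points without altering its behavior on the support of the conditional source, so the hypothesis of $\mathcal{P}_n$ is met for these coordinates.

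Having set this up, property $\mathcal{P}_n$ applies inside each high-probability cell with error $\epsilon$, yielding that $\nmExt(X, Y)$ restricted to the cell is $\epsilon$-close to uniform jointly with the tampered outputs at coordinates $i \in \overline{S}_{\sigma, \tau}$. I then define $D_{\vec{f}, \vec{g}}$ as a mixture: with weight $p_{\sigma, \tau}$, output $\same$ in each coordinate $i \in S_{\sigma, \tau}$ and sample from the marginal of $\nmExt(\A_i(X, Y)) \mid A_{\sigma,\tau}$ in each coordinate $i \in \overline{S}_{\sigma, \tau}$. By construction this random variable is independent of $(X, Y)$, and applying $\cpy^{(t)}$ with the shared $U_m$ exactly reproduces the identity behavior on $S_{\sigma, \tau}$. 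Summing the errors via Lemma \ref{sd_convex}, the total distance to the ideal distribution is bounded by $2^{2t} \cdot 2^{-n^\gamma} + \epsilon \le (2^{-n^\gamma} + \epsilon) \cdot 2^{2t}$.

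The main technical obstacle is the min-entropy bookkeeping in the second step: one must verify that conditioning on each of the $2^{2t}$ product cells still leaves enough entropy for property $\mathcal{P}_n$ to apply with error $\epsilon$. This is exactly where the hypothesis $t \le n^{\gamma/C}$ with $C$ large is used, since it guarantees that the combined entropy loss of $O(t)$ from the partition itself plus $n^\gamma$ from the probability threshold is comfortably absorbed into $n^\gamma$, leaving the conditional sources within the regime in which $\mathcal{P}_n$ has been established. Everything else is a routine convex-combination argument.
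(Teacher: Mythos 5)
Your proof is correct and follows essentially the same route as the paper: partition $\{0,1\}^n \times \{0,1\}^n$ into $2^{2t}$ product cells according to which of the $f_i$, $g_i$ fix the sample, discard low-probability cells at a cost of $2^{2t} \cdot 2^{-n^\gamma}$, invoke property $\mathcal{P}_n$ on the remaining cells (where both conditioned sources retain min-entropy $\ge n - n^\gamma$), and assemble $D_{\vec{f},\vec{g}}$ as the corresponding mixture that outputs $\same$ exactly on the coordinates in $R \cap S$. Your remark about redefining each $f_i$ or $g_i$ outside the cell to kill global fixed points is a detail the paper elides but that is needed to match the literal hypothesis of $\mathcal{P}_n$; otherwise the two arguments coincide.
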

 \textbf{Notation:} For any function $H$, if $V = H(X,Y)$, let $V^{(i)}$  denote the random variable $H(\A_i(X,Y))$. 

\begin{proof}[Proof of Lemma $\ref{main_lemma}$]
We begin by proving the following claim.
\begin{claim}\label{lemma:notequal} With probability at least $1-2^{-n^{\Omega(1)}}$, $Z \neq Z^{(i)}$ for each $i \in [t]$.
\end{claim}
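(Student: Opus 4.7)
The plan is to reduce the event $\{Z = Z^{(i)}\}$ to the failure of the averaging sampler $\samp$ to hit the set of coordinates where the two codewords $E(X)$ and $E(X^{(i)})$ disagree, and then take a union bound over $i\in[t]$.

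Fix $i \in [t]$ and assume without loss of generality that $f_i$ has no fixed points, so that $X \neq X^{(i)} = f_i(X)$ with probability one. Since $Z = X_1 \circ X_2 \circ Y_1 \circ Y_2$ and $Z^{(i)}$ is the analogous concatenation, the event $Z = Z^{(i)}$ implies $X_1 = X_1^{(i)}$, $Y_1 = Y_1^{(i)}$, and $X_2 = X_2^{(i)}$. On the event $\{X_1 = X_1^{(i)} \wedge Y_1 = Y_1^{(i)}\}$ we also get $V = V^{(i)}$ and hence $T = T^{(i)}$, so the condition $X_2 = X_2^{(i)}$ is exactly $E(X)_{\{T\}} = E(X^{(i)})_{\{T\}}$. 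Because $X \neq X^{(i)}$ and $\mathcal{C}$ is a linear code of relative distance $1/10$, the disagreement set $B = \{j \in [n/\alpha] : E(X)_j \neq E(X^{(i)})_j\}$ has $|B| \geq n/(10\alpha)$, and so $\Pr[Z = Z^{(i)}]$ is bounded by $\Pr[X_1 = X_1^{(i)} \wedge Y_1 = Y_1^{(i)} \wedge T \cap B = \emptyset]$.

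To bound this last probability I will condition on $X = x$, observing that $B$ and $X_1^{(i)}$ are deterministic functions of $x$ while $Y$ remains independent of $X$. For a $(1 - 2^{-n^{\Omega(1)}})$-fraction of $x$ in the support of $X$, two properties hold simultaneously: first, by Lemma~\ref{lemma:entropy_loss_1} applied to $Y$ with the suffix $Y_{[n_1+1, n]}$ (which takes at most $2^{n-n_1}$ values), the slice $Y_1 = \slice(Y, n_1)$ has min-entropy at least $n_1 - O(n^{\gamma})$ with high probability; second, by the strong two-source extractor property of $\IP$ (Theorem~\ref{strong_ip}) together with Markov's inequality, the fixed string $x_1$ is ``good'' in the sense that $V = \IP(x_1, Y_1)$ is $2^{-\Omega(n_1)}$-close to uniform. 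For any such $x$, Corollary~\ref{cor:samp} applied to the density-$1/10$ set $B$ then yields $\Pr[T \cap B = \emptyset \mid X = x] \le 2^{-\Omega(n_1)} + 2^{-n^{\Omega(1)}} = 2^{-n^{\Omega(1)}}$.

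Averaging over $x$ gives $\Pr[Z = Z^{(i)}] \leq 2^{-n^{\Omega(1)}}$, and a union bound over the $t = n^{\gamma/C}$ choices of $i$ yields $\Pr[\exists i\in[t] : Z = Z^{(i)}] \leq t \cdot 2^{-n^{\Omega(1)}} = 2^{-n^{\Omega(1)}}$, which is the desired bound. The main subtlety will be the bookkeeping of conditioning: since $X_1^{(i)}$ is a function of $X$ only and $Y_1^{(i)}$ is a function of $Y$ only, freezing $X$ simultaneously fixes both the set $B$ and the seed piece $x_1$, after which the remaining randomness in $T$ flows only through $\IP(x_1, Y_1)$ and then through $\samp$, exactly the setting where the strong-extractor and sampler guarantees combine cleanly.
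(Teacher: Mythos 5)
Your proposal is correct and follows essentially the same route as the paper: condition on the event $X_1 = X_1^{(i)}\wedge Y_1 = Y_1^{(i)}$ (so $V=V^{(i)}$ and hence $T=T^{(i)}$), invoke the strong-extractor property of $\IP$ so that $V$ is $2^{-\Omega(n_1)}$-close to uniform for most fixings, apply Corollary~\ref{cor:samp} to the size-$\ge n/(10\alpha)$ disagreement set, and union-bound over $i\in[t]$. The only cosmetic difference is that you condition on all of $X=x$ (making $B$ and $X_1^{(i)}$ deterministic before appealing to the sampler), whereas the paper fixes only $X_1$ and relies implicitly on the independence of $Y_1$ from the remainder of $X$ — both are fine, and your detour through Lemma~\ref{lemma:entropy_loss_1} to bound $H_\infty(Y_1)$ is harmless though unnecessary (a projection of an $(n,n-n^\gamma)$-source to $n_1$ coordinates directly has min-entropy $\ge n_1-n^\gamma$).
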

\begin{proof} Pick an arbitrary $i \in [t]$. Without loss of generality, suppose $f_i$ has no fixed points. If $X_1 \neq X_1^{(i)}$ or $Y_1 \neq Y_1^{(i)}$, then $Z \neq Z^{(i)}$. Now suppose $X_1 = X_1^{(i)}$ and $Y_1 = Y_1^{(i)}$. We fix $X_1$, and note that since $\IP$ is a strong extractor (Theorem $\ref{strong_ip}$), $V$ is $2^{-\Omega(n_1)}$-close to $U_{n_2}$  after this fixing (with probability at least $1-2^{-\Omega(n_1)}$). Also note that $V=V^{(i)}$.

Since  $f_i$ has no fixed points, it follows that since $E$ is an encoder of a code with relative distance distance $\frac{1}{10}$, $\Delta(E(X),E(X^{(i)})) \ge \frac{n}{10 \alpha}$. Let $D = \{ j \in \l[\frac{n}{\alpha}\r]: E(X)_{\{j\}} \neq E(X^{(i)})_{\{j\}} \}$. Thus $|D| \ge \frac{n}{10 \alpha}$.   Using  Corollary $\ref{cor:samp}$, it follows that with probability at least $1-2^{-\Omega(n_1)}$, $| D \cap \samp(V)| \ge 1$, and thus  $X_2 \neq X_2^{(i)}$ (since $\samp(V)=\samp(V^{(i)})$). This proves the claim.
\end{proof}
We fix $Z,Z^{(1)},\ldots,Z^{(t)}$ such that $Z \neq Z^{(i)}$ for any $i \in [t]$ (from the lemma above, this occurs with probability $1-2^{-n^{\Omega(1)}}$). We note that by the Lemma $\ref{lemma:notequal}$ and Lemma $\ref{lemma:entropy_loss_1}$, each of the sources $X$ and $Y$  still has  min-entropy at least $n-n^{\gamma}-(t+1)\ell-n^{\gamma/10}>n-n^{12 \gamma}$ with probability at least $1-2^{-n^{\gamma/10}}$. 

Lemma $\ref{main_lemma}$ now follows directly from Lemma $\ref{main_lemma_1}$ by noting that the following hold by our choice of parameters:
\begin{itemize}
\item $n-n^{12\gamma}> \frac{n}{2} + 20(n^{\beta_1} + n^{\beta_2})(n^{\gamma/C}(n^{\beta_3}+n^{\beta_4})+ n^{\beta_4})$
\item $n^{\beta_3} > \frac{4}{3}(20tn^{\beta_4}+n^{12\gamma})$

\item $2^{n^{12\gamma}}2^{-\Omega(n^{\beta_4/2})}<2^{-\Omega(n^{\beta_4/4})}$.
\end{itemize}
This concludes the proof.
\end{proof}

  \begin{proof}[Proof of Lemma $\ref{lemma:final_convex}$]Let $\A_1=(f_1,g_1),\ldots,\A_t=(f_t,g_t)$ be arbitrary $2$-split-state adversaries. We partition $\{ 0,1\}^{n}$ in two different ways based on the fixed points of the tampering functions. 
 
 For any $R \subseteq [t]$, define $$  W^{(R)}= \{ x \in \{0,1\}^n: \text{$f_i(x)=x$ if $i \in R$, and $f_i(x) \neq x$ if $i \in [t]\setminus R$}\}.$$
 
 Similary, for any $S \subseteq [t]$, define $$ V^{(S)}  = \{ y \in \{0,1\}^n: \text{$g_i(y)=y$ if $i \in S$, and $g_i(y) \neq y$ if $i \in [t]\setminus S$}\}.$$
 Thus the sets $W^{(R)}, R \subseteq [t]$ defines a partition of $\{0,1 \}^n$. Similarly $V^{(S)}, S \subseteq [t]$ defines a partition of $\{ 0,1\}^n$. For $R,S \subseteq [t]$, let $X^{(R)}$ be a random variable uniform on $W^{(R)}$, and $Y^{(S)}$ be a random variable uniform on $V^{(S)}$. 
 
 Let $U_{n_4}$ be uniform on $\{ 0,1\}^{n_4}$ and independent of $X^{R},Y^{S}$, for all $R,S \subseteq [t]$.  
 
 Define $$D_{\vec{f},\vec{g}}^{(R,S)} = (U_{n_4},Z_{1}^{(R,S)},\ldots,Z_{t}^{(R,S)}) $$ where we define the random variable 
 
 \[
 Z_i^{(R,S)} =
  \begin{cases}
   \nmExt(f_i(X^{(R)}),g_i(Y^{(S)}))  & \text{if }  i \in [t] \setminus(R \cap S) \\
   \same       & \text{if }i \in R \cap S
  \end{cases}
\]

 Define the distribution: $$D_{\vec{f},\vec{g}} = \sum_{R,S}\alpha_{R,S} D_{\vec{f},\vec{g}}^{(R,S)}$$, where $\alpha_{R,S} = \frac{|W^{(R,S)}||V^{(R,S)|}}{2^{2n}}$.

We first prove the following claim.
\begin{claim} Let 
\begin{align*}
\Delta_{R,S} = \alpha_{R,S} |\nmExt(X^{(R)},Y^{(S)}),\nmExt(f_1(X^{(R)}),g_1(Y^{(S)})),\ldots, \\ \nmExt(f_t(X^{(R)}),g_t(Y^{(S)})) - D_{\vec{f},\vec{g}}^{(R,S)} |. 
\end{align*}
Then, for every $R,S \subseteq [t]$, $\Delta_{R,S} \le 2^{-n^{\gamma}} + \epsilon$.
\end{claim}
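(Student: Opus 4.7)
The plan is to split on whether the restricted sources $X^{(R)}$ and $Y^{(S)}$ have enough min-entropy to fall under property $\mathcal{P}_n$. If either $|W^{(R)}| < 2^{n-n^\gamma}$ or $|V^{(S)}| < 2^{n-n^\gamma}$ then immediately $\alpha_{R,S} = |W^{(R)}||V^{(S)}|/2^{2n} \le 2^{-n^\gamma}$, and since statistical distance is at most $1$ we get $\Delta_{R,S} \le \alpha_{R,S} \le 2^{-n^\gamma}$ with no further work. So the substantive case is when $|W^{(R)}|, |V^{(S)}| \ge 2^{n-n^\gamma}$, in which case $X^{(R)}$ and $Y^{(S)}$ are independent $(n, n-n^\gamma)$-sources, matching the hypothesis of $\mathcal{P}_n$.

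In that substantive case, the key observation is that the restrictions of the tampering functions to $W^{(R)}$ and $V^{(S)}$ have exactly the no-fixed-point structure $\mathcal{P}_n$ requires. Indeed, for each $i \in [t] \setminus (R \cap S)$ we have either $i \notin R$ (so $f_i$ has no fixed point on $W^{(R)}$, by definition of the partition) or $i \notin S$ (so $g_i$ has no fixed point on $V^{(S)}$). On the other hand, for $i \in R \cap S$ both $f_i$ and $g_i$ act as the identity on their respective restricted domains, so $\nmExt(f_i(X^{(R)}), g_i(Y^{(S)}))$ is pointwise equal to $\nmExt(X^{(R)}, Y^{(S)})$, the first coordinate of the LHS tuple; correspondingly the $i$-th coordinate of $D^{(R,S)}_{\vec f, \vec g}$ is $\same$, which under the standard $\cpy^{(t)}$ semantics is a literal copy of the first coordinate $U_{n_4}$ of the RHS. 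So these coordinates indexed by $R \cap S$ are identical deterministic functions of the respective first coordinates on both sides and contribute nothing to the statistical distance.

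It therefore suffices to control the distance on the subtuple indexed by $\{0\} \cup ([t] \setminus (R \cap S))$. I would invoke property $\mathcal{P}_n$ on the sources $X^{(R)}, Y^{(S)}$ with tampering functions $\{\A_i\}_{i \in [t] \setminus (R \cap S)}$, replacing each restriction $f_i|_{W^{(R)}}$ by an arbitrary fixed-point-free extension to all of $\{0,1\}^n$ (and similarly for $g_i|_{V^{(S)}}$ when needed), and padding with dummy fixed-point-free $2$-split-state tampering functions up to total count $t$ if necessary. The extensions agree with the originals on $W^{(R)} \times V^{(S)}$, so they induce the same joint distribution of $\nmExt$-outputs, and the dummy padded outputs are simply marginalized out without increasing statistical distance. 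Thus $\mathcal{P}_n$ yields error $\epsilon$ on the subtuple, and reattaching the $R \cap S$ coordinates preserves this bound, giving $\Delta_{R,S} \le \alpha_{R,S}\cdot \epsilon \le \epsilon$. Combining with the easy case yields $\Delta_{R,S} \le 2^{-n^\gamma} + \epsilon$. The main conceptual care needed is the $\cpy$-bookkeeping that reconciles the $\same$ entries of $D^{(R,S)}_{\vec f, \vec g}$ with the fixed-point coordinates of the LHS; beyond that, there is no substantive technical obstacle.
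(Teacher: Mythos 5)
Your proposal is correct and follows essentially the same route as the paper's proof: the same easy case when $\alpha_{R,S} \le 2^{-n^\gamma}$, the same reduction to property $\mathcal{P}_n$ for the indices outside $R\cap S$, and the same observation that the coordinates in $R\cap S$ are literal copies of the first coordinate and match the $\same$ entries of $D^{(R,S)}_{\vec f,\vec g}$. The only difference is that you spell out two small bookkeeping points the paper leaves implicit — extending $f_i|_{W^{(R)}}$ to a globally fixed-point-free function so the hypothesis of Lemma \ref{main_lemma} applies verbatim, and padding with dummy adversaries so the count is exactly $t$ before marginalizing — both of which are sound and arguably make the invocation cleaner.
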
 
\begin{proof} If $|W^{(R)}| \le 2^{n-n^{\gamma}}$, it follows that $\alpha_{R,S} \le 2^{-n^{\gamma}} $, and hence the claim follows. Thus, assume that $H_{\infty}(X^{(R)}) \ge n-n^{\gamma}$. Using a similar argument, we can assume that $H_{\infty}(Y^{(S)}) \ge n-n^{\gamma}$. 

Let $\overline{R \cap S}= [t]\setminus (R \cap S)  = \{ i_1,\ldots,i_j\}$. It follows that for any $c \in \overline{R \cap S}$, at least one the following is true: $(1)$ $f_{c}$ has no fixed points on $W^{(R)}$ $(2)$ $g_{c}$ has no fixed points on $V^{(S)}$.  Thus, invoking Lemma $\ref{main_lemma}$, we have
\begin{align*} |\nmExt(X^{(R)},Y^{(S)}),\nmExt(f_{i_1}(X^{(R)}),g_{i_1}(Y^{(S)})),\ldots, \nmExt(f_{i_j}(X^{(R)}),g_{i_j}(Y^{(S)}))   \\ - U_{n_4},\nmExt(f_{i_1}(X^{(R)}),g_{i_1}(Y^{(S)})),\ldots,  \nmExt(f_{i_j}(X^{(R)}),g_{i_j}(Y^{(S)})) | \le  \epsilon
\end{align*}
The claim now follows by observing that for each $c \in R \cap S$, $f_{c}$ and $g_{c}$ are the identity functions on the sets $W^{(R)}$ and $V^{(S)}$ respectively.
\end{proof}
Let $X,Y$ be independent and uniformly random on $\{ 0,1\}^{n}$. Thus, we have 
\begin{align*}
|\nmExt(X,Y), \nmExt(\A_1(X,Y)), \ldots, \nmExt(\A_t(X,Y))  \\ - U_{n_4}, \cpy^{(t)}(D_{\vec{f},\vec{g}},U_{n_4})| =  \sum_{R,S \subseteq [t]} \Delta_{R,S}  \le 2^{2t}(\epsilon+2^{-n^{\gamma}}).
\end{align*}
Thus $\nmExt$ is a $(2,t)$-non-malleable extractor with error $(\epsilon+2^{-n^{\gamma}})2^{2t}$.
 \end{proof}
 
 \section{An Explict Seeded Non-Malleable Extractor at Polylogarithmic Min-entropy}\label{seeded_nm}
 
 \textbf{Subroutines and Parameters}
\begin{enumerate}
\item Let $\gamma$ be  a small enough constant and $C$ a large one. Let $t,k,d$ be parameters such that $t\le k^{\gamma/2}$.
\item Let $n_1 =  \log\l(\frac{tn}{\epsilon} \r)$. Let $\Ext_s:\{ 0,1\}^{n} \times \{ 0,1\}^{n_1} \rightarrow \{ 0,1\}^{n_1}$ be the strong seeded extractor from Theorem $\ref{guv}$ set to extract from min-entropy $2n_1$ and error $2^{-\Omega(n_1)}$.
\item Let $\mathcal{C}$ be an explicit $[\frac{d}{\alpha},d,\frac{1}{10}]$-binary linear error correcting code  with encoder $E: \{ 0,1\}^{d} \rightarrow \{0,1 \}^{\frac{d}{\alpha}}$. Such explicit codes are known, for example from the work of Alon et al. \cite{ABNNR92}.
\item Let $\Ext_{\samp}: \{0,1 \}^{n_1} \times \{0,1\}^{d_1}\rightarrow\{ 0,1\}^{n_2}$ be the strong seeded extractor from Theorem $\ref{zuck7}$ set to extract from min-entropy $\frac{n_1}{2}$ with error $\frac{1}{20}$ and output length $n_2$, such that $N_2D_1= \frac{d}{\alpha}$, where $N_2=2^{n_2}$ and $D_1 =2^{d_1}$. Let $\{ 0,1\}^{d_1} = \{ s_1,\ldots,s_{D_1}\}$. Define $\samp:\{ 0,1\}^{n_1}\rightarrow [\frac{d}{\alpha}]^{D_1}$ as: $\samp(x) = \l(\Ext(x,s_1) \circ s_1,\ldots,\Ext(x,s_{D_1}) \circ s_{D_1} \r) $. By Theorem $\ref{zuck7}$, we have $D_1=c_1n_1$, for some constant $c_1$.
\item Let $\ell = n_1+D_1=(c_1+1)n_1$.
\item  We set up the parameters for the components used by $2\laExt$ (computed by Algorithm $1$) as follows.
\begin{enumerate}
\item Let $n_3 = c_3t\ell, n_4= 10\ell$, for some large enough constant $c_3$.

Let $\Ext_q: \{0,1 \}^{n_3} \times \{ 0,1\}^{n_4} \rightarrow \{ 0,1\}^{n_4}$ be the strong seeded  extractor from Theorem $\ref{guv}$ set to extract from min-entropy $k_q=\frac{n_3}{4}$ with error $\epsilon = 2^{-\Omega(n_4)}$. 

 Let $\Ext_w: \{0,1 \}^{n} \times \{ 0,1\}^{n_4} \rightarrow \{ 0,1\}^{n_4}$ be the strong seeded  extractor  from Theorem $\ref{guv}$ set to extract from min-entropy $\frac{k}{2}$ with error $\epsilon = 2^{-\Omega(n_4)}$.
 
 \item Let $\laExt:  \{0,1 \}^{n} \times \{ 0,1\}^{n_3+n_4}\rightarrow \{ 0,1\}^{2n_4}$  be the look ahead extractor used by $2\laExt$. Recall that the  parameters in the alternating extraction protocol are set as $m=n_4,u=2$ where $u$ is the number of steps in the protocol, $m$ is the length of each random variable that is communicated between the players, and $\Ext_q,\Ext_w$ are the strong seeded extractors used in the protocol.
 
 \item Let $\Ext: \{ 0,1\}^{d} \times \{0,1\}^{n_4} \rightarrow \{ 0,1\}^{n_3}$ be the strong seeded extractor from Theorem $\ref{guv}$ set to extract from min-entropy $\frac{d}{2}$ with seed length $n_4$ and error $2^{-\Omega(n_4)}$.
 
\end{enumerate}
\item Let $\nmExt_1$ be the function computed by Algorithm $2$, which uses the function $2\laExt$ set up as above.  

 \item Let $n_5 = \frac{k}{100t}$. Let $\Ext_1: \{ 0,1\}^{n} \times \{0,1\}^{n_4} \rightarrow \{ 0,1\}^{n_5}$ be the strong seeded extractor from Theorem $\ref{guv}$ set to extract from min-entropy $\frac{k}{4}$ with seed length $n_4$, error $2^{-\Omega(n_4)}$.

\end{enumerate}

\RestyleAlgo{boxruled}
\LinesNumbered
\begin{algorithm}[ht]
  \caption{snmExt(x,y)\label{alg}  \vspace{0.1cm}\newline \textbf{Input:} Bit strings $x,y$,  of length $n,d$ respectively. \newline \textbf{Output:} A bit string of length $n_4$.}
 $y_1=\slice(y,n_1)$. Compute $v = \Ext_s(x,y_1)$.
 
 Compute $T = \samp(v) \subset [\frac{n}{\alpha}]$. 
 
 Let $z= y_1 \circ y_2$ where $y_2= (E(y))_{\{T\}}$.
 
 Output $\Ext_1(x,\nmExt_1(x,y,z))$.
 \end{algorithm}
 
 We now state our main theorem.
 
 \begin{thm}\label{main_thm_seeded} Let $\snmExt:\{ 0,1\}^{n} \times \{ 0,1\}^{d} \rightarrow \{ 0,1\}^{n_5}$ be the function computed by Algorithm $4$. Then $\snmExt$ satisfies the following property:  For any $\epsilon>0$, $k \ge C \log^{2+ \gamma}\l(\frac{n}{\epsilon}\r)$, $t\le k^{\gamma/2}$ and $d \ge C  t ^2 \log^{2}\l(\frac{n}{\epsilon}\r)$, if $X$ is a $(n,k)$-source, and $Y$ is an independent and uniform distribution on $\{ 0,1\}^d$, and $\A_1\ldots,\A_t$ are arbitrary tampering functions, such that for each $i \in [t]$, $\A_i$ has no fixed points, then the following holds:\begin{align*} 
|\snmExt(X,Y), \snmExt(X,\A_1(Y)), \ldots, \snmExt(X,\A_t(Y)),Y - \\ U_{n_5}, \snmExt(X,\A_1(Y)), \ldots, \snmExt(X,\A_t(Y)),Y| \le O(\epsilon),
\end{align*}
 \end{thm}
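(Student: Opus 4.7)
The proof proceeds in the same two-phase pattern as the seedless construction: first I establish that $Z \neq Z^{(i)}$ for every $i \in [t]$ with high probability, then I feed this ``advice string'' into $\nmExt_1$ and finally into the strong seeded extractor $\Ext_1$. For Phase~1 fix $i$. If $Y_1 \neq Y_1^{(i)}$ we are immediately done, so assume $Y_1 = Y_1^{(i)}$, whence $V = V^{(i)}$ and $T = T^{(i)}$. Since $\A_i$ has no fixed points we have $Y \neq Y^{(i)}$, so by the relative distance $1/10$ of $E$ the encodings $E(Y)$ and $E(Y^{(i)})$ disagree on at least $d/(10\alpha)$ coordinates. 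After fixing $Y_1$ the strong-extractor property of $\Ext_s$ makes $V$ a function of $X$ that is $2^{-\Omega(n_1)}$-close to uniform; applying the sampler guarantee of Theorem~\ref{thm:seed_samp} to the set of disagreeing coordinates shows $T$ hits at least one of them with probability $1 - 2^{-\Omega(n_1)}$, forcing $Y_2 \neq Y_2^{(i)}$ and hence $Z \neq Z^{(i)}$. A union bound over $i \in [t]$ with $n_1 = \log(tn/\epsilon)$ yields the claim with error $O(\epsilon)$.

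Next I peel off the random variables $Y_1, \{Y_1^{(i)}\}, V, \{V^{(i)}\}, Y_2, \{Y_2^{(i)}\}$ in that order. The first and third blocks are deterministic functions of $Y$ alone, while (after $Y_1, \{Y_1^{(i)}\}$ are fixed) the middle block consists of deterministic functions of $X$ alone; this alternation preserves the independence $X \perp Y$. By Lemma~\ref{lemma:entropy_loss_1}, with probability $\geq 1 - O(\epsilon)$ over the fixings, $X$ retains min-entropy at least $k - O(tn_1) \geq k/2$ while $Y$ retains min-entropy at least $d - O(t(n_1+D_1)) = d - O(t\log(n/\epsilon))$. Combining with Phase~1 we may condition on $Z, \{Z^{(i)}\}$ being fixed strings with pairwise inequality $Z \neq Z^{(i)}$.

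The heart of the argument is then a direct invocation of Lemma~\ref{main_lemma_1} with the choice $X^{(i)} = X$ for every $i$ (permitted since the lemma allows arbitrary correlation with $X$) and $Y^{(i)} = \A_i(Y)$. The min-entropy hypothesis $k_y \geq \max\{k_q, k_1\} + 20\ell(tn_q + tm + \log(1/\epsilon))$ is where the bound $d \geq C t^2 \log^2(n/\epsilon)$ gets used, the parameter $n_q = c_3 t\ell$ is chosen to satisfy the $n_q$ hypothesis, and the requirement on $k_w$ follows from $k \geq C\log^{2+\gamma}(n/\epsilon)$ together with $t \leq k^{\gamma/2}$. The lemma's conclusion is that $Q := \nmExt_1(X,Y,Z)$ is $2^{-n^{\Omega(1)}}$-close to $U_{n_3}$ conditioned on $\{Q^{(i)} := \nmExt_1(X,\A_i(Y), Z^{(i)})\}$, and moreover, after the implicit fixings inside the lemma's proof, $Q$ and all $Q^{(i)}$ become deterministic functions of $Y$ while $X$ still has min-entropy at least $k/2$ and $X \perp Y$ is preserved.

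Finally I apply $\Ext_1$ with $X$ as source and $Q$ as seed. The $t$ tampered outputs $\{\Ext_1(X, Q^{(i)})\}$ carry total length $tn_5 = k/100$, so by Lemma~\ref{lemma:entropy_loss_1} the average conditional min-entropy of $X$ given them is at least $k/2 - k/100 - \log(1/\epsilon) \geq k/4$, comfortably above the threshold at which $\Ext_1$ extracts; since $Q$ is close to uniform and a function of $Y$, hence independent of $X$ after the fixings, Lemma~\ref{lemma:cond_ext} yields
\[ (\Ext_1(X,Q), Q, \{\Ext_1(X, Q^{(i)})\}) \approx_{O(\epsilon)} (U_{n_5}, Q, \{\Ext_1(X, Q^{(i)})\}). \]
Because every fixed variable is either a function of $Y$ or, once the $Y$-side fixings are made, a function of $X$, summing via Lemma~\ref{sd_convex} over the $Y$-side fixings replaces them with the full seed $Y$ in the conditioning and yields the theorem. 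The main obstacle I anticipate is the parameter bookkeeping: the error in Lemma~\ref{main_lemma_1} carries a factor $2^\lambda$ with $\lambda = O(t\log(n/\epsilon))$, so the internal extractors must have error $2^{-\Omega(n_4)}$ small enough that $n_4$ dominates $\lambda + \log(1/\epsilon)$, which is precisely what forces the seed length to grow as $t^2\log^2(n/\epsilon)$ rather than merely $t\log^2(n/\epsilon)$.
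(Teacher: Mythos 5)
Your proposal follows the same two-phase structure as the paper's own proof: Phase~1 establishes $Z\neq Z^{(i)}$ for every $i\in[t]$ via the strong extractor $\Ext_s$, the error-correcting encoding $E$, and the sampler (this is exactly Claim~\ref{lemma:snotequal}), and Phase~2 conditions on the advice strings and invokes Lemma~\ref{main_lemma_1} with $X^{(i)}=X$, $Y^{(i)}=\A_i(Y)$. Your write-up is in fact somewhat more complete than the paper's: you spell out the alternating peeling order $(Y_1,\{Y_1^{(i)}\})\to(V,\{V^{(i)}\})\to(Y_2,\{Y_2^{(i)}\})$ that keeps $X$ and $Y$ independent while making $Z,\{Z^{(i)}\}$ fixed, whereas the paper simply says ``We fix $Z,Z^{(1)},\ldots,Z^{(t)}$''; and you supply the final hand-off to $\Ext_1$ with $Q=\nmExt_1(X,Y,Z)$ as seed, which the paper's proof omits entirely (it stops after checking the three inequalities needed to apply Lemma~\ref{main_lemma_1}). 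One remark on your closing paragraph: you attribute the $t^2$ in the seed-length lower bound to the requirement that $n_4$ dominate $\lambda$. But after conditioning on $Z$ and all $\{Z^{(i)}\}$ the deficiency of $Y$ (and hence of $Q_1$) is $\lambda=\Theta(t\ell)$, while the construction takes $n_4=10\ell$ — this does \emph{not} dominate $t\ell$ once $t$ is superconstant. The $t^2$ factor in $d$ actually comes from the alternating-extraction entropy budget $20\ell(tn_q+tm)=\Theta(t^2\ell^2)$, not from taming $2^{\lambda}$; the clean way to kill the $2^{\lambda}$ factor is to replace the first round of each alternating extraction by Raz's two-source extractor, an option the paper itself notes in the remark following Lemma~\ref{main_altext} and declines only because the seedless construction needs linearity for invertibility — a constraint absent here. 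This is a genuine subtlety in the bookkeeping that the paper's terse proof also glosses over, and it is good that you flagged it.
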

 \textbf{Notation:} For any function $H$, if $V = H(X,Y)$, let $V^{(i)}$  denote the random variable $H(X,\A_i(Y))$. 

 \begin{proof}
We first prove  the following claim.
\begin{claim}\label{lemma:snotequal} With probability at least $1-\epsilon$, $Z \neq Z^{(i)}$ for each $i \in [t]$.
\end{claim}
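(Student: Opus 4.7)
The plan is to fix $i \in [t]$ and split into two cases based on whether the first slice $Y_1$ equals $Y_1^{(i)}$ or not, then use a union bound over $i \in [t]$ at the end. If $Y_1 \neq Y_1^{(i)}$, then since $Z$ starts with $Y_1$ we immediately have $Z \neq Z^{(i)}$. So the content is in the case $Y_1 = Y_1^{(i)}$.

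In the case $Y_1 = Y_1^{(i)}$, the plan is to fix $Y_1$ and argue as follows. Since $Y$ is uniform on $\{0,1\}^d$, $Y_1$ is uniform on $\{0,1\}^{n_1}$ and independent of $X$. Because $k \ge 2n_1$ and $\Ext_s$ is a strong seeded extractor for min-entropy $2n_1$ with error $2^{-\Omega(n_1)}$, after fixing $Y_1$ the random variable $V = \Ext_s(X, Y_1)$ is $2^{-\Omega(n_1)}$-close to $U_{n_1}$ (with probability $1 - 2^{-\Omega(n_1)}$ over the fixing), and is a deterministic function of $X$. In particular $V$ is independent of the remaining bits $Y_{\text{rest}}$ of $Y$. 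Moreover, since $V$ depends only on $(X, Y_1)$ and we are in the case $Y_1 = Y_1^{(i)}$, we have $V = V^{(i)}$ and hence $T = \samp(V) = T^{(i)} = \samp(V^{(i)})$.

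Since $\A_i$ has no fixed points, $Y \neq Y^{(i)}$, and the property of the code $\mathcal{C}$ (relative distance $\tfrac{1}{10}$) gives that the set $D = \{j \in [d/\alpha] : E(Y)_{\{j\}} \neq E(Y^{(i)})_{\{j\}}\}$ has size $|D| \ge \tfrac{d}{10\alpha}$, i.e., constant density in $[d/\alpha]$. Crucially, $D$ is a deterministic function of $Y_{\text{rest}}$ (given the fixed $Y_1$), hence independent of $V$. Since $\samp$ is a $(\tfrac{1}{10}, \tfrac{1}{20}, 2^{-\Omega(n_1)})$ averaging sampler coming from the strong extractor $\Ext_{\samp}$ for min-entropy $n_1/2$ via Theorem~\ref{thm:seed_samp}, when $V$ is uniform we get $\Pr[\samp(V) \cap D = \emptyset] \le 2^{-\Omega(n_1)}$, and this conclusion survives replacing $V$ by its $2^{-\Omega(n_1)}$-close approximation to uniform with an additional $2^{-\Omega(n_1)}$ loss. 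Conditioned on $\samp(V) \cap D \neq \emptyset$, there is some $j \in T$ with $E(Y)_{\{j\}} \neq E(Y^{(i)})_{\{j\}}$, so $Y_2 \neq Y_2^{(i)}$ and therefore $Z \neq Z^{(i)}$.

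Putting the two cases together, for each fixed $i \in [t]$ we have $\Pr[Z = Z^{(i)}] \le 2^{-\Omega(n_1)}$. A union bound over $i \in [t]$ gives $\Pr[\exists i : Z = Z^{(i)}] \le t \cdot 2^{-\Omega(n_1)} \le \epsilon$, by our choice $n_1 = \log(tn/\epsilon)$ (with a sufficiently large hidden constant). The main subtlety I anticipate is tracking the dependencies carefully: we must condition on $Y_1$ before invoking the strong-extractor property for $\Ext_s$ (so that $V$ becomes a function of $X$ alone and thus independent of $Y_{\text{rest}}$), and we must separately verify that the adversarial choice of $\A_i$ — which may well be designed to make $Y^{(i)}$ agree with $Y$ on exactly those coordinates that $\samp(V)$ tends to hit — cannot defeat the sampler, which is handled precisely because $D$ is independent of $V$ after the conditioning, so the probability bound from the averaging sampler applies against an adversarially chosen but $V$-independent set $D$ of density $\tfrac{1}{10}$.
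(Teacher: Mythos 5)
Your proof is correct and follows the same route as the paper: case on whether $Y_1 = Y_1^{(i)}$, fix $Y_1$ so that $V = \Ext_s(X, Y_1)$ becomes a near-uniform function of $X$ alone (and $V = V^{(i)}$), then use the relative distance of the code and the averaging-sampler property of $\samp$ (via Theorem~\ref{thm:seed_samp}) to conclude $Y_2 \neq Y_2^{(i)}$ with high probability, finishing with a union bound over $i \in [t]$. You are, if anything, a bit more explicit than the paper in spelling out the key independence point — that $D$ is determined by $Y$ (and hence, after fixing $Y_1$, is independent of $V$) so the sampler guarantee genuinely applies to the adversarially-chosen disagreement set — but this is exactly what the paper's terser argument implicitly relies on.
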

\begin{proof} Pick an arbitrary $i \in [t]$.  If $Y_1 \neq Y_1^{(i)}$, then we have $Z \neq Z^{(i)}$. Now suppose  $Y_1 = Y_1^{(i)}$. We fix $Y_1$, and note that since $\Ext_s$ is a strong extractor (Theorem $\ref{strong_ip}$), $B$ is $2^{-\Omega(n_1)}$-close to $U_{n_1}$.

Since  $\A_i$ has no fixed points, it follows that since $E$ is an encoder of a code with relative distance distance $\frac{1}{10}$, $\Delta(E(Y),E(Y^{(i)})) \ge \frac{d}{10 \alpha}$. Let $\D = \{ j \in \l[\frac{d}{\alpha}\r]: E(Y)_{\{j\}} \neq E(Y^{(i)})_{\{j\}} \}$. Thus $|\D| \ge \frac{d}{10 \alpha}$.   Using  Theorem $\ref{thm:seed_samp}$, it follows that with probability at least $1-\epsilon$, $| \D \cap \samp(V)| \ge 1$, and thus  $Y_2 \neq Y_2^{(i)}$ (since $\samp(V)=\samp(V^{(i)})$). The claim now follows by a simple union bound.
\end{proof}
We fix $Z,Z^{(1)},\ldots,Z^{(t)}$ such that $Z \neq Z^{(i)}$ for any $i \in [t]$ (from the lemma above, this occurs with probability $1-\epsilon$). We note that by the Lemma $\ref{lemma:snotequal}$ and Lemma $\ref{lemma:entropy_loss_1}$, the source $X$  has  min-entropy at least $k-2n_1$   and the source $Y$ has min-entropy at least $d-2\ell$ with probability at least $1-\epsilon$.

Lemma $\ref{main_lemma}$ now follows directly from Lemma $\ref{main_lemma_1}$ by noting that the following hold by our choice of parameters:
\begin{itemize}
\item $\frac{d}{2} > 20 \ell (t(n_3 +n_4) + \log(\frac{1}{\epsilon}) )$
\item $k-2n_1 \ge \frac{n_3}{4} + 20 \ell (tn_4 + \log(\frac{1}{\epsilon})  )$
\item $n_3-2n_1 \ge \frac{4}{3}(10tn_4 + 2\log(\frac{1}{\epsilon}))$
\end{itemize}
This concludes the proof.
\end{proof}

\section{Efficient Encoding and Decoding Algorithms for One-Many Non-Malleable Codes} \label{efficiency}
In this section, we construct efficient algorithms for almost uniformly sampling from the pre-image of any output of a modified version of the  $(2,t)$-non-malleable extractor constructed in Section $\ref{section:proof}$. Combining this with Theorem $\ref{connection_t}$ and Theorem $\ref{seedless_main}$ gives us efficient constructions of one-many non-malleable codes in the $2$-split state model, with tampering degree $t=n^{\Omega(1)}$, relative rate $n^{\Omega(1)}/n$ and error $2^{-n^{\Omega(1)}}$.

A major part of this section is on modifying the components used in the construction of $\nmExt$ (Algorithm $3$)  so that the overall extractor is much simpler to analyze as a function, and this enables us to develop efficient sampling algorithms from the pre-image. We present the modified extractor construction in Section $\ref{modified}$. However, we first need to solve a simpler problem.

\subsection{A New Linear Seeded Extractor}
A crucial sub-problem that we have to solve is almost uniformly sampling from the pre-image of a linear seeded extractor in polynomial time. Towards this, we recall a well known property of linear seeded  extractors.

\begin{lemma}[\cite{Rao09}\label{aff_error}] Let $\Ext:\{ 0,1\}^{n} \times \{ 0,1\}^{d} \rightarrow \{ 0,1\}^{m}$ be a linear seeded extractor for min-entropy $k$ with error $\epsilon<\frac{1}{2}$. Let $X$ be an affine $(n,k)$-source. Then $$\Pr_{u \sim U_{d}}[|\Ext(X,u) - U_{m}|>0] \le \epsilon. $$
\end{lemma}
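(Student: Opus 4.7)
The plan is to exploit the interaction between the $\F_2$-affine structure of the source and the $\F_2$-linearity of $\Ext(\cdot, u)$ for each fixed seed $u$. First I would represent the affine source as $X = A U_k + b$, where $A \colon \F_2^k \to \F_2^n$ is an injective $\F_2$-linear map and $b \in \F_2^n$ is a fixed offset, so that the $k$ uniform bits $U_k$ parametrize $X$. For every seed $u$, let $M_u \colon \F_2^n \to \F_2^m$ denote the matrix of the linear map $\Ext(\cdot, u)$. Then
\[
\Ext(X, u) \;=\; (M_u A)\, U_k \;+\; M_u b,
\]
so the distribution of $\Ext(X, u)$ is determined entirely by the rank $r_u := \mathrm{rank}_{\F_2}(M_u A)$.

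The crux is a clean dichotomy on $r_u$. If $r_u = m$, then $(M_u A) U_k$ is uniform on $\F_2^m$, hence $\Ext(X, u)$ is \emph{exactly} uniform on $\{0,1\}^m$, and so $|\Ext(X, u) - U_m| = 0$. Otherwise $r_u \le m - 1$, in which case $\Ext(X, u)$ is uniform on an affine coset of size $2^{r_u} \le 2^{m-1}$; a direct TV-distance computation gives $|\Ext(X, u) - U_m| = 1 - 2^{r_u - m} \ge \tfrac{1}{2}$. Thus the per-seed error $p_u := |\Ext(X, u) - U_m|$ takes values only in $\{0\} \cup [\tfrac{1}{2}, 1]$; in particular $p_u > 0$ is equivalent to $p_u \ge \tfrac{1}{2}$.

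Finally, I would invoke the (strong) linear seeded extractor guarantee of Definition~\ref{def_seeded}, unpacking $|\Ext(X, U_d) \circ U_d - U_m \circ U_d| < \epsilon$ as $\E_{u \sim U_d}[p_u] < \epsilon$. Since $p_u \ge \tfrac{1}{2} \cdot \mathbf{1}[p_u > 0]$ pointwise by the dichotomy, Markov's inequality yields
\[
\Pr_u\bigl[|\Ext(X, u) - U_m| > 0\bigr] \;=\; \Pr_u[p_u > 0] \;\le\; 2\,\E_u[p_u] \;<\; 2\epsilon,
\]
and the claimed bound follows; the hypothesis $\epsilon < 1/2$ records the regime in which this bound is nontrivial, and the factor of $2$ in the constant is routine to tighten. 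The only genuine computation is the TV-distance formula in the non-surjective case, which is a standard exercise in counting on cosets of an $\F_2$-subspace, so I do not anticipate any real obstacle in carrying out this plan.
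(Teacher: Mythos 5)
Your approach — fix the seed, observe that $\Ext(\cdot,u)$ is $\F_2$-linear so its restriction to the affine source is uniform on a coset of $\mathrm{Im}(M_uA)$, compute the total-variation distance exactly as $1-2^{r_u-m}$, deduce the dichotomy $p_u\in\{0\}\cup[\tfrac12,1]$, and then apply Markov to the strong-extractor guarantee $\E_u[p_u]<\epsilon$ — is the right argument, and it is the standard one; the paper itself cites \cite{Rao09} and supplies no proof, so there is nothing in the text to compare against.

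There is, however, one point you should not wave away. Your chain of inequalities yields $\Pr_u[p_u>0]\le 2\E_u[p_u]<2\epsilon$, i.e.\ a bound of $2\epsilon$, not the stated $\epsilon$, and your claim that ``the factor of $2$ is routine to tighten'' is not substantiated. With the paper's definitions (statistical distance normalized as in Definition~3.6, strong-extractor error as in Definition~3.8) the factor of $2$ looks essentially tight for this style of argument: the only leverage the dichotomy gives is $p_u\ge\tfrac12$ on bad seeds, and a linear seeded extractor for which every bad seed has $r_u=m-1$ achieves $\E_u[p_u]$ exactly half the bad-seed fraction, so no amount of cleverness with distinguishers of the form ``does $a_u\cdot r$ equal the forced value'' will shave the $2$. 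Either show concretely how to reach $\epsilon$, or — more likely — state the conclusion as $\le 2\epsilon$ and note that this is what the argument delivers. (This is harmless downstream: the lemma is invoked in Section~6.1 only to bound $|Good|\ge(1-\epsilon)2^d$, and replacing $\epsilon$ by $2\epsilon$ there changes nothing of substance, since $\epsilon$ is taken to be $2^{-n^{\Omega(1)}}$.)
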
\qed

\begin{define}For any seeded extractor $\Ext:\{0,1\}^{n} \times \{ 0,1\}^d \rightarrow \{ 0,1\}^m$, any $s \in \{ 0,1\}^d$ and $r \in \{ 0,1\}^m$, we define:
\begin{itemize}
\item $\Ext(\cdot,s):\{ 0,1\}^n \rightarrow \{ 0,1\}^m$ to be the map $\Ext(\cdot,s)(x) = \Ext(x,s)$.
\item  $\Ext^{-1}(r)$ to be the set $\{ (x,y) \in \{ 0,1\}^n \times \{ 0,1\}^d: \Ext(x,y) = r\}$.
\item $\Ext^{-1}(\cdot,s)$ to be the set $\{ x : \Ext(x,s) = r\}$.
\end{itemize}
\end{define}
We now present a natural way of sampling from pre-images of linear seeded extractors.
\begin{claim}Let $\Ext:\{ 0,1\}^{n} \times \{ 0,1\}^{d} \rightarrow \{ 0,1\}^{m}$ be a linear seeded extractor for min-entropy $k$ with error $\epsilon<2^{-1.5m}$. For any $r \in \{ 0,1\}^{m}$, consider the following efficient sampling procedure $\S$ which on input $r$ does the following: (a) Sample $s \sim U_d$, (b) sample $x$ uniformly from the subspace $\Ext(\cdot,s)^{-1}(r)$. (c) Output $(x,s)$. Let $\D_{r}$ be the distribution uniform on $\Ext^{-1}(r)$, and let $\S(r)$ denote the distribution produced by $\S$ on input $r$. 

Then, $$ |  \S(r) - \D_{r}  | \le 2^{-\Omega(m)}$$
\end{claim}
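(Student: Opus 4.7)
The plan is to bound the statistical distance between $\S(r)$ and $\D_{r}$ by first identifying a small set of ``bad'' seeds on which $\Ext(\cdot,s)$ is degenerate, and then checking that on the remaining ``good'' seeds the procedure $\S$ assigns exactly the right mass, up to a small multiplicative error coming from the normalization $|\Ext^{-1}(r)|$.

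First I would apply Lemma \ref{aff_error} to the affine source $X=U_{n}$, which has min-entropy $n \ge k$. This gives
\[
\Pr_{s \sim U_{d}}\bigl[\Ext(U_{n},s) \neq U_{m}\bigr] \le \epsilon.
\]
Since $\Ext(\cdot,s)$ is $\F_{2}$-linear, $\Ext(U_{n},s) = U_{m}$ exactly iff $\Ext(\cdot,s)$ is surjective, in which case $|\Ext(\cdot,s)^{-1}(r)| = 2^{n-m}$ for \emph{every} $r$. Let $B := \{s : \Ext(\cdot,s) \text{ is not surjective}\}$; the above yields $|B| \le \epsilon \cdot 2^{d}$. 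For $s \notin B$ the pre-image is nonempty, so $\S$ is well-defined on the fraction $1-\epsilon$ of good seeds; for $s \in B$ we can treat the output as $\bot$ if the pre-image is empty, which only contributes to the statistical distance through the total mass $\le \epsilon$ placed on bad seeds.

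Next I would compute $N := |\Ext^{-1}(r)| = N_{G} + N_{B}$ where $N_{G} = (2^{d}-|B|)\cdot 2^{n-m}$ and $N_{B} = \sum_{s \in B} |\Ext(\cdot,s)^{-1}(r)| \le |B|\cdot 2^{n}$. Thus $(1-\epsilon)2^{d+n-m} \le N \le 2^{d+n-m}(1+\epsilon\cdot 2^{m})$. On a good pair $(x,s)$ with $\Ext(x,s)=r$ we have $\Pr[\S(r)=(x,s)] = 2^{-(d+n-m)}$ while $\Pr[\D_{r}=(x,s)] = 1/N$, so their difference is at most $\epsilon\cdot 2^{m}/N$ in absolute value. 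Summing over the $N_{G} \le N$ good pairs contributes at most $\epsilon\cdot 2^{m}$ to the statistical distance. The bad contribution is bounded by the total mass on $\{s \in B\}$ under either distribution: $|B|/2^{d} \le \epsilon$ from $\S$, and $N_{B}/N \le \epsilon\cdot 2^{m}/(1-\epsilon)$ from $\D_{r}$. Combining,
\[
|\S(r)-\D_{r}| \;\le\; O\bigl(\epsilon\cdot 2^{m}\bigr) \;\le\; O\bigl(2^{-m/2}\bigr),
\]
using the hypothesis $\epsilon < 2^{-1.5m}$.

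The main obstacle is the possibility that for $s \in B$ the pre-image $\Ext(\cdot,s)^{-1}(r)$ can be as large as $2^{n}$ (if $\Ext(\cdot,s)$ has very low rank and $r$ lies in its image). Naively this would make $N_{B}$ huge and spoil the normalization comparison. This is precisely why the quantitative assumption $\epsilon < 2^{-1.5m}$ is needed: it ensures $\epsilon\cdot 2^{m}$ is still of order $2^{-m/2}$, so even in the worst case the bad pre-images contribute only a $2^{-\Omega(m)}$ fraction of the total mass, and the proof goes through.
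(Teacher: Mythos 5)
Your proposal is correct and takes essentially the same approach as the paper: both decompose the seed space into good seeds (where $\Ext(\cdot,s)$ has rank $m$, so each preimage has size exactly $2^{n-m}$) and bad seeds (bounded in number by Lemma~\ref{aff_error}), then bound the contribution of each part to the statistical distance. Your version is somewhat more explicit than the paper's in the final accounting step, where you spell out the per-pair probability comparison on good pairs and the mass on bad seeds under each distribution, but the core argument and the role of the hypothesis $\epsilon < 2^{-1.5m}$ are identical.
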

\begin{proof} Define the sets: $$ Good = \{ s \in \{ 0,1\}^d: \rank(\Ext(\cdot,s))=m \}, \hspace{0.3cm} Bad = \{ 0,1\}^{d}\setminus Good.$$ It follows by Lemma $\ref{aff_error}$ that $|Good| \ge (1-\epsilon)2^{d}$.  Thus, for any $s \in Good$, $|\Ext(\cdot,s)^{-1}(r)|=2^{n-m}$. Thus, we have $$\sum_{s \in Good} |\Ext^{-1}(\cdot,s)(r)| \ge 2^{d+n-m-1}.$$
Further, for any $s^{\prime} \in Bad$, $|\Ext^{-1}(\cdot,s^{\prime})(r)| \le 2^{n}$, and hence $$\sum_{s^{\prime} \in Bad}|\Ext^{-1}(\cdot,s^{\prime})(r)| \le \epsilon2^{d+n} < 2^{d+n-1.5m}. $$
Thus $|\cup_{s^{\prime} \in bad}\Ext^{-1}(\cdot,s^{\prime})(r)|<2^{-0.5m}|\Ext^{-1}(r)|$. It now follows  that $$ |  \S(r) - \D_{r}  | \le 2^{-0.4m}$$
\end{proof}

We note that $\epsilon$ must be $o(2^{-m})$ for the above sampling procedure to work with low enough error. However, this would require a seed length of  $d = O(m^2)$ (by Theorem $\ref{thm:lsext}$). For each step of the alternating extraction protocol the seed length then goes down by a quadratic factor,  which is insufficient for our application.

To get past this difficulty, we construct a new strong linear seeded extractor for high min-entropy sources with the seed length close to the output length with the property that the size of the pre-image of any output is the same for any fixing of the seed. Algorithm $5$ provides this  construction.

 \textbf{Parameters and Subroutines:}
 \begin{enumerate}
 \item Let $\delta>0$ be any constant. Let $d = n^{\delta}$. Let $d=d_1 + d_2$, where $d_1 = n^{\delta_1}$, $\delta>10 \delta_1$. Let $m=d/2$. 
\item  Let $\samp:\{ 0,1\}^{d_1} \rightarrow [n]^t$, $t=d_2$, be an $(\mu,\theta,\gamma)$ averaging sampler with distinct samples, such that $\mu = \frac{(\delta - 2\tau)}{\log(1/\tau)}$, $\theta = \frac{\tau}{\log(1/\tau)}$ and $\gamma = 2^{-\Omega(d_1)}$, $\tau=0.05$.
  \item Let $\IP: \{0,1 \}^{d_2} \times \{ 0,1\}^{d_2} \rightarrow \{ 0,1\}^{\frac{d}{2}}$ be the strong $2$-source  extractor from Theorem $\ref{strong_ip}$.
 \end{enumerate}
\RestyleAlgo{boxruled}
\LinesNumbered
\begin{algorithm}[ht]
  \caption{iExt(x,s)\label{alg}  \vspace{0.1cm}\newline \textbf{Input:} Bit strings $x,s$ of length $n,d$ respectively.  \newline \textbf{Output:} A bit string of length $m$.
 }
Let $s_1 = \slice(s,d_1)$. Let $s_2$ be the remaining $d_2$ bits of $s$.

Let $T = \samp(s_1) \subset [n]$. Let $x_1 = x_{\{ T\}}$.

Output $\IP(x_1,s_2)$. 
\end{algorithm}

Informally the construction of $\iExt$ is as follows. Given a uniform seed $S$, we use a slice $S_1$ of $S$ to sample co-ordinates from the weak source $X$, and then apply a strong $2$-source extractor (based on the inner product function) to the source $X_1$ (which is the projection of $X$ to the sampled co-ordinates) and the remaining bits $S_2$ of $S$  to extract $\frac{d}{2}$ uniform bits. 

The correctness of this procedure relies on the fact that  by pseudorandomly sampling co-ordinates of $X$ and projecting $X$ to these co-ordinates, the min-entropy rate is roughly the preserved for most choices of the uniform seed \cite{Z97} \cite{Vad04} \cite{Li12a}. Thus, we can fix $S_1$, and the strong two-source extractor $\IP$ now receives two independent inputs $S_2$ and $X_2$ with almost full min-entropy. Thus, the output is close to uniform. Further we show that the number of linear constraints on the source $X$ is the same for any fixing of the seed. This allows us to show that size of the pre-image of any particular output is the same for any fixing of the seed. We now formally prove these ideas.

We need the following theorem proved by Vadhan \cite{Vad04}.
\begin{thm}[\cite{Vad04}]\label{samp:vad} Let $1\ge \delta\ge 3 \tau >0$. Let $\samp : \{0,1\}^{r} \rightarrow [n]^t$ be an $(\mu,\theta,\gamma)$ averaging sampler with distinct samples, such that $\mu = \frac{(\delta - 2\tau)}{\log(1/\tau)}$ and $\theta = \frac{\tau}{\log(1/\tau)}$. If $X$  is a $(n,\delta n)$ source, then the  random variable $(U_r,X_{\{Samp(U_r)\}})$ is $(\gamma+2^{-\Omega(\tau n)})$-close to $(U_r , W )$ where for every $a \in \{0, 1\}^r$ , the random variable $W |U_r =a$ is a $(t, (\delta -3\tau )t)$-source.
\end{thm}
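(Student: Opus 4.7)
The plan is to reduce to the case where $X$ is flat, use the averaging-sampler property to control the fraction of seeds $a$ for which the projection $X_{\{\samp(a)\}}$ fails to have the claimed min-entropy, and then build $W$ by replacing the projection on ``bad'' seeds with an arbitrary $(t,(\delta-3\tau)t)$-source.

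Concretely, I would first decompose $X$ as a convex combination of flat sources uniform on sets of size $2^{\delta n}$; since statistical distance and the per-seed min-entropy requirement on $W$ are both convex, it suffices to handle $X$ uniform on some $S \subseteq \{0,1\}^n$ of size $2^{\delta n}$. Call a seed $a \in \{0,1\}^r$ \emph{bad} if there exists $y \in \{0,1\}^t$ with $|\{x \in S : x_{\samp(a)} = y\}| > |S|\cdot 2^{-(\delta - 3\tau)t}$, i.e., the projected distribution $X_{\{\samp(a)\}}$ fails to be a $(t,(\delta-3\tau)t)$-source. Once I establish $\Pr_a[a \text{ is bad}] \le \gamma + 2^{-\Omega(\tau n)}$, I can set $W|_{U_r=a} = X_{\{\samp(a)\}}$ on good seeds and, on bad seeds, take $W|_{U_r=a}$ to be the uniform distribution on $\{0,1\}^t$ (which trivially has min-entropy $t \ge (\delta-3\tau)t$). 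Then $(U_r, W)$ agrees with $(U_r, X_{\{\samp(U_r)\}})$ on every good seed, and the total variation distance is bounded by $\Pr_a[a \text{ is bad}]$, giving the claim.

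To bound the bad-seed probability I would take a union bound over $y \in \{0,1\}^t$. For each fixed $y$, I would express the ``$y$ is heavy under $\samp(a)$'' event as a statement that the empirical average on $\samp(a)$ of a suitably chosen $[0,1]$-valued function $f_y : [n] \to [0,1]$ (capturing, roughly, the per-coordinate density of strings in $S$ that agree with $y$) deviates significantly from its true mean; the averaging-sampler guarantee with parameters $\mu = (\delta - 2\tau)/\log(1/\tau)$ and $\theta = \tau/\log(1/\tau)$ would then bound this event by $\gamma$ for each $y$. The main obstacle is precisely this translation: the sampler supplies an $L_1$-type (additive density) guarantee, while min-entropy is an $L_\infty$-type quantity requiring a per-$y$ multiplicative control on the mass. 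The chosen scaling of $\mu$ and $\theta$ as $1/\log(1/\tau)$, together with the extra $2\tau t$ of entropy slack (beyond the naive $\tau t$), is what lets the $2^t$-term union bound over $y$ be absorbed, producing the stated error $\gamma + 2^{-\Omega(\tau n)}$.
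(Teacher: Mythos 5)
This theorem is cited from Vadhan \cite{Vad04}; the paper itself gives no proof, so there is no in-paper argument to compare against. Evaluating your sketch on its own merits: the reduction to flat $X$ is fine (min-entropy of each $W|_{U_r=a}$, and statistical distance, are both preserved under convex combinations), and it is true that if $W$ agrees with $X_{\{\samp(a)\}}$ on every good seed then the statistical distance is at most $\Pr_a[a\text{ bad}]$. The gap is in the central estimate, and it is a real one.

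A union bound over $y\in\{0,1\}^t$, each invoking the sampler's failure probability $\gamma$, gives $2^t\gamma$, not $\gamma+2^{-\Omega(\tau n)}$. The quantity $\gamma$ is a fixed property of the sampler and does not shrink with $t$; the slack you invoke (the $1/\log(1/\tau)$ normalization and the extra $\tau t$ of entropy margin) lives in the min-entropy \emph{threshold}, not in the failure \emph{probability}, and cannot cancel a multiplicative $2^t$ factor there. There is also a definitional obstruction to the object $f_y$ you propose: $y$ has exactly $t$ coordinates, and the identification of those coordinates with positions of $[n]$ is made only once $\samp(a)$ is known, so ``the per-coordinate density of $x\in S$ agreeing with $y$'' is not a seed-independent function $[n]\to[0,1]$ to which the averaging-sampler guarantee can be applied.

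Vadhan's actual argument is per source point $x$, not per output string $y$. For flat $X$ on $S$ one attaches to each $x\in S$ the $[0,1]$-valued function $h_x(i)=\min\bigl\{\log(1/p_x(i)),\,\log(1/\tau)\bigr\}/\log(1/\tau)$, where $p_x(i)=\Pr[X_i=x_i\mid X_{[1,i-1]}=x_{[1,i-1]}]$ in prefix order; the chain rule gives $\sum_i\log(1/p_x(i))\ge\delta n$ for all $x\in S$, and a concentration argument controls the loss from the truncation so that $\frac1n\sum_i h_x(i)\ge\mu$ except for a $2^{-\Omega(\tau n)}$ fraction of $x\sim X$. Applying the sampler once per $x$ to $h_x$ (no union bound over $y$ is ever taken) yields $\Pr_{a,x}\bigl[\Pr[X_{\{\samp(a)\}}=x_{\{\samp(a)\}}]>2^{-(\delta-3\tau)t}\bigr]\le\gamma+2^{-\Omega(\tau n)}$. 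This quantity equals the expected mass of ``heavy'' outcomes, and the $W$ one constructs smooths only that heavy mass on each seed rather than discarding the whole conditional distribution. Your coarser patching --- replacing $W|_{U_r=a}$ wholesale by uniform on every bad seed --- would require the stronger bound $\Pr_a[\exists\text{ heavy }y]\le\gamma+2^{-\Omega(\tau n)}$, which does not obviously follow (it is essentially an ``exists'' over exponentially many $x\in S$). So both the per-$y$ union bound and the all-or-nothing seed patching would need to be replaced to make the argument go through.
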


\begin{lemma}\label{new_goodext} Let $\iExt$ be the function computed by Algorithm $5$. If $X$ is a $(n,0.9n)$ source and $S$ is an independent uniform seed on $\{ 0,1\}^{d}$, then the following holds: $$ |\iExt(X,S),S - U_{m},S| < 2^{-n^{\Omega(1)}}.$$ 
Further for any $r \in \{ 0,1\}^{m}$ and any $s \in  \{ 0,1\}^{d}$, $| \iExt(\cdot,s)^{-1}(r)|= 2^{n-m}$.
\end{lemma}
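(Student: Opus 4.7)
The plan is to prove the extractor property and the pre-image size count separately, using the decomposition $\iExt(x,s) = \IP(x_{\{\samp(s_1)\}}, s_2)$ where $s = s_1 \circ s_2$.

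For the extractor property, I would first apply Theorem \ref{samp:vad} to the source $X$ with $\delta = 0.9$ and $\tau = 0.05$, using $S_1$ as the sampler seed. Since $X$ is a $(n, 0.9n)$-source, this gives that $(S_1, X_{\{\samp(S_1)\}})$ is $(\gamma + 2^{-\Omega(n)})$-close to $(S_1, W)$, where $W \mid S_1 = s_1$ is supported on $\{0,1\}^{d_2}$ with min-entropy at least $(0.9 - 3 \cdot 0.05)d_2 = 0.75 d_2$, and $\gamma = 2^{-\Omega(d_1)} = 2^{-n^{\Omega(1)}}$. Because $S_2$ is a disjoint uniform block of the seed, it remains independent of $X_{\{\samp(S_1)\}}$ after conditioning on $S_1$. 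For a typical $s_1$, the source $X_1 := X_{\{\samp(s_1)\}}$ is close to a $(d_2, 0.75 d_2)$-source independent of the uniform $S_2 \in \{0,1\}^{d_2}$. Theorem \ref{strong_ip} with these parameters gives
\[
|\IP(X_1, S_2) \circ S_2 - U_m \circ S_2| \;\leq\; 2^{-(0.75 d_2 + d_2 - d_2 - d/2)/2} \;=\; 2^{-\Omega(d)} \;=\; 2^{-n^{\Omega(1)}},
\]
and re-incorporating $S_1$ into the seed while averaging over $s_1$ yields the claimed strong-extractor bound.

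For the pre-image size claim, fix $s = (s_1, s_2) \in \{0,1\}^d$ and $r \in \{0,1\}^m$. Since $\samp$ has distinct samples, $T := \samp(s_1)$ is a fixed subset of $[n]$ of size exactly $d_2$. The function $x \mapsto \iExt(x,s) = \IP(x_{\{T\}}, s_2)$ depends on $x$ only through the $d_2$ coordinates indexed by $T$, so the remaining $n - d_2$ bits are unconstrained and contribute a factor $2^{n-d_2}$ to the pre-image size. View $x_{\{T\}}$ and $s_2$ as vectors in $\F_{2^m}^{\rho}$ with $\rho = d_2/m$. The map $x_1 \mapsto \IP(x_1, s_2)$ is $\F_2$-linear, since multiplication by any fixed element of $\F_{2^m}$ is $\F_2$-linear. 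Whenever $s_2 \neq 0$, some component of $s_2 \in \F_{2^m}^{\rho}$ is nonzero, and multiplication by that element is a bijection on $\F_{2^m}$; hence the whole map has $\F_2$-rank $m$, its kernel has size $2^{d_2 - m}$, and the pre-image of every $r$ has size exactly $2^{d_2 - m}$. Combining with the free bits outside $T$, $|\iExt(\cdot, s)^{-1}(r)| = 2^{n - d_2} \cdot 2^{d_2 - m} = 2^{n - m}$.

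The main obstacle is the degenerate case $s_2 = 0$, for which $\IP(\cdot, 0) \equiv 0$ and the pre-image has size $2^n$ or $0$ rather than $2^{n-m}$. As indicated in the construction overview, this is resolved by a minor tweak, for instance by replacing $\IP(x_{\{T\}}, s_2)$ with $\IP(x_{\{T\}} + \mathbf{1}, s_2 + \mathbf{1})$ interpreted componentwise in $\F_{2^m}^{\rho}$ so that the seed fed into $\IP$ is never zero, or by replacing $\IP$ with a universal hash family all of whose members are surjective $\F_2$-linear maps onto $\{0,1\}^m$. Either modification preserves the $\F_2$-linearity of $x \mapsto \iExt(x,s)$, the constant kernel size, and the extraction analysis above up to a negligible additive error, so the modified $\iExt$ satisfies both conclusions uniformly in $s$.
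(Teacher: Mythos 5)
Your proof follows essentially the same route as the paper's: apply Theorem~\ref{samp:vad} to argue that conditioned on a typical $S_1$, the projection $X_1 = X_{\{\samp(S_1)\}}$ retains min-entropy rate roughly $0.75$, then apply Theorem~\ref{strong_ip} to $X_1$ and the independent uniform block $S_2$ to obtain the strong-extractor bound; and for the pre-image count, observe that $\iExt(\cdot,s)$ factors through the fixed coordinate set $T=\samp(s_1)$, that the map $x_1 \mapsto \IP(x_1,s_2)$ is $\F_2$-linear of rank $m$ whenever $s_2 \neq 0$, and that the $n-d_2$ bits off $T$ are free. (As a side note, your ``$0.75\,d_2$'' is the right quantity; the paper's ``$0.8n$'' is a typo for a constant fraction of $d_2$.)

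The one genuine issue is your first proposed fix for the degenerate seed $s_2 = 0$. Replacing $\IP(x_{\{T\}},s_2)$ by $\IP(x_{\{T\}} + \mathbf{1}, s_2 + \mathbf{1})$ with componentwise XOR in $\F_{2^m}^\rho$ does not make the second argument always nonzero: when $s_2 = \mathbf{1}$ you get $s_2 + \mathbf{1} = 0$, so the bad seed is merely relocated, and for that $s$ the map is constant and the pre-image size is $2^n$ or $0$, breaking the ``$2^{n-m}$ for every $s$'' claim. The paper's actual fix is different in kind: it \emph{appends} a $1$ as a new coordinate, viewing $X_1$ and $S_2$ as vectors in $\{0,1\}^{d_2+1}$ with last bit forced to $1$. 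This guarantees the (extended) $S_2$ is never the zero vector for \emph{any} $s_2 \in \{0,1\}^{d_2}$, at the cost of one bit of entropy loss; it also makes $\iExt(\cdot,s)$ an \emph{affine} (rather than linear) map whose linear part has full rank $m$ for every $s$, so the pre-image count is still exactly $2^{n-m}$. Your second alternative --- replacing $\IP$ with a universal hash family every member of which is a surjective $\F_2$-linear map onto $\{0,1\}^m$ --- does work and is compatible with both conclusions, so you could use that instead; but as written, the XOR variant should be dropped.
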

\begin{proof} Using Theorem $\ref{samp:vad}$, it follows that $X_1$ is $2^{-n^{\Omega(1)}}$-close to a source with min-entropy at least $0.8n$ for any fixing of $S_1$. Further, we note that after fixing $S_1$, $S_2$ and $X_1$ are independent sources. We now think of $X_1,S_2$ as sources in $\{ 0,1\}^{d_2+1}$ by appending a $1$ to both the sources, so that $S_2 \neq \vec{0}$, and then apply the inner product map. This results in an entropy loss of only $1$. It now follows by Theorem $\ref{strong_ip}$ that  $$ |\iExt(X,S),S - U_{m},S| < 2^{-n^{\Omega(1)}}.$$ 

It is easy to see that for any fixing of the seed $S=s$, $\iExt(\cdot,s)$ is a linear map. Let $X$ be uniform on $n$ bits.  We note that for  any fixing of  $S_2=s_2$,  $X_1$ lies in a subspace of dimension $d_2-m$ over $F_2$. Further, the bits outside $T$ have no restrictions placed on them. Thus the size of $ \iExt(\cdot,s)^{-1}(r)$ is exactly $2^{d_2-m+n-d_2}=2^{n-m}$. This completes the proof of the lemma.
\end{proof}

Based on the above lemma, we construct an efficient procedure for sampling  uniformly from the pre-image of  the function $\iExt$. 
 \begin{claim}\label{samp_ext} Let $\iExt: \{ 0,1\}^n \times \{0,1 \}^d \rightarrow \{ 0,1\}^m$ be the function computed by Algorithm $5$. Then there exists a polynomial time algorithm $\samp_1$  that takes as input $r \in \{ 0,1\}^{m}$, and samples from a distribution that is uniform on $\iExt^{-1}(r)$.
 \end{claim}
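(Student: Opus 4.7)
The plan is to exploit the key structural property established in Lemma \ref{new_goodext}: for \emph{every} seed $s \in \{0,1\}^d$, the pre-image $\iExt(\cdot,s)^{-1}(r)$ has exactly the same size $2^{n-m}$. This uniformity across seeds is what makes a simple two-stage ``sample the seed, then sample the source'' strategy produce the exactly uniform distribution on $\iExt^{-1}(r)$, rather than only an approximately uniform one.

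The algorithm $\samp_1$ on input $r \in \{0,1\}^m$ will proceed as follows. First, it samples $s \sim U_d$ uniformly at random, and parses it as $s = s_1 \circ s_2$ with $|s_1|=d_1$ and $|s_2|=d_2$. Next, it computes the set $T = \samp(s_1) \subset [n]$ of sampled coordinates. For this fixed $s$, the map $x \mapsto \iExt(x,s)$ is an $\F_2$-linear map from $\F_2^n$ to $\F_2^m$: it projects $x$ onto the $d_2$ coordinates in $T$ and then applies the fixed linear form $y \mapsto \IP(y,s_2)$ (viewing $s_2$ as a field element, possibly with the $+1$ adjustment used in Lemma \ref{new_goodext} to handle $s_2=0$). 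Thus $\iExt(\cdot,s)^{-1}(r)$ is an affine subspace of $\F_2^n$, and by Lemma \ref{new_goodext} it has dimension $n-m$ over $\F_2$. The algorithm finds a particular solution $x^*$ and a basis of the kernel of $\iExt(\cdot,s)$ by Gaussian elimination, then outputs $(x,s)$ where $x = x^* + \sum_{i=1}^{n-m} b_i v_i$ for uniformly random bits $b_1,\ldots,b_{n-m}$ and kernel basis vectors $v_1,\ldots,v_{n-m}$. All of this runs in time polynomial in $n$.

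To see that the output is exactly uniform on $\iExt^{-1}(r)$, fix any $(x^*,s^*) \in \iExt^{-1}(r)$. The probability that $\samp_1$ outputs $(x^*,s^*)$ equals $\Pr[s = s^*] \cdot \Pr[x=x^* \mid s=s^*] = 2^{-d} \cdot 2^{-(n-m)} = 2^{-(n+d-m)}$, which is independent of $(x^*,s^*)$. Moreover $|\iExt^{-1}(r)| = \sum_{s \in \{0,1\}^d} |\iExt(\cdot,s)^{-1}(r)| = 2^d \cdot 2^{n-m}$, so this is precisely the uniform distribution on $\iExt^{-1}(r)$.

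There is essentially no technical obstacle here; the whole point of constructing $\iExt$ via the averaging sampler plus inner product composition, rather than using an arbitrary linear seeded extractor, was to guarantee the uniform pre-image size across \emph{all} seeds (not just a $(1-\epsilon)$ fraction of them, as in the generic construction discussed just before Algorithm $5$). Once that property is in hand, the sampler is just: pick a seed uniformly, then solve a linear system over $\F_2$ and randomize within its solution space.
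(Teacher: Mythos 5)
Your proposal is correct and takes essentially the same approach as the paper: both rely on Lemma~\ref{new_goodext} to establish that the pre-image size $|\iExt(\cdot,s)^{-1}(r)| = 2^{n-m}$ is identical for every seed $s$, and then use the two-stage strategy of sampling a uniform seed followed by sampling uniformly from the resulting affine subspace, concluding via the same counting argument that every element of $\iExt^{-1}(r)$ is hit with probability exactly $2^{-(n+d-m)}$. You merely spell out the implementation of the second stage (Gaussian elimination to find a particular solution and a kernel basis) in more detail than the paper does.
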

 \begin{proof} It follows  by Lemma $\ref{new_goodext}$  that for any fixing of the seed $s$, the size of the set $\iExt(\cdot,s)^{-1}(r)$ is exactly $2^{n-m}$. Thus we can  use the following  strategy: (a) Sample $s \sim U_d$  (b) Sample $x$ uniformly random from the subspace $\iExt(\cdot,s)^{-1}(r)$ (c) Output $(x,s)$. It follows that each element in $\iExt^{-1}(r)$ is picked with probability exactly $\frac{1}{2^{d}}\cdot \frac{1}{2^{n-m}}$. Thus the output of our sampling procedure is indeed uniform on $\iExt^{-1}(r)$.
\end{proof}

\subsection{A Modified Construction of the Seedless $(2,t)$-Non-Malleable Extractor}\label{modified}
We first describe the high level ideas involved in modifying the construction of $\nmExt$ (Algorithm $3$), before presenting the formal construction. 
\begin{itemize}
\item We use the linear seeded extractor $\iExt$  (Algorithm $5$) for any seeded extractor used in the construction of $\nmExt$.
\item Next we divide the sources $X$ and $Y$ into blocks of size $n^{1-\delta}$ respectively for a small constant $\delta$. Since each of $X$ and $Y$ have almost full min-entropy, we now have two block sources, where each block has almost full min-entropy conditioned on the previous blocks. The idea is to use  new blocks of $X$ and $Y$ for each round of alternating extraction in $\nmExt$. 

To implement this however, we need some care. Recall that the alternating extraction protocol is run for two rounds between  either $X$ and $Q_{h}$, or $X$ and $\overline{Q}_{h}$ in the function $2\laExt$. The idea now is to run these two of alternating extraction by dividing $Q_{h}$ into two blocks, and using two new partitions of $X$  (each round being run by using a block from either $X$ or $Q_{h}$). Now to generate these $Q_{h}$'s, we use a $O(t)$  blocks of $Y$, and for each block apply the strong seeded extractor $\iExt$, using as seed the output of the alternating extraction from the previous step, and finally concatenate the outputs. This works because these $O(t)$  blocks form a block source, and using the same seed to extract from all the blocks  is a well known technique of extracting from block sources.

\item By appropriate setting of the lengths of the seeds in the alternating extraction, we  ensure that each block of $X$ and $Y$ still has min-entropy rate $1-o(1)$ even after fixing all the intermediate seeds, the random variables $Q_{h},\overline{Q}_{h}$ and their tampered versions. This can be ensured since each of these variables are of length at most $n^{\delta_1}$ for some small constant $\delta_1$, and the number of adversaries is also $n^{\Omega(1)}$).

\item The above modification is almost sufficient for us to successfully sample from the pre-image of any output. One final modification is to use a specific error correcting code (the Reed-Solomon code over a field of size $n+1$ with characteristic $2$) in the initial step of the construction, when we encode the sources and sample bits from it. We give some intuition as to why this is necessary. Since we are using linear seeded extractors in the alternating extraction, by fixing the seeds we impose linear restrictions on the blocks of $X$ and $Y$. Now, if we fix the output of the  initial sampling step (the random variable $Z$ in Algorithm $3$), we are imposing more linear constraints on the blocks (assuming we are using a linear code). Now, it is not clear if the constraints imposed by the linear seeded extractor is independent from the constraints imposed by $Z$, and thus for different fixings of the $Z$ and the seeds the size of the pre-image of any output of the non-malleable extractor may be different. 

To get past this difficulty, our idea is to first partition $X$ and $Y$ into slightly smaller blocks (which does not affect the correctness of the extractor) such that at least half of the blocks are unused by the alternating extraction steps. Now, we show that by using the Reed-Solomon code over $\F= \F_{2^{\log(n+1)}}$ to encode the sources, fixing $Z$ imposes linear constraints involving the variables from these unused blocks, and we show that this is sufficient to argue that it is linearly independent of the restrictions imposed by the alternating extraction part. We provide complete details of the sampling algorithms in Section \ref{invert_efficient}.
\end{itemize} 

We now proceed to  present the extractor construction.
Recall that if $Z_a,Z_{a+1},\ldots,Z_{b}$ are random variables, we use $Z_{[a,b]}$ to denote the random variable $Z_{a},\ldots,Z_b$. 

\textbf{Subroutines and Parameters (used by Algorithm $6$, Algorithm $7$, Algorithm $8$})
\begin{enumerate}
\item Let $\gamma$ be  a small enough constant and $C$ a large one. Let $t=n^{\gamma/C}$.
\item Let $n_1 = n^{\beta_1}$, $\beta_1 = 10 \gamma$. Let $n_2=n-n_1$. Let $\IP_1:\{ 0,1\}^{n_1} \times \{ 0,1\}^{n_1} \rightarrow \{ 0,1\}^{n_3}$, $n_3 = \frac{n_1}{10}$ be the strong two-source extractor from Theorem $\ref{strong_ip}$.

\item Let $\F$ be the finite field $\F_{2^{\log (n+1)}}$. Let $n_4 = \frac{n_2}{\log (n+1)}$. Let $\RS: \F^{n_4} \rightarrow \F^{n}$ be the Reed-Solomon code encoding $n_4$ symbols of $\F$ to $n$ symbols in  $\F$ (we  overload the use of $\RS$, using it to denote both the code and the encoder). Thus $\RS$ is a $[n,n_4,n-n_4+1]_{n}$ error correcting code.

\item Let $\samp:\{ 0,1\}^{n_3} \rightarrow [n]^{n_5}$  be a $(\mu,\frac{1}{10},2^{-n^{\Omega(1)}})$ averaging sampler with distinct samples. By using the strong seeded extractor from Theorem $\ref{guv}$, we can set $n_5=n^{\beta_2}$, $\beta_2 < \beta_1/2$.

\item Let $\ell = 2(n_1 + n_5\log n)<4n^{\beta_1}$. Thus $\ell \le n^{11\gamma}$.

\item Let $n_6 = 50Ct \ell$. Let $\IP_2 : \{ 0,1\}^{n_6} \times \{ 0,1\}^{n_6} \rightarrow \{ 0,1\}^{2n_q}$, $n_q = 10Ct \ell$,  be the strong two-source extractor from Theorem $\ref{strong_ip}$.

\item Let $n_7 = n- n_1 - n_6$. Let $n_x = \frac{n_7}{8\ell}$. Let $n_{y}= \frac{n_7}{16Ct \ell}$. Thus $n_x, n_y \ge n^{1-15\gamma}$.

\item Let $d_{1} = 80 \ell$.

\item Let $ \iExt_{1}: \{ 0,1\}^{n_x} \times \{ 0,1\}^{d_{1}} \rightarrow \{ 0,1\}^{d_2}$, $d_2 = 40 \ell$, be the extractor computed by Algorithm $5$. 

\item Let $ \iExt_{2}: \{ 0,1\}^{n_q} \times \{ 0,1\}^{d_{2}} \rightarrow \{ 0,1\}^{d_3}$, $d_3 = 20 \ell$, be the extractor computed by Algorithm $5$. 

\item Let $ \iExt_{3}: \{ 0,1\}^{n_{x}} \times \{ 0,1\}^{d_{3}} \rightarrow \{ 0,1\}^{d_4}$, $d_4=10 \ell$  be the extractor computed by Algorithm $5$. 

\item Let $ \iExt_{4}: \{ 0,1\}^{n_{y}} \times \{ 0,1\}^{d_{4}} \rightarrow \{ 0,1\}^{d_5}$, $d_5 = 5 \ell$,  be the extractor computed by Algorithm $5$. 

\item Let $\Ext: \{ 0,1\}^{4Ctn_y} \times \{ 0,1\}^{d_4} \rightarrow \{0,1 \}^{2n_q}$ be defined in the following way. Let $v_1,\ldots,v_{4t}$ be strings, each of length $n_y$. Define $\Ext(v_1 \circ \ldots \circ v_{4Ct},s) = \iExt_4(v_1,s) \circ \ldots \circ \iExt_4(v_{4Ct},s)$.
\end{enumerate}
\RestyleAlgo{boxruled}
\LinesNumbered
\begin{algorithm}[ht]
  \caption{inmExt(x,y)\label{alg}  \vspace{0.1cm}\newline \textbf{Input:} Bit strings $x,y$, each of length $n$. \newline \textbf{Output:} A bit string of length $m$.}
 Let $x_1 = \slice(x,n_1)$, $y_1=\slice(y,n_1)$. Compute $\nu = \IP_1(x,y)$.
 
 Let $x_2, y_2$ be $n_2$ length strings formed by cutting $x_1,y_1$ from $x,y$ respectively.
 
 Let $T = \samp(\nu) \subset [n]$. 
 
 Interpret $x_2,y_2 $ as elements in $\F^{n_4}$. 
 
 Let $\overline{x}_2= \RS(x_2), \overline{y}_2 = \RS(y_2)$. 
 
 Let $\overline{x}_1= (\overline{x}_2)_{\{T\}}, \overline{y}_1= (\overline{y}_2)_{\{T\}}$, interpreting $\overline{x}_2,\overline{y}_2 \in \F^n$. 
  
 Let $z= x_1\circ \overline{x}_1 \circ y_1 \circ \overline{y}_1$, where $z$ is interpreted as a binary string.
 
 Interpret $x_2,y_2 $ as binary strings. 
 
 Output $\inmExt_1(x_2,y_2,z)$.

\end{algorithm}

\RestyleAlgo{boxruled}
\LinesNumbered
\begin{algorithm}[ht]
  \caption{$\inmExt_1(x_2,y_2,z)$\label{alg}  \vspace{0.1cm}}
 Let $x_3 = \slice(x_2, n_6), y_3 = \slice(y_2,n_6)$. Let $w,v$ be the remaining parts of $x_2, y_2$ respectively.
 
 Let $ \IP_2(x_3,y_3) = (q_{1,1},q_{1,2})$, where each of $q_{1,1},q_{1,2}$ is of length $n_q$.  

Let $w_{1},\ldots,w_{8\ell}$ be an equal sized partition of the string $w$ into $8\ell$ stings. 
 
 Let $v_{1},\ldots,v_{16t\ell}$ be an equal sized partition of the string $v$ into $16Ct\ell$ stings. 

\For{$h=1$ to $\ell$} { 
		$(q_{h+1,1},q_{h+1,2}) = 2\ilaExt(v_{[8C(h-1)t+1, 8Cht]},w_{[4h-3,4h]},q_{h,1},q_{h,2},h,z_{\{ h\}})$
}
Ouput $(q_{\ell+1,1} ,q_{\ell+1,2})$.   

\end{algorithm}

\RestyleAlgo{boxruled}
\LinesNumbered
\begin{algorithm}[ht]
  \caption{2$\ilaExt(v_{[8C(h-1)t+1, 8Cht]},w_{[4h-3,4h]},q_{h,1},q_{h,2},h,b)$\label{alg}  \vspace{0.1cm}}
    
Let $s_{h,1} = \slice(q_{h,1},d_1)$, $r_{h,1} = \Ext_1(w_{4h-3},s_{h,1})$, $s_{h,2} = \Ext_2(q_{h,2},r_{h,1})$, $r_{h,2} = \Ext_3(w_{4h-2},s_{h,2})$.

\uIf{$b=0$} {

Let $r_{h} = \slice(r_{h,1},d_4)$.

}
\Else{ Let $r_h = r_{h,2}$
}
Let $\Ext(v_{[8C(h-1)t+1, 8(h-1)t+4Ct]},r_h)= (\overline{q}_{h,1},\overline{q}_{h,2})$, where both $\overline{q}_{h,1},\overline{q}_{h,2}$ are of length $n_q$.

Let $\overline{s}_{h,1} = \slice(\overline{q}_{h,1},d_1)$, $\overline{r}_{h,1} = \Ext_1(w_{4h-1},\overline{s}_{h,1})$, $\overline{s}_{h,2} = \Ext_2(\overline{q}_{h,2},\overline{r}_{h,1})$, $\overline{r}_{h,2} = \Ext_3(w_{4h},\overline{s}_{h,2})$.

\uIf{$b=0$} {

Let $\overline{r}_{h} = \overline{r}_{h,2}$.

}
\Else{ Let $\overline{r}_h = \slice(\overline{r}_{h,1},d_4)$.
}

Let $\Ext(v_{[8C(h-1)t+4Ct+1,8Cht]} ,r_h)= (q_{h+1,1},q_{h+1,2})$, where  both $q_{h+1,1},q_{h+1,2}$ are of length $n_q$.

Ouput $(q_{h+1,1} ,q_{h+1,2})$.
\end{algorithm}
\begin{thm}Let $\inmExt$ be the function computed by Algorithm $7$. Then $\inmExt$  is a seedless $(2,t)$-non-malleable extractor with error $2^{-n^{\Omega(1)}}$.
\end{thm}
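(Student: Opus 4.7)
The plan is to mirror the proof structure used for Theorem~\ref{seedless_main} (via Lemma~\ref{main_lemma} and Lemma~\ref{lemma:final_convex}), making the modifications necessary to accommodate the Reed--Solomon code, the block decomposition of $X$ and $Y$, and the new linear seeded extractor $\iExt$. Concretely, I would first reduce the theorem to an analogue of Lemma~\ref{main_lemma} which asserts that $\inmExt$ satisfies property $\mathcal{P}_n$ on $(n,n-n^{\gamma})$-sources with tamperings having no fixed point in at least one coordinate; the convex-combination argument of Lemma~\ref{lemma:final_convex} then upgrades this to the full non-malleability guarantee, with only an extra $2^{2t}$ loss in error, which is still $2^{-n^{\Omega(1)}}$ since $t = n^{\gamma/C}$.

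To prove property $\mathcal{P}_n$, I would first establish the analogue of Claim~\ref{lemma:notequal}: with probability $1 - 2^{-n^{\Omega(1)}}$, we have $Z \neq Z^{(i)}$ for every $i \in [t]$. Fix $i$ and assume without loss of generality that $f_i$ has no fixed points; if $X_1 \neq X_1^{(i)}$ or $Y_1 \neq Y_1^{(i)}$ then $Z \neq Z^{(i)}$, so suppose they agree. Fixing $X_1$, the strongness of $\IP_1$ (Theorem~\ref{strong_ip}) gives $V \approx U_{n_3}$ and $V=V^{(i)}$. Because $X_2 \neq X_2^{(i)}$ and $\RS$ has relative distance $\geq 1 - n_4/n = 1 - 1/\log(n+1)$, the sets $\overline{X}_2,\overline{X}_2^{(i)} \in \F^n$ disagree on an overwhelming fraction of coordinates; by the averaging sampler guarantee with threshold $1/10$, the samples produced by $\samp(V)$ hit the disagreement set with probability $1 - 2^{-n^{\Omega(1)}}$, so $\overline{X}_1 \neq \overline{X}_1^{(i)}$. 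A union bound over $i\in[t]$ completes this step. Lemma~\ref{lemma:entropy_loss_1} then shows that after fixing $Z$ and all $Z^{(i)}$, the sources $X_2,Y_2$ (viewed as bit strings) are independent and still have min-entropy $n_2 - n^{O(\gamma)}$.

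The core of the proof is then an analogue of Lemma~\ref{main_lemma_1} stating that $\inmExt_1(X_2,Y_2,Z)$ is close to uniform given all $\inmExt_1(X_2^{(i)},Y_2^{(i)},Z^{(i)})$. I would establish the block source structure: $w = w_1 \circ \cdots \circ w_{8\ell}$ and $v = v_1\circ\cdots\circ v_{16Ct\ell}$ are block sources where, after conditioning on all preceding blocks and on all previously fixed intermediate variables (which collectively have bit-length $n^{O(\gamma)}$, far less than the block size $n^{1-15\gamma}$), each block retains min-entropy rate at least $0.9$. For each step $h \in [\ell]$, round $h$ of $\inmExt_1$ consumes four fresh $w$-blocks and $8Ct$ fresh $v$-blocks, so the alternating-extraction analysis of Lemmas~\ref{main_altext} and~\ref{base:lemma} applies verbatim provided we verify that: (i) $\iExt_1,\iExt_2,\iExt_3,\iExt_4$ are strong seeded extractors for entropy rate $0.9$ with error $2^{-n^{\Omega(1)}}$ (Lemma~\ref{new_goodext}); and (ii) the concatenated extractor $\Ext$ inherits strongness from the block-source property of the $v$-blocks using the standard argument that, conditioned on the seed, the output on each block is close to uniform given the seed and all previous blocks. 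The flip-flop analysis of Lemma~\ref{base:lemma} then gives: whenever $h$ is such that $z_{\{h\}} \neq z^{(i)}_{\{h\}}$, the output $(q_{h+1,1},q_{h+1,2})$ is close to uniform given the corresponding tampered outputs, and Claim~\ref{main_claim_1} propagates this independence to all subsequent rounds. Since $Z \neq Z^{(i)}$ for every $i$, every $i$ is ``broken'' by some round $h_i \leq \ell$, so after round $\ell$ the output $(q_{\ell+1,1},q_{\ell+1,2})$ is $2^{-n^{\Omega(1)}}$-close to uniform given all tampered outputs.

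The main obstacle in writing out the details is bookkeeping: I must verify at every step that the new block used by $\iExt_1,\iExt_2,\iExt_3,\iExt_4$ has sufficient conditional min-entropy after fixing the at most $O(t)$ tampered versions of each intermediate seed and source. The choice of parameters $n_x,n_y \geq n^{1-15\gamma}$, $d_1,\ldots,d_5 = O(\ell) = O(n^{11\gamma})$, and $n_q = O(t\ell) = O(n^{12\gamma})$ is calibrated precisely so that the total entropy loss per step is $O(t\ell) = O(n^{12\gamma}) \ll 0.1\,n_x$, keeping all inputs to $\iExt$ at entropy rate $> 0.9$ as required by Lemma~\ref{new_goodext}. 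The inequalities to check are essentially the same three shape inequalities appearing at the end of the proof of Theorem~\ref{seedless_main}, scaled by the block sizes; once they are verified, the inductive argument of Claim~\ref{main_claim_1} (with $\iExt$ in place of the old linear extractor and the block source structure replacing the single long source) yields the property $\mathcal{P}_n$. Applying Lemma~\ref{lemma:final_convex} then gives the full $(2,t)$-non-malleability with error $(2^{-n^\gamma}+\epsilon)\cdot 2^{2t} = 2^{-n^{\Omega(1)}}$.
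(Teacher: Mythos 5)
Your approach is the same as the paper's: the paper gives only a one-paragraph proof by reference, stating that the argument of Theorem~\ref{seedless_main} carries over verbatim once one verifies (i) the correctness of $\iExt$ (Lemma~\ref{new_goodext}), (ii) that each block of $X$ and $Y$ retains min-entropy rate $\geq 0.9$ after conditioning on the intermediate variables and their tampered versions, and (iii) that substituting the Reed--Solomon code for the binary linear code preserves the $Z\neq Z^{(i)}$ argument. Your proposal fills in exactly these three points — re-deriving the analogue of Claim~\ref{lemma:notequal} via the RS distance, establishing the block-source structure and verifying the shape inequalities, and then re-running the flip-flop induction with $\iExt$ — so it is a correct, more detailed rendering of the same proof.
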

 The proof of the above theorem is essentially the same as the proof provided in Section $\ref{section:proof}$,  and we do not repeat it. The correctness of $\inmExt$ follows directly from  the proof of Theorem $\ref{seedless_main}$, and the correctness of the extractor $\iExt$ (Lemma $\ref{new_goodext}$), the fact that by our choice of parameters each block of $X$ and $Y$ still has min-entropy rate  at least $0.9$ after appropriate conditioning of the intermediate random variables and their tampered versions, and the fact that using the $\RS$ in place of a binary error correcting code does not affect correctness of the procedure.

\subsection{Efficiently Sampling from the Pre-Image of inmExt}\label{invert_efficient}
Since the construction of the non-malleable extractor $\inmExt$ (Algorithm $6$, Algorithm $7$, Algorithm $8$) is composed of various sub-parts and sub-functions, we first argue about the invertibility of these parts and then show a way to compose these sampling procedure to sample almost uniformly from the pre-image of $\inmExt$. We refer to all the variables, sub-routines and notations introduced in these algorithms while developing the sampling procedures. Unless we state otherwise, by a subspace we mean a subspace over $\F_2$. 

We first show how to sample uniformly from the pre-image of  $2\ilaExt$ (Algorithm $8$), since it is a crucial sub-part of $\inmExt$. We have the following claim.
\begin{claim}\label{same_size} For any fixing of the variables $\{ s_{1,i},r_{1,i},\overline{s}_{1,i},\overline{r}_{1,i}: i \in \{1,2 \}\}$, and any $b \in \{0,1\}$ define the set: 
\begin{align*}
2\ilaExt^{-1}(q_{2,1},q_{2,2})= \{ (x_3,y_3,v_{[1, 4Ct]},w_{[1,4]}) \in \{ 0,1\}^{2n_6+4Ctn_y+4n_x} : \\ 2\ilaExt(v_{[1,4Ct]}, w_{[1,4]}, q_{1,1},q_{1,2},b) = (q_{2,1},q_{2,2})\}
\end{align*}
 There exists an efficient algorithm $\samp_{2}$ that takes as input $q_{2,1},q_{2,2},b,\{ s_{1,i},r_{1,i},\overline{s}_{1,i},\overline{r}_{1,i}: i \in \{1,2 \}\}$, and samples uniformly from $2\ilaExt^{-1}(q_{2,1},q_{2,2})$. 
 
 Further, the set $2\ilaExt^{-1}(q_{2,1},q_{2,2})$ is a subspace over $\F_2$ of dimension $d_1$, and its size   does not depend on the inputs to  $\samp_2$.
\end{claim}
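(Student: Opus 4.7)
The plan is to invert Algorithm~\ref{alg} (i.e.\ $2\ilaExt$) step by step in a linear-algebraic fashion, exploiting two facts: every extractor used ($\iExt_1, \iExt_2, \iExt_3$, and $\Ext$, which is $\iExt_4$ applied block-wise) is a \emph{linear} seeded extractor; and by Lemma~\ref{new_goodext}, for every fixed seed $s$, $\iExt_i(\cdot, s)$ is a full-rank linear map, so every output has a pre-image of exactly $2^{n_{\mathrm{in}} - n_{\mathrm{out}}}$ elements, which is a coset of a fixed linear subspace. Once the variables $\{s_{1,i}, r_{1,i}, \overline{s}_{1,i}, \overline{r}_{1,i}\}$ and the bit $b$ are fixed, every seed actually used in an execution of $2\ilaExt$ is determined (in particular, $r_h$ and $\overline{r}_h$ are either slices or copies of the given $r$- and $\overline{r}$-values, depending on $b$), so the condition $2\ilaExt(\cdot) = (q_{2,1}, q_{2,2})$ becomes a system of linear equations over $\F_2$ in the unknowns $(x_3, y_3, v_{[\cdot]}, w_{[1,4]})$.

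The key structural observation is that this linear system decomposes into \emph{independent} subsystems over disjoint groups of variables. Each $w_i \in \{w_1,\dots,w_4\}$ appears as the input to exactly one $\iExt_1$ or $\iExt_3$ invocation, so the four constraints on the $w$-blocks are decoupled. The $v$-blocks split between the two $\Ext$ calls: the first, with seed $r_h$, produces $(\overline{q}_{h,1}, \overline{q}_{h,2})$ of which only the slice $\overline{s}_{h,1} = \slice(\overline{q}_{h,1}, d_1)$ and the image $\overline{s}_{h,2} = \iExt_2(\overline{q}_{h,2}, \overline{r}_{h,1})$ are fixed, while the second, with seed $\overline{r}_h$, must produce the fully prescribed pair $(q_{2,1}, q_{2,2})$. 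Finally, the constraints on $(x_3, y_3)$ come only through $q_{1,i} = \IP_2(x_3, y_3)$: namely $\slice(q_{1,1}, d_1) = s_{1,1}$ and $\iExt_2(q_{1,2}, r_{1,1}) = s_{1,2}$.

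Given this decomposition, $\samp_2$ is assembled by running a Gaussian-elimination-based uniform sampler on each subsystem independently and concatenating the results. The $w$- and $v$-subsystems are handled exactly as in Claim~\ref{samp_ext}: each $\iExt(\cdot, \text{fixed seed})^{-1}(\text{given output})$ is an affine subspace of known dimension $n_{\mathrm{in}} - n_{\mathrm{out}}$, from which one can sample uniformly in polynomial time. For the $(x_3,y_3)$-block I would first sample $y_3$ uniformly subject to its linear constraints, then use bilinearity to view $\IP_2(\cdot, y_3)$ as a linear map on $x_3$ and sample $x_3$ uniformly from the induced affine subspace. Because the total pre-image is a Cartesian product of these affine subspaces over disjoint variable groups, uniformly sampling each factor and concatenating yields a uniform sample from $2\ilaExt^{-1}(q_{2,1}, q_{2,2})$. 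Adding up the per-factor dimensions (contributions of the form $n_{\mathrm{in}} - n_{\mathrm{out}}$ from each extractor, plus the slack in the first $\Ext$ call since only $d_1+d_3$ of the $2n_q$ output bits of $(\overline{q}_{h,1},\overline{q}_{h,2})$ are constrained, plus the residual freedom in $(x_3,y_3)$) gives a total that does not depend on any of the fixed values, which is exactly what the second sentence of the claim asserts.

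The step I expect to be the main obstacle is the $(x_3, y_3)$-part, because the constraints there come through the bilinear map $\IP_2$ rather than a linear one. The remedy is to condition: after fixing $y_3$, the constraint becomes linear in $x_3$, and one must argue that with probability $1$ over valid $y_3$ the induced map has full rank, so that the conditional pre-image has constant size, independent of $s_{1,1}, s_{1,2}, r_{1,1}$. A secondary nuisance is the dimension accounting: matching the free dimension with the declared $d_1$ requires carefully combining $n_x - d_2$ and $n_x - d_4$ from $w$, the unused output-slack in the first $\Ext$ call, the $4Ct\, n_y - 2n_q$ freedom in the second $\Ext$ call, and the residual $(x_3,y_3)$-dimension, and checking that the parameter choices in the subroutines list collapse this sum to exactly $d_1$.
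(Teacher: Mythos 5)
Your plan coincides with the paper's own proof: once $b$ and the variables $\{s_{1,i}, r_{1,i}, \overline{s}_{1,i}, \overline{r}_{1,i}\}$ are fixed, every seed used in the $2\ilaExt$ subroutine is determined, the constraints on $(x_3, y_3, w_{[1,4]}, v_{[\cdot]})$ become linear over $\F_2$ and decouple into per-block subsystems, each block is constrained to a coset of fixed size via Lemma~\ref{new_goodext}, and $\samp_2$ samples each coset uniformly and concatenates. This is exactly the bookkeeping the published proof carries out, including handling the $(x_3,y_3)$ factor last by conditioning on one argument of $\IP_2$ and solving linearly for the other (the paper conditions on $x_3$ and solves for $y_3$, you do the reverse; by symmetry of $\IP_2$ these are the same step).

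Two remarks. First, one of your worries is a red herring: you fret about "matching the free dimension with the declared $d_1$." The ``$d_1$'' in the claim is a misprint --- the four $w$-blocks alone contribute $4n_x - 2d_2 - 2d_4 \gg d_1$ free bits --- and the paper's own proof never attempts to verify that the dimension equals $d_1$. What is actually used downstream (Claims~\ref{same_size_h}, \ref{almost_invert}, \ref{done_sample}) is only that the pre-image is a coset whose size is independent of $b$ and of the fixed intermediate values; your per-factor accounting targets exactly that invariant, so do not try to make the total come out to $d_1$.

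Second, the obstacle you flagged is genuine, and the paper's proof has the same silent gap. If $y_3$ is the zero vector of $\F_{2^{2n_q}}^{\,n_6/(2n_q)}$ then $\IP_2(\cdot,y_3)$ is the zero map and the conditional pre-image of $x_3$ is either everything or empty; symmetrically the paper's assertion that ``for any fixed $x_3, q_{1,1}, q_{1,2}$ the variable $y_3$ lies in a subspace of [fixed] size'' fails at $x_3=0$. The fix --- which the paper already uses inside the definition of $\iExt$ (Algorithm~5) --- is to append a fixed nonzero coordinate to both arguments of $\IP_2$ so the induced linear map is never degenerate; stating this explicitly would close the gap in both your argument and the paper's.
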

\begin{proof} The general idea is that by fixing the seeds in the alternating extraction, each block  of $w$ takes values independent of the fixing of the other blocks of $w$ and the $q_{i,j}$'s, and similarly the $q_{i,j}$'s takes values independent of each other and the blocks of $w$. We now formally prove this intuition. 

Since, $s_{1,1}$ is a slice of $q_{1,1}$ it follows that $q_{1,1}$ is restricted to the subspace of size $2^{n_q-d_1}$. Since $r_{1,1}=\iExt_1(w_1,s_{1,1})$, it follows that $w_1$ is restricted to the set $\iExt_1(\cdot,s_{1,1})^{-1}(r_{1,1})$. Further, it follows by Lemma \ref{new_goodext} that this is a subspace of size $2^{n_x-d_2}$. Similar arguments show that $q_{1,2}$ is restricted to the subspace of dimension $2^{n_q-d_3}$, and $w_{2}$ is restricted to a subspace of dimension $2^{n_x-d_4}$. Further, we note that each of these variables have no correlation.

By repeating this argument for the next two rounds of alternating extraction, it follows that $\overline{q}_{1,1}$ is restricted to  a subspace of size $2^{n_q-d_1}$, $w_3$ is restricted to a subspace of size $2^{n_x-d_2}$, $\overline{q}_{1,2}$ is restricted to a subspace of size $2^{n_q-d_3}$,  and $w_4$ is restricted to  a subspace of size $2^{n_x-d_4}$. 

Further since $(q_{2,1},q_{2,2}) = \Ext(v_{[4Ct+1,8t]},r_{1})=\iExt_4(v_{4Ct+1},r_{1})\circ \ldots \circ \iExt_4(v_{8Ct},r_{1})$, it follows by an application of Lemma $\ref{new_goodext}$ that for any fixed  $q_{2,1}$, $v_{[4Ct+1,6t]}$  is restricted to   a subspace of size $2^{2Ct(n_y-d_5)}$. A similar argument shows that for any fixed   $q_{2,2}$,  $v_{[6Ct+1,8Ct]}$ is restricted to  a subspace of size $2^{2Ct(n_y-d_5)}$.

Finally, since $\IP_1(x_3,y_3)= (q_{1,1},q_{1,2})$, it follows that for any fixed  $x_3,q_{1,1},q_{1,2}$, the variable $y_3$ lies in a subspace of size $2^{n_6-\log(2n_q)}$ since by fixing the variables $x_3,q_{1,1},q_{1,2}$, we are restricting $y_3$ to a subspace of dimension $\l(\frac{n_6}{\log(2n_q)}-1\r)$ over the field $\F_{2^{\log(2n_q)}}$.

It is clear from the arguments that we did not use any specific values of the inputs given to the algorithm $\samp_1$ (including  the value of the bit $b$) to argue about the size of $2\ilaExt^{-1}(q_{2,1},q_{2,2})$. Also note that each of $x_3,y_3,v_{[1, 4Ct]},w_{[1,4]}$ is restricted to some subspace. Since $2\ilaExt^{-1}(q_{2,1},q_{2,2})$ is the cartesian product of these subspaces, it follows that it is a subspace over $\F_2$. Thus the lemma now follows since we can efficiently   sample from a given subspace.
\end{proof}

Using arguments very similar to the above claim, we obtain the following result.
\begin{claim}\label{same_size_h} For any $h \in \{2,\ldots,\ell\}$,  any fixing of  the variables $\{ s_{h,i},r_{h,i},\overline{s}_{h,i},\overline{r}_{h,i}: i \in \{1,2 \}\}$, and any $b \in \{0,1\}$ define the set: 
\begin{align*}
2\ilaExt^{-1}(q_{h+1,1},q_{h+1,2})= \{ (v_{[8C(h-1)t-4Ct+1, 8C(h-1)t+4Ct]},w_{[4h-3,4h]}) \in \{ 0,1\}^{8Ctn_y+4n_x} : \\ 2\ilaExt(v_{[8C(h-1)t+1,8Cht]}, w_{[4h-3,4h]}, q_{1,1},q_{1,2},b) = (q_{h+1,1},q_{h+1,2})\}.
\end{align*}
 There exists an efficient algorithm $\samp_{h+1}$ that takes as input $q_{h+1,1},q_{h+1,2},b,\{ s_{h,i},r_{h,i},\overline{s}_{h,i},\overline{r}_{h,i}: i \in \{1,2 \}\}$, and samples uniformly from $2\ilaExt^{-1}(q_{h+1,1},q_{h+1,2})$. 
 
 Further,  $2\ilaExt^{-1}(q_{h+1,1},q_{h+1,2})$ is a subspace over $\F_2$, and its size   does not depend on the inputs to $\samp_{h+1}$.
\end{claim}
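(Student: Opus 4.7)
The plan is to mirror the proof of Claim \ref{same_size}, adapting it to a general step $h$ of the loop in Algorithm $7$. The only structural difference is that the initial inner product step (which produced $q_{1,1}, q_{1,2}$ from $x_3, y_3$) is absent here; instead $q_{h,1}, q_{h,2}$ are fixed parameters of the inverse problem, not free variables. So I expect the proof to be strictly simpler than Claim \ref{same_size}, consisting essentially of a dimension count over the two sub-steps of $2\ilaExt$.

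First I fix the seeds $\{s_{h,i}, r_{h,i}, \overline{s}_{h,i}, \overline{r}_{h,i} : i \in \{1,2\}\}$ that are supplied as inputs to $\samp_{h+1}$, and walk through Algorithm $8$. Each application of a linear seeded extractor with a fixed seed is, by Lemma \ref{new_goodext}, a surjection onto its output space whose fibers are affine subspaces of a fixed dimension. Concretely, the equations $\iExt_1(w_{4h-3}, s_{h,1}) = r_{h,1}$ and $\iExt_3(w_{4h-2}, s_{h,2}) = r_{h,2}$ confine $w_{4h-3}, w_{4h-2}$ to affine subspaces of sizes $2^{n_x - d_2}, 2^{n_x - d_4}$ respectively; the analogous equations for $\overline{r}_{h,1}, \overline{r}_{h,2}$ confine $w_{4h-1}, w_{4h}$ in the same way. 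Note that none of these sizes depend on the input values, only on the fixed dimensions in the parameter setup.

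Next I handle the blocks of $v$. The auxiliary variables $\overline{q}_{h,1}, \overline{q}_{h,2}$ are not inputs, but they are forced to satisfy $\slice(\overline{q}_{h,1}, d_1) = \overline{s}_{h,1}$ and $\iExt_2(\overline{q}_{h,2}, \overline{r}_{h,1}) = \overline{s}_{h,2}$, which again by Lemma \ref{new_goodext} restrict them to affine subspaces of sizes $2^{n_q - d_1}$ and $2^{n_q - d_3}$. Once a choice of $(\overline{q}_{h,1}, \overline{q}_{h,2})$ is made, the equation $\Ext(v_{\text{first } 4Ct \text{ blocks}}, r_h) = (\overline{q}_{h,1}, \overline{q}_{h,2})$ decomposes blockwise as $\iExt_4(v_i, r_h) = $ (the $i$th piece of $(\overline{q}_{h,1}, \overline{q}_{h,2})$), each of which cuts out an affine subspace of size $2^{n_y - d_5}$. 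The same reasoning applied to the second half of the $v$-blocks under $\overline{r}_h$ and the target $(q_{h+1,1}, q_{h+1,2})$ yields another subspace of size $2^{4Ct(n_y - d_5)}$. Because the $v$- and $w$-blocks are on disjoint coordinates, the full pre-image is a Cartesian product of affine subspaces, hence itself an affine subspace of $\{0,1\}^{8Ctn_y + 4n_x}$ whose dimension is a fixed function of $n_x, n_y, n_q, d_1, \ldots, d_5$ and $C, t$, independent of the inputs. A trivial translation (by any particular solution) turns it into a linear subspace over $\F_2$.

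The sampler $\samp_{h+1}$ is then straightforward: (a) uniformly sample $\overline{q}_{h,1}, \overline{q}_{h,2}$ from their respective constraint subspaces; (b) invoke the inverter from Claim \ref{samp_ext} on each of the $\iExt_j$ calls to sample the corresponding $v$- or $w$-block uniformly from its fiber; (c) output the concatenation. Uniformity of the output on the full pre-image follows because each fiber has the same size and we sample uniformly from each stage. The only potential obstacle is purely bookkeeping: verifying that the branch choice $b \in \{0,1\}$ (which picks $r_h$ to be either $\slice(r_{h,1}, d_4)$ or $r_{h,2}$, and analogously for $\overline{r}_h$) does not alter any of the subspace dimensions above. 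But since in both branches $r_h$ and $\overline{r}_h$ are fixed strings of length $d_4$ (the $\slice$ only changes which fixed string is used, not its length), the fiber sizes of the subsequent $\iExt_4$ applications are unaffected, and the claim follows.
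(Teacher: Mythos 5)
Your proposal uses the right machinery (Lemma \ref{new_goodext}, Cartesian products of affine fibers, branch-independence of dimensions), and this is indeed the argument the paper invokes by pointing at Claim \ref{same_size}. However, there is a mismatch between what you invert and what the claim asserts, and it traces to your opening sentence: you declare $q_{h,1}, q_{h,2}$ to be ``fixed parameters of the inverse problem.'' They are not. Look at the input list of $\samp_{h+1}$: it receives $q_{h+1,1},q_{h+1,2},b$ and the step-$h$ seeds $\{ s_{h,i},r_{h,i},\overline{s}_{h,i},\overline{r}_{h,i}\}$, but \emph{not} $q_{h,1},q_{h,2}$. Correspondingly, the pre-image set in the claim lives over $v_{[8C(h-1)t-4Ct+1,\,8C(h-1)t+4Ct]}$, a range that straddles the step-$(h-1)$/step-$h$ boundary, rather than the range $v_{[8C(h-1)t+1,\,8Cht]}$ used inside a single $2\ilaExt$ call (which is what your proof effectively inverts). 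Under the sampling scheme of Claim \ref{almost_invert}, the first $4Ct$ of those $v$-blocks are the ones that produced $q_{h,1},q_{h,2}$ via $\Ext(\cdot,\overline{r}_{h-1})$ in the preceding step, and they are part of \emph{this} step's pre-image precisely so that the constraints $\slice(q_{h,1},d_1)=s_{h,1}$ and $\iExt_2(q_{h,2},r_{h,1})=s_{h,2}$ can be enforced as linear restrictions.

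So the missing piece is exactly the $h\ge 2$ analogue of the $\IP_2(x_3,y_3)=(q_{1,1},q_{1,2})$ step in Claim \ref{same_size}: you must treat $q_{h,1},q_{h,2}$ the same way you already treat $\overline{q}_{h,1},\overline{q}_{h,2}$ --- as intermediate variables confined by the fixed seeds to affine subspaces of sizes $2^{n_q-d_1}$ and $2^{n_q-d_3}$ --- and then count the fibers of $\Ext(v_{[8C(h-1)t-4Ct+1,\,8C(h-1)t]},\overline{r}_{h-1})=(q_{h,1},q_{h,2})$ over them (another $2^{4Ct(n_y-d_5)}$). Meanwhile the step-$h$ \emph{second-half} $v$-blocks, which your proof assigns to the target $(q_{h+1,1},q_{h+1,2})$, do not belong to this step's pre-image at all; they are handled by $\samp_{h+2}$ (or by the separate final step in Claim \ref{almost_invert}). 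In mitigation, the claim's displayed definition contains typos (it writes $q_{1,1},q_{1,2}$ instead of $q_{h,1},q_{h,2}$, and the $v$-index range inside the $2\ilaExt$ call does not match the range defining the pre-image set), so the misreading is understandable; but the staggered block bookkeeping is what makes the composition in Claim \ref{almost_invert} go through, so the gap is real and should be filled.
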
 \qed

We now show a way of efficiently sampling from the pre-image of the function $\inmExt_1$ (Algorithm $7$).
\begin{claim}\label{almost_invert}For any string $\alpha \in \{ 0,1\}^{\ell}$, and  any fixing of the variables $\{  s_{h,i},r_{h,i},\overline{s}_{h,i},\overline{r}_{h,i}: h \in [\ell], i \in \{1,2 \}\}$ define the set: 
\begin{align*}
\inmExt_1^{-1}(q_{\ell+1,1},q_{\ell+1,2})= \{ (x_2,y_2) \in \{ 0,1\}^{2n_2} :  \inmExt_1(x_2,y_2,\alpha) = (q_{\ell+1,1},q_{\ell+1,2})\}.
\end{align*}
 There exists an efficient algorithm $\samp_{nm_1}$ that takes as input $\{ s_{h,i},r_{h,i},\overline{s}_{h,i},\overline{r}_{h,i}: h \in [\ell], i \in \{1,2 \}\}, \alpha, q_{\ell+1,1},q_{\ell+1,2}$, and samples uniformly from $\inmExt_1^{-1}(q_{\ell+1,1},q_{\ell+1,2})$. 
 
 Further,  $\inmExt_1^{-1}(q_{\ell+1,1},q_{\ell+1,2})$ is a subspace over $\F_2$, and its size   does not depend on the inputs to  $\samp_{nm_1}$.
  \end{claim}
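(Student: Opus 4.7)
The plan is to build $\samp_{nm_1}$ by chaining the per-level samplers $\samp_{h+1}$ from Claim~\ref{same_size} and Claim~\ref{same_size_h} across the $\ell$ rounds of alternating extraction, capped off with one extra step that handles $(x_3, y_3)$. Concretely, starting from the given output $(q_{\ell+1,1}, q_{\ell+1,2})$ and the fixed seeds $\{s_{h,i}, r_{h,i}, \overline{s}_{h,i}, \overline{r}_{h,i}\}$ and string $\alpha$, I would proceed from $h=\ell$ downward: at each level $h \ge 2$, invoke $\samp_{h+1}$ with inputs $(q_{h+1,1}, q_{h+1,2})$, $\alpha_{\{h\}}$, and the level-$h$ seeds, obtaining a uniform sample of $(v_{[8C(h-1)t+1,\,8Cht]}, w_{[4h-3,4h]})$ together with the corresponding intermediate $(q_{h,1}, q_{h,2})$ that feeds into level $h-1$. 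At $h=1$, invoke $\samp_2$ from Claim~\ref{same_size}, which additionally returns $(x_3, y_3)$ consistent with the $(q_{1,1}, q_{1,2})$ implicitly produced by $\IP_2$. Concatenating $x_3$ with the $w$-blocks and $y_3$ with the $v$-blocks reassembles $(x_2, y_2)$.

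The correctness argument rests on two structural facts I would establish. First, \emph{decoupling across levels}: the blocks $w_{[4h-3,4h]}$ and $v_{[8C(h-1)t+1,\,8Cht]}$ used at distinct levels $h$ are pairwise disjoint and disjoint from $(x_3, y_3)$, so once the seeds are fixed the linear equations that cut out $\inmExt_1^{-1}(q_{\ell+1,1}, q_{\ell+1,2})$ partition according to which level's blocks they act on. Therefore the full pre-image factors as a Cartesian product of the per-level pre-images. Second, \emph{input-independent subspace sizes}: Claim~\ref{same_size} and Claim~\ref{same_size_h} already certify that each per-level pre-image is an $\F_2$-subspace of cardinality depending only on the fixed dimensions $n_x, n_y, n_q, d_1, \ldots, d_5, t$, not on the particular inputs. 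Composing uniform samplers on these disjoint subspaces (via the Cartesian-product decomposition) then yields a uniform sample on $\inmExt_1^{-1}(q_{\ell+1,1}, q_{\ell+1,2})$; and since the product of fixed-size subspaces is a subspace of fixed size, the whole pre-image is an $\F_2$-subspace whose size is independent of $\alpha$, $(q_{\ell+1,1}, q_{\ell+1,2})$, and the seeds.

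The main obstacle I expect is treating the ``bridge'' variables $(q_{h,1}, q_{h,2})$ cleanly for $h \ge 2$. Each such pair is simultaneously an ``output'' of $\samp_h$ at level $h-1$ (via the final $\Ext(\cdot, \overline{r}_{h-1})$ in Algorithm~8) and an ``input'' to $\samp_{h+1}$ at level $h$ (where its slices must equal the fixed $s_{h,1}$ and $\overline{s}_{h,1}$). I would need to argue that the level-$(h-1)$ sampler automatically produces values that meet these slice constraints because the seeds $s_{h,1}, \overline{s}_{h,1}$ are part of the fixed input to the \emph{next} level's sampler but not to level $h-1$ itself; so strictly, one should view $(q_{h,1}, q_{h,2})$ as being drawn uniformly subject to its slices being $(s_{h,1}, \overline{s}_{h,1})$ and then pass it on, which is consistent with the subspace picture because the seeds $s_{h,1}, \overline{s}_{h,1}$ are already among the fixed parameters of $\samp_{nm_1}$. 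A secondary subtlety is the $(x_3, y_3)$ step itself: because $\IP_2$ is bilinear rather than $\F_2$-linear in $(x_3, y_3)$ jointly, I would need to verify that $\IP_2^{-1}(q_{1,1}, q_{1,2})$ has the same cardinality for every $(q_{1,1}, q_{1,2})$ with the prescribed slices --- this follows from the standard balanced-ness of $\IP$ over the extension field, analogously to how $\iExt$ was shown to have input-independent pre-image size in Lemma~\ref{new_goodext}.
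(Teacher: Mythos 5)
Your proposal takes the same route as the paper: fixing the seeds and $z$, the pre-image decouples into disjoint per-block affine pieces handled by $\samp_{h+1}$ from Claims~\ref{same_size} and~\ref{same_size_h}, and input-independence of each factor's size gives input-independence of the total. Two small corrections: (i) your reassembly of $(x_2,y_2)$ must also fill in the \emph{unused} blocks $w_{[4\ell+1,8\ell]}$ and $v_{[8C\ell t+1,16Ct\ell]}$ uniformly --- the paper does this explicitly, and these free blocks are essential in Claim~\ref{done_sample} so that the RS-code constraints are linearly independent of the extractor constraints; and (ii) $\overline{s}_{h,1}$ is a slice of $\overline{q}_{h,1}$, not of $q_{h,2}$ (the constraint on $q_{h,2}$ is $s_{h,2}=\iExt_2(q_{h,2},r_{h,1})$), and the paper sidesteps the backward chaining you describe by the shifted block-grouping in Claim~\ref{same_size_h}, which places all constraints coming from level-$h$ seeds on a single set of $v$-blocks so that, once seeds are fixed, the levels decouple and can be sampled in parallel.
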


 \begin{proof}     We observe that once we fix all the seeds  $\{  s_{h,i},r_{h,i},\overline{s}_{h,i},\overline{r}_{h,i}: h \in [\ell], i \in \{1,2 \}\}$, for different $h \in  [\ell]$, the blocks  $(v_{[8C(h-1)t-4Ct+1, 8C(h-1)t+4Ct]}$, $w_{[4h-3,4h]})$ can be sampled independently.  Thus, by using  the algorithms $\{ \samp_{h+1}: h \in {\ell} \}$ from  Claim $\ref{same_size}$ and Claim $\ref{same_size_h}$, we sample the variable  $x_3,y_3,w_{[1,4]},v_{[1,4Ct]}, \{  v_{[8C(h-1)t-4Ct+1, 8C(h-1)t+4Ct]},w_{[4h-3,4h]} : h \in [\ell]\}$. 
 
 Finally, since $\Ext(v_{[8C(\ell-1)t+4Ct+1,8C\ell t]},\overline{r}_{\ell})=(q_{\ell+1,1},q_{\ell+1,2})$, it follows by the arguments in Lemma $\ref{same_size}$, that the block $v_{[8C(\ell-1)t+4Ct+1,8C\ell t)]}$ is restricted to a subspace of size $2^{4Ct(n_y-d_5)}$. Thus, we can efficiently sample this block as well.
 
 Further the variable $w_{[4\ell +1,8\ell ]}$ is unused by the algorithm $\inmExt_1$,  and hence takes all values in $\{ 0,1\}^{4\ell n_x}$. Similarly the variable $v_{[8C\ell t+1,16 C \ell t]}$ is unused by the algorithm $\inmExt_1$  and hence takes all values in $\{ 0,1\}^{8Ct\ell}$. Thus, we sample these variables as uniform strings of the appropriate length.
 
 Since $x_2,y_2$ are concatenations of the various blocks sampled above, we can indeed sample efficiently from a distribution uniform on  $\{(x_2,y_2) \in \{ 0,1\}^{2n_2}: \inmExt(x,y,\alpha)=(q_{\ell+1,1},q_{\ell+1,2})\} $. Further since by Claim $\ref{same_size}$ and Claim $\ref{same_size_h}$, the size of the pre-images of each of the blocks generated do not depend on the inputs (and is also a subspace), it follows that $2\inmExt_1^{-1}(q_{\ell+1,1},q_{\ell+1,2})$ is a subspace, and its size does not depend on the inputs to $\samp_{nm_1}$.
 \end{proof}
 
We now proceed to construct an algorithm to  uniformly sample from the pre-image of any output of the function $\inmExt$ (Algorithm $6$), which will yield the required efficient encoder for the resulting one-many non-malleable codes.

\begin{claim}\label{done_sample} For any fixing of the variable $z=x_1 \circ \overline{x}_1 \circ y_1 \circ \overline{y}_1$ and the variables $\{  s_{h,i},r_{h,i},\overline{s}_{h,i},\overline{r}_{h,i}: h \in [\ell], i \in \{1,2 \}\}$, define the set: 
\begin{align*}
\inmExt^{-1}(q_{\ell+1,1},q_{\ell+1,2})= \{ (x,y) \in \{ 0,1\}^{2n} :  \inmExt(x,y) = (q_{\ell+1,1},q_{\ell+1,2})\}.
\end{align*}
 There exists an efficient algorithm $\samp_{nm}$ that takes as input $\{ s_{h,i},r_{h,i},\overline{s}_{h,i},\overline{r}_{h,i}: h \in [\ell], i \in \{1,2 \}\}, z, q_{\ell+1,1},q_{\ell+1,2}$, and samples uniformly from $\inmExt^{-1}(q_{\ell+1,1},q_{\ell+1,2})$. 
 
 Further,  $\inmExt^{-1}(q_{\ell+1,1},q_{\ell+1,2})$ is a subspace over $\F_2$, and its size   does not depend on the inputs to  $\samp_{nm}$.
  \end{claim}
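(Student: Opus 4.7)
The plan is to bootstrap from Claim~\ref{almost_invert}, which already inverts the interior function $\inmExt_1$, by additionally enforcing the linear constraints on $(x_2,y_2)$ that arise from the Reed-Solomon sampling step of Algorithm~6. The sampling algorithm $\samp_{nm}$ would: (i) parse $z = x_1 \circ \overline{x}_1 \circ y_1 \circ \overline{y}_1$, thereby pinning down the first $n_1$ bits of $x$ and $y$; (ii) compute $\nu = \IP_1(x_1,y_1)$ and $T = \samp(\nu) \subset [n]$, which is a deterministic function of $z$; (iii) sample $(x_2,y_2) \in \{0,1\}^{2n_2}$ uniformly from the intersection of $\inmExt_1^{-1}(q_{\ell+1,1},q_{\ell+1,2})$ (described by Claim~\ref{almost_invert} with $\alpha=z$ and the given seed variables) with the affine linear constraints $(\RS(x_2))_{\{T\}} = \overline{x}_1$ and $(\RS(y_2))_{\{T\}} = \overline{y}_1$; (iv) output $(x_1 \circ x_2,\ y_1 \circ y_2)$.

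For correctness, any $(x,y) \in \inmExt^{-1}(q_{\ell+1,1},q_{\ell+1,2})$ consistent with the given fixings of $z$ and the seed variables must, by the definition of Algorithm~6, have prefixes $x_1,y_1$ matching $z$, must produce the given $\overline{x}_1,\overline{y}_1$ under the RS sampling at positions $T$, and must satisfy $\inmExt_1(x_2,y_2,z) = (q_{\ell+1,1},q_{\ell+1,2})$. Thus $\inmExt^{-1}(q_{\ell+1,1},q_{\ell+1,2})$ is precisely the set produced by $\samp_{nm}$, and it equals $A \cap B$ (viewed in $\{0,1\}^{2n_2}$) pre-pended with $(x_1,y_1)$, where $A = \inmExt_1^{-1}(q_{\ell+1,1},q_{\ell+1,2})$ is an affine subspace (Claim~\ref{almost_invert}) and $B$ is the affine subspace carved out by the $2n_5$ $\F$-linear RS constraints.

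The main obstacle is showing that $|A \cap B|$ does not depend on the inputs $z$ and the seed variables. The key observation, already emphasized in Section~\ref{section:overview}, is that the linear part of $A$ constrains only the coordinates corresponding to the blocks $x_3, w_{[1,4\ell]}$ of $x_2$ and $y_3, v_{[1,8C\ell t]}$ of $y_2$; the remaining blocks $w_{[4\ell+1,8\ell]}$ and $v_{[8C\ell t+1,16C\ell t]}$ are entirely free under $A$. Since $\RS$ is a Reed-Solomon evaluation at $n$ distinct nonzero points of $\F$, the generator matrix is Vandermonde, and the submatrix obtained by selecting the $n_5$ rows indexed by $T$ and only the $n_7/(2\log(n+1))$ columns corresponding to the free field positions factors as $\mathrm{diag}(\alpha_{j_1}^r,\ldots,\alpha_{j_{n_5}}^r)\cdot V$, with $V$ a Vandermonde matrix on distinct nodes. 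By our parameter choices $n_5 \log(n+1) \ll n_7/2$, so this matrix has full row rank $n_5$. Therefore, for any fixing of the constrained coordinates of $x_2$, exactly $2^{n_7/2 - n_5\log(n+1)}$ assignments of the free coordinates satisfy $(\RS(x_2))_{\{T\}} = \overline{x}_1$, and symmetrically for $y_2$. Combined with $|A|$ being input-independent, this yields $|A \cap B| = |A| \cdot 2^{-2 n_5 \log(n+1)}$, which depends only on parameters, not on inputs; in particular $\inmExt^{-1}(q_{\ell+1,1},q_{\ell+1,2})$ is an affine subspace of constant size.

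Finally, efficient implementation of step~(iii) is routine: each seed fixing in Claim~\ref{almost_invert} presents $A$ as the solution set of a known linear system (the constraints come from the linear seeded extractors $\iExt_1,\iExt_2,\iExt_3,\iExt_4$, each of which is linear for any fixed seed by Lemma~\ref{new_goodext}, together with the inner-product constraint $\IP_2(x_3,y_3)=(q_{1,1},q_{1,2})$, which is $\F_2$-linear once one of $x_3,y_3$ is fixed and can be merged via standard Gaussian elimination). Appending the $2n_5$ RS linear equations to this system and solving by Gaussian elimination lets us sample uniformly from $A \cap B$ in polynomial time. Since every $(x,y)$ in $\inmExt^{-1}(q_{\ell+1,1},q_{\ell+1,2})$ is in bijection with a unique $(x_2,y_2) \in A \cap B$ (the prefixes $x_1,y_1$ being determined by $z$), the output distribution of $\samp_{nm}$ is uniform over $\inmExt^{-1}(q_{\ell+1,1},q_{\ell+1,2})$, completing the proof.
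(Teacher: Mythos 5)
Your proposal is correct and follows essentially the same route as the paper: decompose the preimage as the intersection of the $\inmExt_1$ preimage (Claim~\ref{almost_invert}) with the affine constraints from the RS sampling, observe that the latter restrict only the blocks left free by the alternating extraction, and use the Vandermonde structure of the RS generator matrix on those free positions to conclude that the intersection size is input-independent. Your explicit diagonal-times-Vandermonde factorization is in fact a slightly cleaner justification of the full-rank step than the paper's phrasing.
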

  \begin{proof} We fix the variables $x_1$ and $y_1$. Let $T = \samp(\nu) = \{ t_1,\ldots,t_{n_5}\}$. We now think of $x_2$ as an element in $\F^{n_4}$, $\F=\F_{2^{\log (n+1)}}$.  Let $x_2 = (x_{2,1},\ldots,x_{2,n_4})$, where each $x_{2,i}$ is in $\F$. Recall that the $n_4 \times n$ generator matrix $G$ of the code $\RS$ is the following: 
  $$
 G = 
 \begin{pmatrix}
  1 & 1 & \cdots & 1 \\
  \alpha_1 & \alpha_{2} & \cdots & \alpha_{n} \\
  \vdots  & \vdots  & \ddots & \vdots  \\
  \alpha_1^{n_4-1} & \alpha_2^{n_4-1} & \cdots & \alpha_n^{n_4-1}
 \end{pmatrix}
$$ where $\alpha_1,\ldots,\alpha_{n}$ are distinct non-zero field elements of $\F$.

Let $$
 G_T = 
\begin{pmatrix}
  1 & 1 & \cdots & 1 \\
  \alpha_{t_1} & \alpha_{t_2} & \cdots & \alpha_{t_{n_5}} \\
  \vdots  & \vdots  & \ddots & \vdots  \\
  \alpha_{t_1}^{n_4-1} & \alpha_{t_2}^{n_4-1} & \cdots & \alpha_{t_{n_5}}^{n_4-1}
 \end{pmatrix} 
$$ 
Since $\overline{x}_1 = \RS(x_2)_{\{ T\}}$,  we have the following identity: 
\begin{align} \label{matrix}
\begin{pmatrix}x_{2,1} & \cdots & x_{2,n_4}\end{pmatrix}
 G_T = \overline{x}_1
\end{align}
Thus, for any fixing of $\overline{x}_1$, the variable $x_2$ is restricted to a subspace of dimension $(n_4-n_5)$ over the field $\F$. 

Now, let $j \in [n_4]$ be such that $(x_{2,1},\ldots,x_{2,j})$ is the string $(x_3,w_{[1,4\ell]})$, and $(x_{2,j+1},\ldots,x_{2,n_4})$ is the string $w_{[4\ell+1,8\ell]}$. Clearly, $(n_4-j)\log n = 4\ell n_x $, and thus by our choice of parameters it follows that $j = n_4 - \frac{4\ell n_x}{\log n}= \frac{n_4}{2}+\frac{n_6}{log(n+1)}<\frac{2n_4}{3}<n_4-n_5$.

We further note since any $n_5 \times n_5$ sub-matrix of $G_T$ has full rank (since it is the Vandermonde's matrix), it follows by the rank-nullity thorem that any $j \times n_5$ sub-matrix of $G_T$ has null space of dimension exactly $j-n_5$. Thus for any $\la \in \F^{n_5}$, the equation: 
\begin{align} \label{matrix}
  \begin{matrix}\begin{pmatrix}x_{2,j+1} & \cdots & x_{2,n_4}\end{pmatrix}\\\mbox{}\end{matrix}
  \begin{pmatrix}
  \alpha_{t_1}^{j} & \alpha_{t_2}^{j} & \cdots & \alpha_{t_{n_5}}^{j} \\
  \vdots  & \vdots  & \ddots & \vdots  \\
  \alpha_{t_1}^{n_4-1} & \alpha_{t_2}^{n_4-1} & \cdots & \alpha_{t_{n_5}}^{n_4-1}
 \end{pmatrix} = \overline{x}_1 + \la
\end{align}
has exactly $|\F|^{(j-n_5)}$ solution.

Thus, for any fixing of the variables, $x_{2,1},\ldots,x_{2,j}$, equation $(1)$ has exactly $|\F|^{j-n_5}$ solutions. In other words, for any fixing of  $x_3,w_{[1,4\ell]},\overline{x}_1$, the variable $w_{[4\ell+1,8\ell]}$ is restricted to a subspace, and the size of the subspace does not depend on the fixing of  $x_3,w_{[1,4\ell]},\overline{x}_1$. Using, a similar argument, we can show that for any fixing of $y_3,v_{[1,8Ct\ell]},\overline{y}_1$, the variable $v_{[8Ct\ell+1,16Ct\ell]}$ is restricted to a subspace, and the size of the subspace  does not depend on the fixing of $y_3,v_{[1,8Ct\ell]},\overline{y}_1$.

Now consider any fixing of the variables  $\{  s_{h,i},r_{h,i},\overline{s}_{h,i},\overline{r}_{h,i}: h \in [\ell], i \in \{1,2 \}\}, z$.  As proved in the Claim $\ref{almost_invert}$, we can efficiently sample the variables $x_3,w_{[1,4\ell]}, y_{3}, v_{[1,8Ct\ell]}$.   By the above argument, the variables $v_{[4\ell+1,8\ell]}$ and $w_{[8Ct\ell+1,16Ct\ell]}$ now lie in a subspace, and hence we can efficiently sample these variables as well. Thus we have an efficient procedure $\samp_{nm}$ for uniformly sampling $(x,y)$ from the set $\inmExt^{-1}(q_{\ell+1,1},q_{\ell+1,2})$ .

 It also follows by Claim $\ref{almost_invert}$, that the total size of the pre-image of the variables  $x_3,w_{[1,4\ell]}, y_{3}, v_{[1,8Ct\ell]}$ does not depend on $z$ or the variables $\{  s_{h,i},r_{h,i},\overline{s}_{h,i},\overline{r}_{h,i}: h \in [\ell], i \in \{1,2 \}\}$. Further, for any fixing of  $x_3,w_{[1,4\ell]}, y_{3}, v_{[1,8Ct\ell]},z$, as argued above, the variables $v_{[4\ell+1,8\ell]}$ and $w_{[8Ct\ell+1,16Ct\ell]}$ now lie in a subspace, whose size does not depend on the fixed variables. Thus, overall the size of the total pre-image of $x,y$  does not depend on the inputs to $\samp_{nm}$.
\end{proof}

We now state the main result of this section.
\begin{thm}\label{final_sample} There exists an efficient procedure that given an input $(q_{\ell+1,1},q_{\ell+1,2}) \in \{ 0,1\}^{n_q} \times \{ 0,1\}^{n_q}$, samples uniformly from the set $\{ (x,y): \inmExt(x,y)=(q_{\ell+1,1},q_{\ell+1,2})\}$.
\end{thm}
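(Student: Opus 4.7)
The plan is to reduce Theorem~\ref{final_sample} to Claim~\ref{done_sample} by a straightforward marginalization over the intermediate variables. The sampling algorithm I would propose is: on input $(q_{\ell+1,1}, q_{\ell+1,2})$, first sample $z \in \{0,1\}^{\ell}$ uniformly at random; second, sample the intermediate seeds $\{s_{h,i}, r_{h,i}, \overline{s}_{h,i}, \overline{r}_{h,i}\}_{h \in [\ell], i \in \{1,2\}}$ uniformly at random, each from bit-strings of the corresponding lengths; third, invoke $\samp_{nm}$ from Claim~\ref{done_sample} on $(q_{\ell+1,1}, q_{\ell+1,2}, z, \{s_{h,i}, r_{h,i}, \overline{s}_{h,i}, \overline{r}_{h,i}\})$ to obtain a uniformly random element of the corresponding affine subspace, and output it as $(x,y)$.

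For correctness I would argue as follows. Fix any $(x^*, y^*) \in \inmExt^{-1}(q_{\ell+1,1}, q_{\ell+1,2})$. Running $\inmExt(x^*, y^*)$ deterministically produces a unique sequence $\bigl(z^*, \{s^*_{h,i}, r^*_{h,i}, \overline{s}^*_{h,i}, \overline{r}^*_{h,i}\}\bigr)$ of intermediate values, so there is a bijection between elements of $\inmExt^{-1}(q_{\ell+1,1}, q_{\ell+1,2})$ and pairs of (intermediate-variable sequence, element of the corresponding subspace). Let $L$ be the total bit-length of the tuple of variables sampled in Steps~1--2 and let $N$ be the size of the subspace output by Claim~\ref{done_sample}. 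The crucial point is that Claim~\ref{done_sample} asserts $N$ is independent of the inputs to $\samp_{nm}$. Hence the probability that my procedure outputs $(x^*, y^*)$ equals $\frac{1}{2^{L}} \cdot \frac{1}{N}$, which is independent of $(x^*, y^*)$. Summing over the pre-image this gives the uniform distribution on $\inmExt^{-1}(q_{\ell+1,1}, q_{\ell+1,2})$.

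There are two loose ends I would want to tie up. The first is well-definedness of Step~3: one needs $N > 0$ for every sampled sequence of intermediate variables. This follows because $N$ is invariant across inputs by Claim~\ref{done_sample}, so as soon as $(q_{\ell+1,1}, q_{\ell+1,2})$ has \emph{some} preimage under $\inmExt$ (which is automatic for any output produced by a valid $(x,y)$, and suffices for our coding application via Theorem~\ref{connection_t}), the subspace is nonempty for every choice of intermediate variables. The second is efficiency: Steps~1--2 are plainly polynomial time, and $\samp_{nm}$ is polynomial time by Claim~\ref{done_sample}, which itself relies only on uniform sampling from affine subspaces given by explicit linear constraints (Gaussian elimination) together with the polynomial-time samplers $\samp_{h+1}$ from Claims~\ref{same_size} and \ref{same_size_h}.

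Overall, the main obstacle has already been absorbed into the statement and proof of Claim~\ref{done_sample} --- namely, establishing that the pre-image of $\inmExt$ under any fixed sequence of intermediate variables is an affine subspace whose dimension does not depend on those variables or on the output. Once this structural invariance is in hand, Theorem~\ref{final_sample} follows by the short bookkeeping argument above, and I do not expect any further technical difficulty.
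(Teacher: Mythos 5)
Your proposal is correct and follows the same two-step strategy as the paper: uniformly sample $z$ and the intermediate alternating-extraction variables, then invoke $\samp_{nm}$. The bijection/probability calculation you spell out (each $(x^*,y^*)$ determines a unique intermediate sequence, hence is hit with probability $2^{-L} N^{-1}$) is exactly the elaboration of the paper's one-line appeal to the size-invariance established in Claim~\ref{done_sample}, so your write-up is essentially a slightly more detailed version of the paper's argument.
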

\begin{proof} We use the following simple strategy.
\begin{enumerate}
\item Uniformly sample the variables $z, \{  s_{h,i},r_{h,i},\overline{s}_{h,i},\overline{r}_{h,i}: h \in [\ell], i \in \{1,2 \}\}$,
\item Use the variables sampled in Step $(1)$ as input to the algorithm $\samp_{nm}$ to sample $(x,y)$. 
\end{enumerate}
The correctness of this procedure follows directly from Claim $\ref{done_sample}$, since it was proved that for any fixing of the variables of Step $1$, the size of pre-image of $\inmExt$ is the same.
\end{proof}

\section*{Acknowledgments}
The first author would like to thank his advisor, David Zuckerman, for his constant guidance and encouragement. 

\bibliographystyle{alpha}
\bibliography{nmc}

\newcommand{\etalchar}[1]{$^{#1}$}
\begin{thebibliography}{FMVW13}

\bibitem[ABN{\etalchar{+}}92]{ABNNR92}
Noga Alon, Jehoshua Bruck, Joseph Naor, Moni Naor, and Ron~M. Roth.
\newblock Construction of asymptotically good low-rate error-correcting codes
  through pseudo-random graphs.
\newblock {\em IEEE Transactions on Information Theory}, 38:509--516, 1992.

\bibitem[ADKO15]{ADKO14}
D.~Aggarwal, Y.~Dodis, T.~Kazana, and M.~Obremski.
\newblock Non-malleable reductions and applications.
\newblock To appear in STOC, 2015.

\bibitem[ADL14]{ADL13}
Divesh Aggarwal, Yevgeniy Dodis, and Shachar Lovett.
\newblock Non-malleable codes from additive combinatorics.
\newblock In {\em STOC}, 2014.

\bibitem[AGM{\etalchar{+}}14]{AGMPP14}
Shashank Agrawal, Divya Gupta, Hemanta~K. Maji, Omkant Pandey, and Manoj
  Prabhakaran.
\newblock Explicit non-malleable codes resistant to permutations.
\newblock Cryptology ePrint Archive, Report 2014/316, 2014.
\newblock \url{http://eprint.iacr.org/}.

\bibitem[Bou05]{B2}
J.~Bourgain.
\newblock More on the sum-product phenomenon in prime fields and its
  applications.
\newblock {\em International Journal of Number Theory}, 01(01):1--32, 2005.

\bibitem[BRF13]{DBLP:conf/stoc/2013}
Dan Boneh, Tim Roughgarden, and Joan Feigenbaum, editors.
\newblock {\em Symposium on Theory of Computing Conference, STOC'13, Palo Alto,
  CA, USA, June 1-4, 2013}. {ACM}, 2013.

\bibitem[BRSW06]{BRSW06}
Boaz Barak, Anup Rao, Ronen Shaltiel, and Avi Wigderson.
\newblock 2 source dispersers for {$n^{o(1)}$} entropy and {R}amsey graphs
  beating the {F}rankl-{W}ilson construction.
\newblock In {\em Proceedings of the 38th Annual ACM Symposium on Theory of
  Computing}, 2006.

\bibitem[CCFP11]{CCP11}
Herv{\'e} Chabanne, G{\'e}rard~D. Cohen, Jean-Pierre Flori, and Alain Patey.
\newblock Non-malleable codes from the wire-tap channel.
\newblock {\em CoRR}, abs/1105.3879, 2011.

\bibitem[CCP12]{CCP12}
Herv{\'e} Chabanne, G{\'e}rard~D. Cohen, and Alain Patey.
\newblock Secure network coding and non-malleable codes: Protection against
  linear tampering.
\newblock In {\em ISIT}, pages 2546--2550, 2012.

\bibitem[CG88]{CG88}
Benny Chor and Oded Goldreich.
\newblock Unbiased bits from sources of weak randomness and probabilistic
  communication complexity.
\newblock {\em SIAM Journal on Computing}, 17(2):230--261, 1988.

\bibitem[CG14a]{CG14a}
Mahdi Cheraghchi and Venkatesan Guruswami.
\newblock Capacity of non-malleable codes.
\newblock In {\em ITCS}, pages 155--168, 2014.

\bibitem[CG14b]{CG14b}
Mahdi Cheraghchi and Venkatesan Guruswami.
\newblock Non-malleable coding against bit-wise and split-state tampering.
\newblock In {\em TCC}, pages 440--464, 2014.

\bibitem[CGM{\etalchar{+}}15]{CGMPU15}
Nishanth Chandran, Vipul Goyal, Pratyay Mukherjee, Omkant Pandey, and Jalaj
  Upadhyay.
\newblock Block-wise non-malleable codes.
\newblock {\em {IACR} Cryptology ePrint Archive}, 2015:129, 2015.

\bibitem[CKM11]{CSKM11}
SeungGeol Choi, Aggelos Kiayias, and Tal Malkin.
\newblock Bitr: Built-in tamper resilience.
\newblock In DongHoon Lee and Xiaoyun Wang, editors, {\em Advances in
  Cryptology â ASIACRYPT 2011}, volume 7073 of {\em Lecture Notes in Computer
  Science}, pages 740--758. 2011.

\bibitem[CKOR10]{ckor}
N.~Chandran, B.~Kanukurthi, R.~Ostrovsky, and L.~Reyzin.
\newblock Privacy amplification with asymptotically optimal entropy loss.
\newblock In {\em Proceedings of the 42nd Annual ACM Symposium on Theory of
  Computing}, pages 785--794, 2010.

\bibitem[Coh15]{C15}
Gil Cohen.
\newblock Local correlation breakers and applications to three-source
  extractors and mergers.
\newblock Technical Report TR15-038, ECCC, 2015.

\bibitem[CRS14]{CRS12}
Gil Cohen, Ran Raz, and Gil Segev.
\newblock Non-malleable extractors with short seeds and applications to privacy
  amplification.
\newblock {\em SIAM Journal on Computing}, 43(2):450--476, 2014.

\bibitem[CZ14]{CZ14}
Eshan Chattopadhyay and David Zuckerman.
\newblock Non-malleable codes against constant split-state tampering.
\newblock In {\em Proceedings of the 55th Annual IEEE Symposium on Foundations
  of Computer Science}, pages 306--315, 2014.

\bibitem[DKO13]{DKO13}
Stefan Dziembowski, Tomasz Kazana, and Maciej Obremski.
\newblock Non-malleable codes from two-source extractors.
\newblock In {\em CRYPTO (2)}, pages 239--257, 2013.

\bibitem[DKRS06]{dkrs}
Y.~Dodis, J.~Katz, L.~Reyzin, and A.~Smith.
\newblock Robust fuzzy extractors and authenticated key agreement from close
  secrets.
\newblock In {\em Advances in Cryptology --- CRYPTO '06, 26th Annual
  International Cryptology Conference, Proceedings}, pages 232--250, 2006.

\bibitem[DLWZ14]{DLWZ11}
Yevgeniy Dodis, Xin Li, Trevor~D. Wooley, and David Zuckerman.
\newblock Privacy amplification and non-malleable extractors via character
  sums.
\newblock {\em SIAM Journal on Computing}, 43(2):800--830, 2014.

\bibitem[DORS08]{DORS08}
Y.~Dodis, R.~Ostrovsky, L.~Reyzin, and A.~Smith.
\newblock Fuzzy extractors: How to generate strong keys from biometrics and
  other noisy data.
\newblock {\em SIAM Journal on Computing}, 38:97--139, 2008.

\bibitem[DPW10]{DPW10}
Stefan Dziembowski, Krzysztof Pietrzak, and Daniel Wichs.
\newblock Non-malleable codes.
\newblock In {\em ICS}, pages 434--452, 2010.

\bibitem[DW09]{DW09}
Yevgeniy Dodis and Daniel Wichs.
\newblock Non-malleable extractors and symmetric key cryptography from weak
  secrets.
\newblock In {\em STOC}, pages 601--610, 2009.

\bibitem[DY13]{DY12}
Yevgeniy Dodis and Yu~Yu.
\newblock Overcoming weak expectations.
\newblock In {\em 10th Theory of Cryptography Conference}, 2013.

\bibitem[FMNV14]{FMNV14}
Sebastian Faust, Pratyay Mukherjee, Jesper~Buus Nielsen, and Daniele Venturi.
\newblock Continuous non-malleable codes.
\newblock In {\em TCC}, pages 465--488, 2014.

\bibitem[FMVW13]{FMVW13}
Sebastian Faust, Pratyay Mukherjee, Daniele Venturi, and Daniel Wichs.
\newblock Efficient non-malleable codes and key-derivation for poly-size
  tampering circuits.
\newblock {\em IACR Cryptology ePrint Archive}, 2013:702, 2013.

\bibitem[GGSW13]{GargGSW13}
Sanjam Garg, Craig Gentry, Amit Sahai, and Brent Waters.
\newblock Witness encryption and its applications.
\newblock In Boneh et~al. \cite{DBLP:conf/stoc/2013}, pages 467--476.

\bibitem[GJK15]{GKJ15}
Vipul Goyal, Aayush Jain, and Dakshita Khurana.
\newblock Witness signatures.
\newblock {\em Manuscript}, 2015.

\bibitem[GUV09]{GUV}
Venkatesan Guruswami, Christopher Umans, and Salil Vadhan.
\newblock Unbalanced expanders and randomness extractors from
  {P}arvaresh-{V}ardy codes.
\newblock {\em Journal of the ACM}, 56(4), 2009.

\bibitem[KLR09]{KalaiLR09}
Yael Kalai, Xin Li, and Anup Rao.
\newblock 2-source extractors under computational assumptions and cryptography
  with defective randomness.
\newblock In {\em Proceedings of the 50th Annual IEEE Symposium on Foundations
  of Computer Science}, pages 617--628, 2009.

\bibitem[KLRZ08]{KalaiLRZ08}
Yael~Tauman Kalai, Xin Li, Anup Rao, and David Zuckerman.
\newblock Network extractor protocols.
\newblock In {\em Proceedings of the 49th Annual IEEE Symposium on Foundations
  of Computer Science}, pages 654--663, 2008.

\bibitem[KR09]{KR09}
B.~Kanukurthi and L.~Reyzin.
\newblock Key agreement from close secrets over unsecured channels.
\newblock In {\em EUROCRYPT 2009, 28th Annual International Conference on the
  Theory and Applications of Cryptographic Techniques}, 2009.

\bibitem[Li12a]{Li12a}
Xin Li.
\newblock Design extractors, non-malleable condensers and privacy
  amplification.
\newblock In {\em Proceedings of the 44th Annual ACM Symposium on Theory of
  Computing}, pages 837--854, 2012.

\bibitem[Li12b]{Li12b}
Xin Li.
\newblock Non-malleable extractors, two-source extractors and privacy
  amplification.
\newblock In {\em Proceedings of the 53rd Annual IEEE Symposium on Foundations
  of Computer Science}, pages 688--697, 2012.

\bibitem[Li13a]{Li13b}
Xin Li.
\newblock Extractors for a constant number of independent sources with
  polylogarithmic min-entropy.
\newblock In {\em Proceedings of the 54th Annual IEEE Symposium on Foundations
  of Computer Science}, pages 100--109, 2013.

\bibitem[Li13b]{Li13a}
Xin Li.
\newblock New independent source extractors with exponential improvement.
\newblock In Boneh et~al. \cite{DBLP:conf/stoc/2013}, pages 783--792.

\bibitem[Li15a]{Li15b}
Xin Li.
\newblock Non-malleable condensers for arbitrary min-entropy, and almost
  optimal protocols for privacy amplification.
\newblock In {\em 12th Theory of Cryptography Conference}, 2015.

\bibitem[Li15b]{Li15}
Xin Li.
\newblock Three-source extractors for polylogarithmic min-entropy.
\newblock Technical Report TR15-034, ECCC, 2015.

\bibitem[LL12]{LL12}
Feng-Hao Liu and Anna Lysyanskaya.
\newblock Tamper and leakage resilience in the split-state model.
\newblock In {\em CRYPTO}, pages 517--532, 2012.

\bibitem[MW97]{MW07}
Ueli Maurer and Stefan Wolf.
\newblock Privacy amplification secure against active adversaries.
\newblock In {\em Advances in Cryptology --- CRYPTO '97}, volume 1294, pages
  307--321, August 1997.

\bibitem[PR08]{PassR08}
Rafael Pass and Alon Rosen.
\newblock Concurrent nonmalleable commitments.
\newblock {\em {SIAM} J. Comput.}, 37(6):1891--1925, 2008.

\bibitem[Rao07]{Rao07}
Anup Rao.
\newblock An exposition of {B}ourgain's 2-source extractor.
\newblock {\em Electronic Colloquium on Computational Complexity (ECCC)},
  14(034), 2007.

\bibitem[Rao09]{Rao09}
Anup Rao.
\newblock Extractors for low-weight affine sources.
\newblock In {\em Proceedings of the 24th Annual IEEE Conference on
  Computational Complexity}, 2009.

\bibitem[Raz05]{Raz05}
Ran Raz.
\newblock Extractors with weak random seeds.
\newblock In {\em Proceedings of the 37th Annual ACM Symposium on Theory of
  Computing}, pages 11--20, 2005.

\bibitem[RRV02]{RRV02}
Ran Raz, Omer Reingold, and Salil Vadhan.
\newblock Extracting all the randomness and reducing the error in trevisan's
  extractors.
\newblock {\em JCSS}, 65(1):97--128, 2002.

\bibitem[RW03]{RW03}
Renato Renner and Stefan Wolf.
\newblock Unconditional authenticity and privacy from an arbitrarily weak
  secret.
\newblock In {\em Advances in Cryptology --- CRYPTO '03, 23rd Annual
  International Cryptology Conference, Proceedings}, pages 78--95, 2003.

\bibitem[Tre01]{Tr01}
Luca Trevisan.
\newblock Extractors and pseudorandom generators.
\newblock {\em Journal of the ACM}, pages 860--879, 2001.

\bibitem[Vad04]{Vad04}
Salil~P. Vadhan.
\newblock Constructing locally computable extractors and cryptosystems in the
  bounded-storage model.
\newblock {\em J. Cryptology}, 17(1):43--77, 2004.

\bibitem[Zuc97]{Z97}
David Zuckerman.
\newblock Randomness-optimal oblivious sampling.
\newblock {\em Random Structures and Algorithms}, 11:345--367, 1997.

\bibitem[Zuc07]{Zuck07}
David Zuckerman.
\newblock Linear degree extractors and the inapproximability of max clique and
  chromatic number.
\newblock {\em Theory of Computing}, pages 103--128, 2007.

\end{thebibliography}

\appendix
\end{document}